\numberwithin{equation}{section}
\newtheorem{lem}{Lemma}[section]
\newtheorem{thm}{Theorem}[section]
\newtheorem{ass}{Assumption}
\newtheorem{ex}{Example}
\newcommand{\diag}{\text{diag}}
\newcommand{\Supp}{\text{Supp}}
\newcommand{\indep}{\perp\!\!\!\perp}
\newcommand{\convP}{\stackrel{p}{\longrightarrow}}
\newcommand{\convD}{\rightsquigarrow}
\newcommand{\eps}{\varepsilon}
\renewcommand{\epsilon}{\varepsilon}
\DeclareMathOperator*{\argmax}{arg\,max}
\DeclareMathOperator*{\argmin}{arg\,min}
\newcommand*{\rom}
[1]{\expandafter\@slowromancap\romannumeral #1@}
\begin{document}
	\setlength{\droptitle}{-1.5cm}
	
	\title{Regression-Adjusted Estimation of Quantile Treatment Effects under Covariate-Adaptive Randomizations}
	
	\author[a]{Liang Jiang} 
	\author[b,c,d,e]{Peter C.~B.~Phillips}
	\author[f]{Yubo Tao\thanks{Corresponding author. Email: yubotao@um.edu.mo.}}
	\author[b]{Yichong Zhang}
	
	\affil[a]{\footnotesize Fanhai International School of Finance, Fudan University. 220 Handan Rd, Shanghai, China 200437.}
	\affil[b]{\footnotesize School of Economics, Singapore Management University. 90 Stamford Rd, Singapore 178903.}
	\affil[c]{\footnotesize Yale University, New Haven, CT 06520-8281, United States of America.}
	\affil[d]{\footnotesize University of Auckland, 12 Grafton Rd, Auckland Central, Auckland, 1010, New Zealand. }
	\affil[e]{\footnotesize University of Southampton, University Rd, Southampton SO17 1BJ, United Kingdom. }
	\affil[f]{\footnotesize Department of Economics, University of Macau. Avenida da Universidade, Taipa, Macao SAR, China.}

	\date{\today}
	
	\maketitle
	
	\def \baselinestretch{1.05}
	\begin{abstract}
		
		Datasets from field experiments with covariate-adaptive randomizations (CARs) usually contain extra covariates in addition to the strata indicators. We propose to incorporate these additional covariates via auxiliary regressions in the estimation and inference of unconditional quantile treatment effects (QTEs) under CARs. We establish the consistency and limit distribution of the regression-adjusted QTE estimator and prove that the use of multiplier bootstrap inference is non-conservative under CARs. The auxiliary regression may be estimated parametrically, nonparametrically, or via regularization when the data are high-dimensional. Even when the auxiliary regression is misspecified, the proposed bootstrap inferential procedure still achieves the nominal rejection probability in the limit under the null. When the auxiliary regression is correctly specified, the regression-adjusted estimator achieves the minimum asymptotic variance. We also discuss forms of adjustments that can improve the efficiency of the QTE estimators. The finite sample performance of the new estimation and inferential methods is studied in simulations, and an empirical application to a well-known dataset concerned with expanding access to basic bank accounts on savings is reported.
		
		% 	Our estimation and inferential methods can be implemented without tuning parameters and they allow for common choices of auxiliary regressions such as linear, probit and logit regressions despite the fact that these regressions may be misspecified.  
		\bigskip
		
		\noindent \textbf{Keywords:} Covariate-adaptive randomization, High-dimensional data, Regression adjustment, Quantile treatment effects. 
		
		\medskip
		\noindent \textbf{JEL codes:} C14, C21, D14, G21
	\end{abstract}
	\clearpage
	
	\section{Introduction} \label{sec:intro}
	
	Covariate-adaptive randomizations (CARs) have recently seen growing use in a wide variety of randomized experiments in economic research. Examples include \cite{CCFNT16}, \cite{greaney2016}, \cite{jakiela2016}, \cite{burchardi2019}, \cite{anderson2021}, among many others. In CAR modeling, units are first stratified using some baseline covariates, and then, within each stratum, the treatment status is assigned (independent of covariates) to achieve the balance between the numbers of treated and control units. 
	
	In many empirical studies, apart from the average treatment effect (ATE), researchers are often interested in using randomized experiments to estimate quantile treatment effects (QTEs). The QTE has a useful role as a robustness check for the ATE and characterizes any heterogeneity that may be present in the sign and magnitude of the treatment effects according to their position within the distribution of outcomes. See, for example, \cite{B06}, \cite{MS11}, \cite{DGPR13}, \cite{BDGK15}, \cite{CDDP15}, and \cite{CF17}.
	
	Two practical issues arise in estimation and inference concerning QTEs under CARs. First, other covariates in addition to the strata indicators are collected during the experiment. It is possible to incorporate these covariates in the estimation of treatment effects to reduce variance and improve efficiency. In the estimation of ATE, the usual practice is to run a simple  ordinary least squares (OLS) regression of the outcome on treatment status, strata indicators, and additional covariates as in the analysis of covariance (ANCOVA). \cite{F08b, F081} pointed out that such an OLS regression adjustment can degrade the precision of the ATE estimator. \cite{L13} reexamined Freedman's critique and showed that, in order to improve efficiency, the linear regression adjustment should include a full set of interactions between the treatment status and covariates. However, because the quantile function is a nonlinear operator, even when the assignment of treatment status is completely random, a similar linear quantile regression with a full set of interaction terms is unable to provide a consistent estimate of the \textit{unconditional} QTE, not to mention the improvement of estimation efficiency. Second, in order to achieve balance in the respective number of treated and control units within each stratum, treatment statuses under CARs usually exhibit a (negative) cross-sectional dependence. Standard inference procedures that rely on cross-sectional independence are therefore conservative and lack power. These two issues raise questions of how to use the additional covariates to consistently and more efficiently estimate QTE in CAR settings and how to conduct valid statistical procedures that mitigate conservatism in inference.\footnote{For example, \cite{BCS17} and \cite{ZZ20} have shown that the usual two-sample $t$-test for inference concerning ATE and multiplier bootstrap inference concerning QTE are in general conservative under CARs.}
	
	The present paper addresses these issues by proposing a regression-adjusted estimator of the QTE, deriving its limit theory, and establishing the validity of multiplier bootstrap inference under CARs. Even under potential misspecification of the auxiliary regressions, the proposed QTE estimator is shown to maintain its consistency, and the multiplier bootstrap procedure is shown to have an asymptotic size equal to the nominal level under the null. When the auxiliary regression is correctly specified, the QTE estimator achieves minimum asymptotic variance. 
	
	We further investigate efficiency gains that materialize from the regression adjustments in three scenarios: (1) parametric regressions, (2) nonparametric regressions, and (3) regressions with regularization in high-dimensional settings. Specifically, for parametric regressions with a potentially misspecified linear probability model, we propose to compute the optimal linear coefficient by minimizing the variance of the QTE estimator. Such an adjustment is optimal within the class of linear adjustments but does not necessarily achieve the global minimum asymptotic variance. However, as no adjustment is a special case of the linear regression with all the coefficients being zero, our optimal linear adjustment is guaranteed to be weakly more efficient than the QTE estimator with no adjustments, which addresses Freedman's critique. We also consider a potentially misspecified logistic regression with fixed-dimensional regressors and strata- and quantile-specific regression coefficients, which is then estimated by the quasi maximum likelihood estimation (QMLE). Although the QMLE does not necessarily minimize the asymptotic variance of the QTE, such a flexible logistic model can closely approximate the true specification. Therefore, in practice, the corresponding regression-adjusted QTE estimator usually has a smaller variance than that with no adjustments. Last, we propose to treat the logistic QMLE adjustments as new linear regressors and re-construct the corresponding optimal linear adjustments. We then show the QTE estimator with the new adjustments are weakly more efficient than that with both the original logistic QMLE adjustments and no adjustments.  
	
	In nonparametric regressions, we further justify the QMLE by letting the regressors in the logistic regression be a set of sieve basis functions with increasing dimension and show how such a nonparametric regression-adjusted QTE estimator can achieve the global minimum asymptotic variance. For high-dimensional regressions with regularization, we consider logistic regression under $\ell_1$ penalization, an approach that also achieves the global minimum asymptotic variance. All the limit theories hold uniformly over a compact set of quantile indices, implying that our multiplier bootstrap procedure can be used to conduct inference on QTEs involving single, multiple, or a continuum of quantile indices.
	
	These results, including the limit distribution of the regression-adjusted QTE estimator and the validity of the multiplier bootstrap, provide novel contributions to the literature in three respects. First, the data generated under CARs are different from observational data as the observed outcomes and treatment statuses are cross-sectionally dependent due to the randomization schemes. Recently \cite{BCS17} established a rigorous asymptotic framework to study the ATE estimator under CARs and pointed out the conservatism of the two-sample t-test except for some special cases. (See \citet[Remark 4.2]{BCS17} for more detail.) Our analysis follows this new framework, which departs from the literature of causal inference under an i.i.d. treatment structure.  
	
	Second, we contribute to the literature on causal inference under CARs by developing a new methodology that includes additional covariates in the estimation of the \textit{unconditional} QTE and by establishing a general theory for regression adjustments that allow for parametric, nonparametric, and regularized estimations of the auxiliary regressions. As mentioned earlier, unlike ATE estimation, the naive linear quantile regression with additional covariates cannot even produce a consistent estimator of the QTE. Instead, we propose a new way to incorporate additional covariates based on the Neyman orthogonal moment and investigate the asymptotic properties and the efficiency gains of the proposed regression-adjusted estimator under CARs. This new machinery allows us to study the QTE regression, which is nonparametrically specified, with both linear (linear probability model) and nonlinear (logit and probit models) regression adjustments. To clarify this contribution to the literature we note that \cite{HH12,MHZ15,MQLH18,O21,SY13,ZZ20,Y18,YS20} considered inference of various causal parameters under CARs but without taking into account additional covariates. \cite{BCS17}, \cite{BCS18}, and \cite{BG21} considered saturated regressions for ATE and local ATE, which can be viewed as regression-adjustments where strata indicators are interacted with the treatment
	or instrument. \cite{SYZ10} showed that if a test statistic is constructed based on the correctly specified model between outcome and additional covariates and the covariates used for CAR are functions of additional covariates, then the test statistic is valid conditional on additional covariates. \cite{BL16,F18,L13,L16,LD20,LiD20,LTM20,LY20,NW20,YYS20,ZD20} studied various estimation methods based on regression adjustments, but these studies all focused on ATE estimation. Specifically, \cite{LTM20} considered \textit{linear} adjustments for \textit{ATE} under CARs in which the covariates can be high-dimensional and the adjustments can be estimated by Lasso. \cite{AHL18} considered regression adjustment using additional covariates for ATE and Local ATE. We differ from them by considering \textit{QTE} with \textit{nonlinear} adjustments such as logistic Lasso. 
	
	Third, we establish the validity of the multiplier bootstrap inference for the regression-adjusted QTE estimator under CARs. To the best of our knowledge, \cite{SYZ10} and \cite{ZZ20} are the only works in the literature that studied bootstrap inference under CARs. \cite{SYZ10} considered the covariate-adaptive bootstrap for the linear regression model. \cite{ZZ20} proposed to bootstrap inverse propensity score weighted (IPW) QTE estimator with the estimated target fraction of treatment even when the truth is known. They showed that the asymptotic variance of the IPW estimator is the same under various CARs. Thus, even though the bootstrap sample ignores the cross-sectional dependence and behaves as if the randomization scheme is simple, the asymptotic variance of the bootstrap analogue is still the same. We complement this research by studying the validity of multiplier bootstrap inference for our \textit{regression-adjusted QTE} estimator. We establish analytically that the multiplier bootstrap with the estimated fraction of treatment is not conservative in the sense that it can achieve an asymptotic size equal to the nominal level under the null even when the auxiliary regressions are misspecified. 
	
	The present paper also comes under the umbrella of a growing literature that has addressed estimation and inference in randomized experiments. In this connection, we mention the studies of \cite{HHK11, athey2017, abadie2018, T18, BRS19, B19, JL20} among many others. \cite{B19} showed an `optimal' matched-pair design can minimize the mean-squared error of the \textit{difference-in-means} estimator for ATE, conditional on covariates. \cite{T18} designed an  adaptive randomization procedure which can minimize the variance of the \textit{weighted} estimator for ATE. Both works rely on a pilot experiment to design the optimal randomization. In contrast, we take the randomization scheme (i.e., CARs) as given and search for new estimators (other than \textit{difference-in-quantile} and \textit{weighted} estimators) for QTE that have smaller variance. In addition, our approach does not require a pilot experiment. Therefore, our and their methods are applied to different scenarios depending on the definition of `optimality' and the data available, and thus, complement to each other.  
	
	From a practical perspective, our estimation and inferential methods have four advantages. First, they allow for common choices of auxiliary regressions such as linear probability, logit, and probit regressions, even though these regressions may be misspecified. Second, the methods can be implemented without tuning parameters. Third, our (bootstrap) estimator can be directly computed via the subgradient condition, and the auxiliary regressions need not be re-estimated in the bootstrap procedure, both of which save considerable computation time. Last, our estimation and inference methods can be implemented without the knowledge of the exact treatment assignment rule used in the experiment. This advantage is especially useful in subsample analysis, where sub-groups are defined using variables other than those to form the strata and the treatment assignment rule for each sub-group becomes unknown. See, for example, the anemic subsample analysis in \cite{CCFNT16} and \cite{ZZ20}. These last three points carry over from \cite{ZZ20} and are logically independent of the regression adjustments. One of our contributions is to show these results still hold for our regression-adjusted estimator.  
	
	The remainder of the paper is organized as follows. Section \ref{sec:setup} describes the model setup and notation. Section \ref{sec:est} develops the asymptotic properties of our regression-adjusted QTE estimator. Section \ref{sec:boot} studies the validity of the multiplier bootstrap inference. Section \ref{sec:aux} considers parametric, nonparametric, and regularized estimation of the auxiliary regressions. Section \ref{sec:sim} reports simulation results, and an empirical application of our methods to the impact of expanding access to basic bank accounts on savings is provided in Section \ref{sec:app}. Section \ref{sec:concl} concludes. Proofs of all results and some additional simulation results are given in the Online Supplement.
	
	\section{Setup and Notation} \label{sec:setup}
	
	Potential outcomes for treated and control groups are denoted by $Y(1)$ and $Y(0)$, respectively. Treatment status is denoted by $A$, with $A=1$ indicating treated and $A=0$ untreated. The stratum indicator is denoted by $S$, based on which the researcher implements the covariate-adaptive randomization. The support of $S$ is denoted by $\mathcal{S}$, a finite set. After randomization, the researcher can observe the data $\{Y_i,S_i,A_i,X_i\}_{i \in [n]}$ where $[n]=\{1,2,...n\}$, $Y_i = Y_i(1)A_i + Y_i(0)(1-A_i)$ is the observed outcome, and $X_i$ contains extra covariates besides $S_i$ in the dataset. The support of $X$ is denoted $\Supp(X)$. In this paper, we allow $X_i$ and $S_i$ to be dependent. For $i \in [n]$, let $p(s) = \mathbb{P}(S_i = s)$, $n(s) = \sum_{i \in [n]}1\{S_i = s\}$, $n_1(s) = \sum_{i \in [n]}A_i1\{S_i=s\}$, and $n_0(s) = n(s) - n_1(s)$.  We make the following assumptions on the data generating process (DGP) and the treatment assignment rule.
	\begin{ass}
		\begin{enumerate}[label=(\roman*)]
			\item $\{Y_i(1),Y_i(0),S_i,X_i\}_{i \in [n]}$ is i.i.d.
			\item $\{Y_i(1),Y_i(0),X_i\}_{i \in [n]} \indep \{A_i\}_{i \in [n]}|\{S_i\}_{i \in [n]}$.
			\item Suppose $p(s)$ is fixed with respect to (w.r.t.) $n$ and is positive for every $s \in \mathcal{S}$.
			\item  Let $\pi(s)$ denote the target fraction of treatment for stratum $s$. Then, $c<\min_{s \in \mathcal{S}}\pi(s) \leq \max_{s \in \mathcal{S}}\pi(s)<1-c$ for some constant $c \in (0,0.5)$ and $\frac{D_n(s)}{n(s)} = o_p(1)$ for $s \in \mathcal{S}$, where $D_n(s) = \sum_{i \in [n]} (A_i-\pi(s))1\{S_i = s\}$.
		\end{enumerate}
		\label{ass:assignment1}
	\end{ass}
	
	Several remarks are in order. First, Assumption \ref{ass:assignment1}(i) allows for cross-sectional dependence among treatment statuses ($\{A_i\}_{i \in [n]}$), thereby accommodating many covariate-adaptive randomization schemes as discussed below. Second, although treatment statuses are cross-sectionally dependent, they are independent of the potential outcomes and additional covariates conditional on the stratum indicator $S$. Therefore, data are still experimental rather than observational. Third, Assumption \ref{ass:assignment1}(iii) requires the size of each stratum to be proportional to the sample size. Fourth, we can view $\pi(s)$ as the target fraction of treated units in stratum $s$. Similar to \cite{BCS18}, we allow the target fractions to differ across strata. Just as for the overlapping support condition in an observational study, the target fractions are assumed to be bounded away from zero and one. In randomized experiments, this condition usually holds because investigators can determine $\pi(s)$ in the design stage. In fact, in most CARs, $\pi(s)  = 0.5$ for $s \in \mathcal{S}$. Fifth, $D_n(s)$ represents the degree of imbalance between the real and target factions of treated units in the $s$th stratum. \cite{BCS17} show that Assumption \ref{ass:assignment1}(iv) holds under several covariate-adaptive treatment assignment rules such as simple random sampling (SRS), biased-coin design (BCD), adaptive biased-coin design (WEI), and stratified block randomization (SBR). For completeness, we briefly repeat their descriptions below. Note we only require $D_n(s)/n(s) = o_p(1)$, which is weaker than the assumption imposed by \cite{BCS17} but the same as that imposed by \cite{BCS18} and \cite{ZZ20}.
	
	\begin{ex}[SRS]
		\label{ex:srs}
		Let $\{A_i\}_{i \in [n]}$ be drawn independently across $i$ and of $\{S_i\}_{i \in [n]}$ as Bernoulli random variables with success rate $\pi$, i.e., for $k=1,\cdots,n$,
		\begin{align*}
		\mathbb{P}\left(A_k = 1\big|\{S_i\}_{i \in [n]}, \{A_{j}\}_{j \in [k-1]}\right) = \mathbb{P}(A_k = 1) = \pi(S_i).
		\end{align*}
	\end{ex}
	
	\begin{ex}[WEI]
		\label{ex:wei}
		This design was first proposed by \cite{W78}. Let $n_{k-1}(S_k) = \sum_{i \in [k-1]}1\{S_i = S_k\}$, $D_{k-1}(s) = \sum_{i \in [k-1]}\left(A_i - \frac{1}{2} \right) 1\{S_i = s\}$, and
		\begin{align*}
		\mathbb{P}\left(A_k = 1\big| \{S_i\}_{i \in [k]},\{A_i\}_{i \in [k-1]}\right) = \phi\biggl(\frac{2D_{k-1}(S_k)}{n_{k-1}(S_k)}\biggr),
		\end{align*}
		where $\phi(\cdot):[-1,1] \mapsto [0,1]$ is a pre-specified non-increasing function satisfying $\phi(-x) = 1- \phi(x)$ and $\frac{D_0(S_1)}{0}$ is understood to be zero. 
	\end{ex}
	
	\begin{ex}[BCD]
		\label{ex:bcd}
		The treatment status is determined sequentially for $1 \leq k \leq n$ as
		\begin{align*}
		\mathbb{P}\left(A_k = 1| \{S_i\}_{i \in [k]},\{A_i\}_{i \in [k-1]}\right) = \begin{cases}
		\frac{1}{2} & \text{if }D_{k-1}(S_k) = 0 \\
		\lambda & \text{if }D_{k-1}(S_k) < 0 \\
		1-\lambda & \text{if }D_{k-1}(S_k) > 0,
		\end{cases}
		\end{align*}
		where $D_{k-1}(s)$ is defined as above and $\frac{1}{2}< \lambda \leq 1$. 
	\end{ex}
	
	\begin{ex}[SBR]
		\label{ex:sbr}
		For each stratum, $\lfloor \pi(s) n(s) \rfloor$ units are assigned to treatment and the rest are assigned to control. 
	\end{ex}
	
	Denote the $\tau$th quantile of $Y(a)$ by $q_a(\tau)$ for $a=0,1$. We are interested in estimating and inferring the $\tau$th quantile treatment effect defined as $q(\tau) = q_1(\tau) - q_0(\tau)$. The testing problems of interest involve single, multiple, or even a continuum of quantile indices, as in the following null hypotheses
	\begin{align*}
	&\mathcal{H}_0: q(\tau) = \underline{q} \quad \text{versus} \quad q(\tau) \neq \underline{q}, \\
	&\mathcal{H}_0: q(\tau_1) - q(\tau_2) = \underline{q} \quad \text{versus} \quad q(\tau_1) - q(\tau_2) \neq \underline{q}, \;\textrm{and} \\
	&\mathcal{H}_0: q(\tau) = \underline{q}(\tau)~\forall \tau \in \Upsilon \quad \text{versus} \quad q(\tau) \neq \underline{q}(\tau)~\text{for some}~\tau \in \Upsilon,
	\end{align*}
	for some pre-specified value $\underline{q}$ or function $\underline{q}(\tau)$, where $\Upsilon$ is some compact subset of $(0,1)$. We can also test constant QTE by letting $\underline{q}(\tau)$ in the last hypothesis be a constant $\underline{q}$. 
	
	\section{Estimation} 
	\label{sec:est}
	Define $m_a(\tau,s,x) = \tau - \mathbb{P}(Y_i(a)\leq q_a(\tau)|S_i=s,X_i=x)$ for $a=0,1$ which are the true specifications but unknown to researchers. Instead, researchers specify working models $\{\overline{m}_a(\tau,s,x)\}_{a=0,1}$\footnote{We view $\overline{m}_a(\cdot)$ as some function with inputs $\tau,s,x$. For example, researchers can specify a linear probability model with $\overline{m}_a(\tau,s,x) = \tau - x^\top \beta_{a,s}$, where $\beta_{a,s}$ is the linear coefficient that varies across treatment status $a$ and stratum $s$.} for the true specification, which can be misspecified. Last, the researchers estimate the (potentially misspecified) working models via some forms of regression, and the estimators are denoted as $\{\widehat{m}_a(\tau,s,x)\}_{a=0,1}$. We also refer to $\overline{m}_a(\cdot)$ as the auxiliary regression. 
	
	Our regression-adjusted estimator of $q_1(\tau)$, denoted as $\hat{q}_1^{adj}(\tau)$, can be defined as
	\begin{align} \label{eq:q1adj}
	\hat{q}_1^{adj}(\tau) = & \argmin_q \sum_{i \in [n]}\left[\frac{A_i}{\hat{\pi}(S_i)}\rho_\tau(Y_i - q) + \frac{(A_i - \hat{\pi}(S_i))}{\hat{\pi}(S_i)}\widehat{m}_1(\tau,S_i,X_i)q\right], 
	\end{align}
	where $\rho_\tau(u) = u(\tau-1\{u \leq 0\})$ is the usual check function and $\hat{\pi}(s) = n_1(s)/n(s)$. We emphasize that $\widehat{m}_1(\cdot)$ may not consistently estimate the true specification $m_1(\cdot)$. Similarly, we can define
	\begin{align} \label{eq:q0adj}
	\hat{q}_0^{adj}(\tau) = & \argmin_q \sum_{i \in [n]}\left[\frac{1-A_i}{1-\hat{\pi}(S_i)}\rho_\tau(Y_i - q) - \frac{(A_i - \hat{\pi}(S_i))}{1-\hat{\pi}(S_i)}\widehat{m}_0(\tau,S_i,X_i)q\right].
	\end{align}
	Then, our regression adjusted QTE estimator is
	\begin{align} \label{eq:qadj}
	\hat{q}^{adj}(\tau) = \hat{q}_1^{adj}(\tau)-\hat{q}_0^{adj}(\tau).
	\end{align}
	
	Several remarks are in order. First, in observational studies with i.i.d. data and $A_i \indep X_i|S_i$, \cite{F07}, \cite{BCFH13}, and \cite{KMU19} showed that the doubly robust moment for $q_1(\tau)$ is
	\begin{align}
	\mathbb{E}\left[\frac{A_i(\tau-1\{Y_i(1)\leq q\})}{\overline{\pi}(S_i)} - \frac{A_i - \overline{\pi}(S_i)}{\overline{\pi}(S_i)}\overline{m}_1(\tau,S_i,X_i)\right] = 0,
	\label{eq:db1}
	\end{align}
	where $\overline{\pi}(s)$ and $\overline{m}_1(\tau,s,x)$ are the working models for the target fraction ($\pi(s)$) and conditional probability ($m_1(\tau,s,x)$), respectively. Our estimator is motivated by this doubly robust moment, but our analysis differs from that for the observational data as CARs introduces cross-sectional dependence among observations. Second, as our target fraction estimator $\hat{ \pi}(s) = n_1(s)/n(s)$ is consistent, it means $\overline{\pi}(s)$ is correctly specified as $\pi(s)$. Then, due to the double robustness, our regression adjusted estimator is consistent even when $\overline{m}_a(\cdot)$ is misspecified and $\widehat{m}_a(\cdot)$ is an inconsistent estimator of $m_a(\cdot)$. Third, we use the estimated target fraction $\hat{ \pi}(s)$ even when $\pi(s)$ is known because this guarantees that the bootstrap inference is not conservative. Further discussion is provided after Theorem \ref{thm:boot}. 
	
	\begin{ass}
		For $a=0,1$, denote $f_a(\cdot)$, $f_a(\cdot|s)$, and $f_a(\cdot|x,s)$ as the PDFs of $Y_i(a)$, $Y_i(a)|S_i=s$, and $Y_i(a)|S_i=s,X_i=x$, respectively.
		\begin{enumerate}[label=(\roman*)]
			\item $f_a(q_a(\tau))$ and $f_a(q_a(\tau)|s)$ are bounded and bounded away from zero uniformly over $\tau \in \Upsilon$ and $s \in \mathcal{S}$, where $\Upsilon$ is a compact subset of $(0,1)$.
			\item $f_a(\cdot)$ and $f_a(\cdot|s)$ are Lipschitz over $\{q_j(\tau):\tau \in \Upsilon\}.$
			\item $\sup_{y \in \Re, x \in \Supp(X), s \in \mathcal{S}}f_a(y|x,s)<\infty$.
		\end{enumerate}
		\label{ass:tau}
	\end{ass}
	
	\begin{ass} 
		\begin{enumerate}[label=(\roman*)]
			\item For $a = 0,1$, there exists a function $\overline{m}_a(\tau,s,x)$ such that for  $\overline{\Delta}_a(\tau,s,X_i) = \widehat{m}_a(\tau,s,X_i) - \overline{m}_a(\tau,s,X_i)$, we have
			\begin{align*}
			\sup_{\tau \in \Upsilon,s\in \mathcal{S}}\biggl|\frac{\sum_{i\in I_1(s)}\overline{\Delta}_a(\tau,s,X_i)}{n_1(s)} - \frac{\sum_{i \in I_0(s)}\overline{\Delta}_a(\tau,s,X_i)}{n_0(s)}\biggr| = o_p(n^{-1/2}),
			\end{align*}
			where $I_a(s) = \{i\in [n]: A_i = a, S_i=s\}$.
			\item For $a = 0,1$, let $\mathcal{F}_a = \{\overline{m}_a(\tau,s,x): \tau \in \Upsilon \}$ with an envelope $F_{a}(s,x)$. Then, $\max_{s \in \mathcal{S}}\mathbb{E}(|F_{a}(S_i,X_i)|^q|S_i=s)<\infty$ for $q > 2$ and there exist fixed constants $(\alpha,v)>0$ such that
			\begin{align*}
			\sup_Q N(\mathcal{F}_a,e_Q,\eps||F_a||_{Q,2}) \leq \left(\frac{\alpha}{\eps}\right)^v, \quad \forall \eps \in (0,1],
			\end{align*}
			where $N(\cdot)$ denotes the covering number, $e_Q(f,g) = ||f-g||_{Q,2}$, and the supremum is taken over all finitely discrete probability measures $Q$.
			\item For $a=0,1$ and any $\tau_1,\tau_2 \in \Upsilon$, there exists a constant $C>0$ such that $$\mathbb{E}((\overline{m}_a(\tau_2,S_i,X_i) - \overline{m}_a(\tau_1,S_i,X_i))^2|S_i=s) \leq C|\tau_2-\tau_1|.$$
		\end{enumerate}
		\label{ass:mhat}
	\end{ass}
	
	Several remarks are in order. First, Assumption \ref{ass:tau} is standard in the quantile regression literature. We do not need $f_a(y|x,s)$ to be bounded away from zero because we are interested in the unconditional quantile $q_a(\tau)$, which is uniquely defined as long as the unconditional density $f_a(q_a(\tau))$ is positive. Second, Assumption \ref{ass:mhat}(i) is high-level. If we consider a linear probability model such that $\overline{m}_a(\tau,s,X_i) = \tau-X_i^\top \theta_{a,s}(\tau)$ and $\widehat{m}_{a}(\tau,s,X_i)= \tau-X_i^\top \hat{ \theta}_{a,s}(\tau)$, then Assumption \ref{ass:mhat}(i) is equivalent to
	\begin{align*}
	\sup_{\tau \in \Upsilon, a =0,1, s\in \mathcal{S}}\left| \left(\frac{\sum_{i\in I_1(s)}X_i}{n_1(s)}  - \frac{\sum_{i\in I_0(s)}X_i}{n_0(s)}\right)^\top \left( \hat{ \theta}_{a,s}(\tau) - \theta_{a,s}(\tau)\right)   \right| = o_p(n^{-1/2}),
	\end{align*}
	which is similar to \citet[Assumption 3]{LTM20} and holds intuitively if $\hat{ \theta}_{a,s}(\tau)$ is a consistent estimator of the pseudo true value $ \theta_{a,s}(\tau)$. Third, Assumptions \ref{ass:mhat}(ii) and \ref{ass:mhat}(iii) impose mild regularity conditions on $\overline{m}_a(\cdot)$. Assumption \ref{ass:mhat}(ii) holds automatically if $\Upsilon$ is a finite set. In general, both Assumption \ref{ass:mhat}(ii) and \ref{ass:mhat}(iii) hold if
	\begin{align*}
	\sup_{a = 0,1, s \in \mathcal{S}, x \in \Supp(X)}|\overline{m}_a(\tau_2,s,x) - \overline{m}_a(\tau_1,s,x)| \leq L|\tau_2 -\tau_1|
	\end{align*}
	for some constant $L>0$. Such Lipschitz continuity holds for the true specification ($\overline{m}_a(\cdot) = m_a(\cdot)$) under Assumption \ref{ass:tau}. Fourth, we provide primitive sufficient conditions for Assumption \ref{ass:mhat} in Section \ref{sec:aux}.
	
	\begin{thm} 
		Suppose Assumptions \ref{ass:assignment1}--\ref{ass:mhat} hold. Then, uniformly over $\tau \in \Upsilon$,
		\begin{align*}
		\sqrt{n}(\hat{q}^{adj}(\tau) - q(\tau)) \convD \mathcal{B}(\tau),
		\end{align*}
		where $\mathcal{B}(\tau)$ is a tight Gaussian process with covariance kernel $\Sigma(\tau,\tau')$ defined in Section \ref{sec:thm_est_pf} of the Online Supplement. In addition, for any finite set of quantile indices $(\tau_1,\cdots,\tau_K)$, the asymptotic covariance matrix of $(\hat{q}^{adj}(\tau_1),\cdots,\hat{q}^{adj}(\tau_K))$ is denoted as $[\Sigma(\tau_k,\tau_l)]_{k,l \in [K]}$, where we use $[U_{kl}]_{k,l\in [K]}$ to denote a $K \times K$ matrix whose $(k,l)$th entry is $U_{kl}$. Then,  $[\Sigma(\tau_k,\tau_l)]_{k,l \in [K]}$ is minimized  in the matrix sense\footnote{For two symmetric matrices $A$ and $B$, we say $A$ is greater than or equal to $B$ if $A-B$ is positive semidefinite.} when the auxiliary regressions are correctly specified at $(\tau_1,\cdots,\tau_K)$, i.e., $\overline{m}_a(\tau_k,s,x) = m_a(\tau_k,s,x)$ for $a = 0,1$, $k \in [K]$, and all $(s,x)$ in the joint support of $(S_i,X_i)$. \label{thm:est}
	\end{thm}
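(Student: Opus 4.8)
The plan is to analyze $\hat{q}_1^{adj}(\tau)$ and $\hat{q}_0^{adj}(\tau)$ separately through the subgradient (first-order) conditions of the convex, piecewise-linear objectives in \eqref{eq:q1adj}--\eqref{eq:q0adj}, and then form the difference. For $\hat{q}_1^{adj}$, the subgradient condition is $\frac1n\sum_{i\in[n]}\bigl[\frac{A_i}{\hat\pi(S_i)}(\tau-1\{Y_i\le q\})-\frac{A_i-\hat\pi(S_i)}{\hat\pi(S_i)}\widehat{m}_1(\tau,S_i,X_i)\bigr]=o_p(n^{-1/2})$ at $q=\hat{q}_1^{adj}(\tau)$. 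The first step is to reduce the estimated adjustment $\widehat{m}_1$ to the deterministic pseudo-true function $\overline{m}_1$: summing within a stratum $s$, the adjustment part of this normalized moment equals $\frac{n_0(s)}{n}\bigl[\frac1{n_1(s)}\sum_{i\in I_1(s)}\widehat{m}_1(\tau,s,X_i)-\frac1{n_0(s)}\sum_{i\in I_0(s)}\widehat{m}_1(\tau,s,X_i)\bigr]$, which does not depend on $q$, so by Assumption \ref{ass:mhat}(i) replacing $\widehat{m}_1$ by $\overline{m}_1$ perturbs the moment by $\sum_{s\in\mathcal S}\frac{n_0(s)}{n}o_p(n^{-1/2})=o_p(n^{-1/2})$, uniformly over $\tau\in\Upsilon$. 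Hence $\hat{q}_1^{adj}(\tau)$ is a uniform near-zero of the moment $R_{n,1}(q,\tau):=\frac1n\sum_i\bigl[\frac{A_i}{\hat\pi(S_i)}(\tau-1\{Y_i\le q\})-\frac{A_i-\hat\pi(S_i)}{\hat\pi(S_i)}\overline{m}_1(\tau,S_i,X_i)\bigr]$ built from the \emph{deterministic} $\overline{m}_1$, and the rest of the argument uses only $\overline{m}_1$; the same applies to $\hat{q}_0^{adj}$ with $\overline{m}_0$.

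Second, I would establish $\sup_{\tau\in\Upsilon}|\hat{q}_a^{adj}(\tau)-q_a(\tau)|=o_p(1)$ via the standard convexity route: $R_{n,a}(q,\tau)$ converges in probability, uniformly over $(q,\tau)$, to $\tau-\mathbb P(Y(a)\le q)$ --- the adjustment piece vanishes because the treated- and control-subsample averages of $\overline{m}_a(\tau,s,\cdot)$ share the probability limit $\mathbb E[\overline{m}_a(\tau,S,X)\mid S=s]$ --- and by Assumption \ref{ass:tau}(i) this limit has the unique zero $q_a(\tau)$ and is strictly monotone there. The core of the proof is then the Bahadur-type representation $\sqrt n(\hat{q}_1^{adj}(\tau)-q_1(\tau))=f_1(q_1(\tau))^{-1}\sqrt n\,R_{n,1}(q_1(\tau),\tau)+o_p(1)$, uniformly in $\tau$, and analogously for $a=0$. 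Its ingredients are: (i) stochastic equicontinuity of the empirical-process increment $q\mapsto\frac1n\sum_i\frac{A_i}{\pi(S_i)}\bigl(1\{Y_i\le q\}-\mathbb P(Y_i(1)\le q\mid S_i,X_i)\bigr)$ over shrinking $q$-neighborhoods of $q_1(\tau)$ and uniformly in $\tau$; and (ii) a linearization of $1/\hat\pi(S_i)$ around $1/\pi(S_i)$. The choice $\hat\pi(s)=n_1(s)/n(s)$ is what makes the imbalance $D_n(s)$ drop out, since $\frac1{\hat\pi(s)}\sum_{i\in I_1(s)}(\cdot)=n(s)\cdot(\text{treated-subsample average})$, so the random count $n_1(s)=\pi(s)n(s)+D_n(s)$ cancels and the residuals $\frac1{\hat\pi(s)}-\frac1{\pi(s)}=O_p(n^{-1/2})$ only multiply mean-zero sums. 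The CAR cross-sectional dependence is handled as in \cite{BCS17}: conditional on $\{S_i,A_i\}_{i\in[n]}$ the observations $\{Y_i(1),Y_i(0),X_i\}_{i\in[n]}$ are independent with laws determined by $\{S_i\}$ (Assumption \ref{ass:assignment1}(i)--(ii)), so within each stratum every sum is a sum of conditionally i.i.d.\ terms, to which standard empirical-process and maximal-inequality tools apply, using the Donsker property of $\{1\{\cdot\le q_1(\tau)\}:\tau\in\Upsilon\}$ together with the covering-number bound in Assumption \ref{ass:mhat}(ii) and the $L^2$-continuity in Assumption \ref{ass:mhat}(iii); one then integrates out the conditioning.

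Third, the limit law follows by decomposing $\sqrt n\,R_{n,1}(q_1(\tau),\tau)$, via the stratum-wise algebra, into $\frac1{\sqrt n}\sum_i\frac{A_i}{\pi(S_i)}\zeta_i(\tau)+\frac1{\sqrt n}\sum_i m_1(\tau,S_i,X_i)+\frac1{\sqrt n}\sum_{s\in\mathcal S}n_0(s)\,\delta_{1,s}(\tau)+o_p(1)$, where $\zeta_i(\tau)=\mathbb P(Y_i(1)\le q_1(\tau)\mid S_i,X_i)-1\{Y_i(1)\le q_1(\tau)\}$ has conditional mean zero given $(S_i,X_i)$ and is independent of $\{A_j\}_j$ given $\{S_j\}_j$, and $\delta_{1,s}(\tau)$ is the treated-minus-control subsample average of $m_1(\tau,s,\cdot)-\overline{m}_1(\tau,s,\cdot)$. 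Performing the analogous reduction on the $q_0$ side (treated and control interchanged, and a sign flip on the $\delta$-term) and forming $\hat{q}^{adj}=\hat{q}_1^{adj}-\hat{q}_0^{adj}$ gives $\sqrt n(\hat{q}^{adj}(\tau)-q(\tau))=A_n(\tau)+B_n(\tau)+o_p(1)$, where $A_n$ (the terms built from the $\zeta_i$'s and from $m_a(\tau,S_i,X_i)$) does not involve the working models, while $B_n$ (the terms built from $n_0(s)\delta_{1,s}$ and its $q_0$-analogue) does and vanishes identically when $\overline{m}_a\equiv m_a$. Finite-dimensional convergence comes from a Lindeberg CLT applied conditionally on $\{S_i,A_i\}_{i\in[n]}$ --- given that conditioning, each summand is a function of $(Y_i(1),Y_i(0),S_i,X_i)$ times a deterministic weight and hence independent across $i$ --- after verifying that the conditional means and variances converge in probability to deterministic limits; tightness of the process in $\tau$ uses the same Donsker and Assumption \ref{ass:mhat}(ii)--(iii) inputs, yielding the tight Gaussian limit $\mathcal B(\tau)$ with the kernel $\Sigma$ of the Supplement. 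For the minimality claim I would exploit the $A_n+B_n$ split: conditional on $\{S_i,A_i,X_i\}_{i\in[n]}$, $B_n(\tau)$ is deterministic, whereas the $\zeta_i$-components of $A_n$ keep conditional mean zero (because $Y_i(a)\indep(\{A_j\},\{S_j,X_j\}_{j\ne i})\mid(S_i,X_i)$) and the $m_a(\tau,S_i,X_i)$-components are $\{S_i,X_i\}$-measurable; combined with Assumption \ref{ass:assignment1}, which makes $\mathbb E[\delta_{1,s}(\tau)\mid\{S_i,X_i\}_{i\in[n]}]\to0$, this gives $\mathbb E[B_n(\tau)]\to0$ and $\mathrm{Cov}(A_n(\tau),B_n(\tau'))\to0$ for all $\tau,\tau'$. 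Hence, for any $(\tau_1,\dots,\tau_K)$, $[\Sigma(\tau_k,\tau_l)]_{k,l\in[K]}=[\Sigma^{*}(\tau_k,\tau_l)]_{k,l\in[K]}+[\Gamma(\tau_k,\tau_l)]_{k,l\in[K]}$, where $\Sigma^{*}=\lim\mathrm{Cov}(A_n,A_n)$ is the kernel under correct specification (attained since then $B_n\equiv0$), and $[\Gamma(\tau_k,\tau_l)]$, being the limit of the covariance matrix of $(B_n(\tau_1),\dots,B_n(\tau_K))$, is positive semidefinite and equals zero at correct specification. This is the asserted matrix-sense minimality.

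The step I expect to be the main obstacle is the stochastic-equicontinuity part of the Bahadur expansion --- controlling the check-function moment \emph{uniformly} over $\tau\in\Upsilon$ under the CAR-induced cross-sectional dependence, together with the delicate bookkeeping of the $\hat\pi(s)$ and $D_n(s)$ terms --- whereas once the influence function is in the $A_n+B_n$ form with $B_n$ orthogonal to $A_n$, the efficiency conclusion is a short linear-algebra step.
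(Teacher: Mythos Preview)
Your weak-convergence plan is workable and close in spirit to the paper's, but the machinery differs. The paper does not argue through the subgradient condition; it applies Knight's identity to write the recentered objective as $-L_{1,n}(\tau)u+L_{2,n}(u,\tau)$, shows $L_{2,n}\to f_1(q_1(\tau))u^2/2$ uniformly, controls $L_{1,n}$ via a CAR-adapted maximal inequality (their Lemma~\ref{lem:max_eq}), and then invokes \cite{K09}'s convex-argmin theorem to obtain the uniform linear expansion in one step. For the limit law the paper does not run a conditional Lindeberg CLT; it uses the Bugni--Canay--Shaikh reordering/coupling device (their Lemma~\ref{lem:w_est}): reorder units within strata by $A_i$, couple with i.i.d.\ copies $\{(X_i^s,Y_i^s(a))\}$, and replace $N(s),n_1(s)$ by their deterministic limits. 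This simultaneously delivers the Gaussian limit and the asymptotic independence of the $\phi_s$-piece from the $(\phi_1,\phi_0)$-pieces, which your conditional-CLT route would have to establish separately (note that your conditional mean given $\{S_i,A_i\}$ is not $o_p(1)$: it equals $\frac{1}{\sqrt n}\sum_i\phi_s(\tau,S_i)$).

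There is a genuine gap in your efficiency step. You base $\mathrm{Cov}(A_n,B_n)\to 0$ on ``$\mathbb E[\delta_{1,s}(\tau)\mid\{S_i,X_i\}]\to 0$'', but under sequential schemes (WEI, BCD) the assignments are not exchangeable given $\{S_i\}$, so $\mathbb E[\delta_{1,s}\mid\{S_i,X_i\}]=\sum_{i:S_i=s}w_i\,g(X_i)$ with $w_i=\mathbb E[A_i/n_1(s)-(1-A_i)/n_0(s)\mid\{S_k\}]$ not identically zero; Assumption~\ref{ass:assignment1}(iv) controls only the aggregate $D_n(s)/n(s)$, not individual $w_i$. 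Even if that conditional expectation were $o_p(1)$, you would need a rate to kill its product with the $O_p(1)$ Part-2 of $A_n$. The orthogonality you want is in fact \emph{exact}, but for a different reason: condition on $\{S_k\}$, use $\{A_k\}\perp\{X_k\}\mid\{S_k\}$ and $X_i\perp X_j\mid\{S_k\}$ so that cross terms vanish via $\mathbb E[g(X_j)\mid S_j]=0$, and kill the diagonal by $\sum_{i:S_i=s}w_i\equiv 0$ (the $\hat\pi$-based IPW weights sum to zero within each stratum). The paper avoids all of this by working purely on the limiting kernel: after deriving $\Sigma(\tau,\tau')$ explicitly, it shows by direct algebra that
\[
\Sigma(\tau,\tau')-\Sigma^*(\tau,\tau')=\mathbb E\bigl[u_i(\tau)u_i(\tau')\bigr],\qquad
u_i(\tau)=\sqrt{\tfrac{1-\pi}{\pi}}\,\tfrac{\delta_1-\bar\delta_1}{f_1(q_1(\tau))}+\sqrt{\tfrac{\pi}{1-\pi}}\,\tfrac{\delta_0-\bar\delta_0}{f_0(q_0(\tau))},
\]
with $\delta_a=m_a-\mathbb E[m_a\mid S]$ and $\bar\delta_a=\overline m_a-\mathbb E[\overline m_a\mid S]$; the right-hand side is a Gram matrix and hence PSD, vanishing exactly when $\overline m_a=m_a$. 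This algebraic route is shorter and sidesteps the CAR-dependence subtleties entirely.
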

	
	Three remarks are in order. First, the expression for the asymptotic variance of $\hat{q}^{adj}(\tau)$ can be found in the proof of Theorem \ref{thm:est}. It is the same whether the randomization scheme achieves strong balance\footnote{We refer readers to \cite{BCS17} for the definition of strong balance.} or not. This robustness is due to the use of the estimated target fraction ($\hat{\pi}(s)$). The same phenomenon was discovered in the simplified setting by \cite{ZZ20}. Second, although our estimator is still consistent and asymptotically normal when the auxiliary regression is misspecified, it is meaningful to pursue the correct specification as it achieves the minimum variance. As the estimator with no adjustments can be viewed as a special case of our estimator with $\overline{m}_a(\cdot) = 0$, Theorem \ref{thm:est} implies that the adjusted estimator with the correctly specified auxiliary regression is more efficient than that with no adjustments. If the auxiliary regression is misspecified, the adjusted estimator can sometimes be less efficient than the unadjusted one, which is known as the Freedman's critique. In Section \ref{sec:aux}, we discuss how to make adjustments that do not harm the precision of the QTE estimator. Third, the asymptotic variance of $\hat{q}^{adj}(\tau)$ depends on $(f_a(q_a(\tau)),m_a(\tau,s,x))_{a = 0,1}$, which are infinite-dimensional nuisance parameters. To conduct analytic inference, it is necessary to nonparametrically estimate these nuisance parameters, which requires tuning parameters. Nonparametric estimation can be sensitive to the choice of tuning parameters and rule-of-thumb tuning parameter selection may not be appropriate for every DGP or every quantile. The use of cross-validation in selecting the tuning parameters is possible in principle but, in practice, time-consuming. These practical difficulties of analytic methods of inference provide strong motivation to investigate bootstrap inference procedures that are much less reliant on tuning parameters.
	
	\section{Multiplier Bootstrap Inference}
	\label{sec:boot}
	We approximate the  asymptotic distributions of $\hat{q}^{adj}(\tau)$ via the multiplier bootstrap. Let $\{\xi_i\}_{i \in [n]}$ be a sequence of bootstrap weights which will be specified later. Define
	$n_1^w(s) = \sum_{i \in [n]}\xi_iA_i1\{S_i=s\}$, $n_0^w(s) = \sum_{i \in [n]}\xi_i(1-A_i)1\{S_i=s\}$, $n^w(s) = \sum_{i \in [n]} \xi_i 1\{S_i=s\} = n_1^w(s) + n_0^w(s)$, and $\hat{\pi}^w(s) = n_1^w(s)/n^w(s)$. The multiplier bootstrap counterpart of $\hat{q}^{adj}(\tau)$ is denoted by $\hat{q}^{w}(\tau)$ and defined as
	\begin{align*}
	\hat{q}^{w}(\tau) = \hat{q}_1^{w}(\tau) - \hat{q}_0^{w}(\tau),
	\end{align*}
	where
	\begin{align}
	\hat{q}_1^{w}(\tau) = & \argmin_q \sum_{i \in [n]}\xi_i\left[\frac{A_i}{\hat{\pi}^w(S_i)}\rho_\tau(Y_i - q) + \frac{(A_i - \hat{\pi}^w(S_i))}{\hat{\pi}^w(S_i)}\widehat{m}_1(\tau,S_i,X_i)q\right],
	\label{eq:q1w}
	\end{align}
	and
	\begin{align}
	\hat{q}_0^{w}(\tau) = & \argmin_q \sum_{i \in [n]}\xi_i\left[\frac{1-A_i}{1-\hat{\pi}^w(S_i)}\rho_\tau(Y_i - q) - \frac{(A_i - \hat{\pi}^w(S_i))}{1-\hat{\pi}^w(S_i)}\widehat{m}_0(\tau,S_i,X_i)q\right].
	\label{eq:q0w}
	\end{align}
	Two comments on implementation are noted here: (i) we do not re-estimate $\widehat{m}_a(\cdot)$ in the bootstrap sample, which is similar to the multiplier bootstrap procedure proposed by \cite{BCFH13}; and (ii) in Section \ref{sec:compute} of the Online Supplement we propose a way to directly compute $(\hat{q}_a^{w}(\tau))_{a=0,1}$ from the subgradient conditions of \eqref{eq:q1w} and \eqref{eq:q0w}, thereby avoiding the optimization. Both features considerably reduce computation time of our bootstrap procedure.
	
	Next, we specify the bootstrap weights.
	\begin{ass}
		Suppose $\{\xi_i\}_{i \in [n]}$ is a sequence of nonnegative i.i.d. random variables with unit expectation and variance and a sub-exponential upper tail.
		\label{ass:weight}
	\end{ass}
	
	\begin{ass}
		Recall $\overline{\Delta}_a(\tau,s,x)$ defined in Assumption \ref{ass:mhat}. We have, for $a=0,1$,
		\begin{align*}
		\sup_{\tau \in \Upsilon,s\in \mathcal{S}}\biggl|\frac{\sum_{i\in I_1(s)}\xi_i\overline{\Delta}_a(\tau,s,X_i)}{n_1^w(s)} - \frac{\sum_{i \in I_0(s)}\xi_i\overline{\Delta}_a(\tau,s,X_i)}{n_0^w(s)}\biggr| = o_p(n^{-1/2}).
		\end{align*}
		\label{ass:mhatw}
	\end{ass}
	
	We require the bootstrap weights to be nonnegative so that the objective functions in \eqref{eq:q1w} and \eqref{eq:q0w} are convex. In practice, we generate $\xi_i$ independently from the standard exponential distribution. Assumption \ref{ass:mhatw} is the bootstrap counterpart of Assumption \ref{ass:mhat}. Continuing with the linear model example considered after Assumption \ref{ass:mhat}, Assumption \ref{ass:mhatw} requires
	\begin{align*}
	\sup_{\tau \in \Upsilon, a =0,1, s\in \mathcal{S}}\left| \left(\frac{\sum_{i\in I_1(s)}\xi_iX_i}{n_1^w(s)}  - \frac{\sum_{i\in I_0(s)}\xi_iX_i}{n_0^w(s)}\right)^\top \left( \hat{ \theta}_{a,s}(\tau) - \theta_{a,s}(\tau)\right)   \right| = o_p(n^{-1/2}),
	\end{align*}
	which holds if $\hat{ \theta}_{a,s}(\tau)$ is a uniformly consistent estimator of $\theta_{a,s}(\tau)$.
	
	\begin{thm}
		Suppose Assumptions \ref{ass:assignment1}--\ref{ass:mhatw} hold. Then, uniformly over $\tau \in \Upsilon$,
		\begin{align*}
		\sqrt{n}(\hat{q}^{w}(\tau) - \hat{q}^{adj}(\tau)) \underset{\xi}{\overset{\mathbb{P}}{\convD}} \mathcal{B}(\tau),
		\end{align*}
		where $\mathcal{B}(\tau)$ is the same Gaussian process defined in Theorem \ref{thm:est}.\footnote{We view $\sqrt{n}(\hat{q}^w(\tau) - \hat{q}^{adj}(\tau))$ and $\mathcal{B}(\tau)$ as two processes indexed by $\tau \in \Upsilon$ and denote them as $G_n$ and $G$, respectively. Then, following \citet[Chapter 2.9]{VW96}, we say $G_n \underset{\xi}{\overset{\mathbb{P}}{\convD}} G$ uniformly over $\tau \in \Upsilon$ if
			\begin{align*}
			\sup_{h \in \text{BL}_1}|\mathbb{E}_\xi h(G_n) - \mathbb{E}h(G)| \convP 0,
			\end{align*}
			where $\text{BL}_1$ is the set of all functions $h:\ell^\infty(\Upsilon) \mapsto [0,1]$ such that $|h(z_1)-h(z_2)| \leq |z_1-z_2|$ for every $z_1,z_2 \in \ell^\infty(\Upsilon)$, and $\mathbb{E}_\xi$ denotes expectation with respect to the bootstrap weights $\{\xi\}_{i \in [n]}$.}
		\label{thm:boot}
	\end{thm}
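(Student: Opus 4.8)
The plan is to reduce the claim, via a chain of Bahadur (asymptotically linear) expansions, to the conditional weak convergence of a multiplier empirical process in $\tau$, and then to apply a conditional functional central limit theorem of the type in \citet[Chapter 2.9]{VW96}; all limits below are taken conditional on the data $\{Y_i,S_i,A_i,X_i\}_{i\in[n]}$ and in probability over the bootstrap weights. I would first establish uniform consistency, $\sup_{\tau\in\Upsilon}|\hat q_a^{w}(\tau)-q_a(\tau)|=o_p(1)$ for $a=0,1$. Because $\xi_i\ge 0$ (Assumption \ref{ass:weight}), the objectives in \eqref{eq:q1w}--\eqref{eq:q0w} are convex in $q$, so by the convexity lemma it suffices to show pointwise consistency together with a uniform-in-$\tau$ weighted law of large numbers; the inputs are $\hat\pi^{w}(s)=n_1^{w}(s)/n^{w}(s)\convP\pi(s)$ (weighted LLN applied to $\{\xi_i 1\{S_i=s\}\}$ and $\{\xi_i A_i 1\{S_i=s\}\}$, together with $\hat\pi(s)\convP\pi(s)$) and a uniform weighted LLN for the check-function and $\widehat{m}_a$ terms, which uses the sub-exponential tail of $\xi_i$ and the entropy bound in Assumption \ref{ass:mhat}(ii).

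Next I would derive the Bahadur representation. Starting from the subgradient conditions of \eqref{eq:q1w} and \eqref{eq:q0w} and reorganizing the sums stratum by stratum --- using $\sum_{i\in I_1(s)}\xi_i/\hat\pi^{w}(s)=n^{w}(s)$ and rewriting the adjustment term within stratum $s$ as $n_0^{w}(s)[(n_1^{w}(s))^{-1}\sum_{i\in I_1(s)}\xi_i\widehat{m}_1-(n_0^{w}(s))^{-1}\sum_{i\in I_0(s)}\xi_i\widehat{m}_1]$ --- I would (a) substitute $\widehat{m}_a=\overline{m}_a+\overline\Delta_a$ and discard the $\overline\Delta_a$ contribution via Assumption \ref{ass:mhatw}; (b) linearize the weighted indicator process $q\mapsto n^{-1/2}\sum_i\xi_i(A_i/\hat\pi^{w}(S_i))(1\{Y_i\le q\}-\tau)$ around $q=q_a(\tau)$ with aggregate slope $f_a(q_a(\tau))$, justified by stochastic equicontinuity of the $(\tau,q)$-indexed weighted empirical process (a product of a VC-type class with the multiplier $\xi$, with controlled uniform entropy under Assumptions \ref{ass:tau}, \ref{ass:mhat}, and \ref{ass:weight}); and (c) replace $\hat\pi^{w}(s)$ by $\pi(s)$ in the slowly varying coefficients while retaining the $O_p(n^{-1/2})$ fluctuation of $\hat\pi^{w}(s)$ as an explicit linear term. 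Collecting terms yields, uniformly over $\tau\in\Upsilon$,
\begin{align*}
\sqrt{n}(\hat q_a^{w}(\tau)-q_a(\tau))=\frac{1}{\sqrt n}\sum_{i\in[n]}\xi_i\,\phi_{a,i}(\tau)+o_p(1),
\end{align*}
where $\phi_{a,i}(\tau)$ is exactly the influence function appearing in the representation of $\sqrt n(\hat q_a^{adj}(\tau)-q_a(\tau))$ established in the proof of Theorem \ref{thm:est}; crucially, as in the non-bootstrap proof, the within-stratum treatment-imbalance terms cancel because both $\hat q_a^{adj}$ and $\hat q_a^{w}$ use estimated fractions ($\hat\pi$ and $\hat\pi^{w}$).

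Subtracting the Theorem \ref{thm:est} representation $\sqrt n(\hat q_a^{adj}(\tau)-q_a(\tau))=n^{-1/2}\sum_i\phi_{a,i}(\tau)+o_p(1)$ then gives
\begin{align*}
\sqrt n(\hat q^{w}(\tau)-\hat q^{adj}(\tau))=\frac{1}{\sqrt n}\sum_{i\in[n]}(\xi_i-1)(\phi_{1,i}(\tau)-\phi_{0,i}(\tau))+o_p(1)
\end{align*}
uniformly over $\tau\in\Upsilon$. The class $\{\phi_{1,i}(\cdot)-\phi_{0,i}(\cdot):\tau\in\Upsilon\}$ is Donsker with a square-integrable envelope (this is already verified in the proof of Theorem \ref{thm:est}), and $\{\xi_i-1\}$ are i.i.d. mean zero with unit variance and a sub-exponential tail; hence the conditional multiplier functional CLT (\citet[Chapter 2.9]{VW96}) applies and the right-hand side converges, conditionally on the data and in probability, to a tight Gaussian process with covariance kernel $\lim_n n^{-1}\sum_{i\in[n]}(\phi_{1,i}(\tau)-\phi_{0,i}(\tau))(\phi_{1,i}(\tau')-\phi_{0,i}(\tau'))$. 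A law of large numbers for covariate-adaptive samples (conditioning on $\{S_i\}_{i\in[n]}$ and using Assumption \ref{ass:assignment1}) identifies this limit with $E[(\phi_{1,i}(\tau)-\phi_{0,i}(\tau))(\phi_{1,i}(\tau')-\phi_{0,i}(\tau'))]$, which equals the covariance kernel $\Sigma(\tau,\tau')$ of $\mathcal B(\tau)$ in Theorem \ref{thm:est} --- the agreement of the bootstrap variance with $\Sigma$, rather than a strict dominance, is precisely what delivers the non-conservatism and is a consequence of using $\hat\pi^{w}$.

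\textbf{Main obstacle.} The technical core is the Bahadur expansion conditional on the data, where one must simultaneously (i) prove stochastic equicontinuity --- uniformly over $\tau\in\Upsilon$ and in the conditional sense --- for empirical processes carrying the extra sub-exponential multiplier $\xi_i$, which requires $\xi$-weighted maximal inequalities over product function classes; (ii) show the linearization slope aggregates across strata to the unconditional density $f_a(q_a(\tau))$; (iii) control the feedback of the random bootstrap fractions $\hat\pi^{w}(s)$ into the linearized expansion; and (iv) verify that the cross-sectional-dependence terms cancel so that only the clean $(\xi_i-1)$-weighted i.i.d. sum survives --- this last point is what transfers the CAR-robustness of $\hat q^{adj}$ to its bootstrap analogue.
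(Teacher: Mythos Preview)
Your overall architecture --- Bahadur expansion, then subtraction to isolate a $(\xi_i-1)$--weighted sum, then conditional weak convergence --- matches the paper's. The Bahadur-expansion portion is essentially the paper's argument (the paper works via Knight's identity and \citet[Theorem 2]{K09} rather than directly from the subgradient, and controls the remainders via Lemma~\ref{lem:max_eq} and Lemma~\ref{lem:Dw}, but the outcome is the same linear representation).

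The genuine gap is the last step. You write that ``the class $\{\phi_{1,i}(\cdot)-\phi_{0,i}(\cdot):\tau\in\Upsilon\}$ is Donsker \dots\ hence the conditional multiplier functional CLT (\citet[Chapter 2.9]{VW96}) applies.'' But the multiplier CLTs in \citet[Chapter 2.9]{VW96} require the underlying observations to be i.i.d.\ draws from a fixed $P$, with the multiplier applied on top. Here the summands $\psi_i(\tau)=A_i\phi_1(\tau,S_i,Y_i(1),X_i)-(1-A_i)\phi_0(\tau,S_i,Y_i(0),X_i)+\phi_s(\tau,S_i)$ involve $A_i$, and under CAR the $\{A_i\}$ are cross-sectionally dependent; so the $\{\psi_i(\tau)\}$ are not an i.i.d.\ empirical process, Theorem~\ref{thm:est} does \emph{not} establish a standard Donsker property for this class, and the VW96 hypothesis is not met. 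Your ``LLN for covariate-adaptive samples'' only addresses the limiting covariance, not tightness or weak convergence.

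The paper fills this gap with a device you do not mention: it reorders observations by stratum (treated first, then controls) and introduces i.i.d.\ copies $\{(\xi_i^s,X_i^s,Y_i^s(1),Y_i^s(0))\}$ with the conditional-on-$S_i=s$ law, so that conditional on $\{A_i,S_i\}$ the process $\varpi_{n,1}^w(\tau)$ equals in distribution a partial-sum process over i.i.d.\ variables with random endpoints $N(s),N(s)+n_1(s),N(s)+n(s)$. This is the content of Lemma~\ref{lem:w_boot}: Step~1 proves unconditional stochastic equicontinuity by comparing with the process having deterministic endpoints $\lfloor nF(s)\rfloor,\lfloor n(F(s)+\pi(s)p(s))\rfloor,\lfloor n(F(s)+p(s))\rfloor$ (using $N(s)/n\to F(s)$, $n_1(s)/n\to\pi(s)p(s)$) and then deduces conditional equicontinuity; Step~2 proves conditional finite-dimensional convergence by the Lindeberg--Feller CLT for the i.n.i.d.\ array $\{(\xi_i-1)\mathcal{J}_i(\tau)\}$ given data, verifying $n^{-1}\sum_i\mathcal{J}_i^2(\tau)\convP\Sigma(\tau,\tau)$ via the same reordering trick. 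Without this partial-sum/reordering argument (or an equivalent way to neutralize the $A_i$-dependence), your appeal to VW96 does not go through.
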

	
	Two remarks are in order. First, Theorem \ref{thm:boot} shows the limit distribution of the bootstrap estimator conditional on data can approximate that of the original estimator uniformly over $\tau \in \Upsilon$. This is the theoretical foundation for the bootstrap confidence intervals and bands described in Section \ref{sec:compute} in the Online Supplement. Specifically, denote $\{\hat{q}^{w,b}(\tau)\}_{b \in [B]}$ as the bootstrap estimates where $B$ is the number of bootstrap replications. Let $\widehat{\mathcal{C}}(\nu)$ and $\mathcal{C}(\nu)$ be the $\nu$th empirical quantile of the sequence  $\{\hat{q}^{w,b}(\tau)\}_{b \in [B]}$ and the $\nu$th standard normal critical value, respectively. Then, we suggest using the bootstrap estimator to construct the standard error of $\hat{q}^{adj}(\tau)$ as $\hat{\sigma} = \frac{\widehat{\mathcal{C}}(0.975)- \widehat{\mathcal{C}}(0.025)}{\mathcal{C}(0.975) - \mathcal{C}(0.025)}$. Note that, unlike \cite{HL21}, our bootstrap standard error is not conservative.  In our context, the bootstrap estimator of $\sigma^2$ considered by \cite{HL21} is $\mathbb{E}^*(\sqrt{n}(\hat{q}^w(\tau) - \hat{q}^{adj}(\tau))^2)$, where $\mathbb{E}^*$ is the conditional expectation given data. It is well-known that weak convergence does not imply convergence in $L_2$-norm, which explains why they can show their estimator is in general conservative. Instead, we use a different estimator of the standard error and can show it is consistent given weak convergence. Second, such a bootstrap approximation is consistent under CAR. \cite{ZZ20} showed that for the QTE estimation without regression adjustment, bootstrapping the IPW QTE estimator with the estimated target fraction results in non-conservative inference, while bootstrapping the IPW estimator with the true fraction is conservative under CARs. As the estimator considered by \cite{ZZ20} is a special case of our regression-adjusted estimator with $\widehat{m}_a(\cdot) =0$, we conjecture that the same conclusion holds. A proof of conservative bootstrap inference with the true target fraction is not included in the paper due to the space limit.\footnote{Full statements and proofs are lengthy because we need to derive the limit distributions of not only the bootstrap but also the original estimator with the true target fraction. Although the negative result is theoretically interesting, we are not aware of any empirical papers using the true target fraction while making regression adjustments. Moreover, our method is shown to have better performance than the one with the true target fraction in simulations. So the practical value of proving the negative result is limited.} Our simulations confirm both the correct size coverage of our inference method using the bootstrap with the estimated target fraction and the conservatism of the bootstrap with the true target fraction. The standard error of the QTE estimator is found to be 34.9\% larger on average by using the true rather than the estimated target fraction in the simulations (see Tables \ref{tab:Sim_Point1} below and \ref{tab:Sim_Point_0.5'} in the Online Supplement).

	\section{Auxiliary Regressions} \label{sec:aux}
	
	In this section, we consider two approaches to estimation for the auxiliary regressions: (1) a parametric method and (2) a nonparametric method. In Section \ref{sec:reg} of the Online Supplement, we further consider a regularization method. For the parametric method, we do not require the model to be correctly specified. We propose ways to estimate the pseudo true value of the auxiliary regression. For the other two methods, we (nonparametrically) estimate the true model so that the asymptotic variance of $\hat{q}^{adj}(\tau)$ achieves its minimum based on Theorem \ref{thm:est}. For all three methods, we verify Assumptions \ref{ass:mhat} and \ref{ass:mhatw}.
	
	\subsection{Parametric method}
	\label{sec:aux_par}
	In this section, we consider the case where $X_i$ is finite-dimensional. Recall $m_a(\tau,s,x) \equiv \tau - \mathbb{P}\left(Y_i(a) \leq q_a(\tau)|X_i=x,S_i=s\right)$ for $a=0,1$. We propose to model $\mathbb{P}\left(Y_i(a) \leq q_a(\tau)|X_i,S_i=s\right)$ as $\Lambda_{\tau,s}(X_i,\theta_{a,s}(\tau))$, where $\theta_{a,s}(\tau)$ is a finite dimensional parameter that depends on $(a,s,\tau)$ so that our model for $m_a(\tau,s,X_i)$ is
	\begin{align}
	\overline{m}_a(\tau,s,X_i) = \tau - \Lambda_{\tau,s}(X_i,\theta_{a,s}(\tau)).
	\label{eq:moverline_par}
	\end{align}
	We note that, as we allow for misspecification, the researchers have the freedom to choose any functional forms for $\Lambda_{\tau,s}(\cdot)$ and any pseudo true values for $\theta_{a,s}(\tau)$, both of which can vary with respect to $(\tau,s)$. For example, if we assume a logistic regression with $\Lambda_{\tau,s}(X_i,\theta_{a,s}(\tau)) = \lambda(X_i^\top\theta_{a,s}(\tau))$, where $\lambda(\cdot)$ is the logistic CDF, then there are various choices of $\theta_{a,s}(\tau)$ such as the maximizer of the population pseudo likelihood, the maximizer of the population version of the least squares objective function, or the minimizer of the asymptotic variance of the adjusted QTE estimator. As the logistic model is potentially misspecified, these three pseudo true values are not necessarily the same and can lead to different adjustments, and thus, different asymptotic variances of the corresponding adjusted QTE estimators. 
	
	Next, we first state a general result for generic choices of $\Lambda_{\tau,s}(\cdot)$ and $\theta_{a,s}(\tau)$. Suppose we estimate $\theta_{a,s}(\tau)$ by $\hat{\theta}_{a,s}(\tau)$. Then, the corresponding $\widehat{m}_a(\tau,s,X_i)$ can be written as
	\begin{align}
	\widehat{m}_a(\tau,s,X_i) = \tau - \Lambda_{\tau,s}(X_i,\hat{\theta}_{a,s}(\tau)).
	\label{eq:mhat_par}
	\end{align}
	\begin{ass}
		\begin{enumerate}[label=(\roman*)]
			\item Suppose there exist a positive random variable $L_i$ and a positive constant $C>0$ such that 
			\begin{align*}
			&	\sup_{\tau_1,\tau_2 \in \Upsilon,s\in \mathcal{S},||\theta||\leq C}|| \Lambda_{\tau_1,s}(X_i,\theta)- \Lambda_{\tau_2,s}(X_i,\theta)||_2 \leq L_i|\tau_1-\tau_2|, \quad \sup_{\tau \in \Upsilon, s\in \mathcal{S},||\theta||\leq C}|\Lambda_{\tau,s}(X_i,\theta)| \leq  L_i, \\
			& \sup_{\tau \in \Upsilon, s\in \mathcal{S},||\theta||\leq C}|\partial_\theta \Lambda_{\tau,s}(X_i,\theta)| \leq  L_i, \quad \text{and} \quad \mathbb{E}(L_i^d|S_i=s) \leq C <\infty \quad \text{for some $d > 2$.}
			\end{align*}
			\item $\sup_{\tau_1,\tau_2 \in \Upsilon, a=0,1,s\in \mathcal{S}}|\theta_{a,s}(\tau_1) - \theta_{a,s}(\tau_2)| \leq C|\tau_1-\tau_2|$.
			\item $\sup_{\tau \in \Upsilon, a=0,1,s\in \mathcal{S}}||\hat{\theta}_{a,s}(\tau)-\theta_{a,s}(\tau)||_2 \convP 0$.
		\end{enumerate}
		\label{ass:par}
	\end{ass}
	
	Three remarks are in order. First, common choices for auxiliary regressions are linear probability, logistic, and probit regressions, corresponding to $\Lambda_{\tau,s}(X_i,\theta_{a,s}(\tau)) = X_i^\top\theta_{a,s}(\tau)$,  $\lambda(\vec{X}_i^\top\theta_{a,s}(\tau))$, and $\Phi(\vec{X}_i^\top\theta_{a,s}(\tau))$, respectively, where $\Phi(\cdot)$ is the standard normal CDF and $\vec{X}_i = (1,X_i^\top)^\top$. For these models, the functional form $\Lambda_{\tau,s}(\cdot)$ does not depend on $(\tau,s)$, and Assumption \ref{ass:par}(i) holds automatically. For the linear regression case, we do not include the intercept because our regression adjusted estimators (\eqref{eq:q1adj} and \eqref{eq:q0adj}) and their bootstrap counterparts (\eqref{eq:q1w} and \eqref{eq:q0w}) are numerically invariant to location shift of the auxiliary regressions. Second, it is also important to allow the functional form $\Lambda_{\tau,s}(\cdot)$ to vary across $\tau$ to incorporate the case in which the regressor $X_i$ in the linear, logistic, and probit regressions is replaced by $W_{i,s}(\tau)$, a function of $X_i$ that depends on $(\tau,s)$. We give a concrete example for this situation in Section \ref{sec:lpml}. Third, Assumption \ref{ass:par}(ii) also holds automatically if $\Upsilon$ is finite. When $\Upsilon$ is infinite, this condition is still mild.
	
	\begin{thm}
		Denote $\hat{q}^{par}(\tau)$ and $\hat{q}^{par,w}(\tau)$ as the $\tau$th QTE estimator and its multiplier bootstrap counterpart defined in Sections \ref{sec:est} and \ref{sec:boot}, respectively, with $\overline{m}_a(\tau,S_i,X_i)$ and $\widehat{m}_a(\tau,S_i,X_i)$ defined in \eqref{eq:moverline_par} and \eqref{eq:mhat_par}, respectively. Suppose Assumptions \ref{ass:assignment1}, \ref{ass:tau}, \ref{ass:weight}, and \ref{ass:par} hold. Then, Assumptions \ref{ass:mhat} and \ref{ass:mhatw} hold, which further implies Theorems \ref{thm:est} and \ref{thm:boot} hold for $\hat{q}^{par}(\tau)$ and $\hat{q}^{par,w}(\tau)$, respectively.
		\label{thm:par}
	\end{thm}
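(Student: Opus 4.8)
The statement asks only that Assumptions \ref{ass:mhat} and \ref{ass:mhatw} follow from the maintained conditions; once these are verified, Theorems \ref{thm:est} and \ref{thm:boot} apply to $\hat q^{par}(\tau)$ and $\hat q^{par,w}(\tau)$ verbatim, since Assumptions \ref{ass:assignment1}, \ref{ass:tau}, and \ref{ass:weight} are already imposed. So the plan is to check, in turn, the regularity parts \ref{ass:mhat}(ii)--(iii), then the balance condition \ref{ass:mhat}(i), and finally its bootstrap analogue \ref{ass:mhatw}. Throughout I use that $\sup_{\tau\in\Upsilon,a,s}\|\theta_{a,s}(\tau)\|<\infty$, which follows from Assumption \ref{ass:par}(ii) (Lipschitz, hence bounded, on the compact $\Upsilon$) together with finiteness of $\mathcal S$ and $a\in\{0,1\}$; after possibly enlarging $C$, the pseudo-true values and, on a high-probability event, the estimators $\hat\theta_{a,s}(\tau)$ all lie in $\{\|\theta\|\le C\}$, where the envelope bounds in Assumption \ref{ass:par}(i) are in force. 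Parts \ref{ass:mhat}(ii)--(iii) concern only $\overline m_a(\tau,s,x)=\tau-\Lambda_{\tau,s}(x,\theta_{a,s}(\tau))$: the envelope $F_a(s,x)=1+L(x)$ works because $|\Lambda_{\tau,s}(x,\theta)|\le L(x)$ on $\{\|\theta\|\le C\}$ and $\mathbb E(F_a(S_i,X_i)^q|S_i=s)<\infty$ for $q=d>2$; and $\tau\mapsto\overline m_a(\tau,s,x)$ is Lipschitz uniformly in $x$ with constant $\le 1+(1+C)L(x)$, obtained by splitting $\overline m_a(\tau_2,s,x)-\overline m_a(\tau_1,s,x)$ into a $\tau$-increment of $\Lambda$ and a $\theta$-increment along $\theta_{a,s}(\tau_2)\to\theta_{a,s}(\tau_1)$ and invoking the $\tau$-Lipschitz bound, the gradient bound, and the Lipschitz property of $\theta_{a,s}(\cdot)$ from Assumption \ref{ass:par}(i)--(ii). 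A class indexed by a single real parameter with an integrable Lipschitz modulus satisfies a covering-number bound of the form in Assumption \ref{ass:mhat}(ii) (take the union over the finite set $\mathcal S$), and squaring the Lipschitz bound and conditioning gives $\mathbb E[(\overline m_a(\tau_2,S_i,X_i)-\overline m_a(\tau_1,S_i,X_i))^2\mid S_i=s]\le C|\tau_2-\tau_1|^2\le C'|\tau_2-\tau_1|$, i.e. \ref{ass:mhat}(iii).

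The substantive step is \ref{ass:mhat}(i). Work on $E_n=\{\sup_{\tau,a,s}\|\hat\theta_{a,s}(\tau)-\theta_{a,s}(\tau)\|_2\le\eps_n\}$ with $\eps_n\downarrow0$ slowly enough that $\mathbb P(E_n)\to1$ (Assumption \ref{ass:par}(iii)) and $\eps_n$ small enough that $\hat\theta_{a,s}(\tau)\in\{\|\theta\|\le C\}$ there. On $E_n$ the map $x\mapsto\overline\Delta_a(\tau,s,x)=\Lambda_{\tau,s}(x,\theta_{a,s}(\tau))-\Lambda_{\tau,s}(x,\hat\theta_{a,s}(\tau))$ lies in the shrinking class $\mathcal G_{a,s}(\eps_n)=\{x\mapsto\Lambda_{\tau,s}(x,\vartheta_1)-\Lambda_{\tau,s}(x,\vartheta_2):\tau\in\Upsilon,\ \|\vartheta_1\|\vee\|\vartheta_2\|\le C,\ \|\vartheta_1-\vartheta_2\|\le\eps_n\}$, which has envelope $\eps_nL(x)$ (gradient bound plus convexity of the ball) and, divided by $\eps_n$, satisfies a covering-number bound as in Assumption \ref{ass:mhat}(ii). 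Using the identity
\begin{align*}
\frac{1}{n_1(s)}\sum_{i\in I_1(s)}g(X_i)-\frac{1}{n_0(s)}\sum_{i\in I_0(s)}g(X_i)=\frac{n(s)}{n_0(s)}\left(\frac{1}{n_1(s)}\sum_{i\in I_1(s)}g(X_i)-\frac{1}{n(s)}\sum_{i:\,S_i=s}g(X_i)\right),
\end{align*}
the left side of \ref{ass:mhat}(i) equals $\tfrac{n(s)}{n_0(s)}V_{n,s}\big(\overline\Delta_a(\tau,s,\cdot)\big)$ with $V_{n,s}(g):=\tfrac{1}{n_1(s)}\sum_{i\in I_1(s)}g(X_i)-\tfrac{1}{n(s)}\sum_{i:S_i=s}g(X_i)$. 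Conditionally on $(\{S_i\},\{X_i\})$ the assignment rules of Examples \ref{ex:srs}--\ref{ex:sbr} make $I_1(s)$ behave (for SRS and SBR exactly, for BCD and WEI up to the usual approximations) like a sample drawn without replacement from $\{i:S_i=s\}$, so $V_{n,s}(g)$ is conditionally mean zero with conditional variance $\lesssim n_1(s)^{-1}\cdot\tfrac{1}{n(s)}\sum_{i:S_i=s}g(X_i)^2\le n_1(s)^{-1}\eps_n^2\cdot\tfrac{1}{n(s)}\sum_{i:S_i=s}L_i^2=\Op(\eps_n^2/n)$, using the law of large numbers for $L_i^2$ (finite since $\mathbb E(L_i^d|S_i=s)<\infty$, $d>2$). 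A maximal inequality over the entropy-bounded class $\mathcal G_{a,s}(\eps_n)$ upgrades this to $\sup_{g\in\mathcal G_{a,s}(\eps_n)}|V_{n,s}(g)|=\Op(\eps_nn^{-1/2})=\op(n^{-1/2})$; since $n(s)/n_0(s)=\Op(1)$ and $\mathcal S$ is finite, \ref{ass:mhat}(i) follows.

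Assumption \ref{ass:mhatw} is obtained by the same argument with $V_{n,s}$ replaced by its $\xi$-weighted analogue and the denominators $n_a(s)$ by $n_a^w(s)$. Because $\{\xi_i\}$ is i.i.d., nonnegative, with unit mean and variance (Assumption \ref{ass:weight}), $n_a^w(s)/n_a(s)\convP1$, and the sub-exponential tail of $\xi_i$ together with $\mathbb E(L_i^d|S_i=s)<\infty$, $d>2$, gives $\xi_iL_i$ enough moments to run the same (now weighted) maximal inequality conditional on $(\{S_i\},\{X_i\},\{\xi_i\})$; carrying the bound through as before yields $\op(n^{-1/2})$. With Assumptions \ref{ass:mhat} and \ref{ass:mhatw} verified, Theorems \ref{thm:est} and \ref{thm:boot} deliver the asserted limit theory for $\hat q^{par}(\tau)$ and $\hat q^{par,w}(\tau)$.

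I expect the main obstacle to be the uniform $\op(n^{-1/2})$ control of the within-stratum treated-versus-control discrepancy in the middle step (and its bootstrap twin): the process $\sqrt n\,V_{n,s}$ is not a standard i.i.d. empirical process, because covariate-adaptive randomization induces (negative) cross-sectional dependence among the $A_i$'s, so one must either condition on the strata labels and invoke finite-population concentration/CLT tools for sampling without replacement—clean for SRS and SBR, and requiring the extra bookkeeping already used for the main theorems for BCD/WEI—or adapt the empirical-process inequalities the paper develops elsewhere. The shrinking radius $\eps_n\to0$ is what turns the generic $\Op(n^{-1/2})$ balance rate into the required $\op(n^{-1/2})$. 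Everything else—the Lipschitz and moment bookkeeping behind \ref{ass:mhat}(ii)--(iii) and the envelope calculations—is routine given Assumption \ref{ass:par}.
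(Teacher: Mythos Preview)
Your treatment of Assumptions \ref{ass:mhat}(ii)--(iii) is essentially what the paper does: both exploit the Lipschitz structure in Assumption \ref{ass:par}(i)--(ii) to get a $\tau$-Lipschitz bound $|\overline m_a(\tau_2,s,x)-\overline m_a(\tau_1,s,x)|\le (CL(x)+1)|\tau_2-\tau_1|$, from which the covering-number and $L_2$-continuity conditions follow directly.

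Where your proposal departs from the paper, and where it has a gap, is the balance condition \ref{ass:mhat}(i) (and its bootstrap twin). You condition on $(\{S_i\},\{X_i\})$ and treat the randomness as coming from the assignment $\{A_i\}$, arguing that $I_1(s)$ is (approximately) a simple random sample without replacement. This is \emph{not} implied by Assumption \ref{ass:assignment1}: that assumption permits any scheme with $\{A_i\}\indep\{X_i,Y_i(0),Y_i(1)\}\mid\{S_i\}$ and $D_n(s)/n(s)=o_p(1)$, including schemes where $I_1(s)$ is deterministic given $\{S_i\}$ (e.g.\ assign the first $\lfloor\pi(s)n(s)\rfloor$ arrivals in each stratum). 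For such schemes $V_{n,s}(g)$ is not conditionally mean zero given $(\{S_i\},\{X_i\})$, and your finite-population sampling argument breaks down. Your hedge ``for BCD and WEI up to the usual approximations'' points to real work you have not supplied, and in any case would still fall short of the generality of Assumption \ref{ass:assignment1}.

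The paper avoids this entirely by conditioning the other way---on $\{A_i,S_i\}$---so that $\{X_i\}_{i\in I_1(s)}$ and $\{X_i\}_{i\in I_0(s)}$ are, by Assumption \ref{ass:assignment1}(i)--(ii), two independent i.i.d.\ samples from the law of $X\mid S=s$. The key device is to subtract the common constant $\Psi_{\tau,s}(\theta_{a,s}(\tau),\hat\theta_{a,s}(\tau)):=\mathbb E[\Lambda_{\tau,s}(X,\theta_{a,s}(\tau))-\Lambda_{\tau,s}(X,\hat\theta_{a,s}(\tau))\mid S=s]$ from both averages (it cancels in the difference), after which each of
\[
\sup_{\tau,s}\Bigl|\tfrac{1}{n_a(s)}\sum_{i\in I_a(s)}\xi_i\bigl[\overline\Delta_a(\tau,s,X_i)-\Psi_{\tau,s}(\theta_{a,s}(\tau),\hat\theta_{a,s}(\tau))\bigr]\Bigr|,\qquad a=0,1,
\]
is bounded separately. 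On the event $\{\sup_{\tau,a,s}\|\hat\theta_{a,s}(\tau)-\theta_{a,s}(\tau)\|\le\eps\}$ the integrand lies in the VC-type class $\{\xi[\Lambda_{\tau,s}(X,\theta_1)-\Lambda_{\tau,s}(X,\theta_2)-\Psi_{\tau,s}(\theta_1,\theta_2)]:\|\theta_1-\theta_2\|\le\eps\}$ with $\sup_f\mathbb Pf^2\le C\eps^2$, and the i.i.d.\ coupling plus a standard maximal inequality (e.g.\ \cite{CCK14}) gives a rate $O_p(\sqrt{\eps^2/n}+n^{1/d-1})$; sending $n\to\infty$ then $\eps\to0$ yields $o_p(n^{-1/2})$. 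This is exactly your alternative ``adapt the empirical-process inequalities the paper develops elsewhere''---and it is the route the paper takes, precisely because it is scheme-agnostic and needs nothing beyond Assumption \ref{ass:assignment1}.
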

	Theorem \ref{thm:par} shows that, as long as the estimator of the pseudo true value ($\hat{\theta}_{a,s}(\tau)$) is uniformly consistent, under mild regularity conditions, all the general estimation and bootstrap inference results established in Sections \ref{sec:est} and \ref{sec:boot} hold. 
	
	\subsubsection{Linear probability model}
	\label{sec:lp}
	In this section, we consider linear adjustment with parameter $t_{a,s}(\tau)$ such that 
	\begin{align}
	\Lambda_{\tau,s}(X_i,t_{a,s}(\tau)) = W_{i,s}^\top(\tau) t_{a,s}(\tau) \quad \text{and} \quad \overline{m}_a(\tau,s,X_i) = \tau - W_{i,s}^\top(\tau) t_{a,s}(\tau),
	\label{eq:adj_linear}
	\end{align}
	where the regressor $W_{i,s}(\tau)$ is a function of $X_i$ but the functional form may vary across $s,\tau$. For example, we can consider $W_{i,s}(\tau) = X_i$, the transformations of $X_i$ such as quadratic and interaction terms, and some prediction of $(1\{Y_i(1) \leq q_1(\tau)\},1\{Y_i(0) \leq q_0(\tau)\})$ given $X_i$ and $S_i = s$. The last example is further explained in Section \ref{sec:lpml}.   
	% As we allow for the adjustment to be misspecified, researchers have the freedom to determine the pseudo true value  $\theta_{a,s}(\tau)$. 
	
	We note that the asymptotic variance (denoted as $\sigma^2$) of the $\hat{q}^{adj}(\tau)$ is a function of the working model ($\overline{m}_{a}(\tau,s,\cdot)$), which is further indexed by its parameters (denoted as $\{t_{a,s}(\tau)\}_{a = 0,1, s \in \mathcal{S}}$), i.e., $\sigma^2 = \sigma^2(\{\overline{m}_{a}(\tau,s,\cdot;t_{a,s})\}_{a = 0 ,1, s \in \mathcal{S}})$. Our optimal linear adjustment corresponds to parameter value $\theta_{a,s}(\tau)$ such that it minimizes $\sigma^2(\{\overline{m}_{a}(\tau,s,\cdot;t_{a,s})\}_{a = 0 ,1, s \in \mathcal{S}})$, i.e.,
	\begin{align*}
	\{\theta_{a,s}(\tau)\}_{a =0,1, s \in \mathcal{S}} \in \argmin_{t_{a,s}: a =0,1, s \in \mathcal{S}} \sigma^2(\{\overline{m}_{a}(\tau,s,\cdot;t_{a,s})\}_{a = 0 ,1, s \in \mathcal{S}}).
	\end{align*}
	
	\begin{ass}
		Define $\tilde{W}_{i,s}(\tau) = W_{i,s}(\tau) - \mathbb{E}(W_{i,s}(\tau)|S_i=s)$. 	There exist constants $0<c< C<\infty$ such that
		\begin{align*}
		c< \inf_{a = 0,1, s \in \mathcal{S},\tau \in \Upsilon}\lambda_{\min}(\mathbb{E}\tilde{W}_{i,s}(\tau)\tilde{W}_{i,s}(\tau)^\top|S_i=s) \leq \sup_{a =0,1, s\in \mathcal{S},\tau \in \Upsilon}\lambda_{\max}(\mathbb{E}\tilde{W}_{i,s}(\tau)\tilde{W}_{i,s}(\tau)^\top|S_i=s) \leq C
		\end{align*}
		and $\mathbb{E}(||\tilde{W}_{i,s}||_2^d|S_i=s) \leq C$ for some $d>2$, where for a generic symmetric matrix $U$, $\lambda_{\min}(U)$ and $\lambda_{\max}(U)$ denote the minimal and maximal eigenvalues of $U$, respectively.
		\label{ass:regX}
	\end{ass}

	The next theorem derives the closed-form expression for the optimal linear coefficient. 
	\begin{thm}
		Suppose Assumptions \ref{ass:assignment1}, \ref{ass:tau}, \ref{ass:weight}, \ref{ass:par}, \ref{ass:regX} hold, and $\Lambda_{\tau,s}(\cdot)$ is defined in \eqref{eq:adj_linear}. Further denote the asymptotic covariance matrix of $(\hat{q}^{par}(\tau_1),\cdots,\hat{q}^{par}(\tau_K))$ for any finite set of quantile indices $(\tau_1,\cdots,\tau_K)$ as $[\Sigma^\textit{LP}(\tau_k,\tau_l)]_{k,l \in [K]}$. Then, $[\Sigma^\textit{LP}(\tau_k,\tau_l)]_{k,l \in [K]}$ is minimized in the matrix sense at $\left(\theta_{1,s}(\tau_k),\theta_{0,s}(\tau_k)\right)_{k \in [K]}$ such that 
		\begin{align*}
		\frac{\theta_{1,s}(\tau_k)}{f_1(q_1(\tau_k))} + \frac{\pi(s) \theta_{0,s}(\tau_k)}{(1-\pi(s))f_0(q_0(\tau_k))} = \frac{\theta_{1,s}^\textit{LP}(\tau_k)}{f_1(q_1(\tau_k))} + \frac{\pi(s)\theta_{0,s}^\textit{LP}(\tau_k)}{(1-\pi(s))f_0(q_0(\tau_k))},~k \in [K],
		\end{align*}
		where for $\tau = \tau_1,\cdots,\tau_K$ and $a=0,1$,
		\begin{align*}
		& \theta_{a,s}^\textit{LP}(\tau) = \left[\mathbb{E}(\tilde{W}_{i,s}(\tau) \tilde{W}_{i,s}(\tau)^\top|S_i=s)\right]^{-1}\mathbb{E}\left[\tilde{W}_{i,s}(\tau) 1\{Y_i(a) \leq q_a(\tau)\}|S_i=s\right].
		\end{align*}
		\label{thm:theta_as^star}
	\end{thm}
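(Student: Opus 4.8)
The plan is to work directly from the closed form of the asymptotic covariance kernel obtained in the proof of Theorem~\ref{thm:est} and to read off how it depends on the working models $\overline m_a$. Tracing the influence-function expansion of $\hat q^{adj}(\tau)=\hat q_1^{adj}(\tau)-\hat q_0^{adj}(\tau)$, the adjustment-dependent piece at quantile $\tau$ collects, after centering by $\pi(S_i)$, a single scalar $E(\tau,S_i,X_i)$ multiplying the mean-zero factor $A_i-\pi(S_i)$, and $E$ depends on $(\overline m_1,\overline m_0)$ only through the combination
\begin{align*}
E(\tau,s,x)=\frac{\overline m_1(\tau,s,x)-m_1(\tau,s,x)}{\pi(s)f_1(q_1(\tau))}+\frac{\overline m_0(\tau,s,x)-m_0(\tau,s,x)}{(1-\pi(s))f_0(q_0(\tau))},
\end{align*}
which vanishes at correct specification. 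Because, conditionally on $S_i$, $A_i-\pi(S_i)$ is mean zero and independent of $(Y_i(1),Y_i(0),X_i)$, all cross-covariances between the $E$-term and the remaining ($\overline m$-free) parts of the influence function cancel, so that for any finite $(\tau_1,\dots,\tau_K)$,
\begin{align*}
[\Sigma^\textit{LP}(\tau_k,\tau_l)]_{k,l\in[K]}=[\Sigma_0(\tau_k,\tau_l)]_{k,l\in[K]}+\mathbb{E}\big[\pi(S_i)(1-\pi(S_i))\,\mathbf E_i\mathbf E_i^\top\big],
\end{align*}
where $\mathbf E_i=(E(\tau_1,S_i,X_i),\dots,E(\tau_K,S_i,X_i))^\top$ and $[\Sigma_0(\tau_k,\tau_l)]$ does not depend on the adjustment (this is the minimum identified in Theorem~\ref{thm:est}). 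I would establish this decomposition by grouping terms in the variance formula used in the proof of Theorem~\ref{thm:est}.

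Second, I would substitute the linear probability form $\overline m_a(\tau,s,X_i)=\tau-W_{i,s}^\top(\tau)t_{a,s}(\tau)$. The constant $\tau$ and the within-stratum mean of $W_{i,s}(\tau)$ both drop out by the location invariance of $\hat q_a^{adj}$ noted after Assumption~\ref{ass:par}, so within stratum $s$,
\begin{align*}
E(\tau_k,S_i,X_i)=\frac{\mathbb{P}(Y_i(1)\le q_1(\tau_k)|S_i,X_i)}{\pi(S_i)f_1(q_1(\tau_k))}+\frac{\mathbb{P}(Y_i(0)\le q_0(\tau_k)|S_i,X_i)}{(1-\pi(S_i))f_0(q_0(\tau_k))}-\frac{\tilde W_{i,S_i}^\top(\tau_k)}{\pi(S_i)}\,g_{S_i}(\tau_k),
\end{align*}
where $g_s(\tau):=\dfrac{t_{1,s}(\tau)}{f_1(q_1(\tau))}+\dfrac{\pi(s)t_{0,s}(\tau)}{(1-\pi(s))f_0(q_0(\tau))}$ is exactly the combination appearing in the theorem, and the map $(t_{1,s}(\tau),t_{0,s}(\tau))\mapsto g_s(\tau)$ is surjective (e.g.\ take $t_{0,s}(\tau)=0$). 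Hence $[\Sigma^\textit{LP}(\tau_k,\tau_l)]$ depends on the linear coefficients only through the free and (across $k$) decoupled vectors $\{g_s(\tau_k)\}$, and minimizing it in the positive-semidefinite order reduces to minimizing $\mathbb{E}[\pi(S_i)(1-\pi(S_i))\,\mathbf E_i\mathbf E_i^\top]$ over those vectors.

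Third, I would carry out the minimization. Conditioning on $S_i=s$ (strata are independent and carry separate $g_s$), each coordinate $E(\tau_k,\cdot)$ ranges over an affine translate of $\mathrm{span}\{\tilde W_{i,s}(\tau_k)_j\}$, and by a product-space $L^2$-projection argument the positive-semidefinite minimum is attained coordinatewise by choosing $\tilde W_{i,s}^\top(\tau_k)g_s(\tau_k)/\pi(s)$ to be the within-stratum best linear predictor of the oracle combination above from $\tilde W_{i,s}(\tau_k)$; the point that needs care is that the resulting regression residual is orthogonal to the regressors (at all the quantile indices), so the off-diagonal cross terms at the candidate optimum vanish and the difference from the candidate is a sum of positive-semidefinite matrices. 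Assumption~\ref{ass:regX} guarantees $\mathbb{E}(\tilde W_{i,s}(\tau)\tilde W_{i,s}(\tau)^\top|S_i=s)$ is invertible, so this best linear predictor exists, and by linearity of projection together with iterated expectations (regressing $\mathbb{P}(Y_i(a)\le q_a(\tau_k)|S_i,X_i)$ on $\tilde W_{i,s}(\tau_k)$ coincides with regressing $1\{Y_i(a)\le q_a(\tau_k)\}$ on it), the optimal $g_s(\tau_k)$ equals $\dfrac{\theta^\textit{LP}_{1,s}(\tau_k)}{f_1(q_1(\tau_k))}+\dfrac{\pi(s)\theta^\textit{LP}_{0,s}(\tau_k)}{(1-\pi(s))f_0(q_0(\tau_k))}$ with $\theta^\textit{LP}_{a,s}(\tau_k)$ as displayed — which is precisely the asserted identity. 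Finally, since any $(\theta_{1,s},\theta_{0,s})$ satisfying it is a uniformly consistent (in fact deterministic) pseudo-true value, Assumptions~\ref{ass:mhat} and \ref{ass:mhatw} hold by Theorem~\ref{thm:par}, so the limit theory of Theorem~\ref{thm:est} applies and the minimum is attained by such a choice.

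The main obstacle is the first step: extracting from the proof of Theorem~\ref{thm:est} the exact dependence of the covariance kernel on $(\overline m_1,\overline m_0)$, i.e.\ verifying both that the adjustment enters only through the scalar combination $E(\tau,s,x)$ and that all cross-covariances between the $E$-term, the outcome term, and the between-stratum and $\hat\pi$-estimation terms cancel, leaving the clean additive block $\mathbb{E}[\pi(S_i)(1-\pi(S_i))\mathbf E_i\mathbf E_i^\top]$. Once that reduction is secured, the optimization over a product of within-stratum linear spans is a routine Hilbert-space projection and the normal equations deliver the closed form.
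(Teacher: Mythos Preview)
Your approach is correct and mirrors the paper's: it too substitutes the linear form into the decomposition $\Sigma-\Sigma^*=\mathbb{E}[u_i(\tau)u_i(\tau')]$ obtained in the proof of Theorem~\ref{thm:est}, observes that the dependence on $(\theta_{1,s},\theta_{0,s})$ collapses to the single combination $\beta_s(\tau)=\sqrt{(1-\pi(s))/\pi(s)}\,\theta_{1,s}/f_1+\sqrt{\pi(s)/(1-\pi(s))}\,\theta_{0,s}/f_0$ (your $g_s$ rescaled by the harmless factor $\sqrt{(1-\pi(s))/\pi(s)}$), and reads off the minimizer via the within-stratum least-squares normal equations. The only wrinkle is that your $E$ should be stratum-demeaned for the additive block to equal exactly $\Sigma-\Sigma^*$ (as written, it carries a stratum-level constant from the uncentered $m_a$), but this does not affect the argmin since the regressors $\tilde W_{i,s}(\tau)$ are already centered.
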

	
	Four remarks are in order. First, the optimal linear coefficients $\{\theta_{a,s}(\tau)\}_{a = 0 ,1, s \in \mathcal{S}}$ are not uniquely defined. In order to achieve the minimal variance, we only need to consistently estimate one of the minimizers. We choose
	\begin{align*}
	(\theta_{1,s}(\tau),\theta_{0,s}(\tau)) = (\theta_{1,s}^\textit{LP}(\tau),\theta_{0,s}^\textit{LP}(\tau)),~s \in \mathcal{S},
	\end{align*}
	as this choice avoids estimation of the densities $f_1(q_1(\tau))$ and $f_0(q_0(\tau))$. In Theorem \ref{prop:lp} below, we propose estimators of $\theta_{1,s}^\textit{LP}(\tau)$ and $\theta_{0,s}^\textit{LP}(\tau)$ and show they are consistent uniformly over $s$ and $\tau$. Second, note that no adjustment is nested by our linear adjustment with zero coefficients. Due to the optimality result established in Theorem \ref{thm:theta_as^star}, our regression-adjusted QTE estimator with (consistent estimators of) $\{\theta_{a,s}^\textit{LP}(\tau)\}_{a = 0 ,1, s \in \mathcal{S}}$ is more efficient than that with no adjustments. Third, we also need to clarify that the optimality of $\{\theta_{a,s}^\textit{LP}(\tau)\}_{a = 0 ,1, s \in \mathcal{S}}$ is only within the class of linear regressions. It is possible that the QTE estimator with some nonlinear adjustments are more efficient than that with the optimal linear adjustments, especially because the linear probability model is likely misspecified. Fourth, the optimal linear coefficients $\{\theta_{a,s}^\textit{LP}(\tau)\}_{a = 0 ,1, s \in \mathcal{S},\tau \in \Upsilon}$ minimize (over the class of linear models) not only the asymptotic variance of $\hat{q}^{par}(\tau)$ but also the covariance matrix of $(\hat{q}^{par}(\tau_1),\cdots,\hat{q}^{par}(\tau_K))$ for any finite-dimension quantile indices $(\tau_1,\cdots,\tau_K)$. This implies we can use the same (estimators of) optimal linear coefficients for hypothesis testing involving single, multiple, or even a continuum of quantile indices.
	
	In the rest of this subsection, we focus on the estimation of $\{\theta_{a,s}^\textit{LP}(\tau)\}_{a = 0,1,s\in \mathcal{S}}$. Note that $\theta_{a,s}^\textit{LP}(\tau)$ is the projection coefficient of $1\{Y_i \leq q_a(\tau)\}$ on $\tilde{W}_{i,s}(\tau)$ for the sub-population with $S_i=s$ and $A_i = a$. We estimate them by sample analog. Specifically, the parameter $q_a(\tau)$ is unknown and is replaced by some $\sqrt{n}$-consistent estimator denoted by $\hat{q}_a(\tau)$. 
	
	\begin{ass}
		Assume that $\sup_{\tau \in \Upsilon, a = 0,1}|\hat{q}_a(\tau) - q_a(\tau)| = O_p(n^{-1/2})$.
		\label{ass:qhat}
	\end{ass}
	
	In practice, we compute $\{\hat{q}_a(\tau)\}_{a=0,1}$ based on \eqref{eq:q1adj} and \eqref{eq:q0adj} by setting $\widehat{m}_a(\tau,S_i,X_i) \equiv 0$. Then, Assumption \ref{ass:qhat} holds automatically by Theorem \ref{thm:est} with $\widehat{m}_a(\tau,S_i,X_i) = \overline{m}_a(\tau,S_i,X_i)=0$. Analysis throughout this section takes into account that the estimator $\hat{q}_a(\tau)$ is used in place of $q_a(\tau)$.  
	
	Next, we define the estimator of $\theta_{a,s}^\textit{LP}(\tau)$. Recall $I_a(s)$ is defined in Assumption \ref{ass:mhat}. Let
	\begin{align}
	& \overline{m}_a(\tau,s,X_i) = \tau - W_{i,s}^\top(\tau)\theta_{a,s}^\textit{LP}(\tau), \label{eq:ma_lp}\\
	& \widehat{m}_a(\tau,s,X_i) = \tau - W_{i,s}^\top(\tau)\hat{\theta}_{a,s}^\textit{LP}(\tau), \label{eq:mahat_lp}\\
	& \dot{W}_{i,a,s}(\tau)= W_{i,s}(\tau)- \frac{1}{n_a(s)}\sum_{i \in I_a(s)}W_{i,s}(\tau), \label{eq:xahat} 
	\end{align}
	and
	\begin{align}
	\hat{\theta}_{a,s}^\textit{LP}(\tau) = \left[\frac{1}{n_a(s)}\sum_{i \in I_a(s)} \dot{W}_{i,a,s}(\tau) \dot{W}_{i,a,s}^\top(\tau)\right]^{-1} \left[\frac{1}{n_a(s)}\sum_{i \in I_a(s)} \dot{W}_{i,a,s}(\tau)1\{Y_i \leq \hat{q}_a(\tau)\} \right].
	\label{eq:thetahat_lp_est}
	\end{align}
	
	\begin{ass}
		Suppose there exists a positive random variable $L_i$ and a positive constant $C>0$ such that for $a=0,1$, 
		\begin{align*}
		\sup_{\tau_1,\tau_2 \in \Upsilon, a=0,1,s\in \mathcal{S}}||W_{i,s}(\tau_1) - W_{i,s}(\tau_2)||_2 \leq & L_i|\tau_1-\tau_2|, \quad \sup_{\tau \in \Upsilon, a=0,1,s\in \mathcal{S}}||W_{i,s}(\tau)||_2 \leq  L_i,
		\end{align*}
		and $\mathbb{E}(L_i^d|S_i=s) \leq C <\infty$ for some $d > 2$. \label{ass:regX2} 
	\end{ass}
	We note that Assumption \ref{ass:regX2} holds automatically if the regressor $W_{i,s}(\tau)$ does not depend on $\tau$. 
	\begin{thm}
		Suppose Assumptions \ref{ass:assignment1}, \ref{ass:tau}, \ref{ass:regX}--\ref{ass:regX2} hold. Then Assumption \ref{ass:par} holds for $(\theta_{a,s}(\tau), \hat{\theta}_{a,s}(\tau)) = (\theta_{a,s}^\textit{LP}(\tau), \hat{\theta}_{a,s}^\textit{LP}(\tau)),~a=0,1,s \in \mathcal{S},\tau \in \Upsilon$.
		\label{prop:lp}
	\end{thm}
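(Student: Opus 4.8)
The plan is to verify the three conditions of Assumption~\ref{ass:par} for $(\theta_{a,s}(\tau),\hat\theta_{a,s}(\tau)) = (\theta_{a,s}^{\textit{LP}}(\tau),\hat\theta_{a,s}^{\textit{LP}}(\tau))$; Theorem~\ref{thm:par} then gives the conclusion. Throughout, write $M_s(\tau) = \mathbb{E}(\tilde W_{i,s}(\tau)\tilde W_{i,s}(\tau)^\top\mid S_i=s)$ and $v_{a,s}(\tau) = \mathbb{E}(\tilde W_{i,s}(\tau)1\{Y_i(a)\le q_a(\tau)\}\mid S_i=s)$, so $\theta_{a,s}^{\textit{LP}}(\tau) = M_s(\tau)^{-1}v_{a,s}(\tau)$; recall that the range ``\ref{ass:regX}--\ref{ass:regX2}'' includes Assumption~\ref{ass:qhat}, so $\sup_{\tau\in\Upsilon,a}|\hat q_a(\tau)-q_a(\tau)| = O_p(n^{-1/2})$ is available. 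Since $\Lambda_{\tau,s}(X_i,\theta) = W_{i,s}^\top(\tau)\theta$ and $\partial_\theta\Lambda_{\tau,s}(X_i,\theta) = W_{i,s}(\tau)$, Assumption~\ref{ass:par}(i) follows at once from Assumption~\ref{ass:regX2} with envelope $(1+C)L_i$, using $|W_{i,s}^\top(\tau_1)\theta - W_{i,s}^\top(\tau_2)\theta|\le\|W_{i,s}(\tau_1)-W_{i,s}(\tau_2)\|\,\|\theta\|$ and $\mathbb{E}(L_i^d\mid S_i=s)\le C$ with $d>2$. The constant $C$ there is enlarged so that $\{\|\theta\|\le C\}$ contains all $\theta_{a,s}^{\textit{LP}}(\tau)$, which is legitimate because Assumption~\ref{ass:regX} yields $\|\theta_{a,s}^{\textit{LP}}(\tau)\|_2\le\|M_s(\tau)^{-1}\|\,\|v_{a,s}(\tau)\|_2\le c^{-1}\big(\mathbb{E}(\|\tilde W_{i,s}(\tau)\|_2^2\mid S_i=s)\big)^{1/2}$, bounded uniformly over $(\tau,a,s)$ by a constant depending only on $c$, $C$, and $\dim(W)$.

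For Assumption~\ref{ass:par}(ii) I would use the resolvent-type identity $M_s(\tau_1)^{-1}v_{a,s}(\tau_1) - M_s(\tau_2)^{-1}v_{a,s}(\tau_2) = M_s(\tau_1)^{-1}(v_{a,s}(\tau_1)-v_{a,s}(\tau_2)) + M_s(\tau_1)^{-1}(M_s(\tau_2)-M_s(\tau_1))M_s(\tau_2)^{-1}v_{a,s}(\tau_2)$, together with $\|M_s(\tau)^{-1}\|\le c^{-1}$ and the uniform bound on $\|v_{a,s}(\tau)\|$ just obtained, reducing matters to Lipschitz continuity of $\tau\mapsto M_s(\tau)$ and $\tau\mapsto v_{a,s}(\tau)$, uniformly in $(a,s)$. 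The centered regressor $\tilde W_{i,s}(\tau) = W_{i,s}(\tau)-\mathbb{E}(W_{i,s}(\tau)\mid S_i=s)$ satisfies $\|\tilde W_{i,s}(\tau_1)-\tilde W_{i,s}(\tau_2)\|\le 2L_i|\tau_1-\tau_2|$ and $\|\tilde W_{i,s}(\tau)\|\le 2L_i$ by Assumption~\ref{ass:regX2} and Jensen; since $\mathbb{E}(L_i^2\mid S_i=s)<\infty$ (as $d>2$), a bilinear decomposition of the $\tilde W\tilde W^\top$ differences gives the Lipschitz property of $M_s$. For $v_{a,s}$ the key step is to pass, by iterated expectations (legitimate because $W_{i,s}(\tau)$ is a function of $X_i$), to $v_{a,s}(\tau) = \mathbb{E}(\tilde W_{i,s}(\tau)F_a(q_a(\tau)\mid X_i,S_i)\mid S_i=s)$, so that the indicator is replaced by the smooth map $\tau\mapsto F_a(q_a(\tau)\mid x,s)$, which is Lipschitz because $q_a(\cdot)$ is Lipschitz on $\Upsilon$ (Assumption~\ref{ass:tau}(i)) and $\sup_{y,x,s}f_a(y\mid x,s)<\infty$ (Assumption~\ref{ass:tau}(iii)); combined with the Lipschitz property of $\tilde W_{i,s}(\tau)$, this delivers $\|v_{a,s}(\tau_1)-v_{a,s}(\tau_2)\|\le C|\tau_1-\tau_2|$. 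Without the iterated-expectation step one would only get H\"older-$1/2$ continuity through the jump of the indicator, which is insufficient.

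For Assumption~\ref{ass:par}(iii) I would condition on $\mathcal{G}_n=\sigma(\{S_i,A_i\}_{i\in[n]})$. By Assumption~\ref{ass:assignment1}(i)--(ii), given $\mathcal{G}_n$ the collection $\{(Y_i(a),X_i)\}_{i\in I_a(s)}$ consists of $n_a(s)$ i.i.d. draws from the law of $(Y(a),X)\mid S=s$, and this conditional law depends on $\mathcal{G}_n$ only through the cell counts $(n_a(s))_{a,s}$, which all diverge in probability by Assumption~\ref{ass:assignment1}(iii)--(iv). Hence it suffices to show that, conditionally on $\mathcal{G}_n$ and as all $n_a(s)\to\infty$, $\sup_{\tau,a,s}\|\hat\theta_{a,s}^{\textit{LP}}(\tau)-\theta_{a,s}^{\textit{LP}}(\tau)\|_2\to 0$ in conditional probability, since the unconditional statement then follows by bounded convergence (the conditional probability of the complement is a deterministic function of $(n_a(s))$ bounded by one). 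Conditionally the problem is i.i.d., and I would establish uniform-in-$\tau$ laws of large numbers for $n_a(s)^{-1}\sum_{i\in I_a(s)}W_{i,s}(\tau)$, for $n_a(s)^{-1}\sum_{i\in I_a(s)}W_{i,s}(\tau)W_{i,s}(\tau)^\top$, for $n_a(s)^{-1}\sum_{i\in I_a(s)}W_{i,s}(\tau)1\{Y_i\le q_a(\tau)\}$, and for $n_a(s)^{-1}\sum_{i\in I_a(s)}1\{Y_i\le q_a(\tau)\}$ --- each class being Glivenko--Cantelli thanks to the Lipschitz-in-$\tau$ structure of Assumption~\ref{ass:regX2} and the continuity of $q_a(\cdot)$, with integrable envelopes $L_i$, $L_i^2$. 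These give $\sup_\tau\|\hat M_{a,s}(\tau)-M_s(\tau)\|\to0$ for $\hat M_{a,s}(\tau)$ the bracketed matrix in \eqref{eq:thetahat_lp_est} (after expanding the empirical centering $\dot W_{i,a,s}$), and, after replacing $1\{Y_i\le\hat q_a(\tau)\}$ by $1\{Y_i\le q_a(\tau)\}$ at a uniform cost of $o_p(1)$ --- bounding the difference by an indicator of a shrinking window around $q_a(\tau)$ and using $\sup_{\tau,a}|\hat q_a(\tau)-q_a(\tau)|=O_p(n^{-1/2})$ together with $\sup_{y,x,s}f_a(y\mid x,s)<\infty$ --- also $\sup_\tau\|\hat v_{a,s}(\tau)-v_{a,s}(\tau)\|\to0$. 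Since $\lambda_{\min}(M_s(\tau))\ge c$ uniformly, $\hat M_{a,s}(\tau)$ is eventually invertible with uniformly bounded inverse, and the resolvent identity from the previous step converts the above into $\sup_{\tau,a,s}\|\hat\theta_{a,s}^{\textit{LP}}(\tau)-\theta_{a,s}^{\textit{LP}}(\tau)\|_2\to0$.

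The main obstacle is the last step: reconciling the conditioning-on-the-randomization device (forced on us by the cross-sectional dependence of $\{A_i\}$ under CARs) with genuinely uniform-in-$\tau$ control of the empirically centered sums and of the plug-in quantile inside the indicator, and then transferring the conditional convergence back to an unconditional one via the cell-count argument. A secondary but essential subtlety is the iterated-expectation trick in part (ii), which upgrades the naive H\"older-$1/2$ bound on $v_{a,s}$ to the Lipschitz bound demanded by Assumption~\ref{ass:par}(ii).
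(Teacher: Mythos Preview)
Your argument is correct. For parts (i) and (ii) you essentially unpack what the paper states in one line each: the paper simply asserts that Assumption~\ref{ass:par}(i) holds by the regressor assumptions and that $\sup_{\tau}|\partial_\tau\theta_{a,s}^{\textit{LP}}(\tau)|<\infty$ by Assumption~\ref{ass:tau}, whereas you spell out the resolvent identity and the iterated-expectation step that replaces $1\{Y_i(a)\le q_a(\tau)\}$ by $F_a(q_a(\tau)\mid X_i,s)$. Your observation that this step is what upgrades H\"older-$1/2$ to Lipschitz is exactly right and is the implicit content of the paper's one-line claim.

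For part (iii) your route differs from the paper's. You condition on $\mathcal{G}_n=\sigma(\{A_i,S_i\})$, reduce to an i.i.d.\ problem within each cell, and invoke Glivenko--Cantelli to obtain $o_p(1)$ uniform consistency. The paper instead writes $\hat\theta-\theta=\hat M^{-1}[\,n_a(s)^{-1}\sum_{i\in I_a(s)}\dot W_{i,a,s}(\tau)(1\{Y_i\le\hat q_a(\tau)\}-\dot W_{i,a,s}^\top(\tau)\theta)\,]$, replaces $\dot W$ by the population-centered $\tilde W$, splits the numerator into an empirical-process piece $I(\tau)$ (indicator at $\hat q$ minus indicator at $q$, recentered by conditional probabilities) and a main piece $II(\tau)$, and bounds each by $O_p(n^{-1/2})$ via a CAR-adapted maximal inequality (their Lemma~\ref{lem:max_eq}) applied to VC-type classes. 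What the paper's approach buys is a rate, $\sup_{\tau,a,s}\|\hat\theta^{\textit{LP}}_{a,s}(\tau)-\theta^{\textit{LP}}_{a,s}(\tau)\|_2=O_p(n^{-1/2})$, which is stronger than Assumption~\ref{ass:par}(iii) requires; what your approach buys is that it avoids the bespoke CAR maximal inequality entirely and stays at the level of uniform LLNs. One small organizational point: because $\hat q_a(\tau)$ pools data across strata, its conditional law given $\mathcal{G}_n$ is not a function of a single cell count, so the cleanest way to execute your plan is to perform the $\hat q\to q$ replacement \emph{unconditionally} first (on the high-probability event $\{\sup_\tau|\hat q_a(\tau)-q_a(\tau)|\le Mn^{-1/2}\}$ the remainder depends only on cell-$(a,s)$ data), and only then condition on $\mathcal{G}_n$ for the Glivenko--Cantelli step.
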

	
	We refer to the QTE estimator adjusted by this linear probability model with optimal linear coefficients $\theta_{a,s}^\textit{LP}(\tau)$ and estimators $\hat{\theta}_{a,s}^\textit{LP}(\tau)$ as the LP estimator and denote it and its bootstrap counterpart as $\hat{q}^\textit{LP}(\tau)$ and $\hat{q}^\textit{LP,w}(\tau)$, respectively. Theorem \ref{prop:lp} verifies Assumption \ref{ass:par} for the proposed estimator of the optimal linear coefficient. Then, by Theorem \ref{thm:par}, Theorems \ref{thm:est} and \ref{thm:boot} hold for $\hat{q}^\textit{LP}(\tau)$ and $\hat{q}^\textit{LP,w}(\tau)$, which implies all the estimation and inference methods established in the paper are valid for the LP estimator. Theorem \ref{thm:theta_as^star} further shows $\hat{q}^\textit{LP}(\tau)$ is the estimator with the optimal linear adjustment and weakly more efficient than the QTE estimator with no adjustments.

	\subsubsection{Logistic probability model}
	\label{sec:logistic}
	It is also common to consider the logistic regression as the adjustment and estimate the model by maximum likelihood (ML). The main goal of the working model is to approximate the true model as closely as possible. It is, therefore, useful to include additional technical regressors such as interactions in the logistic regression. The set of regressors used is defined as $H_i = H(X_i)$, which is allowed to contain the intercept. Let $\hat{\theta}_{a,s}^\textit{ML}(\tau)$ and $\theta_{a,s}^\textit{ML}(\tau)$ be the quasi-ML estimator and its corresponding pseudo true value, respectively, i.e.,
	\begin{align}
	\hat{\theta}_{a,s}^\textit{ML}(\tau) = \argmax_{\theta_a}\frac{1}{n_a(s)}\sum_{i \in I_a(s)}\left[1\{Y_i \leq \hat{q}_a(\tau)\}\log(\lambda(H_i^\top\theta_a)) + 1\{Y_i > \hat{q}_a(\tau)\}\log(1-\lambda(H_i^\top\theta_a))\right],
	\label{eq:mlhat}
	\end{align}
	and
	\begin{align}
	\theta_{a,s}^\textit{ML}(\tau) = \argmax_{\theta_a}\mathbb{E}\left[1\{Y_i(a) \leq q_a(\tau)\}\log(\lambda(H_i^\top\theta_a)) + 1\{Y_i(a) > q_a(\tau)\}\log(1-\lambda(H_i^\top\theta_a))|S_i=s\right].
	\label{eq:ml}
	\end{align}
	We then define 
	\begin{align}
	\overline{m}_a(\tau,s,X_i) = \tau -\lambda(H_i^\top\theta_{a,s}^\textit{ML}(\tau)) \quad \text{and} \quad 
	\widehat{m}_a(\tau,s,X_i) = \tau -\lambda(H_i^\top\hat{\theta}_{a,s}^\textit{ML}(\tau)). \label{eq:mahat_ml}
	\end{align}
	In addition to the inclusion of technical regressors, we allow the pseudo true value ($\theta_{a,s}^\textit{ML}(\tau)$) to vary across quantiles $\tau$, giving another layer of flexibility to the model. Such a model is called the distribution regression and was first proposed by \cite{CFM13}. We emphasize here that, although we aim to make the regression model as flexible as possible, our theory and results do not require the model to be correctly specified.
	
	\begin{ass}
		Suppose $\theta_{a,s}^\textit{ML}(\tau)$ is the unique minimizer defined in \eqref{eq:ml} for $a=0,1$.
		\label{ass:logit2}
	\end{ass}
	
	\begin{thm}
		Suppose Assumptions \ref{ass:assignment1}, \ref{ass:tau}, \ref{ass:qhat}, \ref{ass:logit2} hold and there exist constants $c,C$ such that 
		\begin{align*}
		0<c \leq \lambda_{\min}(\mathbb{E}H_iH_i^\top) \leq  \lambda_{\max}(\mathbb{E}H_iH_i^\top) \leq C<\infty,
		\end{align*}
		then Assumption \ref{ass:par}(iii) holds for $(\theta_{a,s}(\tau), \hat{\theta}_{a,s}(\tau)) = (\theta_{a,s}^\textit{ML}(\tau), \hat{\theta}_{a,s}^\textit{ML}(\tau)),~a=0,1,s \in \mathcal{S},\tau \in \Upsilon$.
		\label{prop:lg2}
	\end{thm}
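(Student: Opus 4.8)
The content of the theorem is the uniform consistency statement $\sup_{\tau\in\Upsilon,\,a\in\{0,1\},\,s\in\mathcal{S}}\|\hat\theta_{a,s}^{\textit{ML}}(\tau)-\theta_{a,s}^{\textit{ML}}(\tau)\|_2\convP 0$, i.e.\ Assumption~\ref{ass:par}(iii) for the logistic quasi-MLE. Since $a$ and $s$ range over finite sets it suffices to fix one pair $(a,s)$ and prove consistency uniformly in $\tau\in\Upsilon$. The plan is to run the standard consistency argument for concave extremum estimators, carried out uniformly in $\tau$, with two extra ingredients: the estimated threshold $\hat q_a(\tau)$ must be shown to be harmless, and the cross-sectional dependence from the CAR must be neutralized. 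The dependence is disposed of as elsewhere in the paper: conditional on $\mathcal C_n:=\sigma(\{A_i,S_i\}_{i\in[n]})$, Assumption~\ref{ass:assignment1}(i)--(ii) makes $\{(Y_i,H_i)\}_{i\in I_a(s)}$ an i.i.d.\ sample from the law of $(Y(a),H(X))$ given $S=s$, and Assumption~\ref{ass:assignment1}(iii)--(iv) gives $n_a(s)\to\infty$; so every $o_p$ assertion below is first proved conditionally on $\mathcal C_n$ with classical i.i.d.\ tools and then transferred to unconditional convergence by bounded convergence.

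Write $M_{a,s}(\theta;\tau)=\mathbb E[\,1\{Y(a)\le q_a(\tau)\}\log\lambda(H^\top\theta)+1\{Y(a)>q_a(\tau)\}\log\lambda(-H^\top\theta)\mid S=s]$, let $\tilde M_{a,s}(\theta;\tau)$ be the average over $I_a(s)$ with the true $q_a(\tau)$ inside the indicators, and let $\hat M_{a,s}(\theta;\tau)$ be the feasible criterion in \eqref{eq:mlhat} using $\hat q_a(\tau)$. On $\Upsilon\times\{\|\theta\|_2\le C\}$ the integrand has envelope $\log2+C\|H\|$, which is integrable because $\lambda_{\max}(\mathbb E HH^\top)\le C$ forces $\mathbb E(\|H\|^2\mid S=s)<\infty$, and the associated class is Glivenko--Cantelli, being generated by the monotone-in-$\tau$ indicator class composed with the $\theta$-Lipschitz maps $\theta\mapsto\log\lambda(\pm H^\top\theta)$ (Lipschitz constant at most $\|H\|$); hence $\sup_{\tau,\|\theta\|_2\le C}|\tilde M_{a,s}(\theta;\tau)-M_{a,s}(\theta;\tau)|\convP 0$ for every $C$. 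For the threshold, the identity $\log\lambda(z)-\log\lambda(-z)=z$ yields $|\hat M_{a,s}(\theta;\tau)-\tilde M_{a,s}(\theta;\tau)|\le C n_a(s)^{-1}\sum_{i\in I_a(s)}1\{|Y_i-q_a(\tau)|\le\eta_n\}\|H_i\|$ with $\eta_n:=\sup_{\tau\in\Upsilon}|\hat q_a(\tau)-q_a(\tau)|=O_p(n^{-1/2})$ by Assumption~\ref{ass:qhat}; Assumption~\ref{ass:tau}(iii) bounds the conditional expectation of the right-hand side by $O(\eta_n)$ uniformly in $\tau$, and one more Glivenko--Cantelli step (indexed by $\tau$ and the bandwidth, envelope $\|H\|$) shows the sample average tracks it. Combining, $\sup_{\tau,\|\theta\|_2\le C}|\hat M_{a,s}(\theta;\tau)-M_{a,s}(\theta;\tau)|\convP 0$.

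It remains to convert uniform convergence of the criteria into uniform convergence of the argmaxima. The population criterion $M_{a,s}(\cdot;\tau)$ is concave and, since $f_a(q_a(\tau))$ is bounded away from zero by Assumption~\ref{ass:tau}(i) so that $q_a(\cdot)$ is continuous on $\Upsilon$, it is jointly continuous in $(\theta,\tau)$; by Assumption~\ref{ass:logit2} it has the unique maximizer $\theta_{a,s}^{\textit{ML}}(\tau)$ for every $\tau$ (which, with the eigenvalue bound on $\mathbb E HH^\top$, also makes the feasible criterion strictly concave and its maximizer well defined with probability approaching one). Berge's maximum theorem makes $\tau\mapsto\theta_{a,s}^{\textit{ML}}(\tau)$ continuous, so $C_0:=\sup_{\tau\in\Upsilon}\|\theta_{a,s}^{\textit{ML}}(\tau)\|_2<\infty$; concavity with a unique maximizer forces $M_{a,s}(\theta;\tau)\to-\infty$ along every ray, and joint continuity with compactness of $\Upsilon$ upgrade this to a coercivity bound uniform in $\tau$. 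Choosing $C>C_0$ so large that $\sup_{\tau}\sup_{\|\theta\|_2\ge C}M_{a,s}(\theta;\tau)<\inf_\tau M_{a,s}(\theta_{a,s}^{\textit{ML}}(\tau);\tau)$ by a fixed gap $\delta_0>0$, the uniform convergence on $\{\|\theta\|_2\le C\}$ together with concavity of $\hat M_{a,s}(\cdot;\tau)$ forces $\hat\theta_{a,s}^{\textit{ML}}(\tau)$ to lie in the open ball $\{\|\theta\|_2<C\}$ for all $\tau$ simultaneously with probability approaching one. On that ball, $\hat M_{a,s}(\hat\theta_{a,s}^{\textit{ML}}(\tau);\tau)\ge\hat M_{a,s}(\theta_{a,s}^{\textit{ML}}(\tau);\tau)$ gives, via uniform convergence, $M_{a,s}(\theta_{a,s}^{\textit{ML}}(\tau);\tau)-M_{a,s}(\hat\theta_{a,s}^{\textit{ML}}(\tau);\tau)\le o_p(1)$ uniformly in $\tau$, and the Berge-type uniform well-separatedness $\inf_{\tau\in\Upsilon}\inf_{\eps\le\|\theta-\theta_{a,s}^{\textit{ML}}(\tau)\|_2,\ \|\theta\|_2\le C}\bigl[M_{a,s}(\theta_{a,s}^{\textit{ML}}(\tau);\tau)-M_{a,s}(\theta;\tau)\bigr]>0$ for each $\eps>0$ then yields $\sup_{\tau\in\Upsilon}\|\hat\theta_{a,s}^{\textit{ML}}(\tau)-\theta_{a,s}^{\textit{ML}}(\tau)\|_2<\eps$ with probability approaching one. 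Letting $\eps\downarrow0$ and maximizing over the finite set of $(a,s)$ gives Assumption~\ref{ass:par}(iii).

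None of the pieces is individually deep; the work is in controlling three things at once. First, every empirical-process bound must be uniform in $(\theta,\tau)$, which is why the Glivenko--Cantelli structure (monotone indicator class times Lipschitz link) is used instead of pointwise laws of large numbers. Second, the plugged-in $\hat q_a(\tau)$ must contribute only $o_p(1)$ uniformly in $\tau$; this is exactly where the $\sqrt n$-rate \emph{uniform} in $\tau$ of Assumption~\ref{ass:qhat} and the bounded conditional density of Assumption~\ref{ass:tau}(iii) are needed. Third, both the separation of the maximizer and the coercivity preventing escape to infinity must be made uniform over $\tau\in\Upsilon$, since otherwise the $o_p(1)$ criterion bounds do not translate into a uniform-in-$\tau$ rate for $\hat\theta_{a,s}^{\textit{ML}}-\theta_{a,s}^{\textit{ML}}$. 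I expect this last item --- propagating a positive identification gap across all of $\Upsilon$ --- to be the main obstacle, because the logistic link is not globally strongly concave in $\theta$, so one has to lean on Assumption~\ref{ass:logit2} and continuity (via Berge's theorem) rather than on an explicit curvature lower bound.
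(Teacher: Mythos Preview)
Your proof is correct, and the main structural difference from the paper's proof is the way concavity is exploited. You follow the classical argmax-consistency route: first establish coercivity to localize $\hat\theta_{a,s}^{\textit{ML}}(\tau)$ in a ball uniformly in $\tau$ (via Berge), then combine uniform convergence of the criterion on that ball with a uniform well-separatedness bound. The paper instead uses a Hjort--Pollard-type convexity lemma: for any fixed $\delta>0$ and any $v$ on the unit sphere, concavity of $Q_n$ in $u$ gives
\[
\tfrac{\delta}{l}\bigl(Q_n(\tau,s,\hat q_a(\tau),lv)-Q_n(\tau,s,\hat q_a(\tau),0)\bigr)\le Q_n(\tau,s,\hat q_a(\tau),\delta v)-Q_n(\tau,s,\hat q_a(\tau),0)
\]
for all $l>\delta$, so showing that the right-hand side is bounded above by $-\eta+2\Delta_n$ uniformly in $(\tau,v)$ (which follows from compactness of $\Upsilon\times S^{d_H-1}$, uniqueness of the maximizer, and $\Delta_n=o_p(1)$) forces $\|\hat U_\tau\|_2\le\delta$ with probability tending to one, \emph{without} a separate coercivity or boundedness step. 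The paper also handles the plugged-in threshold differently: rather than bounding the sample-level discrepancy $\hat M-\tilde M$ via your indicator-times-$\|H\|$ argument, it proves the uniform LLN over $(\tau,s,q,u)$ simultaneously using its CAR-specific maximal inequality (Lemma~\ref{lem:max_eq}), then passes $\hat q_a(\tau)\to q_a(\tau)$ through the smoothness $|\partial_q Q|\le C$ of the \emph{population} criterion. Your route is slightly longer but more self-contained (no reliance on the paper's custom lemma), while the paper's route is shorter because the convexity trick sidesteps both the coercivity argument and the separate Glivenko--Cantelli step for the threshold correction.
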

	
	Four remarks are in order. First, we refer to the QTE estimator adjusted by the logistic model with QMLE as the ML estimator and denote it and its bootstrap counterpart as $\hat{q}^\textit{ML}(\tau)$ and $\hat{q}^\textit{ML,w}(\tau)$, respectively. Assumptions \ref{ass:par}(i) holds automatically for the logistic regression. If we further impose Assumption \ref{ass:par}(ii), then Theorem \ref{prop:lg2} implies that all the estimation and bootstrap inference methods established in the paper are valid for the ML estimator. Second, we take into account that $\hat{\theta}_{a,s}^\textit{ML}(\tau)$ is computed when the true $q_a(\tau)$ is replaced by its estimator $\hat{q}_a(\tau)$ and derive the results in Theorem \ref{prop:lg2} under Assumption \ref{ass:qhat}. Third, the ML estimator is not guaranteed to be optimal or be more efficient than QTE estimator with no adjustments. On the other hand, as we can include additional technical terms in the regression and allow the regression coefficients to vary across $\tau$, the logistic model can be close to the true model $m_a(\tau,s,X_i)$, which achieves the global minimum asymptotic variance based on Theorem \ref{thm:est}. Fourth, in Section \ref{sec:np}, we further justify the use of the ML estimator with a flexible logistic model by letting the number of technical terms (or equivalently, the dimension of $H_i$) diverge to infinity, showing by this means that the ML estimator can indeed consistently estimate the true model and thereby achieve the global minimum covariance matrix of the adjusted QTE estimator. 
	
	\subsubsection{Further improved logistic model}
	\label{sec:lpml}
	Although in simulations, we cannot find a DGP in which the QTE estimator with logistic adjustment is less efficient than that with no adjustments, theoretically such a scenario still exists. In this section, we follow the idea of \cite{CF21} and construct an estimator which is weakly more efficient than both the ML estimator and the estimator with no adjustments. We denote $W_{i,s}(\tau) = (\lambda(H_i^\top \theta_{1,s}^\textit{ML}(\tau)),\lambda(H_i^\top \theta_{0,s}^\textit{ML}(\tau)))^\top$ and treat it as the regressor in a linear adjustment, i.e., define 
	$\overline{m}_{a}(\tau,s,X_i) = \tau - W_{i,s}^\top(\tau) t_{a,s}(\tau)$. Then, the logistic adjustment in Section \ref{sec:logistic} and no adjustments correspond to $t_{a,s}(\tau) = a(1,0)^\top + (1-a)(0,1)^\top$ and $t_{a,s}(\tau) = (0,0)^\top$ for $a=0,1$, respectively. However, following Theorem \ref{thm:theta_as^star}, the optimal linear coefficient with regressor $W_{i,s}(\tau)$ is 
	\begin{align}
	\label{eq:theta^lpml}
	& \theta_{a,s}^\textit{LPML}(\tau) = \left[\mathbb{E}(\tilde{W}_{i,s}(\tau)\tilde{W}_{i,s}^\top(\tau)|S_i=s)\right]^{-1}\mathbb{E}\left[\tilde{W}_{i,s}(\tau) 1\{Y_i(a) \leq q_a(\tau)\}|S_i=s\right],
	\end{align}
	where $\tilde{W}_{i,s}(\tau) = W_{i,s}(\tau) - \mathbb{E}(W_{i,s}(\tau)|S_i=s)$. Using the adjustment term 
	$\overline{m}_{a}(\tau,s,X_i) = \tau - W_{i,s}^\top(\tau) t_{a,s}(\tau)$ with $t_{a,s}(\tau) =  \theta_{a,s}^\textit{LPML}(\tau)$ is asymptotically weakly more efficient than any other choices of $t_{a,s}(\tau)$. In practice, we do not observe $W_{i,s}(\tau)$, but can replace it by its feasible version $\hat{W}_{i,s}(\tau) = (\lambda(H_i^\top \hat{\theta}_{1,s}^\textit{ML}(\tau)),\lambda(H_i^\top \hat{\theta}_{0,s}^\textit{ML}(\tau)))^\top$. We then define 
	\begin{align}
	& \overline{m}_a(\tau,s,X_i) = \tau - W_{i,s}^\top(\tau)\theta_{a,s}^\textit{LPML}(\tau), \label{eq:ma_lp_lg}\\
	& \widehat{m}_a(\tau,s,X_i) = \tau - \hat{W}_{i,s}^\top(\tau)\hat{\theta}_{a,s}^\textit{LPML}(\tau), \label{eq:mahat_lp_lg}\\
	& \breve{W}_{i,a,s}(\tau) = \hat{W}_{i,s}(\tau) - \frac{1}{n_a(s)}\sum_{i \in I_a(s)}\hat{W}_{i,s}(\tau), \label{eq:xahat_lp_lg}
	\end{align}
	and
	\begin{align}
	\label{eq:thetahat^lpml}
	\hat{\theta}_{a,s}^\textit{LPML}(\tau) = \left[\frac{1}{n_a(s)}\sum_{i \in I_a(s)}\breve{W}_{i,a,s}(\tau)\breve{W}_{i,a,s}^\top(\tau)\right]^{-1} \left[\frac{1}{n_a(s)}\sum_{i \in I_a(s)}\breve{W}_{i,a,s}(\tau)1\{Y_i \leq \hat{q}_a(\tau)\} \right].
	\end{align}
	
	\begin{ass}
		\begin{enumerate}[label=(\roman*)]
			\item There exist constants $c,C$ such that 
			\begin{align*}
			0<c< & \inf_{a = 0,1, s \in \mathcal{S},\tau \in \Upsilon} \lambda_{\min}(\mathbb{E}\tilde{W}_{i,s}(\tau)\tilde{W}_{i,s}^\top(\tau)|S_i=s) \\
			\leq & \sup_{a = 0,1, s \in 
				\mathcal{S},\tau \in \Upsilon}\lambda_{\max}(\mathbb{E}\tilde{W}_{i,s}(\tau)\tilde{W}_{i,s}^\top(\tau)|S_i=s) \leq C<\infty.
			\end{align*}
			\item Suppose 
			\begin{align*}
			& \sup_{\tau_1,\tau_2 \in \Upsilon, a=0,1,s\in \mathcal{S}}||\theta_{a,s}^\textit{ML}(\tau_1) - \theta_{a,s}^\textit{ML}(\tau_2)||_2 \leq C|\tau_1-\tau_2| \\
			& \sup_{\tau_1,\tau_2 \in \Upsilon, a=0,1,s\in \mathcal{S}}||\theta_{a,s}^\textit{LPML}(\tau_1) - \theta_{a,s}^\textit{LPML}(\tau_2)||_2 \leq C|\tau_1-\tau_2|.
			\end{align*}
		\end{enumerate}
		\label{ass:lpml}
	\end{ass}
	\begin{thm}
		Denote $\hat{q}^\textit{LPML}(\tau)$ and $\hat{q}^\textit{LPML,w}(\tau)$ as the $\tau$th QTE estimator and its multiplier bootstrap counterpart defined in Sections \ref{sec:est} and \ref{sec:boot}, respectively, with $ \overline{m}_a(\tau,s,X_i) $ and $\widehat{m}_a(\tau,s,X_i)$ defined in \eqref{eq:ma_lp_lg} and \eqref{eq:mahat_lp_lg}, respectively. Suppose Assumptions \ref{ass:assignment1}, \ref{ass:tau}, \ref{ass:qhat}, \ref{ass:logit2}, and \ref{ass:lpml} hold, and there exist constants $c,C$ such that 
		\begin{align*}
		0<c \leq \lambda_{\min}(\mathbb{E}H_iH_i^\top) \leq  \lambda_{\max}(\mathbb{E}H_iH_i^\top) \leq C<\infty.
		\end{align*}
		Then, Assumptions \ref{ass:mhat} and \ref{ass:mhatw} hold, which further implies Theorems \ref{thm:est} and \ref{thm:boot} hold for $\hat{q}^\textit{LPML}(\tau)$ and $\hat{q}^\textit{LPML,w}(\tau)$, respectively. Further denote the asymptotic covariance matrices of $(\hat{q}^\textit{J}(\tau_1),\cdots,\hat{q}^\textit{J}(\tau_K))$ for any finite set of quantile indices $(\tau_1,\cdots,\tau_K)$ as $[\Sigma^\textit{J}(\tau_k,\tau_l)]_{k,l \in [K]}$ for $J \in \{ \text{LPML,ML,NA}\}$, where $\hat{q}^\textit{NA}(\tau)$ is the $\tau$th QTE estimator without adjustments. Then we have
		\begin{align*}
		[\Sigma^\textit{LPML}(\tau_k,\tau_l)]_{k,l \in [K]} \leq [\Sigma^\textit{ML}(\tau_k,\tau_l)]_{k,l \in [K]} \quad \text{and} \quad [\Sigma^\textit{LPML}(\tau_k,\tau_l)]_{k,l \in [K]} \leq [\Sigma^\textit{NA}(\tau_k,\tau_l)]_{k,l \in [K]}   
		\end{align*}  in the matrix sense. 
		\label{prop:lpml}
	\end{thm}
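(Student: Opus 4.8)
The plan is to split the statement into two parts. Part one shows that the \textit{LPML} adjustment satisfies the high-level Assumptions \ref{ass:mhat} and \ref{ass:mhatw}, so that Theorems \ref{thm:est} and \ref{thm:boot} transfer to $\hat{q}^\textit{LPML}(\tau)$ and $\hat{q}^\textit{LPML,w}(\tau)$. Part two, the efficiency ordering, is then a purely algebraic consequence of the optimality result in Theorem \ref{thm:theta_as^star}.

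\textbf{Part one (Assumptions \ref{ass:mhat} and \ref{ass:mhatw}).} I would treat \textit{LPML} as a linear-probability adjustment in the sense of Section \ref{sec:lp} built on the bounded, $\tau$-dependent regressor $W_{i,s}(\tau) = (\lambda(H_i^\top\theta_{1,s}^\textit{ML}(\tau)), \lambda(H_i^\top\theta_{0,s}^\textit{ML}(\tau)))^\top$ with optimal coefficient $\theta_{a,s}^\textit{LPML}(\tau)$ from \eqref{eq:theta^lpml}; the one feature not covered by Theorem \ref{prop:lp} is that this regressor is itself estimated, by $\hat{W}_{i,s}(\tau)$, so the argument must absorb the resulting ``generated regressor'' error. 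Rather than routing through Theorem \ref{thm:par}, I would verify Assumptions \ref{ass:mhat} and \ref{ass:mhatw} directly, which is convenient because $\hat{W}_{i,s}(\tau)$ is bounded by $1$ and so no moments on $H_i$ beyond the $\mathbb{E}H_iH_i^\top$ bound are needed. The steps are: (i) record $\sup_{\tau,a,s}\|\hat{\theta}_{a,s}^\textit{ML}(\tau)-\theta_{a,s}^\textit{ML}(\tau)\| = o_p(1)$ from Theorem \ref{prop:lg2} (whose hypotheses are implied by those assumed here), and note that $W_{i,s}(\tau)$ and $\hat{W}_{i,s}(\tau)$ are bounded and Lipschitz in $\tau$ by the Lipschitz continuity of $\lambda$ and of $\tau\mapsto\theta_{a,s}^\textit{ML}(\tau)$ (Assumption \ref{ass:lpml}(ii)); (ii) establish $\sup_{\tau,a,s}\|\hat{\theta}_{a,s}^\textit{LPML}(\tau)-\theta_{a,s}^\textit{LPML}(\tau)\| = o_p(1)$ via a uniform law of large numbers over the $\tau$-indexed class, handling $1\{Y_i\le\hat{q}_a(\tau)\}$ through the density bound of Assumption \ref{ass:tau} with $\sup_\tau|\hat{q}_a(\tau)-q_a(\tau)| = O_p(n^{-1/2})$ (Assumption \ref{ass:qhat}), handling $\hat{W}$ through step (i), and using the eigenvalue lower bound of Assumption \ref{ass:lpml}(i) with continuity of matrix inversion; (iii) verify Assumption \ref{ass:mhat}(i): writing $\overline{\Delta}_a(\tau,s,X_i) = W_{i,s}^\top(\tau)\theta_{a,s}^\textit{LPML}(\tau) - \hat{W}_{i,s}^\top(\tau)\hat{\theta}_{a,s}^\textit{LPML}(\tau) = -(\hat{W}_{i,s}(\tau)-W_{i,s}(\tau))^\top\hat{\theta}_{a,s}^\textit{LPML}(\tau) - W_{i,s}^\top(\tau)(\hat{\theta}_{a,s}^\textit{LPML}(\tau)-\theta_{a,s}^\textit{LPML}(\tau))$, the within-stratum difference $\frac{1}{n_1(s)}\sum_{i\in I_1(s)}\overline{\Delta}_a - \frac{1}{n_0(s)}\sum_{i\in I_0(s)}\overline{\Delta}_a$ becomes, term by term, a \emph{differenced} subsample average of a well-behaved regressor (the bounded $W$; and, after a one-term Taylor expansion of $\hat{W}-W$, the gradient term $\lambda'(\cdot)H_i$, with the quadratic remainder negligible by an empirical-process bound) multiplied by an estimation error ($\hat{\theta}^\textit{LPML}-\theta^\textit{LPML}$, resp.\ $\hat{\theta}^\textit{ML}-\theta^\textit{ML}$) that is $o_p(1)$ uniformly; since under $A\indep X\mid S$ such a differenced subsample average is $O_p(n^{-1/2})$ uniformly over $\tau$ (the maximal-inequality argument from the proofs of Theorems \ref{thm:par} and \ref{prop:lp}), the products are $o_p(n^{-1/2})$; (iv) Assumptions \ref{ass:mhat}(ii)--(iii) are immediate from boundedness and $\tau$-Lipschitzness of $W_{i,s}(\tau)$ and $\theta_{a,s}^\textit{LPML}(\tau)$; and (v) Assumption \ref{ass:mhatw} follows by re-running (i)--(iv) with the weights $\xi_i$ inserted, using Assumption \ref{ass:weight}. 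Theorems \ref{thm:est} and \ref{thm:boot} then apply.

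\textbf{Part two (efficiency ordering).} Observe that $\hat{q}^\textit{LPML}$, $\hat{q}^\textit{ML}$ and $\hat{q}^\textit{NA}$ are all instances of the linear family of adjustments $\overline{m}_a(\tau,s,X_i) = \tau - W_{i,s}^\top(\tau)t_{a,s}(\tau)$ over the common population regressor $W_{i,s}(\tau)$: \textit{LPML} is $t_{a,s}(\tau) = \theta_{a,s}^\textit{LPML}(\tau)$; \textit{ML} is $t_{1,s}(\tau) = (1,0)^\top$ and $t_{0,s}(\tau) = (0,1)^\top$, which reproduces $\overline{m}_a(\tau,s,X_i) = \tau - \lambda(H_i^\top\theta_{a,s}^\textit{ML}(\tau))$ of \eqref{eq:mahat_ml}; and \textit{NA} is $t_{a,s}(\tau) = (0,0)^\top$, giving $\overline{m}_a(\tau,s,X_i) = \tau$, a location shift of the zero adjustment which, by the numerical location-shift invariance noted after Theorem \ref{thm:par}, carries the same asymptotic covariance matrix as the unadjusted estimator. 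Since by Theorem \ref{thm:est} that covariance matrix depends on the adjustment only through the working model $\overline{m}_a$, I can apply the argument of Theorem \ref{thm:theta_as^star} to this linear family --- its hypotheses hold because $W_{i,s}(\tau)$ is bounded (hence has all moments) while Assumption \ref{ass:lpml}(i)--(ii) supplies the eigenvalue and Lipschitz conditions --- to conclude that the covariance matrix $[\Sigma(\tau_k,\tau_l)]_{k,l\in[K]}$ is minimized in the matrix sense over the family at $t_{a,s}(\tau_k) = \theta_{a,s}^\textit{LPML}(\tau_k)$, since the $\theta^\textit{LP}$-formula of Theorem \ref{thm:theta_as^star} with $W$ as regressor is one such minimizer and coincides with \eqref{eq:theta^lpml}. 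Evaluating this minimality at the \textit{ML} and \textit{NA} members gives $[\Sigma^\textit{LPML}(\tau_k,\tau_l)]_{k,l\in[K]} \le [\Sigma^\textit{ML}(\tau_k,\tau_l)]_{k,l\in[K]}$ and $[\Sigma^\textit{LPML}(\tau_k,\tau_l)]_{k,l\in[K]} \le [\Sigma^\textit{NA}(\tau_k,\tau_l)]_{k,l\in[K]}$.

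\textbf{Main obstacle.} The delicate point is step (iii): the errors in $\hat{\theta}^\textit{ML}$ and $\hat{\theta}^\textit{LPML}$ are only known to be $o_p(1)$, and they upgrade to the required $o_p(n^{-1/2})$ only because they enter $\overline{\Delta}_a$ multiplied by a \emph{differenced} subsample average that is itself $O_p(n^{-1/2})$; making this rate hold uniformly over the continuum $\tau\in\Upsilon$, and controlling the quadratic Taylor remainder of the generated regressor, is where the maximal-inequality and empirical-process machinery of the earlier proofs has to be re-deployed carefully. Part two, by contrast, is essentially bookkeeping once Theorem \ref{thm:theta_as^star} is available.
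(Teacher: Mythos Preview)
Your proposal is correct and, for Part two (the efficiency ordering), essentially identical to the paper's argument: both recognise that \textit{ML}, \textit{NA} and \textit{LPML} are all members of the linear-adjustment family with regressor $W_{i,s}(\tau)=(\lambda(H_i^\top\theta_{1,s}^\textit{ML}(\tau)),\lambda(H_i^\top\theta_{0,s}^\textit{ML}(\tau)))^\top$, and invoke Theorem~\ref{thm:theta_as^star}.

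For Part one the two routes differ. You verify Assumptions~\ref{ass:mhat} and~\ref{ass:mhatw} directly, decomposing $\overline{\Delta}_a$ into a generated-regressor piece (handled by a one-term Taylor expansion in $\hat\theta^\textit{ML}-\theta^\textit{ML}$) and a coefficient piece (handled by $\hat\theta^\textit{LPML}-\theta^\textit{LPML}$), each multiplied by a differenced subsample average that is $O_p(n^{-1/2})$. The paper instead absorbs the generated regressor into the parametric map: it writes $\Lambda(X_i,\theta)=(\lambda(H_i^\top\theta_1),\lambda(H_i^\top\theta_2))\theta_3$ with the \emph{stacked} parameter $\theta_{a,s}(\tau)=((\theta_{1,s}^\textit{ML})^\top,(\theta_{0,s}^\textit{ML})^\top,(\theta_{a,s}^\textit{LPML})^\top)^\top$ and then routes straight through Theorem~\ref{thm:par}, so the entire burden reduces to checking Assumption~\ref{ass:par}(iii), i.e.\ uniform consistency of each of the three blocks. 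Consistency of the $\textit{ML}$ blocks is Theorem~\ref{prop:lg2}; for $\hat\theta^\textit{LPML}$ the paper introduces the infeasible intermediate $\breve\theta_{a,s}^\textit{LPML}$ built from the true $W_{i,s}(\tau)$, invokes the proof of Theorem~\ref{prop:lp} for $\breve\theta-\theta^\textit{LPML}=o_p(1)$, and then shows $\hat\theta^\textit{LPML}-\breve\theta^\textit{LPML}=o_p(1)$ by bounding the effect of replacing $W$ by $\hat W$ in both the Gram matrix and the cross-moment. Your step~(ii) is the same idea in different packaging. What the paper's route buys is economy: once the stacking is written down, Theorem~\ref{thm:par} dispatches Assumptions~\ref{ass:mhat} and~\ref{ass:mhatw} simultaneously with no separate Taylor expansion or remainder control, whereas your direct verification has to redo, for this particular $\Lambda$, the maximal-inequality work that Theorem~\ref{thm:par} already encapsulates. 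Your route, on the other hand, makes the role of boundedness of $\hat W$ more transparent. One small caveat: your claim that ``no moments on $H_i$ beyond $\mathbb{E}H_iH_i^\top$ are needed'' is slightly at odds with your own step~(iii), since the Taylor term $\lambda'(\cdot)H_i$ has envelope proportional to $\|H_i\|$; the paper's Assumption~\ref{ass:par}(i) runs into the same envelope via $\partial_\theta\Lambda$, so this is not a discrepancy between the two approaches but a shared technical point.
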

	
	In practice when $n$ is small $\breve{W}_{i,a,s}(\tau)$ may be nearly multicollinear within some stratum, which can lead to size distortion in inference concerning QTE. We therefore suggest first normalizing $\breve{W}_{i,a,s}(\tau)$ by its standard deviation (denoting the normalized $\breve{W}_{i,a,s}(\tau)$ as $\ddot{W}_{i,a,s}(\tau)$) and then running a ridge regression 
	\begin{align*}
	\tilde{\theta}_{a,s}^\textit{LPML}(\tau) = \left[\frac{1}{n_a(s)}\sum_{i \in I_a(s)}\ddot{W}_{i,a,s}(\tau)\ddot{W}_{i,a,s}^\top(\tau)+\delta_n I_2\right]^{-1} \left[\frac{1}{n_a(s)}\sum_{i \in I_a(s)}\ddot{W}_{i,a,s}(\tau)1\{Y_i \leq \hat{q}_a(\tau)\} \right],
	\end{align*}
	where $I_2$ is the two-dimensional identity matrix and $\delta_n = 1/n$. Then, the final regression adjustment is 
	\begin{align*}
	\widehat{m}_a(\tau,s,X_i) = \tau - \ddot{W}_{i,a,s}^\top (\tau)\tilde{\theta}_{a,s}^\textit{LPML}(\tau).
	\end{align*}
	Given Assumption \ref{ass:lpml}, such a ridge penalty is asymptotically negligible and all the results in Theorem \ref{prop:lpml} still hold.\footnote{In unreported simulations, we find that when $n\geq 800$, the ridge regularization is unnecessary and the original adjustment (i.e., \eqref{eq:mahat_lp_lg}) has no size distortion, implying that near-multicollinearity is indeed just a finite-sample issue.} 
	
	% We also conjecture that even when Assumption \ref{ass:lpml} does not hold, $\theta_{a,s}^\textit{LPML}(\tau) = \mathbb{E}(\tilde{W}_{i,s}(\tau)\tilde{W}_{i,s}^\top(\tau)|S_i=s)^+\mathbb{E}(\tilde{W}_{i,s}(\tau)1\{Y_i(a) \leq q_a(\tau)\}|S_i=s)$ still achieves the minimum variance of $\hat{q}^{adj}(\tau)$ among all linear adjustments with regressor $\tilde{W}_{i,s}(\tau)$, where $U^+$ is the pseudo inverse of $U$, and the ridge-regularized estimator is consistent for such a $\theta_{a,s}^\textit{LPML}(\tau)$. A formal analysis is left for future research.

	\subsection{Nonparametric method}
	\label{sec:np}
	This section considers nonparametric estimation of $m_a(\tau,s,X_i)$ when the dimension of $X_i$ is fixed as $d_x$. For ease of notation, we assume all coordinates of $X_i$ are continuously distributed. If in an application some elements of $X$ are discrete, the dimension $d_x$ is interpreted as the dimension of the continuous covariates. All results in this section can then be extended in a conceptually straightforward manner by using the continuous covariates only within samples that are homogeneous in discrete covariates.
	
	As  $m_a(\tau,s,X_i)$ is nonparametrically estimated, we have $ \overline{m}_a(\tau,s,X_i) = m_a(\tau,s,X_i)=\tau- \mathbb{P}(Y_i(a)\leq q_a(\tau)|S_i=s,X_i)$. We estimate  $\mathbb{P}(Y_i(a)\leq q_a(\tau)|S_i=s,X_i)$ by the sieve method of fitting a logistic model, as studied by \cite{HIR03}. Specifically, recall $\lambda(\cdot)$ is the logistic CDF and denote  the number of sieve bases by $h_n$, which depends on the sample size $n$ and can grow to infinity as $n \rightarrow \infty$. Let $H_{h_n}(x) = (b_{1n}(x),\cdots,b_{h_nn}(x))^\top$  where $\{b_{hn}(x)\}_{h \in [h_n]}$ is an $h_n$ dimensional basis of a linear sieve space. More details on the sieve space are given in Section \ref{sec:compute} of the Online Supplement. Denote
	\begin{align}
	& \widehat{m}_a(\tau,s,X_i) = \tau- \lambda(H_{h_n}^\top(X_i)\hat{\theta}_{a,s}^\textit{NP}(\tau)) \quad \text{and } \label{eq:mhat_np1}\\
	& \hat{\theta}_{a,s}^\textit{NP}(\tau) = \argmax_{\theta_a} \frac{1}{n_{a}(s)}\sum_{i \in I_{a}(s)} \biggl[1\{Y_i \leq \hat{q}_a(\tau)\}\log(\lambda(H_{h_n}^\top(X_i)\theta_a)) \notag \\
	& \qquad \qquad \qquad \qquad \qquad \qquad + 1\{Y_i > \hat{q}_a(\tau)\}\log(1-\lambda(H_{h_n}^\top(X_i)\theta_a))\biggr]. \label{eq:mhat_np2}
	\end{align}
	
	We refer to the QTE estimator with the nonparametric adjustment as the NP estimator. Note that we use the estimator $\hat{q}_a(\tau)$ of $q_a(\tau)$ in \eqref{eq:mhat_np2}, where $\hat{q}_a(\tau)$ satisfies Assumption \ref{ass:qhat}. All the  analysis in this section takes account of the fact that $\hat{q}_a(\tau)$ instead of $q_a(\tau)$ is used. 
	
	\begin{ass}
		\begin{enumerate}[label=(\roman*)]
			\item There exist constants $0<\kappa_1<\kappa_2 < \infty$ such that with probability approaching one, $$\kappa_1\leq \lambda_{\min}\left( \frac{1}{n_{a}(s)}\sum_{i \in I_{a}(s)}H_{h_n}(X_i)H_{h_n}^\top(X_i)\right)\leq \lambda_{\max}\left( \frac{1}{n_{a}(s)}\sum_{i \in I_{a}(s)}H_{h_n}(X_i)H_{h_n}^\top(X_i)\right) \leq \kappa_2,$$
			and
			$$\kappa_1\leq \lambda_{\min}\left( \mathbb{E}(H_{h_n}(X_i)H_{h_n}^\top(X_i)|S_i=s)\right)\leq \lambda_{\max}\left( \mathbb{E}(H_{h_n}(X_i)H_{h_n}^\top(X_i)|S_i=s)\right)\leq \kappa_2.$$
			\item For $a=0,1$, there exists an $h_n \times 1$ vector $\theta_{a,s}^\textit{NP}(\tau)$ such that for $R_{a}(\tau,s,x) = \mathbb{P}(Y_i(a)\leq q_a(\tau)|S_i=s,X_i=x) - \lambda(H_{h_n}^\top(x) \theta_{a,s}^\textit{NP}(\tau))$, we have $\sup_{a = 0,1, s \in \mathcal{S},\tau \in \Upsilon, x \in \Supp(X)}|R_a(\tau,s,x)| = o(1)$,
			\begin{align*}
			\sup_{a=0,1,\tau \in \Upsilon,s\in \mathcal{S}}\frac{1}{n_a(s)}\sum_{i \in I_{a}(s)}R_{a}^2(\tau,s,X_i) = O_p\left(\frac{h_n \log(n)}{n}\right),
			\end{align*}
			and
			\begin{align*}
			\sup_{a=0,1,\tau \in \Upsilon,s\in \mathcal{S}} \mathbb{E}(R_a^2(\tau,s,X_i)|S_i=s) = O\left(\frac{h_n \log(n)}{n}\right).
			\end{align*}
			\item For $a=0,1$, there exists a constant $c \in (0,0.5)$ such that
			\begin{align*}
			c\leq & \inf_{a = 0,1, s \in \mathcal{S},\tau \in \Upsilon, x \in \Supp(X)}\mathbb{P}(Y_i(a)\leq q_a(\tau)|S_i=s,X_i=x) \\
			\leq & \sup_{a = 0,1, s \in \mathcal{S},\tau \in \Upsilon, x \in \Supp(X)}\mathbb{P}(Y_i(a)\leq q_a(\tau)|S_i=s,X_i=x)  \leq 1-c.
			\end{align*}
			\item Suppose $\mathbb{E}(H^2_{h_n,h}(X_i)|S_i=s) \leq C<\infty$ for some constant $C>0$, $\sup_{x \in \Supp(X)}||H_{h_n}(x)||_2 \leq \zeta(h_n)$, $\zeta^2(h_n) h_n \log(n) = o(n)$, and $h_n^2 \log^2(n) = o(n)$, where $H_{h_n,h}(X_i)$ denotes the $h$th coordinate of $H_{h_n}(X_i)$.
		\end{enumerate}
		\label{ass:np}
	\end{ass}
	
	Four remarks are in order. First, Assumption \ref{ass:np}(i) is standard in the sieve literature. Second, Assumption \ref{ass:np}(ii) means the approximation error of the sieve logistic model vanishes asymptotically, which holds given sufficient smoothness of $\mathbb{P}(Y_i(a)\leq q_a(\tau)|S_i=s,X_i=x)$ in $x$. Third, Assumption \ref{ass:np}(iii) usually holds when $\Supp(X)$ is compact. This condition is also assumed by \cite{HIR03}. Fourth, the quantity $\zeta(h_n)$ in Assumption \ref{ass:np}(iv) depends on the choice of basis functions. For example, $\zeta(h_n) = O(h_n^{1/2})$ for splines and $\zeta(h_n) = O(h_n)$ for power series. Taking splines as an example, Assumption \ref{ass:np}(iv) requires $h_n = o(n^{1/2})$.  
	
	\begin{thm}
		Denote $\hat{q}^\textit{NP}(\tau)$ and $\hat{q}^\textit{NP,w}(\tau)$ as the $\tau$th QTE estimator and its multiplier bootstrap counterpart defined in Sections \ref{sec:est} and \ref{sec:boot}, respectively, with $\overline{m}_a(\tau,S_i,X_i) = m_a(\tau,S_i,X_i)$ and $\widehat{m}_a(\tau,S_i,X_i)$ defined in \eqref{eq:mhat_np1}. Further suppose Assumptions \ref{ass:assignment1}, \ref{ass:tau}, \ref{ass:weight}, \ref{ass:qhat}, and \ref{ass:np} hold. Then, Assumptions \ref{ass:mhat} and \ref{ass:mhatw} hold, which further implies that Theorems \ref{thm:est} and \ref{thm:boot} hold for $\hat{q}^\textit{NP}(\tau)$ and $\hat{q}^\textit{NP,w}(\tau)$, respectively.  In addition, for any finite-dimensional quantile indices $(\tau_1,\cdots,\tau_K)$, the covariance matrix of $(\hat{q}^\textit{NP}(\tau_1),\cdots,\hat{q}^\textit{NP}(\tau_K))$ achieves the minimum (in the matrix sense) as characterized in Theorem \ref{thm:est}.
		\label{thm:np}
	\end{thm}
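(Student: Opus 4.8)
The minimum--variance statement needs no work: in the nonparametric construction $\overline{m}_a(\tau,s,x)=m_a(\tau,s,x)$ for every $(\tau,s,x)$, so the auxiliary regressions are correctly specified at any $(\tau_1,\dots,\tau_K)$ and Theorem~\ref{thm:est} gives the lower bound on the covariance matrix. Everything else reduces to verifying Assumptions~\ref{ass:mhat} and \ref{ass:mhatw} for $\widehat{m}_a$ in \eqref{eq:mhat_np1}. Conditions \ref{ass:mhat}(ii)--(iii) involve only $\overline{m}_a=m_a$: the class $\{m_a(\tau,\cdot,\cdot):\tau\in\Upsilon\}$ has envelope $1$ and, by Assumption~\ref{ass:tau}(i)--(iii), $\tau\mapsto m_a(\tau,s,x)$ is Lipschitz with a constant uniform in $(s,x)$, which immediately yields the covering--number and $L_2$--modulus bounds. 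So the real target is Assumption~\ref{ass:mhat}(i) (and its weighted analogue).

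First I would establish a uniform--in--$\tau$ rate and a Bahadur representation for the sieve logistic QMLE \eqref{eq:mhat_np2}. Without loss of generality I would take $\theta_{a,s}^\textit{NP}(\tau)$ to be the population quasi--MLE within stratum $s$ (it also satisfies Assumption~\ref{ass:np}(ii), being the best logistic fit), so that the score orthogonality
\begin{equation*}
\mathbb{E}\bigl[H_{h_n}(X_i)\bigl(1\{Y_i(a)\le q_a(\tau)\}-\lambda(H_{h_n}^\top(X_i)\theta_{a,s}^\textit{NP}(\tau))\bigr)\,\big|\,S_i=s\bigr]=0
\end{equation*}
holds exactly (using $A_i\indep(X_i,Y_i(0),Y_i(1))\mid S_i$, conditioning on $A_i=a$ changes nothing). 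Standard sieve M--estimation arguments in the spirit of \cite{HIR03} --- using the Gram--matrix eigenvalue bounds of \ref{ass:np}(i), the overlap of \ref{ass:np}(iii), and the approximation control of \ref{ass:np}(ii) --- should give $\sup_{\tau\in\Upsilon,a,s}\|\hat{\theta}_{a,s}^\textit{NP}(\tau)-\theta_{a,s}^\textit{NP}(\tau)\|_2=\Op(\sqrt{h_n\log n/n})$ together with
\begin{equation*}
\hat{\theta}_{a,s}^\textit{NP}(\tau)-\theta_{a,s}^\textit{NP}(\tau)=\Gamma_{a,s}^{-1}(\tau)\,\frac{1}{n_a(s)}\sum_{i\in I_a(s)}H_{h_n}(X_i)\bigl(1\{Y_i\le q_a(\tau)\}-\lambda(H_{h_n}^\top(X_i)\theta_{a,s}^\textit{NP}(\tau))\bigr)+r_{a,s}(\tau),
\end{equation*}
where $\Gamma_{a,s}(\tau)=\mathbb{E}(\lambda'(H_{h_n}^\top(X_i)\theta_{a,s}^\textit{NP}(\tau))H_{h_n}(X_i)H_{h_n}^\top(X_i)\mid S_i=s)$ has eigenvalues bounded away from $0$ and $\infty$ and $\sup_{\tau,a,s}\|r_{a,s}(\tau)\|_2=\op(n^{-1/2})$. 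Two complications enter: \eqref{eq:mhat_np2} plugs in $\hat{q}_a(\tau)$ for $q_a(\tau)$, but by Assumption~\ref{ass:qhat} and the bounded conditional densities in \ref{ass:tau}(iii) this perturbs the displays by $\Op(n^{-1/2})$ uniformly and can be absorbed into $r_{a,s}$; and uniformity in $\tau$ requires a maximal inequality for $h_n$--dimensional empirical processes indexed by the VC class $\{1\{y\le q_a(\tau)\}:\tau\in\Upsilon\}$, which is where the rates $\zeta^2(h_n)h_n\log n=o(n)$ and $h_n^2\log^2 n=o(n)$ get used to make $r_{a,s}$ negligible.

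Next I would treat Assumption~\ref{ass:mhat}(i). Fixing $a,s$ and using $\sum_{i:S_i=s}(A_i-\hat{\pi}(s))=0$, the left--hand side of \ref{ass:mhat}(i) equals
\begin{equation*}
\frac{n(s)}{n_1(s)n_0(s)}\sum_{i:S_i=s}(A_i-\pi(s))\bigl(\overline{\Delta}_a(\tau,s,X_i)-\widetilde{\Delta}_{a,s}(\tau)\bigr),
\end{equation*}
where $\widetilde{\Delta}_{a,s}(\tau)$ is the within--stratum average of $\overline{\Delta}_a$; this within--stratum centering is the key move, since it annihilates what would otherwise be a first--order term. Writing $\overline{\Delta}_a=[\lambda(H_{h_n}^\top(X_i)\theta_{a,s}^\textit{NP}(\tau))-\lambda(H_{h_n}^\top(X_i)\hat{\theta}_{a,s}^\textit{NP}(\tau))]+R_a(\tau,s,X_i)$, linearizing the bracket in $\hat{\theta}_{a,s}^\textit{NP}(\tau)-\theta_{a,s}^\textit{NP}(\tau)$ and inserting the Bahadur representation, I would be left with three pieces, each bounded by a conditional--variance computation that exploits $\{A_i\}\indep\{X_i,Y_i(0),Y_i(1)\}\mid\{S_i\}$ and the same treatment of the imbalance $D_n(s)$ as in the proof of Theorem~\ref{thm:est}: (a) the $R_a$ contribution, of order $\Op(\sqrt{h_n\log n}/n)=\op(n^{-1/2})$ using \ref{ass:np}(ii) and $h_n\log n=o(n)$; (b) the bilinear leading term $\propto[\tfrac{n(s)}{n_1(s)n_0(s)}\sum(A_i-\pi(s))(\cdot)]^\top\Gamma_{a,s}^{-1}(\tau)[\tfrac1{n_a(s)}\sum_{i\in I_a(s)}H_{h_n}(\cdot)]$, which by Cauchy--Schwarz should be $\Op(\sqrt{h_n/n})\cdot\Op(\sqrt{h_n/n})=\Op(h_n/n)=\op(n^{-1/2})$, the exact score orthogonality above being what kills the would--be score bias; and (c) the quadratic remainder, which after truncation to the high--probability event $\{\|\hat{\theta}_{a,s}^\textit{NP}(\tau)-\theta_{a,s}^\textit{NP}(\tau)\|_2\lesssim\sqrt{h_n\log n/n}\}$ and the same centering is $\Op(\zeta^2(h_n)h_n\log n/n^{3/2})=\op(n^{-1/2})$, which is precisely where $\zeta^2(h_n)h_n\log n=o(n)$ is needed. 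Uniformity in $\tau$ follows from chaining over the compact $\Upsilon$ using Lipschitz continuity of $\tau\mapsto q_a(\tau)$ (Assumption~\ref{ass:tau}) and of $\tau\mapsto\theta_{a,s}^\textit{NP}(\tau)$. Assumption~\ref{ass:mhatw} I would handle identically after inserting the weights $\xi_i$: $\widehat{m}_a$ is not re--estimated in the bootstrap, so $\overline{\Delta}_a$ is unchanged, the weighted identity $\sum_{i:S_i=s}\xi_i(A_i-\hat{\pi}^w(s))=0$ replaces its unweighted version, and the sub--exponential tails of Assumption~\ref{ass:weight} let the same conditional--variance bounds go through. Once Assumptions~\ref{ass:mhat} and \ref{ass:mhatw} hold, Theorems~\ref{thm:est} and \ref{thm:boot} apply to $\hat{q}^\textit{NP}(\tau)$ and $\hat{q}^\textit{NP,w}(\tau)$.

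The hard part will be the combination of the uniform--in--$\tau$ sieve Bahadur representation with a diverging number of bases \emph{and} a plug--in quantile, and the accounting that shows every remainder in the expansion of the difference--of--averages is $\op(n^{-1/2})$ under exactly the rates in Assumption~\ref{ass:np}(iv). The feature that makes these sharp rates enough is the within--stratum centering $\sum_{i:S_i=s}(A_i-\hat{\pi}(s))=0$: it removes the leading bias and lets the residual terms be measured by conditional variances (of per--observation order $h_n\log n/n$) rather than by their cruder sup--norm bounds, which would need the strictly stronger $\zeta^2(h_n)h_n\log n=o(\sqrt n)$.
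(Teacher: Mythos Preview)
Your plan is correct in spirit and would ultimately succeed, but it takes a noticeably more involved route than the paper does. The paper never derives a Bahadur representation for $\hat{\theta}_{a,s}^\textit{NP}(\tau)$. Its Step~1 establishes only the rate $\sup_{\tau}\|\hat{\theta}_{a,s}^\textit{NP}(\tau)-\theta_{a,s}^\textit{NP}(\tau)\|_2=O_p(\sqrt{h_n\log n/n})$, via a convexity/minoration argument in the style of \cite{B10} and \cite{BCFH13} (lower-bounding the second-order increment of the logistic loss and upper-bounding it by the score; Lemma~\ref{lem:np1} controls the score uniformly in $\tau$ and absorbs the plug-in $\hat q_a(\tau)$). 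Its Step~2 then bounds the difference of averages in Assumption~\ref{ass:mhatw} directly: write $\overline{\Delta}_a=[\lambda(H_{h_n}^\top\theta^\textit{NP})-\lambda(H_{h_n}^\top\hat\theta^\textit{NP})]+R_a$, subtract from both cell averages the \emph{same} constants $H(\theta^\textit{NP},\hat\theta^\textit{NP})=\mathbb{E}[\lambda(H_{h_n}^\top X\,\theta^\textit{NP})-\lambda(H_{h_n}^\top X\,\hat\theta^\textit{NP})\mid S=s]$ and $\mathbb{E}(R_a\mid S=s)$ (which cancel in the difference, playing the role of your within-stratum centering), and bound each centered average as an empirical process over the class $\{\xi[\lambda(H_{h_n}^\top x\,\theta_1)-\lambda(H_{h_n}^\top x\,\theta_2)]:\|\theta_1-\theta_2\|_2\le M\sqrt{h_n\log n/n}\}$, which is VC-type with index $\sim h_n$ and $\sigma_n^2=O(h_n\log n/n)$. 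A single application of the maximal inequality (\cite{CCK14}) then gives $o_p(n^{-1/2})$ under Assumption~\ref{ass:np}(iv). There is no linearization of $\lambda$ and no splitting into bilinear and quadratic pieces.

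What your approach buys is a more transparent accounting of where each rate in Assumption~\ref{ass:np}(iv) bites (you separate the bilinear, quadratic, and $R_a$ contributions); what the paper's buys is that it avoids two delicate steps in your plan. First, your ``without loss of generality'' choice of $\theta_{a,s}^\textit{NP}$ as the population QMLE so that the score is exactly orthogonal is not automatic: Assumption~\ref{ass:np}(ii) posits an approximator with sup-norm and $L_2$ control, and the KL-optimal $\theta$ need not inherit those rates without a separate argument. Second, and more important, your claimed Bahadur remainder $\sup_{\tau,a,s}\|r_{a,s}(\tau)\|_2=o_p(n^{-1/2})$ is stronger than standard sieve M-estimation delivers when $h_n\to\infty$ (the usual remainder is $O_p(\|\hat\theta-\theta\|_2^2)$ up to $\zeta(h_n)$ factors, which is $O_p(h_n\log n/n)$, not $o_p(n^{-1/2})$). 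Your downstream bounds in piece~(b) in fact only need the weaker $\|r_{a,s}(\tau)\|_2=o_p(h_n^{-1/2})$, which is attainable under \ref{ass:np}(iv), but you would have to make that explicit. The paper's route simply sidesteps both issues.
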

	Three remarks are in order. First, as the nonparametric regression consistently estimates the true specifications $\{m_a(\cdot)\}_{a=0,1}$, the QTE estimator adjusted by the nonparametric regression achieves the global minimum asymptotic variance, and thus is weakly more efficient than QTE estimation with linear and logistic adjustments studied in the previous section. Second, the practical implementation of NP and ML methods are the same, given that they share the same set of covariates (basis functions). Therefore, even if we include a small number of basis functions so that $h_n$ is better treated as fixed, the proposed estimation and inference methods for the regression-adjusted QTE estimator are still valid, although they may not be optimal. Third, in Section \ref{sec:reg} in the Online Supplement, we consider computing $\widehat{m}_a(\tau,s,x)$ via an $\ell_1$ penalized logistic regression when the dimension of the regressors can be comparable or even higher than the sample size. We then provide primitive conditions under which we verify Assumptions \ref{ass:mhat} and \ref{ass:mhatw}. 
	
	\section{Simulations} \label{sec:sim}
	\subsection{Data generating processes}
	Two DGPs are used to assess the finite sample performance of the estimation and inference methods introduced in the paper. We consider the outcome equation
	\begin{align}
	Y_{i} = \alpha(X_{i}) + \gamma Z_i + \mu(X_{i}) A_{i} + \eta_{i},
	\end{align}
	where $\gamma = 4$ for all cases while $\alpha(X_{i})$, $\mu(X_{i})$, and $\eta_{i}$ are separately specified as follows. 
	\begin{enumerate}[label=(\roman*)]
		\item Let $Z$ be standardized Beta$(2,2)$ distributed, $S_i = \sum_{j = 1}^4 1\{Z_i \leq g_j\}$, and $(g_1, \cdots, g_4) = (-0.25\sqrt{20}, 0, 0.25\sqrt{20}, 0.5\sqrt{20})$. $X_i$ contains two covariates $(X_{1i}, X_{2i})^\top$, where $X_{1i}$ follows a uniform distribution on $[-2,2]$, $X_{2i}$ follows a standard normal distribution, and $X_{1i}$ and $X_{2i}$ are independent. Further define $\alpha(X_{i}) = 1 + X_{2i}$, $\mu(X_i) = 1 + X_i^\top \beta$, $\beta = (3,3)^\top$, and $\eta_i = (0.25 + X_{1i}^2) A_i\varepsilon_{1i} + (1-A_i)\varepsilon_{2i}$, where $(\varepsilon_{1i}, \varepsilon_{2i})$ are jointly standard normal.
		
		\item Let $Z$ be uniformly distributed on $[-2,2]$, $S_i = \sum_{j = 1}^4 1\{Z_i \leq g_j\}$, and $(g_1, \cdots, g_4) = (-1, 0, 1, 2)$. Let $X_i=(X_{1i}, X_{2i})^\top$ be the same as defined in DGP (i). Further define $\alpha(X_{i}) = 1 + X_{1i} + X_{2i}$, $\mu(X_i) = 1 + X_{1i} + X_{2i} + \frac{1}{4} (X_i^\top\beta)^2$ with $\beta = (2,2)^\top$, and $\eta_i = 2(1 + Z_{i}^2) A_i\varepsilon_{1i} + (1 + Z_{i}^2)(1-A_i)\varepsilon_{2i}$, where $(\varepsilon_{1i}, \varepsilon_{2i})$ are mutually independently $T(5)/\sqrt{5}$ distributed.
	\end{enumerate}
	
	For each DGP, we consider the following four randomization schemes as in \cite{ZZ20} with $\pi(s) = 0.5$ for $s \in \mathcal{S}$:
	\begin{enumerate}[label=(\roman*)]
		\item SRS: Treatment assignment is generated as in Example \ref{ex:srs}.
		\item WEI: Treatment assignment is generated as in Example \ref{ex:wei} with $\phi(x) = (1-x)/2$.
		\item BCD: Treatment assignment is generated as in Example \ref{ex:bcd} with $\lambda = 0.75$.
		\item SBR: Treatment assignment is generated as in Example \ref{ex:sbr}.
	\end{enumerate}
	We assess the empirical size and power of the tests for $n = 200$ and $n = 400$. We compute the true QTEs and their differences by simulations with 10,000 sample size and 1,000 replications. To compute power, we perturb the true values by $\Delta=1.5$. We examine three null hypotheses: 
	\begin{enumerate}
		\item[(i)] Pointwise test 
		$$H_0: q(\tau) = \text{truth} \quad \text{v.s.} \quad H_1: q(\tau) = \text{truth} + \Delta, \quad \tau = 0.25,0.5,0.75;$$
		\item[(ii)] Test for the difference
		$$H_0: q(0.75) - q(0.25) = \text{truth} \quad \text{v.s.} \quad H_1: q(0.75) - q(0.25) = \text{truth} + \Delta;$$ 
		\item[(iii)] Uniform test 
		$$H_0: q(\tau) = \text{truth}(\tau) \quad v.s. \quad H_1: q(\tau) = \text{truth}(\tau) + \Delta, \quad \tau \in [0.25,0.75].$$ 
	\end{enumerate}
	For the pointwise test, we report the results for the median ($\tau = 0.5$) in the main text and give the cases $\tau = 0.25$ and $\tau = 0.75$ in the Online Supplement. 
	
	\subsection{Estimation methods}
	We consider the following estimation methods of the auxiliary regression. 
	\begin{enumerate}[label=(\roman*)]
		\item NA: the estimator with no adjustments, i.e., setting $\widehat{m}_{a}(\cdot) = \overline{m}_a(\cdot) = 0$.  
		\item LP: the linear probability model with regressors $X_i$ and the pseudo true value estimated by $\hat{\theta}_{a,s}^\textit{LP}(\tau)$ defined in \eqref{eq:thetahat_lp_est}. 
		\item ML: the logistic model with regressor $H_i = (1,X_{1i}, X_{2i})^\top$ and the pseudo true value estimated by $\hat{\theta}_{a,s}^\textit{ML}(\tau)$ defined in \eqref{eq:mlhat}. 
		\item LPML: the logistic model with regressor $H_i = (1,X_{1i}, X_{2i})^\top$ and the pseudo true value estimated by $\hat{\theta}_{a,s}^\textit{LPML}(\tau)$ defined in \eqref{eq:thetahat^lpml}.
		\item MLX: the logistic model with regressor $H_i = (1,X_{1i}, X_{2i}, X_{1i} X_{2i})^\top$ and the pseudo true value estimated by $\hat{\theta}_{a,s}^\textit{ML}(\tau)$ defined in \eqref{eq:mlhat}. 
		\item LPMLX: the logistic model with regressor $H_i = (1,X_{1i}, X_{2i}, X_{1i} X_{2i})^\top$ and the pseudo true value estimated by $\hat{\theta}_{a,s}^\textit{LPML}(\tau)$ defined in \eqref{eq:thetahat^lpml}. 
		\item NP: the logistic model with regressor $H_{h_n}(X_i) = (1,X_{1i}, X_{2i}, X_{1i} X_{2i},X_{1i} 1\{X_{1i}>t_1\}X_{2i}1\{X_{2i}>t_2\} )^\top$ where $t_1$ and $t_2$ are the sample medians of $\{X_{1i}\}_{i \in [n]}$ and $\{X_{2i}\}_{i \in [n]}$, respectively. The pseudo true value is estimated by $\hat{\theta}_{a,s}^\textit{NP}(\tau)$ defined in \eqref{eq:mhat_np2}. 
	\end{enumerate}
	
	\subsection{Simulation results}
	Table \ref{tab:Sim_Point1} presents the empirical size and power for the pointwise test with $\tau = 0.5$ under DGPs (i) and (ii). We make six observations. First, none of the auxiliary regressions is correctly specified, but test sizes are all close to the nominal level 5\%, confirming that estimation and inference are robust to misspecification. Second, the inclusion of auxiliary regressions improves the efficiency of the QTE estimator as the powers for method ``NA" are the lowest among all the methods for both DGPs and all randomization schemes. This finding is consistent with theory because methods ``LP", ``LPML", ``LPMLX", ``NP" are guaranteed to be weakly more efficient than ``NA". Third, the powers of methods ``LPML" and ``LPMLX" are higher than those of methods ``ML" and ``MLX", respectively. This is consistent with our theory that methods ``LPML" and ``LPMLX" further improve ``ML" and ``MLX", respectively. In addition, methods ``MLX" and ``LPMLX" fit a flexible distribution regression that can approximate the true DGP well. Therefore, the powers of ``MLX" and ``LPMLX" are respectively much larger than those of ``ML" and ``LPML". For the same reason we observe that the power of ``LPMLX" is close to ``NP".\footnote{The results in Section \ref{sec:add_sim} of the Online Supplement show that ``LPMLX" has much smaller bias than ``NP" and its variance is similar to ``NP", which make ``LPMLX" preferable in practice.} Fourth, the powers of method ``NP" are the best because it estimates the true specification and achieves the minimum asymptotic variance of $\hat{q}^{adj}(\tau)$ as shown in Theorem \ref{thm:par}. Fifth, when the sample size is 200, the method ``NP" slightly over-rejects but size becomes closer to nominal when the sample size increases to 400. Sixth, the improvement of power of ``LPMLX" estimator upon ``NA" (i.e., with no adjustments) is due to about 12-15\% reduction of the standard error of the QTE estimator on average.\footnote{The bias and standard errors are reported in the Section \ref{sec:add_sim} in the Online Supplement.} 
	
	\begin{table}[!htp]
		\footnotesize
		\centering
		\caption{Pointwise Test ($\tau = 0.5$)}
		\label{tab:Sim_Point1}%
		\smallskip
		\setlength{\tabcolsep}{2pt}
		\renewcommand*{\arraystretch}{1.2}
		\begin{tabularx}{\linewidth}{@{\extracolsep{\fill}}lcccccccccccccccc}
			\toprule
			& \multicolumn{8}{c}{Size} & \multicolumn{8}{c}{Power} \\ \cmidrule{2-9}\cmidrule{10-17}
			& \multicolumn{4}{c}{$N = 200$} & \multicolumn{4}{c}{$N = 400$} & \multicolumn{4}{c}{$N = 200$} & \multicolumn{4}{c}{$N = 400$} \\ \cmidrule{2-5}\cmidrule{6-9}\cmidrule{10-13}\cmidrule{14-17}
			Methods & SRS   & WEI   & BCD   & SBR   & SRS   & WEI   & BCD   & SBR   & SRS   & WEI   & BCD   & SBR   & SRS   & WEI   & BCD   & SBR \\
			\midrule
			\multicolumn{17}{c}{Panel A: DGP (i)}  \\
			NA    & 0.055 & 0.054 & 0.050 & 0.054 & 0.051 & 0.054 & 0.051 & 0.051 & 0.404 & 0.406 & 0.403 & 0.406 & 0.665 & 0.676 & 0.681 & 0.681 \\
			LP    & 0.052 & 0.050 & 0.049 & 0.052 & 0.048 & 0.053 & 0.051 & 0.052 & 0.491 & 0.497 & 0.502 & 0.492 & 0.779 & 0.788 & 0.790 & 0.791 \\
			ML    & 0.053 & 0.050 & 0.049 & 0.055 & 0.051 & 0.050 & 0.052 & 0.052 & 0.472 & 0.478 & 0.483 & 0.473 & 0.759 & 0.768 & 0.775 & 0.773 \\
			LPML  & 0.054 & 0.052 & 0.052 & 0.057 & 0.052 & 0.054 & 0.051 & 0.053 & 0.506 & 0.509 & 0.523 & 0.513 & 0.802 & 0.812 & 0.814 & 0.809 \\
			MLX   & 0.056 & 0.059 & 0.055 & 0.057 & 0.055 & 0.054 & 0.055 & 0.058 & 0.475 & 0.479 & 0.486 & 0.482 & 0.752 & 0.759 & 0.760 & 0.760 \\
			LPMLX & 0.060 & 0.058 & 0.059 & 0.058 & 0.054 & 0.055 & 0.054 & 0.054 & 0.506 & 0.513 & 0.521 & 0.512 & 0.802 & 0.810 & 0.813 & 0.811 \\
			NP    & 0.063 & 0.059 & 0.062 & 0.064 & 0.055 & 0.054 & 0.054 & 0.056 & 0.523 & 0.523 & 0.531 & 0.526 & 0.804 & 0.811 & 0.814 & 0.809 \\
			\midrule
			\multicolumn{17}{c}{Panel B: DGP (ii)}  \\
			NA    & 0.046 & 0.051 & 0.045 & 0.047 & 0.047 & 0.045 & 0.048 & 0.047 & 0.479 & 0.489 & 0.500 & 0.490 & 0.773 & 0.775 & 0.774 & 0.782 \\
			LP    & 0.049 & 0.051 & 0.050 & 0.050 & 0.045 & 0.048 & 0.050 & 0.045 & 0.572 & 0.581 & 0.589 & 0.579 & 0.851 & 0.856 & 0.857 & 0.854 \\
			ML    & 0.051 & 0.058 & 0.050 & 0.054 & 0.049 & 0.046 & 0.050 & 0.048 & 0.524 & 0.534 & 0.541 & 0.539 & 0.812 & 0.810 & 0.807 & 0.807 \\
			LPML  & 0.051 & 0.058 & 0.054 & 0.053 & 0.050 & 0.049 & 0.053 & 0.047 & 0.574 & 0.581 & 0.588 & 0.580 & 0.862 & 0.863 & 0.863 & 0.863 \\
			MLX   & 0.058 & 0.059 & 0.056 & 0.059 & 0.051 & 0.049 & 0.051 & 0.050 & 0.566 & 0.574 & 0.583 & 0.573 & 0.826 & 0.824 & 0.827 & 0.827 \\
			LPMLX & 0.057 & 0.062 & 0.057 & 0.060 & 0.052 & 0.050 & 0.053 & 0.052 & 0.615 & 0.620 & 0.630 & 0.627 & 0.878 & 0.878 & 0.880 & 0.879 \\
			NP    & 0.063 & 0.066 & 0.062 & 0.062 & 0.056 & 0.055 & 0.056 & 0.051 & 0.622 & 0.625 & 0.632 & 0.628 & 0.883 & 0.880 & 0.882 & 0.879 \\
			\bottomrule
		\end{tabularx}%
	\end{table}%
	
	Tables \ref{tab:Sim_Diff1} and \ref{tab:Sim_Unif1} present sizes and powers of inference on $q(0.75) - q(0.25)$ and on $q(\tau)$ uniformly over $\tau \in [0.25,0.75]$, respectively, for DGPs (i) and (ii) and four randomization schemes. All the observations made above apply to these results. The improvement in power of the ``LPMLX" estimator upon ``NA" (i.e., with no adjustments) is due to a 9\% reduction of the standard error of the difference of the QTE estimators on average. In Section \ref{sec:add_sim} in the Online Supplement, we provide additional simulation results such as the empirical sizes and powers for the pointwise test with $\tau = 0.25$ and $0.75$, the bootstrap inference with the true target fraction, and the adjusted QTE estimator when the DGP contains high-dimensional covariates and the adjustments are computed via logistic Lasso. We also report the biases and standard errors of the adjusted QTE estimators. 
	
	\begin{table}[htp!]
		\footnotesize
		\centering
		\caption{Test for Differences ($\tau_1 = 0.25$, $\tau_2 = 0.75$)}
		\label{tab:Sim_Diff1}%
		\smallskip
		
		\setlength{\tabcolsep}{2pt}
		\renewcommand*{\arraystretch}{1.2}
		\begin{tabularx}{\linewidth}{@{\extracolsep{\fill}}lcccccccccccccccc}
			\toprule
			& \multicolumn{8}{c}{Size} & \multicolumn{8}{c}{Power} \\ \cmidrule{2-9}\cmidrule{10-17}
			& \multicolumn{4}{c}{$N = 200$} & \multicolumn{4}{c}{$N = 400$} & \multicolumn{4}{c}{$N = 200$} & \multicolumn{4}{c}{$N = 400$} \\ \cmidrule{2-5}\cmidrule{6-9}\cmidrule{10-13}\cmidrule{14-17}
			Methods & SRS   & WEI   & BCD   & SBR   & SRS   & WEI   & BCD   & SBR   & SRS   & WEI   & BCD   & SBR   & SRS   & WEI   & BCD   & SBR \\
			\midrule
			\multicolumn{17}{c}{Panel A: DGP (i)}  \\
			NA    & 0.043 & 0.045 & 0.040 & 0.041 & 0.044 & 0.043 & 0.041 & 0.043 & 0.214 & 0.216 & 0.209 & 0.203 & 0.387 & 0.389 & 0.383 & 0.365 \\
			LP    & 0.045 & 0.048 & 0.043 & 0.045 & 0.045 & 0.047 & 0.043 & 0.045 & 0.246 & 0.242 & 0.234 & 0.248 & 0.424 & 0.422 & 0.422 & 0.421 \\
			ML    & 0.045 & 0.045 & 0.043 & 0.042 & 0.046 & 0.047 & 0.040 & 0.048 & 0.234 & 0.233 & 0.231 & 0.239 & 0.415 & 0.422 & 0.417 & 0.426 \\
			LPML  & 0.044 & 0.049 & 0.045 & 0.045 & 0.049 & 0.049 & 0.044 & 0.047 & 0.250 & 0.250 & 0.248 & 0.259 & 0.451 & 0.453 & 0.450 & 0.459 \\
			MLX   & 0.046 & 0.052 & 0.046 & 0.047 & 0.047 & 0.047 & 0.044 & 0.049 & 0.232 & 0.234 & 0.229 & 0.241 & 0.415 & 0.415 & 0.404 & 0.416 \\
			LPMLX & 0.049 & 0.055 & 0.047 & 0.047 & 0.049 & 0.050 & 0.047 & 0.047 & 0.247 & 0.249 & 0.249 & 0.258 & 0.445 & 0.453 & 0.445 & 0.453 \\
			NP    & 0.050 & 0.054 & 0.050 & 0.051 & 0.052 & 0.052 & 0.047 & 0.048 & 0.246 & 0.248 & 0.245 & 0.257 & 0.444 & 0.444 & 0.442 & 0.450 \\
			\midrule
			\multicolumn{17}{c}{Panel B: DGP (ii)}  \\
			NA    & 0.039 & 0.044 & 0.040 & 0.038 & 0.044 & 0.041 & 0.039 & 0.047 & 0.211 & 0.225 & 0.217 & 0.194 & 0.399 & 0.396 & 0.392 & 0.383 \\
			LP    & 0.043 & 0.048 & 0.045 & 0.040 & 0.045 & 0.044 & 0.042 & 0.047 & 0.244 & 0.255 & 0.251 & 0.245 & 0.447 & 0.440 & 0.441 & 0.455 \\
			ML    & 0.049 & 0.046 & 0.046 & 0.043 & 0.044 & 0.045 & 0.042 & 0.048 & 0.217 & 0.228 & 0.213 & 0.212 & 0.379 & 0.386 & 0.386 & 0.396 \\
			LPML  & 0.047 & 0.051 & 0.048 & 0.043 & 0.047 & 0.045 & 0.047 & 0.048 & 0.253 & 0.258 & 0.253 & 0.252 & 0.456 & 0.451 & 0.454 & 0.468 \\
			MLX   & 0.047 & 0.051 & 0.047 & 0.047 & 0.046 & 0.046 & 0.045 & 0.049 & 0.226 & 0.240 & 0.228 & 0.223 & 0.394 & 0.392 & 0.391 & 0.399 \\
			LPMLX & 0.053 & 0.056 & 0.051 & 0.048 & 0.051 & 0.049 & 0.045 & 0.050 & 0.261 & 0.272 & 0.265 & 0.263 & 0.467 & 0.460 & 0.460 & 0.477 \\
			NP    & 0.056 & 0.058 & 0.053 & 0.052 & 0.051 & 0.052 & 0.045 & 0.050 & 0.266 & 0.275 & 0.266 & 0.270 & 0.469 & 0.459 & 0.461 & 0.479 \\
			\bottomrule
		\end{tabularx}%
	\end{table}%
	
	\begin{table}[htp!]
		\footnotesize
		\centering
		\caption{Uniform Test ($\tau \in [0.25, 0.75]$)}
		\label{tab:Sim_Unif1}%
		\smallskip
		\setlength{\tabcolsep}{2pt}
		\renewcommand*{\arraystretch}{1.2}
		\begin{tabularx}{\linewidth}{@{\extracolsep{\fill}}lcccccccccccccccc}
			\toprule
			& \multicolumn{8}{c}{Size} & \multicolumn{8}{c}{Power} \\ \cmidrule{2-9}\cmidrule{10-17}
			& \multicolumn{4}{c}{$N = 200$} & \multicolumn{4}{c}{$N = 400$} & \multicolumn{4}{c}{$N = 200$} & \multicolumn{4}{c}{$N = 400$} \\ \cmidrule{2-5}\cmidrule{6-9}\cmidrule{10-13}\cmidrule{14-17}
			Methods & SRS   & WEI   & BCD   & SBR   & SRS   & WEI   & BCD   & SBR   & SRS   & WEI   & BCD   & SBR   & SRS   & WEI   & BCD   & SBR \\
			\midrule
			\multicolumn{17}{c}{Panel A: DGP (i)} \\ 
			NA    & 0.048 & 0.044 & 0.044 & 0.045 & 0.047 & 0.049 & 0.045 & 0.048 & 0.450 & 0.451 & 0.455 & 0.454 & 0.765 & 0.770 & 0.769 & 0.770 \\
			LP    & 0.045 & 0.044 & 0.043 & 0.045 & 0.047 & 0.051 & 0.047 & 0.046 & 0.589 & 0.588 & 0.589 & 0.581 & 0.902 & 0.901 & 0.904 & 0.900 \\
			ML    & 0.047 & 0.044 & 0.043 & 0.045 & 0.044 & 0.051 & 0.045 & 0.047 & 0.570 & 0.577 & 0.582 & 0.568 & 0.887 & 0.889 & 0.893 & 0.890 \\
			LPML  & 0.046 & 0.046 & 0.045 & 0.047 & 0.046 & 0.050 & 0.046 & 0.051 & 0.603 & 0.605 & 0.616 & 0.607 & 0.916 & 0.917 & 0.915 & 0.915 \\
			MLX   & 0.052 & 0.049 & 0.048 & 0.048 & 0.046 & 0.053 & 0.050 & 0.050 & 0.582 & 0.582 & 0.595 & 0.576 & 0.889 & 0.893 & 0.891 & 0.889 \\
			LPMLX & 0.053 & 0.047 & 0.049 & 0.052 & 0.047 & 0.053 & 0.050 & 0.050 & 0.612 & 0.614 & 0.619 & 0.610 & 0.915 & 0.919 & 0.919 & 0.913 \\
			NP    & 0.056 & 0.055 & 0.054 & 0.055 & 0.050 & 0.057 & 0.052 & 0.054 & 0.633 & 0.627 & 0.633 & 0.629 & 0.916 & 0.919 & 0.918 & 0.915 \\
			\midrule
			\multicolumn{17}{c}{Panel B: DGP (ii)} \\ 
			NA    & 0.038 & 0.039 & 0.039 & 0.038 & 0.045 & 0.039 & 0.040 & 0.045 & 0.572 & 0.571 & 0.579 & 0.574 & 0.878 & 0.882 & 0.879 & 0.879 \\
			LP    & 0.041 & 0.044 & 0.045 & 0.041 & 0.044 & 0.043 & 0.039 & 0.042 & 0.704 & 0.708 & 0.710 & 0.700 & 0.953 & 0.955 & 0.956 & 0.955 \\
			ML    & 0.044 & 0.043 & 0.048 & 0.041 & 0.047 & 0.045 & 0.043 & 0.044 & 0.661 & 0.660 & 0.664 & 0.655 & 0.931 & 0.931 & 0.933 & 0.935 \\
			LPML  & 0.047 & 0.046 & 0.048 & 0.044 & 0.047 & 0.046 & 0.041 & 0.046 & 0.723 & 0.714 & 0.720 & 0.714 & 0.964 & 0.963 & 0.965 & 0.964 \\
			MLX   & 0.052 & 0.050 & 0.052 & 0.049 & 0.048 & 0.046 & 0.045 & 0.045 & 0.703 & 0.710 & 0.708 & 0.704 & 0.946 & 0.949 & 0.946 & 0.951 \\
			LPMLX & 0.056 & 0.054 & 0.054 & 0.051 & 0.052 & 0.048 & 0.046 & 0.048 & 0.761 & 0.761 & 0.766 & 0.754 & 0.972 & 0.972 & 0.972 & 0.974 \\
			NP    & 0.060 & 0.060 & 0.062 & 0.058 & 0.055 & 0.052 & 0.047 & 0.051 & 0.770 & 0.771 & 0.773 & 0.765 & 0.973 & 0.974 & 0.972 & 0.974 \\
			\bottomrule
		\end{tabularx}%
	\end{table}%
	
	\subsection{Practical recommendations}
	When $X$ is finite-dimensional, we suggest using the LPMLX adjustment in which the logistic model includes interaction terms and the regression coefficients are allowed to depend on $(\tau,a,s)$. When $X$ is high-dimensional, we suggest using the logistic Lasso to estimate the regression adjustment.\footnote{The relevant theory and simulation results on high-dimensional covariates are provided in Section \ref{sec:reg} of the Online Supplement.} 
	
	\section{Empirical Application}
	\label{sec:app}
	
	Undersaving has been found to have important individual and social welfare consequences \citep{karlan2014}. Does expanding access to bank accounts for the poor lead to an overall increase in savings? To answer the question, \citet{dupas2018} conducted a covariate-adaptive randomized experiment in Uganda, Malawi, and Chile to study the impact of a bank account subsidy on savings. In their paper, the authors examined the ATEs as well as the QTEs of the subsidy. This section reports an application of our methods to the same dataset to examine the QTEs of the subsidy on household total savings in Uganda.  
	
	The sample consists of 2160 households in Uganda.\footnote{We filter out observations with missing values. Our final sample contains 1952 households.} Within each of 41 strata by gender, occupation, and bank branch, 50 percent of households in the sample were randomly assigned to receive the bank account subsidy and the rest of the sample were in the control group. This is a block stratified randomization design with 41 strata, which satisfies Assumption \ref{ass:assignment1} in Section \ref{sec:setup}. The target fraction of the treated units is 1/2. It is trivial to see that statements (i), (ii), and (iii) in Assumption \ref{ass:assignment1} are satisfied. Because $\max_{s\in\mathcal{S}}|\frac{D_{n}(s)}{n(s)}|\approx 0.056$, it is reasonable to claim that Assumption \ref{ass:assignment1}(iv) is also satisfied in our analysis. 
	
	After the randomization and the intervention, the authors conducted 3 rounds of follow-up surveys in Uganda (See \cite{dupas2018} for a detailed description). In this section, we focus on the first round follow up survey to examine the impact of the bank account subsidy on total savings. 
	
	Tables \ref{tab:emp_qte1x} and \ref{tab:emp_qte4x} present the QTE estimates and their standard errors (in parentheses) estimated by different methods at quantile indices 0.25, 0.5, and 0.75. The description of these estimators is similar to that in Section \ref{sec:sim}.\footnote{Specifically, we have:
		\begin{enumerate}[label=(\roman*)]
			\item NA: the estimator with no adjustments. 
			\item LP: the linear probability model. When there is only one auxiliary regressor, $H_{i}= (1,X_{1i})^\top$, and when there are four auxiliary regressors, $H_{i}= (1,X_{1i},X_{2i},X_{3i},X_{4i})^\top$, where $X_{1i},X_{2i},X_{3i},X_{4i},$ represent four covariates used in the regression adjustment.
			\item ML: the logistic probability model with regressor $H_{i}$, where $H_{i}$ is the same as that in the LP model. 
			\item LPML: the further improved logistic probability model with regressor $H_{i}$, where $H_{i}$ is the same as that in the LP model. 
			\item MLX: the logistic probability model with interaction terms. MLX is only be applied to the case when there are four auxiliary regressors with $H_{i}= (1,X_{1i},X_{2i},X_{3i},X_{4i}, X_{1i}X_{2i},X_{2i}X_{3i})^\top$.
			\item LPMLX: the further improved logistic probability model with interaction terms. LPMLX is only be applied to the case when there are four auxiliary regressors with the same $H_{i}$ as that used in the MLX model.
			\item NP: the nonparametric logistic probability model with regressor $H_{h_n}$. NP is only applied to the case when there are four auxiliary regressors with $H_{h_n} =(1,X_{1i},X_{2i},X_{3i},X_{4i},X_{1i}X_{2i},X_{2i}X_{3i},X_{1i}1\{X_{1i}>t_1\},X_{2i}1\{X_{2i}>t_2\},X_{1i} 1\{X_{1i}>t_1\}X_{2i}1\{X_{2i}>t_2\})^\top$ where $t_1$ and $t_2$ are the sample medians of $\{X_{1i}\}_{i \in [n]}$ and $\{X_{2i}\}_{i \in [n]}$, respectively.
			\item Lasso: the logistic probability model with regressor $H_{p_n}$  and post-Lasso coefficient estimator $\hat{\theta}_{a,s}^{post}(\tau)$. Lasso is only applied to the case when there are four auxiliary regressors with $H_{p_n}(X_i) =(1,X_{1i},X_{2i},X_{3i},X_{4i},X_{1i}^2,X_{2i}^2,X_{3i}^2,X_{1i}X_{2i},X_{2i}X_{3i},X_{1i}1\{X_{1i}>t_1\},X_{2i}1\{X_{2i}>t_2\},X_{1i} 1\{X_{1i}>t_1\}X_{2i}1\{X_{2i}>t_2\})^\top$. The post-Lasso estimator $\hat{\theta}_{a.s}^{post}(\tau)$ is defined in \eqref{eq:post}. The choice of tuning parameter and the estimation procedure are detailed in Section \ref{sec:aux_imp}. 
	\end{enumerate}} In the analysis, we focus on two sets of additional baseline variables: baseline value of total savings only (one auxiliary regressor) and baseline value of total savings, household size, age, and married female dummy (four auxiliary regressors). The first set of regressors follows \cite{dupas2018}. The second one is used to illustrate all the methods discussed in the paper. Tables \ref{tab:emp_qte1x} and \ref{tab:emp_qte4x} report the results with one and four auxiliary regressors, respectively.
	
	\newcolumntype{L}{>{\raggedright\arraybackslash}X}
	\newcolumntype{C}{>{\centering\arraybackslash}X}
	
	\begin{table}[t]
		\centering
		\caption{QTEs on Total Savings (one auxiliary regressor) }
		\vspace{1ex}
		\begin{tabularx}{1\textwidth}{LCCCC}
			\toprule
			& NA & LP & ML & LPML  \\ 
			\midrule
			25\% & 1.105 & 1.105 & 1.105 & 1.105\\
			& (0.564) & (0.564) & (0.470) & (0.470)\\
			50\% & 3.682 & 3.682 & 3.682 & 3.682\\
			& (1.010) & (1.080) & (1.146) & (1.033)\\
			75\% & 7.363 & 9.204 & 9.204 & 9.204\\
			& (3.757) & (4.227) & (3.616) & (3.757)\\
			\bottomrule
		\end{tabularx} \\ 
		
		\vspace{-1ex}
		\justify
		{\footnotesize Notes: The table presents the QTE estimates of the effect of the bank account subsidy on household total savings at quantiles 25\%, 50\%, and 75\% when only one auxiliary regressor (baseline value of total savings) is used in the regression adjustment models. Standard errors are in parentheses. }
		\label{tab:emp_qte1x}
	\end{table}
	
	\begin{table}[t]
		\centering
		\caption{QTEs on Total Savings (four auxiliary regressors) }
		\vspace{1ex}
		\begin{tabularx}{1\textwidth}{LCCCCCCCC}
			\toprule
			& NA & LP & ML & LPML & MLX & LPMLX & NP & Lasso  \\ 
			\midrule
			25\% & 1.105 & 1.473 & 1.105 & 1.105 & 1.105 & 1.105 & 1.105 & 1.105\\
			& (0.564) & (0.564) & (0.564) & (0.564) & (0.357) & (0.319) & (0.188) & (0.564)\\
			50\% & 3.682 & 3.682 & 3.682 & 3.682 & 3.682 & 3.682 & 3.682 & 3.682\\
			& (1.010) & (1.033) & (0.939) & (0.939) & (0.958) & (1.033) & (0.939) & (0.939)\\
			75\% & 7.363 & 8.100 & 7.363 & 7.363 & 7.363 & 7.363 & 7.363 & 7.363\\
			& (3.757) & (3.757) & (3.757) & (3.569) & (3.757) & (3.663) & (3.663) & (3.757)\\
			\bottomrule
		\end{tabularx} \\ 
		
		\vspace{-1ex}
		\justify
		{\footnotesize Notes: The table shows QTE estimates of the effect of the bank account subsidy on household total savings at quantiles 25\%, 50\%, and 75\% when four auxiliary regressors (baseline value of total savings, household size, age, and married female dummy) are used in the regression adjustment models. Standard errors are in parentheses. }
		\label{tab:emp_qte4x}
	\end{table}
	
	\newcolumntype{B}{>{\hsize=1.44\hsize \raggedright\arraybackslash}X}
	\newcolumntype{S}{>{\hsize=.89\hsize \centering\arraybackslash}X}
	
	\begin{table}[t]
		\centering
		\caption{Test for the Difference between Two QTEs on Total Savings}
		\vspace{1ex}
		\begin{tabularx}{1\textwidth}{BSSSSSSSS}
			\toprule
			& NA & LP & ML & LPML & MLX & LPMLX & NP & Lasso  \\ 
			\midrule
			$50\%-25\%$ & 2.577 & 2.209 & 2.577 & 2.577 & 2.577 & 2.577 & 2.577 & 2.577\\
			& (0.939) & (1.104) & (0.939) & (0.939) & (0.958) & (1.033) & (0.845) & (0.911)\\
			$75\%-50\%$ & 3.682 & 4.418 & 3.682 & 3.682 & 3.682 & 3.682 & 3.682 & 3.682\\
			& (3.757) & (3.663) & (3.663) & (3.287) & (3.475) & (3.287) & (3.663) & (3.757)\\
			$75\%-25\%$ & 6.259 & 6.627 & 6.259 & 6.259 & 6.259 & 6.259 & 6.259 & 6.259\\
			& (3.851) & (3.757) & (3.757) & (3.695) & (3.588) & (3.569) & (3.287) & (3.832)\\
			\bottomrule
		\end{tabularx} \\ 
		
		\vspace{-1ex}
		\justify
		{\footnotesize Notes: The table presents tests for the difference between two QTE estimates of the effect of the bank account subsidy on household total savings when there are four auxiliary regressors: baseline value of total savings, household size, age, and married female dummy. Standard errors are in parentheses.}
		\label{tab:emp_qte4x_dif}
	\end{table}
	
	\begin{figure}[t]
		\includegraphics[width=\textwidth]{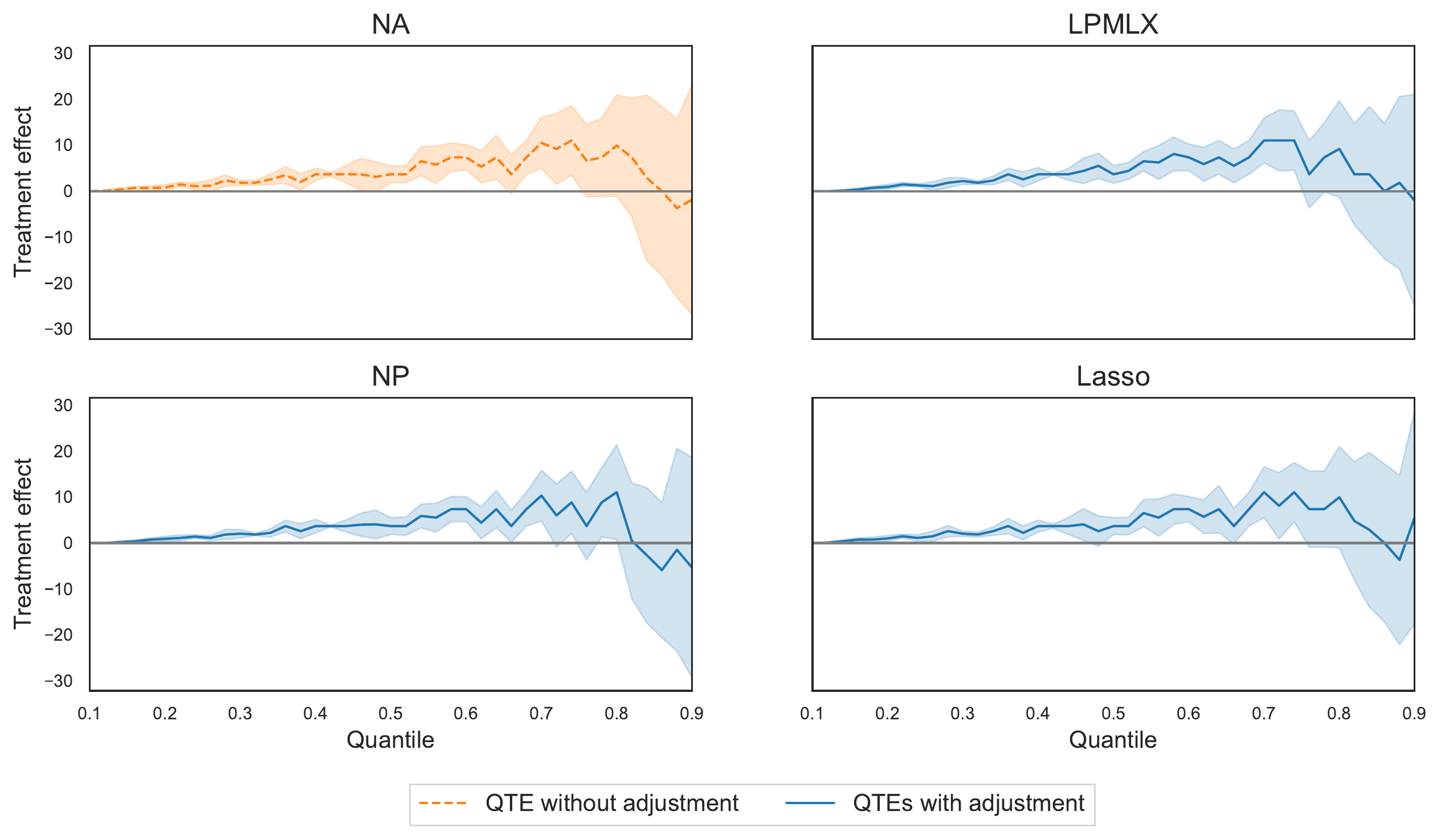}
		\centering
		\caption{Quantile Treatment Effects on the Distribution of Total Savings }
		\label{fig:band}
		
		\vspace{-1ex}
		\justify
		\footnotesize Notes: The graphs in each panel of the figure plot the QTE estimates of the effect of the bank account subsidy on the distribution of household total savings when there are four auxiliary regressors: baseline value of total savings, household size, age, and married female dummy. The shadowed areas display 95\% confidence regions.
	\end{figure}
	
	The results in Tables \ref{tab:emp_qte1x}-\ref{tab:emp_qte4x} prompt two observations. First, consistent with the theoretical and simulation results, the standard errors for the regression-adjusted QTEs are mostly lower than those for the QTE estimate without adjustment. This observation holds for most of the specification and estimation methods of the auxiliary regressions.\footnote{The efficiency gain from the ``NP" adjustment is not the only reason for its small standard error at the 25\% QTE. Another reason is that the treated outcomes around this percentile themselves do not have much variation.} For example, in Table \ref{tab:emp_qte1x}, the standard errors for the ``LPML'' QTE estimates are 16.7\% less than those for the QTE estimate without adjustment at the 25th percentile, respectively. For another example in Table \ref{tab:emp_qte4x}, at the 25th percentile, the standard error for the ``LPMLX'' QTE estimates is 43.4\% less than that for the QTE estimate without adjustment. At the median, the standard error for the ``LPML'' QTE estimates is 7\% less than that for the QTE estimate without adjustment.
	
	Second, there is substantial heterogeneity in the impact of the subsidy on total savings. In particular, we observe larger effects as the quantile indexes increase, which is consistent with the findings in \cite{dupas2018}. For example, Table \ref{tab:emp_qte4x} shows that, although the treatment effects are all positive and significantly different from zero at quantiles 25\%, 50\%, and 75\%, the magnitude of the effects increases by over 200\% from the 25th percentile to the median and by around 100\% from the median to the 75th percentile. 
	
	The second observation suggests that the heterogeneous effects of the subsidy on savings are sizable economically. To evaluate whether these effects are statistically significant, we report statistical tests for the heterogeneity of the QTEs in Table \ref{tab:emp_qte4x_dif}. Specifically, we test the null hypotheses: $H_0: q(0.5) - q(0.25)=0$, $H_0: q(0.75) - q(0.5)=0$, and $H_0: q(0.75) - q(0.25)=0$. Table \ref{tab:emp_qte4x_dif} shows that only the difference between the 50th and 25th QTEs is statistically significant at the 5\% significance level.
	
	How does the variation in the impact of the subsidy appear across the distribution of total savings? The QTEs on the distribution of savings are plotted in Figure \ref{fig:band}, where the shaded areas represent the 95\% confidence region. The figure shows that the QTEs seem insignificantly different from zero below about the 20th percentile. At higher levels to near the 80th percentile, the treatment group savings exceed the control group savings at an accelerated rate, yielding increasingly significantly positive QTEs. Beyond the 80th percentile, the QTEs again become insignificantly different from zero. These findings point to notable distributional heterogeneity in the impact of the subsidy on savings. 
	
	\section{Conclusion}
	\label{sec:concl}
	This paper proposes the use of auxiliary regression to incorporate additional covariates into estimation and inference relating to unconditional QTEs under CARs. The auxiliary regression model may be estimated parametrically, nonparametrically, or via regularization if there are high-dimensional covariates. Both estimation and bootstrap inference methods are robust to potential misspecification of the auxiliary model and do not suffer from the conservatism due to the CAR. It is shown that efficiency can be improved when including extra covariates. When the auxiliary regression is correctly specified, the regression-adjusted estimator further achieves the minimum asymptotic variance. In both the simulations and the empirical application, the proposed regression-adjusted QTE estimator performs well. These results and the robustness of the methods to auxiliary model misspecification reflect the aphorism widespread in scientific modeling that all models may be wrong, but some are useful.\footnote{The aphorism ``all models are wrong, but some are useful" is often attributed to the statistician George \cite{Box1976}. But the notion has many antecedents, including a particularly apposite remark made in 1947 by John \citet{Neumann2019} in an essay on the empirical origins of mathematical ideas to the effect that ``truth ... is much too complicated to allow anything but approximations''.} 
	
	\section*{Acknowledgements}
	
	We thank the Managing Editor, Elie Tamer, the Associate Editor and three anonymous referees for many useful comments that helped to improve this paper. We are also grateful to Michael Qingliang Fan and seminar participants from the 2022 Econometric Society Australasian Meeting, the 2021 Nanyang Econometrics Workshop, University of California, Irvine, and University of Sydney for their comments. 
	
	\paragraph{Funding:} Yichong Zhang acknowledges financial support from the Singapore Ministry of Education under Tier 2 grant No.~MOE2018-T2-2-169, the NSFC under the grant No.~72133002, and a Lee Kong Chian fellowship. Peter C.~B.~Phillips acknowledges support from NSF Grant No.~SES 18-50860, a Kelly Fellowship at the University of Auckland, and a Lee Kong Chian Fellowship. Yubo Tao acknowledges the financial support from the Start-up Research Grant of University of Macau (SRG2022-00016-FSS). Liang Jiang acknowledges support from MOE (Ministry of Education in China) Project of Humanities and Social Sciences (Project No.18YJC790063). 

\newpage
\appendix

\section{Regularization Method for Regression Adjustments}
\label{sec:reg}
This section considers estimation of $m_a(\tau,s,X)$ in a high-dimensional environment. Let $H_{p_n}(X_i)$ be the regressors with dimension $p_n$, which may exceed the sample size. When the number of raw controls is comparable to or exceeds the sample size, we can just let $H_{p_n}(X_i) = X_i$. On the other hand, $H_{p_n}(X_i)$ may be composed of a large dictionary of sieve bases derived from a fixed dimensional vector $X_i$ through suitable transformations such as powers and interactions. Thus, high dimensionality in $H_{p_n}(X_i)$ can arise from the desire to flexibly approximate nuisance functions.  In our approach we follow \cite{BCFH13} and implement a logistic regression with $\ell_1$-penalization. In their notation we view $m_a(\tau,s,x)$ as a function of $q_a(\tau)$, i.e., $m_a(\tau,s,x) = \tau - \mathcal{M}_a(q_a(\tau),s,x)$, where $\mathcal{M}_a(q,s,x) = \mathbb{P}(Y_i(a) \leq q |S_i=s,X_i=x)$. We estimate $\mathcal{M}_a( q_a(\tau),s,x)$ as $\lambda(H_{p_n}(X_i)^\top \hat{\theta}_{a,s}^\textit{HD}(\hat{q}_a(\tau)))$, where $\hat{q}_a(\tau)$ is defined in Assumption \ref{ass:qhat},
\begin{align}
\hat{\theta}_{a,s}^\textit{HD}(q) = & \argmin_{\theta_a} \frac{-1}{n_{a}(s)}\sum_{i \in I_{a}(s)} \biggl[1\{Y_i \leq q\}\log(\lambda(H_{p_n}(X_i)^\top \theta_a)) \notag \\
& \qquad \qquad \qquad \qquad \qquad \qquad + 1\{Y_i > q\}\log(1-\lambda(H_{p_n}(X_i)^\top \theta_a))\biggr] + \frac{\varrho_{n,a}(s)}{n_a(s)}||\hat{ \Omega} \theta_a||_1,
\label{eq:thetahat^hd}
\end{align}
$\varrho_{n,a}(s)$ is a tuning parameter, and $\hat{ \Omega} =  \diag(\hat{ \omega}_1,\cdots,\hat{ \omega}_{p_n})$ is a diagonal matrix of data-dependent penalty loadings. We specify $\varrho_{n,a}(s)$ and $\hat{ \Omega}$ in Section \ref{sec:compute}. Post-Lasso estimation is also considered. Let $\hat{\mathbb{S}}_{a,s}(q)$ be the support of $\hat{\theta}_{a,s}^\textit{HD}(q) = \{h \in [p_n]: \hat{\theta}_{a,s,h}^\textit{HD}(q) \neq 0\}$, where $\hat{\theta}_{a,s,h}^\textit{HD}(q)$ is the $h$th coordinate of $\hat{\theta}_{a,s}^\textit{HD}(q)$. We can complement $\hat{\mathbb{S}}_{a,s}(q)$ with additional variables in $\hat{\mathbb{S}}^+_{a,s}(q)$ that researchers want to control for and define the enlarged set of variables as  $\tilde{\mathbb{S}}_{a,s}(q) = \hat{\mathbb{S}}_{a,s}(q) \cup \hat{\mathbb{S}}^+_{a,s}(q)$. We compute the post-Lasso estimator $\hat{\theta}_{a,s}^{post}(q)$ as
\begin{align}
\hat{\theta}_{a,s}^{post}(q) = & \argmin_{\theta_a \in \tilde{\mathbb{S}}_{a,s}(q)} \frac{-1}{n_{a}(s)}\sum_{i \in I_{a}(s)} \biggl[1\{Y_i \leq q\}\log(\lambda(H_{p_n}(X_i)^\top \theta_a)) \notag \\
& \qquad \qquad \qquad \qquad \qquad \qquad + 1\{Y_i > q\}\log(1-\lambda(H_{p_n}(X_i)^\top \theta_a))\biggr].
\label{eq:post}
\end{align}
Finally, we compute the auxiliary model as 
\begin{align}
\widehat{m}_a(\tau,s,X_i) = \lambda(H_{p_n}^\top(X_i)\hat{\theta}_{a,s}^\textit{HD}(\hat{q}_a(\tau))) \quad \text{or} \quad 
\widehat{m}_a(\tau,s,X_i) = \lambda(H_{p_n}^\top(X_i)\hat{\theta}_{a,s}^{post}(\hat{q}_a(\tau))). 
\label{eq:hd}
\end{align}
We refer to the QTE estimator with the regularized adjustment as the HD estimator. Note that we use the estimator $\hat{q}_a(\tau)$ of $q_a(\tau)$ in \eqref{eq:hd}, where $\hat{q}_a(\tau)$ satisfies Assumption \ref{ass:qhat}. All the analysis in this section takes account of the fact that $\hat{q}_a(\tau)$ instead of $q_a(\tau)$ is used. 
\begin{ass}
	\begin{enumerate}[label=(\roman*)]
		\item Let $\mathcal{Q}^\eps_a = \{q: \inf_{\tau \in \Upsilon}|q - q_a(\tau)|\leq \eps\}$ for $a=0,1$. Suppose $\mathbb{P}(Y_i(a) \leq q|S_i=s,X_i) = \lambda(H_{p_n}(X_i)^\top \theta^\textit{HD}_{a,s}(q)) + r_a(q,s,X_i)$ such that $\sup_{a=0,1,q \in \mathcal{Q}^\eps_a,s\in \mathcal{S}}||\theta^\textit{HD}_{a,s}(q)||_0 \leq h_n$.
		\item Suppose $\sup_{i\in [n]}||H_{p_n}(X_i)||_\infty \leq \zeta_n$ and $\sup_{h \in [p_n]}\mathbb{E}(|H_{p_n,h}^d(X_i)||S_i=s) < \infty$ for $d>2$.
		\item Suppose
		$$\sup_{a = 0,1, q \in\mathcal{Q}^\eps_a, s \in \mathcal{S}}\frac{1}{n_a(s)}\sum_{i \in I_a(s)}r^2_a(q,s,X_i) = O_p(h_n \log(p_n)/n),$$
		$$\sup_{a = 0,1, q \in\mathcal{Q}^\eps_a, s \in \mathcal{S}}\mathbb{E}(r^2_a(q,s,X_i)|S_i=s) = O(h_n \log(p_n)/n),$$
		and
		$$\sup_{a = 0,1, q \in\mathcal{Q}^\eps_a, s \in \mathcal{S},x \in \mathcal{X}}|r_a(q,s,X)| = O(\sqrt{\xi_n^2 h_n^2 \log(p_n)/n}).$$
		\item $\frac{\log(p_n)\xi_n^2 h_n^2}{n} \rightarrow 0$, $\frac{\log^2(p_n)\log^2(n) h_n^2}{n} \rightarrow 0$, $\sup_{a=0,1,q \in \mathcal{Q}_a^\eps, s\in \mathcal{S}} |\hat{\mathbb{S}}^+_{a,s}(q)| = O_p(h_n)$, where $|\hat{\mathbb{S}}^+_{a,s}(q)|$ denotes the number of elements in $\hat{\mathbb{S}}^+_{a,s}(q)$.
		\item There exists a constant $c \in (0,0.5)$ such that
		\begin{align*}
		c\leq & \inf_{a = 0,1, s \in \mathcal{S},\tau \in \Upsilon, x \in \Supp(X)}\mathbb{P}(Y_i(a)\leq q_a(\tau)|S_i=s,X_i=x) \\
		\leq & \sup_{a = 0,1, s \in \mathcal{S},\tau \in \Upsilon, x \in \Supp(X)}\mathbb{P}(Y_i(a)\leq q_a(\tau)|S_i=s,X_i=x)  \leq 1-c.
		\end{align*}
		\item Let $\ell_n$ be a sequence that diverges to infinity. Then, there exist two constants $\kappa_1$ and $\kappa_2$ such that with probability approaching one,
		\begin{align*}
		0< \kappa_1 \leq & \inf_{a = 0,1, s\in \mathcal{S}, ||v||_0 \leq h_n \ell_n} \frac{v^T \left(\frac{1}{n_a(s)}\sum_{i \in I_a(s)}H_{p_n}(X_i)H_{p_n}(X_i)^\top  \right)v }{||v||_2^2} \\
		\leq & \sup_{a = 0,1, s\in \mathcal{S}, ||v||_0 \leq h_n \ell_n} \frac{v^T \left(\frac{1}{n_a(s)}\sum_{i \in I_a(s)}H_{p_n}(X_i)H_{p_n}(X_i)^\top  \right)v }{||v||_2^2} \leq \kappa_2 < \infty,
		\end{align*}
		and
		\begin{align*}
		0< \kappa_1 \leq & \inf_{a = 0,1, s \in \mathcal{S}, ||v||_0 \leq h_n \ell_n} \frac{v^T \mathbb{E}(H_{p_n}(X_i)H_{p_n}(X_i)^\top |S_i=s)v }{||v||_2^2} \\
		\leq & \sup_{a = 0,1, s\in \mathcal{S}, ||v||_0 \leq h_n \ell_n} \frac{v^T \mathbb{E}(H_{p_n}(X_i)H_{p_n}(X_i)^\top |S_i=s)v }{||v||_2^2} \leq \kappa_2 < \infty,
		\end{align*}
		where $||v||_0$ denotes the number of nonzero components in $v$.
		\item For $a=0,1$, let $\varrho_{n,a}(s) = c \sqrt{n_a(s)}\Phi^{-1}\left( 1-\frac{0.1}{4\log(n_a(s))p_n}\right)$ where $\Phi(\cdot)$ is the standard normal CDF and $c>0$ is a constant.
	\end{enumerate}
	\label{ass:hd}
\end{ass}
Assumption \ref{ass:hd} is standard in the literature and we refer interested readers to \cite{BCFH13} for more discussion. Assumption \ref{ass:hd}(i) implies the logistic model is approximately correctly specified. As the approximation is assumed to be sparse, the condition is not innocuous in the high-dimensional setting. As our method is valid even when the auxiliary model is misspecified, we conjecture that Assumption \ref{ass:hd}(i) can be relaxed, which links to the recent literature on the study of regularized estimation in the high-dimensional setting under misspecification: see, for example, \cite{BWZ19} and \cite{T20} and the references therein. An interesting topic for future work is to study misspecification-robust high-dimensional estimators of the conditional probability model and their use to adjust the QTE estimator under CAR based on \eqref{eq:q1adj} and \eqref{eq:q0adj}. The following theorem shows that all the estimation and inference results in Theorems \ref{thm:est} and \ref{thm:boot} hold for the HD estimator.

\begin{thm}
	Denote $\hat{q}^\textit{HD}(\tau)$ and $\hat{q}^\textit{HD,w}(\tau)$ as the $\tau$th QTE estimator and its multiplier bootstrap counterpart defined in Sections \ref{sec:est} and \ref{sec:boot}, respectively, with $\overline{m}_a(\tau,S_i,X_i) = m_a(\tau,S_i,X_i)$ and $\widehat{m}_a(\tau,S_i,X_i)$ defined in \eqref{eq:hd}. Further suppose Assumptions \ref{ass:assignment1}, \ref{ass:tau}, \ref{ass:weight}, \ref{ass:qhat}, and \ref{ass:hd} hold. Then, Assumptions \ref{ass:mhat} and \ref{ass:mhatw} hold, which further imply Theorems \ref{thm:est} and \ref{thm:boot} hold for $\hat{q}^\textit{HD}(\tau)$ and $\hat{q}^\textit{HD,w}(\tau)$, respectively. In addition, for any finite-dimensional quantile indices $(\tau_1,\cdots,\tau_K)$, the covariance matrix of $(\hat{q}^\textit{HD}(\tau_1),\cdots,\hat{q}^\textit{HD}(\tau_K))$ achieves the minimum (in the matrix sense) as characterized in Theorem \ref{thm:est}. 
	\label{thm:hd}
\end{thm}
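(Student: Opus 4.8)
The plan is to follow the template already used for Theorems \ref{thm:par} and \ref{thm:np}: it suffices to verify Assumptions \ref{ass:mhat} and \ref{ass:mhatw} for the choice $\overline{m}_a(\tau,s,x)=m_a(\tau,s,x)$ and $\widehat{m}_a(\tau,s,X_i)$ as in \eqref{eq:hd}, since then Theorems \ref{thm:est} and \ref{thm:boot} apply directly to $\hat q^\textit{HD}(\tau)$ and $\hat q^\textit{HD,w}(\tau)$, and the final claim — that the covariance matrix of $(\hat q^\textit{HD}(\tau_1),\dots,\hat q^\textit{HD}(\tau_K))$ is minimal in the matrix sense — is then immediate from Theorem \ref{thm:est} because the working model coincides with the true specification $m_a$. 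Parts (ii) and (iii) of Assumption \ref{ass:mhat} are routine: the class $\mathcal{F}_a=\{m_a(\tau,\cdot,\cdot):\tau\in\Upsilon\}$ is indexed by the scalar $\tau$ with bounded envelope (every element lies in $[-1,1]$), and by Assumption \ref{ass:tau} the map $\tau\mapsto m_a(\tau,s,x)$ is Lipschitz uniformly in $(s,x)$ — $q_a(\cdot)$ is Lipschitz since $f_a(q_a(\tau))$ is bounded away from zero, and $\mathbb{P}(Y_i(a)\le q\mid S_i=s,X_i=x)$ is Lipschitz in $q$ since $f_a(\cdot\mid x,s)$ is bounded — which yields both the polynomial covering-number bound and the $L_2$-continuity in $\tau$.

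The substance is Assumption \ref{ass:mhat}(i). For $q$ in a neighborhood $\mathcal{Q}_a^\eps$ of $\{q_a(\tau):\tau\in\Upsilon\}$, Assumption \ref{ass:hd}(i) gives the decomposition
\[
\overline{\Delta}_a(\tau,s,X_i)=\bigl[\lambda(H_{p_n}^\top(X_i)\theta^\textit{HD}_{a,s}(q_a(\tau)))-\lambda(H_{p_n}^\top(X_i)\widehat{\theta}^\textit{HD}_{a,s}(\widehat{q}_a(\tau)))\bigr]+r_a(q_a(\tau),s,X_i),
\]
and I would treat the two contributions separately. The key CAR-specific device is that, conditional on $\{S_i\}_{i\in[n]}$, we have $A\indep X\mid S$, so for any fixed function $g$ the stratum-wise treated/control difference $n_1(s)^{-1}\sum_{i\in I_1(s)}g(X_i)-n_0(s)^{-1}\sum_{i\in I_0(s)}g(X_i)$ equals the same difference with $g$ replaced by the centered $\tilde g=g-\mathbb{E}(g(X_i)\mid S_i=s)$. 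Applying this with $g=r_a(q_a(\tau),s,\cdot)$ and $\mathbb{E}(r_a^2\mid S_i=s)=O(h_n\log(p_n)/n)$ from Assumption \ref{ass:hd}(iii), the remainder contribution is $O_p(\sqrt{h_n\log(p_n)}/n)=o_p(n^{-1/2})$ pointwise in $\tau$ under Assumption \ref{ass:hd}(iv); uniformity over $\tau$ follows from a maximal inequality over the one-dimensional index together with the Lipschitz control of $r_a$ in $q$ and Assumption \ref{ass:qhat}, which lets us pass from $\widehat{q}_a(\tau)$ to $q_a(\tau)$ (with $\widehat{q}_a(\tau)\in\mathcal{Q}_a^\eps$ w.p.a.\ 1).

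For the estimation-error bracket I would first invoke the standard analysis of $\ell_1$-penalized logistic regression as in \cite{BCFH13}: under the restricted-eigenvalue condition \ref{ass:hd}(vi), the sparsity bound \ref{ass:hd}(i), the penalty choice \ref{ass:hd}(vii), and the overlap condition \ref{ass:hd}(v), one obtains, uniformly over $a$, $s$, and $q\in\mathcal{Q}_a^\eps$,
\[
\|\widehat{\theta}^\textit{HD}_{a,s}(q)-\theta^\textit{HD}_{a,s}(q)\|_2=O_p\Bigl(\sqrt{h_n\log(p_n)/n}\Bigr),\qquad \|\widehat{\theta}^\textit{HD}_{a,s}(q)-\theta^\textit{HD}_{a,s}(q)\|_0=O_p(h_n\ell_n),
\]
with analogous bounds for the post-Lasso estimator on the enlarged support $\tilde{\mathbb{S}}_{a,s}(q)$ using $|\widehat{\mathbb{S}}^+_{a,s}(q)|=O_p(h_n)$. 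A mean-value expansion reduces $\lambda(H_{p_n}^\top\widehat{\theta})-\lambda(H_{p_n}^\top\theta^\textit{HD})$ to a linear term $\lambda'(\cdot)H_{p_n}^\top(\widehat{\theta}-\theta^\textit{HD})$ plus a quadratic remainder; using the centering device again, the stratum-wise difference of the linear term is a centered sparse empirical process over $\{H_{p_n}^\top v:\|v\|_0\le h_n\ell_n\}$ evaluated at $v=\widehat{\theta}-\theta^\textit{HD}$, whose supremum — by the eigenvalue bounds and a union bound over the $\binom{p_n}{h_n\ell_n}$ supports — is $O_p\bigl(h_n\sqrt{\ell_n}\log(p_n)/n\bigr)=o_p(n^{-1/2})$ for a slowly growing $\ell_n$ under Assumption \ref{ass:hd}(iv); the quadratic remainder and the $\widehat{q}_a(\tau)$-to-$q_a(\tau)$ replacement are absorbed as before. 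Assumption \ref{ass:mhatw} is then verified by the identical route, using that $\widehat{\theta}^\textit{HD}_{a,s}$ is \emph{not} re-estimated in the bootstrap (so $\overline{\Delta}_a$ is unchanged), that $n_a^w(s)/n_a(s)\convP 1$ by the law of large numbers for the i.i.d.\ weights (Assumption \ref{ass:weight}), and that the sub-exponential multipliers inflate the maximal inequalities only by logarithmic factors already covered by Assumption \ref{ass:hd}(iv). I expect the main obstacle to be precisely this uniform-in-$\tau$ control in Assumption \ref{ass:mhat}(i): assembling the logistic-Lasso $\ell_1/\ell_2$ and sparsity rates uniformly over the quantile index, reducing the stratum-wise average differences to centered sparse empirical processes via $A\indep X\mid S$, and bounding those processes so that the combined approximation-plus-estimation error is $o_p(n^{-1/2})$ under the stated rate conditions — with the post-Lasso variant adding the extra step of controlling the unpenalized fit on a random enlarged support.
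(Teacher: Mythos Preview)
Your overall architecture matches the paper's: verify Assumptions \ref{ass:mhat} and \ref{ass:mhatw} by decomposing $\overline{\Delta}_a$ into an approximation piece and a Lasso estimation piece, center each by its conditional mean given $S_i=s$, and control the resulting empirical processes via maximal inequalities over sparse VC classes with covering number $(c\,p_n/\eps)^{c\,h_n}$; parts (ii)--(iii) of Assumption \ref{ass:mhat} and the minimality claim are handled exactly as you say. Two technical choices differ from the paper and are worth noting.

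First, the paper anchors the reference parameter at the \emph{estimated} quantile: it writes
\[
\overline{\Delta}_a(\tau,s,X_i)=\bigl[\mathcal{M}_a(q_a(\tau),s,X_i)-\mathcal{M}_a(\hat q_a(\tau),s,X_i)+r_a(\hat q_a(\tau),s,X_i)\bigr]
+\bigl[\lambda(H_{p_n}^\top\theta^\textit{HD}_{a,s}(\hat q_a(\tau)))-\lambda(H_{p_n}^\top\hat\theta^\textit{HD}_{a,s}(\hat q_a(\tau)))\bigr],
\]
so the second bracket is a pure Lasso error $\hat\theta^\textit{HD}(q)-\theta^\textit{HD}(q)$ evaluated at the same $q=\hat q_a(\tau)$, and the uniform BCFH rates over $q\in\mathcal{Q}_a^\eps$ apply directly. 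In your decomposition the bracket is $\lambda(H^\top\theta^\textit{HD}(q_a(\tau)))-\lambda(H^\top\hat\theta^\textit{HD}(\hat q_a(\tau)))$, which mixes the Lasso error with the drift $\theta^\textit{HD}(\hat q_a(\tau))-\theta^\textit{HD}(q_a(\tau))$; the Lasso rates you quote do not control this drift, and Assumption \ref{ass:hd} gives no Lipschitz property of $q\mapsto\theta^\textit{HD}_{a,s}(q)$. This is not fatal --- inserting $\theta^\textit{HD}(\hat q_a(\tau))$ as an intermediate point recovers exactly the paper's split --- but as written your mean-value step applied to $v=\hat\theta^\textit{HD}(\hat q)-\theta^\textit{HD}(q)$ has a gap.

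Second, the paper does not Taylor-expand $\lambda$ at all. It places the centered bracket directly in the class
\[
\mathcal{F}=\bigl\{\xi_i\bigl[\lambda(H_{p_n}^\top\theta_1)-\lambda(H_{p_n}^\top\theta_2)-H_\lambda(\theta_1,\theta_2,s)\bigr]:\ \|\theta_1-\theta_2\|_2\le M\sqrt{h_n\log p_n/n},\ \|\theta_1\|_0+\|\theta_2\|_0\le Mh_n\bigr\},
\]
which has bounded envelope $2\xi_i$, variance $\sup_{\mathcal{F}}\mathbb{E}f^2\le\kappa_2 M^2 h_n\log p_n/n$, and VC-type covering number $(c_1p_n/\eps)^{c_2h_n}$; Lemma \ref{lem:max_eq} then gives $D_1=O_p\bigl(h_n\log(p_n)\log(n)/n\bigr)=o_p(n^{-1/2})$ in one stroke. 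This sidesteps the issue in your route that the derivative $\lambda'(\cdot)$ is evaluated at a data-dependent intermediate point, so the ``linear'' class you describe is not a fixed function class and the maximal inequality does not apply to it cleanly. The paper's direct bound is both shorter and avoids any quadratic-remainder bookkeeping.
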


\section{Practical Guidance and Computation} 
\label{sec:compute}

\subsection{Procedures for estimation and bootstrap inference}
\label{sec:procedure}
We can compute $(\hat{q}_1^{adj}(\tau),\hat{q}_0^{adj}(\tau))$ by solving the subgradient conditions of \eqref{eq:q1adj} and \eqref{eq:q0adj}, respectively. Specifically, we have $(\hat{q}_1^{adj}(\tau),\hat{q}_0^{adj}(\tau)) = (Y_{i_1},Y_{i_0})$ such that $A_{i_1} = 1$, $A_{i_0} = 0$,
\begin{align}
& \tau\left(\sum_{i =1}^n \frac{A_i}{\hat{\pi}(S_i)}\right) - \sum_{i =1}^n \left( \frac{(A_i - \hat{\pi}(S_i))}{\hat{\pi}(S_i)}\widehat{m}_1(\tau,S_i,X_i)\right) \notag \\
& \geq \sum_{i =1}^n \frac{A_i}{\hat{\pi}(S_i)}1\{ Y_i < Y_{i_1}\} \notag \\
& \geq  \tau  \left(\sum_{i =1}^n \frac{A_i}{\hat{\pi}(S_i)}\right) - \frac{1}{\hat{\pi}(S_{i_1})} - \sum_{i =1}^n \left( \frac{(A_i - \hat{\pi}(S_i))}{\hat{\pi}(S_i)}\widehat{m}_1(\tau,S_i,X_i)\right),
\label{eq:q1adj_gradient}
\end{align}
and
\begin{align}
& \tau\left(\sum_{i =1}^n \frac{1-A_i}{1-\hat{\pi}(S_i)}\right) + \sum_{i =1}^n \left( \frac{(A_i - \hat{\pi}(S_i))}{1-\hat{\pi}(S_i)}\widehat{m}_0(\tau,S_i,X_i)\right) \notag \\
& \geq \sum_{i =1}^n \frac{1-A_i}{1-\hat{\pi}(S_i)}1\{ Y_i < Y_{i_0}\} \notag \\
& \geq  \tau  \left(\sum_{i =1}^n \frac{1-A_i}{1-\hat{\pi}(S_i)}\right) - \frac{1}{1-\hat{\pi}(S_{i_0})} + \sum_{i =1}^n \left( \frac{(A_i - \hat{\pi}(S_i))}{1-\hat{\pi}(S_i)}\widehat{m}_0(\tau,S_i,X_i)\right).
\label{eq:q0adj_gradient}
\end{align}
We note $(i_1,i_0)$ are uniquely defined as long as all the inequalities in \eqref{eq:q1adj_gradient} and \eqref{eq:q0adj_gradient} are strict, which is usually the case. If we have
\begin{align*}
\tau\left(\sum_{i =1}^n \frac{A_i}{\hat{\pi}(S_i)}\right) - \sum_{i =1}^n \left( \frac{(A_i - \hat{\pi}(S_i))}{\hat{\pi}(S_i)}\widehat{m}_1(\tau,S_i,X_i)\right) = \sum_{i =1}^n \frac{A_i}{\hat{\pi}(S_i)}1\{ Y_i \leq Y_{i_1}\},
\end{align*}
then both $i_1$ and $i_1'$ satisfy \eqref{eq:q1adj_gradient}, where $i_1'$ is the index such that $A_{i_1'} = 1$ and $Y_{i_1'}$ is the smallest observation in the treatment group that is larger than $Y_{i_1}$. In this case, we let $\hat{q}_1^{adj}(\tau) = Y_{i_1}$.\footnote{In this case, any value that belongs to $[Y_{i_1},Y_{i_1'}]$ can be viewed as a solution. Because $|Y_{i_1}-Y_{i_1'}| = O_p(1/n)$, all choices are asymptotically equivalent.} Similarly, by solving the subgradient conditions of \eqref{eq:q1w_gradient} and \eqref{eq:q0w_gradient}, we have  $(\hat{q}_1^{w}(\tau),\hat{q}_0^{w}(\tau)) = (Y_{i_1^w},Y_{i_0^w})$ such that $A_{i_1^w} = 1$, $A_{i_0^w} = 0$,
\begin{align}
& \tau\left(\sum_{i =1}^n \frac{\xi_i A_i}{\hat{\pi}^w(S_i)}\right) - \sum_{i =1}^n \left( \frac{\xi_i (A_i - \hat{\pi}^w(S_i))}{\hat{\pi}^w(S_i)}\widehat{m}_1(\tau,S_i,X_i)\right) \notag \\
& \geq  \sum_{i =1}^n \frac{\xi_i A_i}{\hat{\pi}^w(S_i)}1\{ Y_i < Y_{i_1^w}\} \notag \\
& \geq  \tau  \left(\sum_{i =1}^n \frac{\xi_i A_i}{\hat{\pi}^w(S_i)}\right) - \frac{\xi_{i_1^w}}{\hat{\pi}^w(S_{i_1^w})} - \sum_{i =1}^n \left( \frac{\xi_i (A_i - \hat{\pi}^w(S_i))}{\hat{\pi}^w(S_i)}\widehat{m}_1(\tau,S_i,X_i)\right)
\label{eq:q1w_gradient}
\end{align}
and
\begin{align}
& \tau\left(\sum_{i =1}^n \frac{\xi_i (1-A_i)}{1-\hat{\pi}^w(S_i)}\right) + \sum_{i =1}^n \left( \frac{\xi_i (A_i - \hat{\pi}(S_i))}{1-\hat{\pi}^w(S_i)}\widehat{m}_0(\tau,S_i,X_i)\right) \notag \\
& \geq \sum_{i =1}^n \frac{\xi_i (1-A_i)}{1-\hat{\pi}^w(S_i)}1\{ Y_i < Y_{i_0^w}\} \notag \\
& \geq \tau  \left(\sum_{i =1}^n \frac{\xi_i (1-A_i)}{1-\hat{\pi}^w(S_i)}\right) - \frac{\xi_{i_0^w}}{1-\hat{\pi}^w(S_{i_0})} + \sum_{i =1}^n \left( \frac{\xi_i (A_i - \hat{\pi}^w(S_i))}{1-\hat{\pi}^w(S_i)}\widehat{m}_0(\tau,S_i,X_i)\right).
\label{eq:q0w_gradient}
\end{align}
The inequalities in \eqref{eq:q1w_gradient} and \eqref{eq:q0w_gradient} are strict with probability one if $\xi_i$ is continuously distributed. In this case, $(\hat{q}_1^{w}(\tau),\hat{q}_0^{w}(\tau))$ are uniquely defined with probability one.

We summarize the steps in the bootstrap procedure as follows.
\begin{enumerate}
	\item  Let $\mathcal{G}$ be a set of quantile indices. For $\tau \in \mathcal{G}$, compute $\hat{q}_1(\tau)$ and $\hat{q}_0(\tau)$ following \eqref{eq:q1adj_gradient} and \eqref{eq:q0adj_gradient} with $\widehat{m}_1(\tau,S_i,X_i)$ and $\widehat{m}_0(\tau,S_i,X_i)$ replaced by zero.
	\item Compute $\widehat{m}_a(\tau,S_i,X_i)$ for $a=0,1$ and $\tau \in \mathcal{G}$ using  $\hat{q}_1(\tau)$ and $\hat{q}_0(\tau)$.
	\item Compute the original estimator $\hat{q}^{adj}(\tau) = \hat{q}^{adj}_1(\tau) - \hat{q}^{adj}_0(\tau)$, following \eqref{eq:q1adj_gradient} and \eqref{eq:q0adj_gradient} for $\tau \in \mathcal{G}$.
	\item Let $B$ be the number of bootstrap replications.  For $b \in [B]$, generate $\{\xi_i\}_{i \in [n]}$. Compute $\hat{q}^{w,b}(\tau) = \hat{q}_1^{w,b}(\tau) - \hat{q}_0^{w,b}(\tau)$ for $\tau \in \mathcal{G}$ following \eqref{eq:q1w_gradient} and \eqref{eq:q0w_gradient}. Obtain $\{\hat{q}^{w,b}(\tau)\}_{\tau \in \mathcal{G}}$.
	\item Repeat the above step for $b \in [B]$ and obtain $B$ bootstrap estimates of the QTE, denoted as
	$\{\hat{q}^{w,b}(\tau)\}_{b \in [B],\tau \in \mathcal{G}}$.
\end{enumerate}

\subsection{Bootstrap confidence intervals} \label{subsec:boostrap+confid_int}

Given the bootstrap estimates, we next discuss how to conduct bootstrap inference for the null hypotheses with single, multiple, and a continuum of quantile indices.

\vspace{1mm}
\textbf{Case (1).} We test the single null hypothesis that $\mathcal{H}_0: q(\tau) = \underline{q}$ vs. $q(\tau) \neq \underline{q}$. Set $\mathcal{G} = \{\tau\}$ in the procedures described above and let $\widehat{\mathcal{C}}(\nu)$ and $\mathcal{C}(\nu)$ be the $\nu$th empirical quantile of the sequence  $\{\hat{q}^{w,b}(\tau)\}_{b \in [B]}$ and the $\nu$th standard normal critical value, respectively. Let $\alpha \in (0,1)$ be the significance level. We suggest using the bootstrap estimator to construct the standard error of $\hat{q}^{adj}(\tau)$ as $\hat{\sigma} = \frac{\widehat{\mathcal{C}}(0.975)- \widehat{\mathcal{C}}(0.025)}{\mathcal{C}(0.975) - \mathcal{C}(0.025)}$.  Then the valid confidence interval and Wald test using this standard error are
\begin{align*}
CI(\alpha) = (\hat{q}^{adj}(\tau)+\mathcal{C}(\alpha/2)\hat{\sigma}, \hat{q}^{adj}(\tau)+\mathcal{C}(1-\alpha/2)\hat{\sigma}),
\end{align*}
and $1\big\lbrace\big|\frac{\hat{q}^{adj}(\tau) - \underline{q}}{\hat{\sigma}}\big| \geq \mathcal{C}(1-\alpha/2)\big\rbrace$, respectively.\footnote{It is asymptotically valid to use standard and percentile bootstrap confidence intervals. But in simulations we found that the confidence interval proposed in the paper has better finite sample performance in terms of coverage rates under the null.}

%Further denote the standard and percentile bootstrap confidence intervals as $CI_2$ and $CI_3$, respectively, where
%\begin{align*}
%CI_2(\alpha) = (2\hat{q}(\tau) - \mathcal{Q}(1-\alpha/2), 2\hat{q}(\tau) - \mathcal{Q}(\alpha/2))
%\end{align*}
%and
%\begin{align*}
%CI_3(\alpha) = (\mathcal{Q}(\alpha/2), \mathcal{Q}(1-\alpha/2)).
%\end{align*}
%Theoretically, $CI_1$, $CI_2$, and $CI_3$ are all valid. When $\alpha = 0.05$, $CI_1$, $CI_2$, and $CI_3$ are centered at $\hat{q}(\tau)$, $2\hat{q}(\tau) - \frac1{2}\{\mathcal{Q}(0.975)+ \mathcal{Q}(0.025)\}$, and $\frac1{2}\{\mathcal{Q}(0.975)+ \mathcal{Q}(0.025)\}$, respectively, but share the same length $\mathcal{Q}(0.975)- \mathcal{Q}(0.025)$. In (unreported) simulations, we found that in small samples, $CI_1$ usually has the best size control while $CI_2$ over-rejects and $CI_3$ under-rejects.

\vspace{2mm}
\textbf{Case (2).} We test the null hypothesis that $\mathcal{H}_0: q(\tau_1) - q(\tau_2) = \underline{q}$ vs. $q(\tau_1) - q(\tau_2) \neq \underline{q}$. In this case, we have $\mathcal{G} = \{\tau_1,\tau_2\}$ in the procedure described in Section \ref{sec:procedure}. Further, let $\widehat{\mathcal{C}}(\nu)$ be the $\nu$th empirical quantile of the sequence  $\{\hat{q}^{w,b}(\tau_1) - \hat{q}^{w,b}(\tau_2)\}_{b \in [B]}$, and let $\alpha \in (0,1)$ be the significance level. We suggest using the bootstrap standard error to construct the valid confidence interval and Wald test as
\begin{align*}
CI(\alpha) = (\hat{q}^{adj}(\tau_1)-\hat{q}^{adj}(\tau_2)+\mathcal{C}(\alpha/2)\hat{\sigma}, \hat{q}^{adj}(\tau_1)-\hat{q}^{adj}(\tau_2)+\mathcal{C}(1-\alpha/2)\hat{\sigma}),
\end{align*}
and $1\big\{\big|\frac{\hat{q}^{adj}(\tau_1)-\hat{q}^{adj}(\tau_2) - \underline{q}}{\hat{\sigma}}\big| \geq \mathcal{C}(1-\alpha/2)\big\}$, respectively, where $\hat{\sigma} = \frac{\widehat{\mathcal{C}}(0.975)- \widehat{\mathcal{C}}(0.025)}{\mathcal{C}(0.975) - \mathcal{C}(0.025)}$.

% In the following, $CI_1(\alpha)$, $CI_2(\alpha)$, $CI_3(\alpha)$ denote the standard bootstrap, percentile bootstrap, and bootstrap standard error confidence intervals, respectively.
%\begin{align*}
%CI_3(\alpha) = (2(\hat{q}(\tau_1)-\hat{q}(\tau_2))-C_{1-\alpha/2}\hat{\sigma}, 2(\hat{q}(\tau_1)-\hat{q}(\tau_2))+C_{\alpha/2}\hat{\sigma}),
%\end{align*}

%
%\begin{align*}
%CI_1(\alpha) = (2(\hat{q}(\tau_1)-\hat{q}(\tau_2)) - \mathcal{Q}(1-\alpha/2), 2(\hat{q}(\tau_1)-\hat{q}(\tau_2)) - \mathcal{Q}(\alpha/2)),
%\end{align*}
%\begin{align*}
%CI_2(\alpha) = (\mathcal{Q}(\alpha/2), \mathcal{Q}(1-\alpha/2)),
%\end{align*}
%and

\vspace{2mm}
\textbf{Case (3).} We test the null hypothesis that
$$\mathcal{H}_0: q(\tau) = \underline{q}(\tau)~\forall \tau \in \Upsilon~\text{vs.}~q(\tau) \neq \underline{q}(\tau)~ \exists \tau \in \Upsilon.$$ In theory, we should let $\mathcal{G} = \Upsilon$. In practice, we let $\mathcal{G} = \{\tau_1,\cdots,\tau_G\}$ be a fine grid of $\Upsilon$ where $G$ should be as large as computationally possible. Further, let $\widehat{\mathcal{C}}_\tau(\nu)$ denote the $\nu$th empirical quantile of the sequence  $\{\hat{q}^{w,b}(\tau)\}_{b \in [B]}$ for $\tau \in \mathcal{G}$. Compute the standard error of $\hat{q}^{adj}(\tau)$ as
\begin{align*}
\hat{\sigma}_\tau = \frac{\widehat{\mathcal{C}}_\tau(0.975)- \widehat{\mathcal{C}}_\tau(0.025)}{\mathcal{C}(0.975) - \mathcal{C}(0.025)}.
\end{align*}
The uniform confidence band with an $\alpha$ significance level is constructed as
\begin{align*}
CB(\alpha) = \{ \hat{q}^{adj}(\tau)-\widetilde{\mathcal{C}}_\alpha  \hat{\sigma}_\tau,  \hat{q}^{adj}(\tau)+\widetilde{\mathcal{C}}_\alpha  \hat{\sigma}_\tau: \tau \in \mathcal{G} \},
\end{align*}
where the critical value $\widetilde{\mathcal{C}}_\alpha$ is computed as
\begin{align*}
\widetilde{\mathcal{C}}_\alpha = \inf\left\{z:\frac{1}{B}\sum_{b=1}^B 1\left\{ \sup_{\tau \in \mathcal{G}}\left| \frac{\hat{q}^{w,b}(\tau) - \tilde{q}(\tau)}{\hat{\sigma}_\tau}\right| \leq z \right\}\geq 1-\alpha \right\},
\end{align*}
and $\tilde{q}(\tau)$ is first-order equivalent to $\hat{q}^{adj}(\tau)$ in the sense that $\sup_{\tau \in \Upsilon}|\tilde{q}(\tau) - \hat{q}^{adj}(\tau)|=  o_p(1/\sqrt{n})$. We suggest choosing $\tilde{q}(\tau) = \widehat{\mathcal{C}}_\tau(0.5)$ over other choices such as $\tilde{q}(\tau) = \hat{q}^{adj}(\tau)$ due to its better finite sample performance. We reject $\mathcal{H}_0$ at significance level $\alpha$ if $\underline{q}(\cdot) \notin CB(\alpha).$

\subsection{Computation of Auxiliary Regressions}
\label{sec:aux_imp}

\paragraph{Parametric regressions.}
For the linear probability model, we compute the LP estimator via \eqref{eq:thetahat_lp_est}. For the logistic model, we consider the ML and the LPML estimators. First, we compute %the LG estimator  $\hat{\theta}_{a,s}^{LG}(\tau)$ as in \eqref{eq:thetahat_lg} in the Online Supplement, which minimizes the sample variance of our adjusted QTE estimator $\hat{q}^{adj}(\tau)$.\footnote{When the null hypothesis in Case (2) is tested, the optimal pseudo true value is defined to minimize the asymptotic variance of $\hat{q}^{par}(\tau_1) - \hat{q}^{par}(\tau_2)$, which is not necessarily equal to the pseudo true values that minimize the asymptotic variances of $\hat{q}^{par}(\tau_1)$ and $\hat{q}^{par}(\tau_2)$, i.e., $(\theta_{a,s}^{LG}(\tau_1),\theta_{a,s}^{LG}(\tau_2))$. We propose a  way to compute the new pseudo true value in Section \ref{sec:pseudo_diff} of the Online Supplement.} Second, we propose to compute 
the ML estimator $ \hat{\theta}_{a,s}^\textit{ML}(\tau)$ as in \eqref{eq:ml}, which is the quasi maximum likelihood estimator of a flexible distribution regression. Second, we propose to compute the logistic function values with $\hat{\theta}_{a,s}^\textit{ML}(\tau)$ and treat them as regressors in a linear adjustment to further improve the ML estimate.

\paragraph{Sieve logistic regressions.}
We provide more detail on the sieve basis. Recall $H_{h_n}(x) \equiv (b_{1n}(x), \cdots, b_{h_nn}(x))^\top $, where $\{b_{hn }(\cdot)\}_{h \in [h_n]}$ are $h_n$ basis functions of a linear sieve space, denoted as $\mathbb{B}$.  Given that all $d_x$ elements of $X$ are continuously distributed, the sieve space $\mathbb{B}$ can be constructed as follows.
\begin{enumerate}
	\item For each element $X^{(l)}$ of $X$, $l=1,\cdots,d_x$, let $\mathcal{B}_l$ be the univariate sieve space of dimension $J_n$. One example of $\mathcal{B}_l$ is the linear span of the $J_n$ dimensional polynomials given by
	$$\mathbb{B}_l = \biggl\{\sum_{k=0}^{J_n}\alpha_k x^k, x \in \Supp(X^{(l)}), \alpha_k \in \Re \biggr\};$$
	Another is the linear span of $r$-order splines with $J_n$ nodes given by
	$$\mathbb{B}_l = \biggl\{\sum_{k=0}^{r-1}\alpha_k x^k + \sum_{j=1}^{J_n}b_j[\max(x-t_j,0)]^{r-1}, x \in \Supp(X^{(l)}), \alpha_k, b_j \in \Re \biggr\},$$
	where the grid $-\infty=t_0 \leq t_1 \leq \cdots \leq t_{J_n} \leq t_{J_n+1} = \infty$ partitions $\Supp(X^{(l)})$ into $J_n+1$ subsets $I_j = [t_j,t_{j+1}) \cap \Supp(X^{(l)})$, $j=1,\cdots,J_n-1$, $I_{0} = (t_{0},t_{1}) \cap \Supp(X^{(l)})$, and $I_{J_{n}} = (t_{J_n},t_{J_n+1}) \cap \Supp(X^{(l)})$.
	\item Let $\mathbb{B}$ be the tensor product of $\{\mathcal{B}_l\}_{l=1}^{d_x}$, which is defined as a linear space spanned by the functions $\prod_{l=1}^{d_x} g_l$, where $g_l \in \mathcal{B}_l$. The dimension of $\mathbb{B}$ is then $K \equiv d_x J_n$.
\end{enumerate}

We refer interested readers to \cite{HIR03} and \cite{c07} for more details about the implementation of sieve estimation. Given the sieve basis, we can compute the $\widehat{m}_a(\tau,s,X_i)$ following \eqref{eq:mhat_np1} and \eqref{eq:mhat_np2}.

\paragraph{Logistic regressions with an $\ell_1$ penalization.}
We follow the estimation procedure and the choice of tuning parameter proposed by \cite{BCFH13}. We provide details below for completeness. Recall $\varrho_{n,a}(s) = c \sqrt{n_a(s)}\Phi^{-1}( 1-1/(p_n\log(n_a(s))))$. We set $c=1.1$ following \cite{BCFH13}.  We then implement the following algorithm to estimate $\hat{\theta}_{a,s}^\textit{HD}(\hat{q}_a(\tau))$ for $\tau \in \Upsilon$:
\begin{enumerate}[label=(\roman*)]
	\item Let $\hat{\sigma}_h^{(0)} = \frac{1}{n_a(s)}\sum_{i \in I_{a}(s)}(1\{Y_i \leq \hat{q}_a(\tau)\} - \bar{Y}_{a,s}(\tau))^2H_{p_n,h}^2$ for $h \in [p_n]$, where $\bar{Y}_{a,s}(\tau) = \frac{1}{n_a(s)}\sum_{i \in I_{a}(s)}1\{Y_i \leq \hat{q}_a(\tau)\}$. Estimate
	\begin{align*}
	\hat{\theta}_{a,s}^\textit{HD,0}(\hat{q}_a(\tau)) = & \argmin_{\theta_a} \frac{-1}{n_{a}(s)}\sum_{i \in I_{a}(s)} \biggl[1\{Y_i \leq \hat{q}_a(\tau)\}\log(\lambda(H_{p_n}(X_i)^\top \theta_a)) \notag \\
	& \qquad \qquad \qquad + 1\{Y_i > \hat{q}_a(\tau)\}\log(1-\lambda(H_{p_n}(X_i)^\top \theta_a))\biggr] + \frac{\varrho_{n,a}(s)}{n_a(s)} \sum_{h \in [p_n]} \hat{\sigma}_h^{(0)}|\theta_{a,h}|.
	\end{align*}
	\item For $k = 1,\cdots,K$, obtain $\hat{\sigma}_h^{(k)} = \sqrt{\frac{1}{n}\sum_{i =1}^n (H_{p_n,h}\hat{\eps}_i^{(k)})^2}$, where $\hat{\eps}_i^{(k)} = 1\{Y_i \leq \hat{q}_a(\tau)\} - \lambda(H_{p_n}^T \hat{\theta}_{a,s}^\textit{HD,k-1}(\hat{q}_a(\tau)))$. Estimate
	\begin{align*}
	\hat{\theta}_{a,s}^\textit{HD,k}(\hat{q}_a(\tau)) = & \argmin_{\theta_a} \frac{-1}{n_{a}(s)}\sum_{i \in I_{a}(s)} \biggl[1\{Y_i \leq \hat{q}_a(\tau)\}\log(\lambda(H_{p_n}(X_i)^\top \theta_a)) \notag \\
	& \qquad \qquad \qquad + 1\{Y_i > \hat{q}_a(\tau)\}\log(1-\lambda(H_{p_n}(X_i)^\top \theta_a))\biggr] + \frac{\varrho_{n,a}(s)}{n_a(s)} \sum_{h \in [p_n]} \hat{\sigma}_h^{(k)}|\theta_{a,h}|.
	\end{align*}
	\item Let $\hat{\theta}_{a,s}^\textit{HD}(\hat{q}_a(\tau)) = \hat{\theta}_{a,s}^\textit{HD,K}(\hat{q}_a(\tau))$.
	\item Repeat the above procedure for $\tau \in \mathcal{G}$.
\end{enumerate}

%\item Given the choice of the penalty loading, choose $\lambda = 1.1\Phi^{-1}(1-\gamma/p)n^{-1/2}$ where $\gamma = o(1)$ and $\Phi(\cdot)$ is the standard normal CDF. One can set $\gamma= C/(n\log(n))$ for some constant $C>0$.

\section{Additional Simulation Results} \label{sec:add_sim}

\subsection{Pointwise tests}
Additional simulation results are provided for pointwise tests at 25\% and 75\% quantiles. The results are summarized in Tables \ref{tab:Sim_Point_0.25} and \ref{tab:Sim_Point_0.75}. The simulation settings are the same as the pointwise test simulations in Section \ref{sec:sim} of the main paper. 

\begin{table}[htp!]
	\footnotesize
	\centering
	\caption{Pointwise Test ($\tau = 0.25$)}
	\label{tab:Sim_Point_0.25}%
	\smallskip
	
	\setlength{\tabcolsep}{2pt}
	\renewcommand*{\arraystretch}{1.2}
	\begin{tabularx}{\linewidth}{@{\extracolsep{\fill}}lcccccccccccccccc}
		\hline
		& \multicolumn{8}{c}{Size} & \multicolumn{8}{c}{Power} \\ \cmidrule{2-9}\cmidrule{10-17}
		& \multicolumn{4}{c}{$N = 200$} & \multicolumn{4}{c}{$N = 400$} & \multicolumn{4}{c}{$N = 200$} & \multicolumn{4}{c}{$N = 400$} \\ \cmidrule{2-5}\cmidrule{6-9}\cmidrule{10-13}\cmidrule{14-17}
		Methods & SRS   & WEI   & BCD   & SBR   & SRS   & WEI   & BCD   & SBR   & SRS   & WEI   & BCD   & SBR   & SRS   & WEI   & BCD   & SBR \\
		\hline
		\multicolumn{17}{c}{Panel A: DGP (i)}  \\
		NA    & 0.053 & 0.057 & 0.052 & 0.052 & 0.056 & 0.054 & 0.053 & 0.052 & 0.334 & 0.341 & 0.346 & 0.344 & 0.579 & 0.591 & 0.595 & 0.607 \\
		LP    & 0.054 & 0.056 & 0.052 & 0.054 & 0.053 & 0.057 & 0.053 & 0.052 & 0.391 & 0.406 & 0.405 & 0.394 & 0.683 & 0.696 & 0.693 & 0.694 \\
		ML    & 0.057 & 0.055 & 0.055 & 0.054 & 0.053 & 0.055 & 0.049 & 0.051 & 0.380 & 0.387 & 0.389 & 0.378 & 0.673 & 0.678 & 0.683 & 0.674 \\
		LPML  & 0.056 & 0.056 & 0.057 & 0.057 & 0.052 & 0.056 & 0.055 & 0.056 & 0.410 & 0.418 & 0.417 & 0.409 & 0.714 & 0.722 & 0.717 & 0.715 \\
		MLX   & 0.058 & 0.060 & 0.057 & 0.057 & 0.051 & 0.055 & 0.053 & 0.057 & 0.387 & 0.394 & 0.398 & 0.388 & 0.668 & 0.674 & 0.677 & 0.677 \\
		LPMLX & 0.060 & 0.059 & 0.060 & 0.060 & 0.052 & 0.060 & 0.056 & 0.057 & 0.414 & 0.423 & 0.431 & 0.415 & 0.718 & 0.722 & 0.725 & 0.718 \\
		NP    & 0.061 & 0.065 & 0.063 & 0.063 & 0.056 & 0.060 & 0.058 & 0.057 & 0.432 & 0.444 & 0.442 & 0.427 & 0.724 & 0.728 & 0.730 & 0.724 \\
		\hline
		\multicolumn{17}{c}{Panel B: DGP (ii)}  \\
		NA    & 0.044 & 0.047 & 0.048 & 0.046 & 0.054 & 0.049 & 0.044 & 0.046 & 0.457 & 0.457 & 0.466 & 0.475 & 0.741 & 0.751 & 0.752 & 0.760 \\
		LP    & 0.050 & 0.050 & 0.050 & 0.049 & 0.054 & 0.052 & 0.045 & 0.044 & 0.541 & 0.542 & 0.545 & 0.538 & 0.824 & 0.830 & 0.831 & 0.825 \\
		ML    & 0.049 & 0.049 & 0.052 & 0.054 & 0.050 & 0.051 & 0.049 & 0.045 & 0.478 & 0.477 & 0.480 & 0.474 & 0.757 & 0.761 & 0.760 & 0.761 \\
		LPML  & 0.053 & 0.052 & 0.056 & 0.054 & 0.053 & 0.050 & 0.048 & 0.044 & 0.542 & 0.540 & 0.544 & 0.536 & 0.832 & 0.837 & 0.840 & 0.833 \\
		MLX   & 0.055 & 0.057 & 0.057 & 0.058 & 0.054 & 0.053 & 0.048 & 0.045 & 0.507 & 0.504 & 0.505 & 0.500 & 0.765 & 0.771 & 0.771 & 0.770 \\
		LPMLX & 0.055 & 0.061 & 0.061 & 0.057 & 0.057 & 0.054 & 0.050 & 0.046 & 0.572 & 0.567 & 0.572 & 0.563 & 0.848 & 0.850 & 0.852 & 0.845 \\
		NP    & 0.063 & 0.065 & 0.063 & 0.061 & 0.058 & 0.056 & 0.052 & 0.051 & 0.575 & 0.571 & 0.576 & 0.572 & 0.847 & 0.852 & 0.854 & 0.847 \\
		\hline
	\end{tabularx}%
\end{table}%

\begin{table}[htp!]
	\footnotesize
	\centering
	\caption{Pointwise Test ($\tau = 0.75$)}
	\label{tab:Sim_Point_0.75}%
	\smallskip
	
	\setlength{\tabcolsep}{2pt}
	\renewcommand*{\arraystretch}{1.2}
	\begin{tabularx}{\linewidth}{@{\extracolsep{\fill}}lcccccccccccccccc}
		\hline
		& \multicolumn{8}{c}{Size} & \multicolumn{8}{c}{Power} \\ \cmidrule{2-9}\cmidrule{10-17}
		& \multicolumn{4}{c}{$N = 200$} & \multicolumn{4}{c}{$N = 400$} & \multicolumn{4}{c}{$N = 200$} & \multicolumn{4}{c}{$N = 400$} \\ \cmidrule{2-5}\cmidrule{6-9}\cmidrule{10-13}\cmidrule{14-17}
		Methods & SRS   & WEI   & BCD   & SBR   & SRS   & WEI   & BCD   & SBR   & SRS   & WEI   & BCD   & SBR   & SRS   & WEI   & BCD   & SBR \\
		\hline
		\multicolumn{17}{c}{Panel A: DGP (i)}  \\
		NA    & 0.054 & 0.055 & 0.056 & 0.054 & 0.057 & 0.052 & 0.051 & 0.051 & 0.348 & 0.342 & 0.352 & 0.338 & 0.583 & 0.594 & 0.601 & 0.576 \\
		LP    & 0.055 & 0.052 & 0.056 & 0.052 & 0.050 & 0.053 & 0.053 & 0.049 & 0.428 & 0.418 & 0.424 & 0.432 & 0.686 & 0.698 & 0.698 & 0.697 \\
		ML    & 0.051 & 0.053 & 0.054 & 0.051 & 0.050 & 0.051 & 0.054 & 0.054 & 0.402 & 0.395 & 0.403 & 0.402 & 0.658 & 0.667 & 0.666 & 0.667 \\
		LPML  & 0.056 & 0.056 & 0.058 & 0.056 & 0.050 & 0.053 & 0.056 & 0.051 & 0.428 & 0.425 & 0.437 & 0.435 & 0.706 & 0.711 & 0.711 & 0.709 \\
		MLX   & 0.059 & 0.056 & 0.057 & 0.057 & 0.054 & 0.054 & 0.053 & 0.055 & 0.404 & 0.399 & 0.404 & 0.406 & 0.654 & 0.663 & 0.662 & 0.657 \\
		LPMLX & 0.058 & 0.058 & 0.061 & 0.057 & 0.052 & 0.056 & 0.056 & 0.053 & 0.425 & 0.421 & 0.432 & 0.432 & 0.702 & 0.711 & 0.710 & 0.710 \\
		NP    & 0.061 & 0.061 & 0.065 & 0.062 & 0.056 & 0.057 & 0.058 & 0.057 & 0.441 & 0.435 & 0.441 & 0.447 & 0.706 & 0.710 & 0.711 & 0.711 \\
		\hline
		\multicolumn{17}{c}{Panel B: DGP (ii)}  \\
		NA    & 0.052 & 0.054 & 0.053 & 0.050 & 0.047 & 0.047 & 0.050 & 0.051 & 0.325 & 0.331 & 0.325 & 0.311 & 0.557 & 0.550 & 0.546 & 0.538 \\
		LP    & 0.055 & 0.055 & 0.062 & 0.053 & 0.051 & 0.053 & 0.053 & 0.054 & 0.389 & 0.402 & 0.396 & 0.394 & 0.626 & 0.626 & 0.634 & 0.634 \\
		ML    & 0.056 & 0.055 & 0.057 & 0.055 & 0.049 & 0.050 & 0.052 & 0.054 & 0.350 & 0.357 & 0.346 & 0.348 & 0.563 & 0.575 & 0.574 & 0.569 \\
		LPML  & 0.054 & 0.057 & 0.060 & 0.055 & 0.053 & 0.051 & 0.055 & 0.053 & 0.390 & 0.400 & 0.394 & 0.388 & 0.635 & 0.640 & 0.639 & 0.644 \\
		MLX   & 0.058 & 0.057 & 0.059 & 0.059 & 0.053 & 0.053 & 0.052 & 0.055 & 0.371 & 0.387 & 0.377 & 0.373 & 0.590 & 0.595 & 0.595 & 0.597 \\
		LPMLX & 0.062 & 0.060 & 0.065 & 0.058 & 0.055 & 0.055 & 0.057 & 0.056 & 0.416 & 0.425 & 0.420 & 0.421 & 0.658 & 0.663 & 0.663 & 0.668 \\
		NP    & 0.068 & 0.068 & 0.067 & 0.063 & 0.056 & 0.054 & 0.058 & 0.059 & 0.429 & 0.436 & 0.427 & 0.426 & 0.663 & 0.670 & 0.670 & 0.672 \\
		\hline
	\end{tabularx}%
\end{table}%

\subsection{Estimation biases and standard errors}
\label{sec:bias}
In this section we report the biases and standard errors of our regression-adjusted estimators under three test settings. Specifically, the biases and standard errors for pointwise tests are summarized in Tables \ref{tab:Bias0.25}-\ref{tab:Bias0.75}. Table \ref{tab:BiasDiff} reports the biases and standard errors for estimating the difference of QTEs. Table \ref{tab:BiasUniform} provides the average estimation bias and standard errors over the interval $\tau \in [0.25, 0.75]$.

\begin{table}[htp!]
	\footnotesize
	\centering
	\caption{Estimation Bias and Standard Errors for Pointwise Tests ($\tau = 0.25$)}
	\label{tab:Bias0.25}%
	\smallskip
	
	\setlength{\tabcolsep}{1pt}
	\renewcommand*{\arraystretch}{1.2}
	\begin{tabularx}{\linewidth}{@{\extracolsep{\fill}}lcccccccccccccccc}
		\hline
		& \multicolumn{8}{c}{Bias} & \multicolumn{8}{c}{Standard Error} \\ \cmidrule{2-9}\cmidrule{10-17}
		& \multicolumn{4}{c}{$N = 200$} & \multicolumn{4}{c}{$N = 400$} & \multicolumn{4}{c}{$N = 200$} & \multicolumn{4}{c}{$N = 400$} \\ \cmidrule{2-5}\cmidrule{6-9}\cmidrule{10-13}\cmidrule{14-17}
		Methods & SRS   & WEI   & BCD   & SBR   & SRS   & WEI   & BCD   & SBR   & SRS   & WEI   & BCD   & SBR   & SRS   & WEI   & BCD   & SBR \\
		\hline
		\multicolumn{17}{c}{Panel A: DGP (i)}  \\
		NA    & 0.010 & 0.004 & -0.010 & -0.023 & 0.011 & -0.001 & -0.001 & -0.026 & 0.984 & 0.975 & 0.972 & 0.975 & 0.688 & 0.685 & 0.685 & 0.686 \\
		LP    & 0.024 & 0.013 & 0.005 & 0.036 & 0.016 & 0.006 & 0.004 & 0.009 & 0.882 & 0.874 & 0.872 & 0.872 & 0.610 & 0.607 & 0.608 & 0.607 \\
		ML    & 0.024 & 0.022 & 0.014 & 0.037 & 0.010 & 0.004 & -0.002 & 0.012 & 0.902 & 0.894 & 0.892 & 0.891 & 0.620 & 0.617 & 0.619 & 0.618 \\
		LPML  & 0.011 & 0.010 & 0.002 & 0.028 & 0.005 & 0.000 & -0.009 & 0.006 & 0.867 & 0.863 & 0.857 & 0.860 & 0.596 & 0.592 & 0.595 & 0.592 \\
		MLX   & -0.001 & -0.006 & -0.015 & 0.019 & 0.003 & -0.002 & -0.016 & 0.000 & 0.904 & 0.896 & 0.893 & 0.894 & 0.626 & 0.624 & 0.626 & 0.624 \\
		LPMLX & 0.005 & -0.002 & -0.012 & 0.015 & 0.000 & -0.007 & -0.014 & 0.000 & 0.867 & 0.861 & 0.857 & 0.858 & 0.594 & 0.591 & 0.593 & 0.592 \\
		NP    & -0.037 & -0.045 & -0.053 & -0.021 & -0.014 & -0.018 & -0.027 & -0.013 & 0.869 & 0.862 & 0.858 & 0.859 & 0.592 & 0.590 & 0.592 & 0.591 \\
		\hline
		\multicolumn{17}{c}{Panel B: DGP (ii)}  \\
		NA    & -0.004 & 0.003 & -0.007 & -0.043 & -0.001 & -0.008 & -0.001 & -0.019 & 0.824 & 0.820 & 0.816 & 0.819 & 0.574 & 0.572 & 0.571 & 0.571 \\
		LP    & -0.038 & -0.036 & -0.040 & -0.034 & -0.022 & -0.021 & -0.015 & -0.008 & 0.754 & 0.751 & 0.747 & 0.752 & 0.523 & 0.522 & 0.521 & 0.521 \\
		ML    & -0.032 & -0.027 & -0.036 & -0.033 & -0.023 & -0.020 & -0.013 & -0.011 & 0.816 & 0.813 & 0.808 & 0.814 & 0.573 & 0.571 & 0.570 & 0.570 \\
		LPML  & -0.013 & -0.007 & -0.013 & -0.003 & -0.008 & -0.009 & -0.002 & 0.004 & 0.741 & 0.739 & 0.734 & 0.740 & 0.513 & 0.511 & 0.511 & 0.511 \\
		MLX   & -0.063 & -0.056 & -0.060 & -0.065 & -0.033 & -0.038 & -0.032 & -0.035 & 0.803 & 0.800 & 0.796 & 0.802 & 0.571 & 0.568 & 0.569 & 0.568 \\
		LPMLX & -0.061 & -0.054 & -0.057 & -0.050 & -0.032 & -0.035 & -0.028 & -0.022 & 0.735 & 0.733 & 0.729 & 0.734 & 0.510 & 0.509 & 0.508 & 0.508 \\
		NP    & -0.068 & -0.066 & -0.072 & -0.062 & -0.039 & -0.041 & -0.033 & -0.026 & 0.734 & 0.732 & 0.729 & 0.733 & 0.510 & 0.509 & 0.508 & 0.508 \\
		\hline
	\end{tabularx}%
\end{table}%

\begin{table}[htp!]
	\footnotesize
	\centering
	\caption{Estimation Bias and Standard Errors for Pointwise Tests ($\tau = 0.5$)}
	\label{tab:Bias0.5}%
	\smallskip
	
	\setlength{\tabcolsep}{1pt}
	\renewcommand*{\arraystretch}{1.2}
	\begin{tabularx}{\linewidth}{@{\extracolsep{\fill}}lcccccccccccccccc}
		\hline
		& \multicolumn{8}{c}{Bias} & \multicolumn{8}{c}{Standard Error} \\ \cmidrule{2-9}\cmidrule{10-17}
		& \multicolumn{4}{c}{$N = 200$} & \multicolumn{4}{c}{$N = 400$} & \multicolumn{4}{c}{$N = 200$} & \multicolumn{4}{c}{$N = 400$} \\ \cmidrule{2-5}\cmidrule{6-9}\cmidrule{10-13}\cmidrule{14-17}
		Methods & SRS   & WEI   & BCD   & SBR   & SRS   & WEI   & BCD   & SBR   & SRS   & WEI   & BCD   & SBR   & SRS   & WEI   & BCD   & SBR \\
		\hline
		\multicolumn{17}{c}{Panel A: DGP (i)}  \\
		NA    & -0.001 & 0.020 & 0.016 & 0.043 & -0.003 & -0.006 & -0.005 & 0.019 & 0.979 & 0.976 & 0.976 & 0.973 & 0.688 & 0.685 & 0.685 & 0.687 \\
		LP    & -0.021 & -0.005 & -0.006 & -0.017 & -0.009 & -0.008 & -0.006 & -0.010 & 0.875 & 0.873 & 0.871 & 0.871 & 0.610 & 0.606 & 0.607 & 0.609 \\
		ML    & 0.006 & 0.023 & 0.013 & 0.014 & 0.006 & 0.004 & 0.005 & 0.004 & 0.897 & 0.894 & 0.893 & 0.893 & 0.627 & 0.623 & 0.623 & 0.625 \\
		LPML  & -0.004 & 0.013 & 0.001 & -0.002 & 0.001 & -0.001 & 0.003 & -0.004 & 0.860 & 0.858 & 0.856 & 0.856 & 0.597 & 0.592 & 0.592 & 0.595 \\
		MLX   & -0.004 & 0.016 & 0.006 & 0.003 & 0.003 & -0.001 & 0.006 & 0.000 & 0.898 & 0.894 & 0.894 & 0.894 & 0.631 & 0.627 & 0.628 & 0.630 \\
		LPMLX & 0.008 & 0.025 & 0.011 & 0.005 & 0.010 & 0.006 & 0.008 & 0.003 & 0.860 & 0.858 & 0.855 & 0.855 & 0.594 & 0.592 & 0.591 & 0.593 \\
		NP    & -0.021 & -0.006 & -0.014 & -0.028 & 0.003 & 0.002 & 0.004 & -0.003 & 0.859 & 0.858 & 0.855 & 0.855 & 0.593 & 0.589 & 0.589 & 0.592 \\
		\hline
		\multicolumn{17}{c}{Panel B: DGP (ii)}  \\
		NA    & 0.032 & 0.017 & 0.032 & 0.091 & 0.003 & 0.013 & 0.019 & 0.039 & 1.036 & 1.026 & 1.023 & 1.025 & 0.723 & 0.720 & 0.718 & 0.718 \\
		LP    & -0.012 & -0.034 & -0.020 & -0.015 & -0.012 & -0.003 & 0.001 & -0.008 & 0.944 & 0.937 & 0.932 & 0.936 & 0.660 & 0.658 & 0.656 & 0.655 \\
		ML    & 0.025 & 0.010 & 0.026 & 0.029 & 0.010 & 0.009 & 0.015 & 0.010 & 1.006 & 1.000 & 0.997 & 0.998 & 0.709 & 0.705 & 0.702 & 0.702 \\
		LPML  & 0.020 & 0.013 & 0.029 & 0.031 & 0.016 & 0.019 & 0.028 & 0.013 & 0.930 & 0.920 & 0.919 & 0.924 & 0.644 & 0.640 & 0.638 & 0.637 \\
		MLX   & -0.008 & -0.038 & -0.021 & -0.001 & -0.015 & -0.012 & -0.008 & -0.008 & 0.981 & 0.972 & 0.969 & 0.970 & 0.692 & 0.690 & 0.688 & 0.688 \\
		LPMLX & -0.021 & -0.041 & -0.025 & -0.024 & -0.017 & -0.013 & -0.006 & -0.017 & 0.916 & 0.912 & 0.906 & 0.908 & 0.636 & 0.635 & 0.633 & 0.631 \\
		NP    & -0.043 & -0.059 & -0.042 & -0.041 & -0.026 & -0.019 & -0.012 & -0.023 & 0.910 & 0.903 & 0.901 & 0.903 & 0.634 & 0.632 & 0.631 & 0.630 \\
		\hline
	\end{tabularx}%
\end{table}%

\begin{table}[htp!]
	\footnotesize
	\centering
	\caption{Estimation Bias and Standard Errors for Pointwise Tests ($\tau = 0.75$)}
	\label{tab:Bias0.75}%
	\smallskip
	
	\setlength{\tabcolsep}{1pt}
	\renewcommand*{\arraystretch}{1.2}
	\begin{tabularx}{\linewidth}{@{\extracolsep{\fill}}lcccccccccccccccc}
		\hline
		& \multicolumn{8}{c}{Bias} & \multicolumn{8}{c}{Standard Error} \\ \cmidrule{2-9}\cmidrule{10-17}
		& \multicolumn{4}{c}{$N = 200$} & \multicolumn{4}{c}{$N = 400$} & \multicolumn{4}{c}{$N = 200$} & \multicolumn{4}{c}{$N = 400$} \\ \cmidrule{2-5}\cmidrule{6-9}\cmidrule{10-13}\cmidrule{14-17}
		Methods & SRS   & WEI   & BCD   & SBR   & SRS   & WEI   & BCD   & SBR   & SRS   & WEI   & BCD   & SBR   & SRS   & WEI   & BCD   & SBR \\
		\hline
		\multicolumn{17}{c}{Panel A: DGP (i)}  \\
		NA    & 0.011 & 0.010 & 0.001 & 0.003 & 0.004 & 0.004 & -0.004 & -0.008 & 0.885 & 0.882 & 0.883 & 0.883 & 0.622 & 0.619 & 0.618 & 0.619 \\
		LP    & 0.002 & 0.002 & -0.006 & 0.011 & 0.005 & 0.003 & 0.000 & -0.008 & 0.779 & 0.776 & 0.775 & 0.776 & 0.545 & 0.542 & 0.541 & 0.542 \\
		ML    & 0.012 & 0.012 & 0.000 & 0.018 & 0.009 & 0.007 & 0.002 & -0.006 & 0.795 & 0.791 & 0.790 & 0.793 & 0.557 & 0.554 & 0.553 & 0.553 \\
		LPML  & 0.004 & 0.006 & -0.007 & 0.009 & 0.004 & 0.005 & 0.002 & -0.004 & 0.759 & 0.756 & 0.755 & 0.757 & 0.528 & 0.525 & 0.524 & 0.525 \\
		MLX   & 0.000 & 0.002 & -0.006 & 0.002 & 0.006 & 0.001 & 0.003 & -0.004 & 0.796 & 0.791 & 0.790 & 0.793 & 0.561 & 0.559 & 0.558 & 0.558 \\
		LPMLX & 0.004 & 0.007 & -0.007 & 0.007 & 0.006 & 0.006 & 0.001 & -0.004 & 0.758 & 0.755 & 0.753 & 0.756 & 0.527 & 0.524 & 0.523 & 0.524 \\
		NP    & -0.023 & -0.017 & -0.028 & -0.015 & 0.004 & 0.004 & 0.000 & -0.005 & 0.758 & 0.755 & 0.753 & 0.755 & 0.526 & 0.524 & 0.523 & 0.524 \\
		\hline
		\multicolumn{17}{c}{Panel B: DGP (ii)}  \\
		NA    & 0.024 & 0.019 & 0.007 & 0.025 & 0.005 & 0.006 & 0.014 & 0.004 & 0.783 & 0.777 & 0.777 & 0.777 & 0.547 & 0.545 & 0.545 & 0.546 \\
		LP    & -0.009 & -0.012 & -0.029 & -0.011 & -0.013 & -0.008 & -0.004 & -0.010 & 0.711 & 0.707 & 0.707 & 0.707 & 0.495 & 0.494 & 0.493 & 0.494 \\
		ML    & 0.009 & 0.009 & -0.008 & 0.005 & -0.008 & 0.001 & 0.003 & 0.001 & 0.744 & 0.74  & 0.739 & 0.74  & 0.523 & 0.522 & 0.522 & 0.522 \\
		LPML  & 0.026 & 0.025 & 0.013 & 0.021 & 0.011 & 0.018 & 0.020 & 0.017 & 0.697 & 0.696 & 0.696 & 0.693 & 0.481 & 0.48  & 0.479 & 0.482 \\
		MLX   & -0.037 & -0.039 & -0.049 & -0.037 & -0.028 & -0.019 & -0.018 & -0.018 & 0.727 & 0.724 & 0.723 & 0.723 & 0.517 & 0.516 & 0.515 & 0.516 \\
		LPMLX & -0.045 & -0.043 & -0.058 & -0.046 & -0.027 & -0.019 & -0.015 & -0.019 & 0.686 & 0.685 & 0.684 & 0.684 & 0.479 & 0.476 & 0.476 & 0.479 \\
		NP    & -0.056 & -0.061 & -0.074 & -0.061 & -0.036 & -0.027 & -0.023 & -0.027 & 0.685 & 0.681 & 0.682 & 0.681 & 0.474 & 0.473 & 0.473 & 0.473 \\
		\hline
	\end{tabularx}%
\end{table}%

\begin{table}[htp!]
	\footnotesize
	\centering
	\caption{Estimation Bias and Standard Errors for Tests of Differences ($\tau_1 = 0.25$, $\tau_2 = 0.75$)}
	\label{tab:BiasDiff}%
	\smallskip
	
	\setlength{\tabcolsep}{1pt}
	\renewcommand*{\arraystretch}{1.2}
	\begin{tabularx}{\linewidth}{@{\extracolsep{\fill}}lcccccccccccccccc}
		\hline
		& \multicolumn{8}{c}{Bias} & \multicolumn{8}{c}{Standard Error} \\ \cmidrule{2-9}\cmidrule{10-17}
		& \multicolumn{4}{c}{$N = 200$} & \multicolumn{4}{c}{$N = 400$} & \multicolumn{4}{c}{$N = 200$} & \multicolumn{4}{c}{$N = 400$} \\ \cmidrule{2-5}\cmidrule{6-9}\cmidrule{10-13}\cmidrule{14-17}
		Methods & SRS   & WEI   & BCD   & SBR   & SRS   & WEI   & BCD   & SBR   & SRS   & WEI   & BCD   & SBR   & SRS   & WEI   & BCD   & SBR \\
		\hline
		\multicolumn{17}{c}{Panel A: DGP (i)}  \\
		NA    & -0.011 & 0.015 & 0.029 & 0.067 & -0.013 & -0.007 & 0.000 & 0.045 & 1.310 & 1.301 & 1.299 & 1.299 & 0.911 & 0.905 & 0.906 & 0.909 \\
		LP    & -0.044 & -0.020 & -0.007 & -0.052 & -0.025 & -0.015 & -0.007 & -0.019 & 1.255 & 1.248 & 1.246 & 1.245 & 0.869 & 0.864 & 0.865 & 0.867 \\
		ML    & -0.029 & -0.012 & -0.005 & -0.034 & -0.009 & -0.006 & 0.001 & -0.015 & 1.253 & 1.243 & 1.241 & 1.241 & 0.865 & 0.860 & 0.862 & 0.862 \\
		LPML  & -0.028 & -0.008 & -0.010 & -0.043 & -0.010 & -0.007 & 0.009 & -0.015 & 1.201 & 1.194 & 1.190 & 1.191 & 0.823 & 0.819 & 0.820 & 0.822 \\
		MLX   & -0.012 & 0.000 & 0.012 & -0.016 & 0.003 & 0.000 & 0.024 & -0.001 & 1.250 & 1.244 & 1.239 & 1.241 & 0.870 & 0.865 & 0.867 & 0.868 \\
		LPMLX & -0.019 & 0.006 & 0.006 & -0.029 & 0.007 & 0.011 & 0.025 & 0.002 & 1.197 & 1.190 & 1.187 & 1.187 & 0.822 & 0.817 & 0.818 & 0.819 \\
		NP    & -0.010 & 0.014 & 0.018 & -0.026 & 0.009 & 0.011 & 0.028 & 0.002 & 1.198 & 1.192 & 1.188 & 1.189 & 0.819 & 0.814 & 0.816 & 0.818 \\
		\hline
		\multicolumn{17}{c}{Panel B: DGP (ii)}  \\
		NA    & 0.038 & 0.016 & 0.040 & 0.135 & 0.007 & 0.022 & 0.021 & 0.060 & 1.280 & 1.270 & 1.264 & 1.268 & 0.886 & 0.882 & 0.881 & 0.880 \\
		LP    & 0.029 & 0.002 & 0.021 & 0.020 & 0.012 & 0.018 & 0.017 & 0.000 & 1.201 & 1.192 & 1.186 & 1.191 & 0.831 & 0.829 & 0.827 & 0.826 \\
		ML    & 0.053 & 0.040 & 0.048 & 0.061 & 0.034 & 0.030 & 0.029 & 0.026 & 1.286 & 1.279 & 1.270 & 1.276 & 0.898 & 0.894 & 0.892 & 0.893 \\
		LPML  & 0.030 & 0.014 & 0.038 & 0.027 & 0.024 & 0.027 & 0.031 & 0.010 & 1.180 & 1.169 & 1.164 & 1.172 & 0.811 & 0.808 & 0.806 & 0.806 \\
		MLX   & 0.054 & 0.032 & 0.043 & 0.054 & 0.022 & 0.024 & 0.012 & 0.016 & 1.258 & 1.253 & 1.247 & 1.252 & 0.889 & 0.884 & 0.884 & 0.882 \\
		LPMLX & 0.028 & 0.003 & 0.023 & 0.013 & 0.013 & 0.021 & 0.021 & 0.004 & 1.165 & 1.159 & 1.152 & 1.157 & 0.804 & 0.803 & 0.801 & 0.799 \\
		NP    & 0.019 & 0.001 & 0.023 & 0.010 & 0.014 & 0.021 & 0.021 & 0.003 & 1.160 & 1.153 & 1.149 & 1.152 & 0.804 & 0.801 & 0.799 & 0.799 \\
		\hline
	\end{tabularx}%
\end{table}%

\begin{table}[htp!]
	\footnotesize
	\centering
	\caption{Average Estimation Bias and Standard Errors for Uniform Tests ($\tau \in [0.25, 0.75]$)}
	\label{tab:BiasUniform}%
	\smallskip
	
	\setlength{\tabcolsep}{1pt}
	\renewcommand*{\arraystretch}{1.2}
	\begin{tabularx}{\linewidth}{@{\extracolsep{\fill}}lcccccccccccccccc}
		\hline
		& \multicolumn{8}{c}{Bias} & \multicolumn{8}{c}{Standard Error} \\ \cmidrule{2-9}\cmidrule{10-17}
		& \multicolumn{4}{c}{$N = 200$} & \multicolumn{4}{c}{$N = 400$} & \multicolumn{4}{c}{$N = 200$} & \multicolumn{4}{c}{$N = 400$} \\ \cmidrule{2-5}\cmidrule{6-9}\cmidrule{10-13}\cmidrule{14-17}
		Methods & SRS   & WEI   & BCD   & SBR   & SRS   & WEI   & BCD   & SBR   & SRS   & WEI   & BCD   & SBR   & SRS   & WEI   & BCD   & SBR \\
		\hline
		\multicolumn{17}{c}{Panel A: DGP (i)}  \\
		NA    & 0.003 & 0.012 & 0.000 & 0.010 & 0.002 & 0.000 & -0.004 & -0.004 & 0.931 & 0.927 & 0.925 & 0.925 & 0.653 & 0.651 & 0.651 & 0.652 \\
		LP    & 0.001 & 0.004 & -0.003 & 0.008 & 0.002 & 0.001 & -0.001 & -0.002 & 0.829 & 0.825 & 0.824 & 0.824 & 0.576 & 0.575 & 0.575 & 0.575 \\
		ML    & 0.012 & 0.016 & 0.005 & 0.019 & 0.005 & 0.006 & 0.001 & 0.003 & 0.851 & 0.847 & 0.845 & 0.846 & 0.593 & 0.591 & 0.591 & 0.591 \\
		LPML  & 0.002 & 0.008 & -0.003 & 0.008 & 0.002 & 0.001 & -0.002 & -0.002 & 0.814 & 0.811 & 0.808 & 0.810 & 0.562 & 0.560 & 0.560 & 0.560 \\
		MLX   & -0.002 & 0.006 & -0.008 & 0.009 & 0.000 & 0.001 & -0.002 & -0.003 & 0.850 & 0.847 & 0.845 & 0.846 & 0.596 & 0.594 & 0.594 & 0.594 \\
		LPMLX & 0.003 & 0.010 & -0.004 & 0.007 & 0.003 & 0.001 & -0.001 & -0.002 & 0.813 & 0.810 & 0.808 & 0.808 & 0.561 & 0.559 & 0.559 & 0.560 \\
		NP    & -0.024 & -0.017 & -0.029 & -0.019 & -0.002 & -0.003 & -0.005 & -0.006 & 0.813 & 0.810 & 0.808 & 0.808 & 0.560 & 0.558 & 0.558 & 0.558 \\
		\hline
		\multicolumn{17}{c}{Panel B: DGP (ii)}  \\
		NA    & 0.012 & 0.013 & 0.010 & 0.024 & 0.002 & 0.003 & 0.010 & 0.005 & 0.850 & 0.843 & 0.841 & 0.843 & 0.593 & 0.591 & 0.590 & 0.590 \\
		LP    & -0.024 & -0.022 & -0.027 & -0.017 & -0.015 & -0.011 & -0.007 & -0.010 & 0.772 & 0.767 & 0.765 & 0.766 & 0.537 & 0.535 & 0.534 & 0.534 \\
		ML    & 0.001 & 0.000 & 0.000 & 0.009 & -0.003 & -0.001 & 0.005 & 0.001 & 0.820 & 0.816 & 0.814 & 0.816 & 0.579 & 0.577 & 0.575 & 0.575 \\
		LPML  & 0.015 & 0.016 & 0.015 & 0.023 & 0.010 & 0.014 & 0.018 & 0.014 & 0.761 & 0.757 & 0.755 & 0.757 & 0.525 & 0.523 & 0.522 & 0.522 \\
		MLX   & -0.035 & -0.038 & -0.038 & -0.030 & -0.022 & -0.024 & -0.019 & -0.020 & 0.801 & 0.797 & 0.795 & 0.796 & 0.569 & 0.567 & 0.566 & 0.566 \\
		LPMLX & -0.039 & -0.040 & -0.039 & -0.034 & -0.021 & -0.020 & -0.015 & -0.018 & 0.752 & 0.748 & 0.745 & 0.746 & 0.521 & 0.519 & 0.518 & 0.519 \\
		NP    & -0.055 & -0.059 & -0.057 & -0.052 & -0.033 & -0.030 & -0.024 & -0.028 & 0.746 & 0.742 & 0.741 & 0.742 & 0.516 & 0.515 & 0.514 & 0.514 \\
		\hline
	\end{tabularx}%
\end{table}%

\subsection{Na{\"i}ve Bootstrap Inference} 
\label{sec:add_sim1}
In this section we report the size and power of our regression-adjusted estimator for the median QTE when we replace $\hat{\pi}(s)$ by the true propensity score $1/2$. We then consider the multiplier bootstrap as defined in the main text but with  $\hat{\pi}(s)$ replaced by the true propensity score $1/2$. We call this the naive bootstrap inference because the simulation results below show that it is conservative. Specifically, we report addition simulation results for the pointwise tests with $\tau = 0.25$, $0.5$ and $0.75$ (Tables \ref{tab:Sim_Point_0.25'}-\ref{tab:Sim_Point_0.75'}), tests for differences (Table \ref{tab:Sim_Diff1'}), and uniform tests (Table \ref{tab:Sim_Unif1'}).

Comparing the results with the ones in Section \ref{sec:sim}, we see that using the true, instead of the estimated, propensity score, the multiplier bootstrap inference becomes conservative for randomization schemes ``WEI", ``BCD", and ``SBR". Specifically, the sizes are much smaller than the nominal rate ($5\%$). At the same time, the powers are smaller than their counterparts in Section \ref{sec:sim}. The improvement in powers of the ``LPMLX" estimator with the estimated propensity score over the ``LPMLX" estimator with the true propensity score is due to the 31--38\% reduction in the standard errors. This outcome is consistent with the findings in \cite{BCS17} and \cite{ZZ20} that the naive inference methods under CARs are conservative. 

\begin{table}[htp!]
	\footnotesize
	\centering
	\caption{Pointwise Tests with Na{\"i}ve Bootstrap Inference ($\tau = 0.25$, $\hat{\pi}(s) = 0.5$)}
	\label{tab:Sim_Point_0.25'}%
	\smallskip
	
	\setlength{\tabcolsep}{2pt}
	\renewcommand*{\arraystretch}{1.2}
	\begin{tabularx}{\linewidth}{@{\extracolsep{\fill}}lcccccccccccccccc}
		\hline
		& \multicolumn{8}{c}{Size} & \multicolumn{8}{c}{Power} \\ \cmidrule{2-9}\cmidrule{10-17}
		& \multicolumn{4}{c}{$N = 200$} & \multicolumn{4}{c}{$N = 400$} & \multicolumn{4}{c}{$N = 200$} & \multicolumn{4}{c}{$N = 400$} \\ \cmidrule{2-5}\cmidrule{6-9}\cmidrule{10-13}\cmidrule{14-17}
		Methods & SRS   & WEI   & BCD   & SBR   & SRS   & WEI   & BCD   & SBR   & SRS   & WEI   & BCD   & SBR   & SRS   & WEI   & BCD   & SBR \\
		\hline
		\multicolumn{17}{c}{Panel A: DGP (i)}  \\
		NA    & 0.049 & 0.032 & 0.022 & 0.023 & 0.056 & 0.032 & 0.023 & 0.023 & 0.255 & 0.234 & 0.224 & 0.220 & 0.454 & 0.454 & 0.456 & 0.465 \\
		LP    & 0.048 & 0.018 & 0.008 & 0.006 & 0.053 & 0.019 & 0.006 & 0.007 & 0.222 & 0.173 & 0.149 & 0.113 & 0.399 & 0.391 & 0.372 & 0.331 \\
		ML    & 0.048 & 0.034 & 0.032 & 0.031 & 0.048 & 0.044 & 0.037 & 0.037 & 0.330 & 0.292 & 0.279 & 0.253 & 0.622 & 0.614 & 0.602 & 0.592 \\
		LPML  & 0.051 & 0.034 & 0.031 & 0.029 & 0.050 & 0.041 & 0.038 & 0.038 & 0.346 & 0.311 & 0.296 & 0.270 & 0.643 & 0.630 & 0.622 & 0.606 \\
		MLX   & 0.049 & 0.039 & 0.033 & 0.034 & 0.053 & 0.046 & 0.041 & 0.042 & 0.334 & 0.300 & 0.294 & 0.265 & 0.621 & 0.617 & 0.608 & 0.597 \\
		LPMLX & 0.052 & 0.040 & 0.033 & 0.032 & 0.049 & 0.046 & 0.042 & 0.041 & 0.353 & 0.326 & 0.308 & 0.280 & 0.656 & 0.647 & 0.640 & 0.629 \\
		NP    & 0.054 & 0.045 & 0.037 & 0.035 & 0.053 & 0.049 & 0.045 & 0.044 & 0.370 & 0.347 & 0.327 & 0.302 & 0.679 & 0.672 & 0.663 & 0.650 \\
		\hline
		\multicolumn{17}{c}{Panel B: DGP (ii)}  \\
		NA    & 0.050 & 0.019 & 0.008 & 0.009 & 0.053 & 0.019 & 0.007 & 0.008 & 0.276 & 0.247 & 0.225 & 0.237 & 0.498 & 0.498 & 0.501 & 0.520 \\
		LP    & 0.052 & 0.013 & 0.002 & 0.002 & 0.053 & 0.011 & 0.001 & 0.002 & 0.238 & 0.178 & 0.148 & 0.109 & 0.432 & 0.413 & 0.389 & 0.363 \\
		ML    & 0.046 & 0.041 & 0.041 & 0.038 & 0.045 & 0.039 & 0.037 & 0.033 & 0.443 & 0.443 & 0.446 & 0.435 & 0.720 & 0.726 & 0.726 & 0.724 \\
		LPML  & 0.047 & 0.039 & 0.037 & 0.033 & 0.047 & 0.038 & 0.031 & 0.030 & 0.444 & 0.431 & 0.440 & 0.435 & 0.720 & 0.735 & 0.736 & 0.741 \\
		MLX   & 0.052 & 0.046 & 0.044 & 0.045 & 0.048 & 0.044 & 0.041 & 0.038 & 0.469 & 0.470 & 0.472 & 0.464 & 0.732 & 0.744 & 0.743 & 0.748 \\
		LPMLX & 0.053 & 0.049 & 0.052 & 0.047 & 0.053 & 0.047 & 0.042 & 0.041 & 0.531 & 0.527 & 0.531 & 0.527 & 0.818 & 0.829 & 0.834 & 0.830 \\
		NP    & 0.058 & 0.056 & 0.056 & 0.057 & 0.052 & 0.053 & 0.050 & 0.049 & 0.548 & 0.553 & 0.556 & 0.553 & 0.841 & 0.848 & 0.850 & 0.842 \\
		\hline
	\end{tabularx}%
\end{table}%

\begin{table}[htp!]
	\footnotesize
	\centering
	\caption{Pointwise Tests with Na{\"i}ve Estimator ($\tau = 0.5$, $\hat{\pi}(s) = 0.5$)}
	\label{tab:Sim_Point_0.5'}%
	\smallskip
	
	\setlength{\tabcolsep}{2pt}
	\renewcommand*{\arraystretch}{1.2}
	\begin{tabularx}{\linewidth}{@{\extracolsep{\fill}}lcccccccccccccccc}
		\hline
		& \multicolumn{8}{c}{Size} & \multicolumn{8}{c}{Power} \\ \cmidrule{2-9}\cmidrule{10-17}
		& \multicolumn{4}{c}{$N = 200$} & \multicolumn{4}{c}{$N = 400$} & \multicolumn{4}{c}{$N = 200$} & \multicolumn{4}{c}{$N = 400$} \\ \cmidrule{2-5}\cmidrule{6-9}\cmidrule{10-13}\cmidrule{14-17}
		Methods & SRS   & WEI   & BCD   & SBR   & SRS   & WEI   & BCD   & SBR   & SRS   & WEI   & BCD   & SBR   & SRS   & WEI   & BCD   & SBR \\
		\hline
		\multicolumn{17}{c}{Panel A: DGP (i)}  \\
		NA    & 0.056 & 0.026 & 0.015 & 0.019 & 0.053 & 0.028 & 0.014 & 0.015 & 0.291 & 0.255 & 0.242 & 0.236 & 0.492 & 0.501 & 0.503 & 0.504 \\
		LP    & 0.049 & 0.008 & 0.001 & 0.001 & 0.052 & 0.008 & 0.001 & 0.000 & 0.181 & 0.103 & 0.053 & 0.033 & 0.314 & 0.254 & 0.191 & 0.158 \\
		ML    & 0.050 & 0.018 & 0.008 & 0.007 & 0.051 & 0.018 & 0.006 & 0.007 & 0.273 & 0.226 & 0.198 & 0.166 & 0.479 & 0.479 & 0.479 & 0.458 \\
		LPML  & 0.049 & 0.017 & 0.006 & 0.006 & 0.048 & 0.017 & 0.006 & 0.005 & 0.276 & 0.217 & 0.198 & 0.155 & 0.495 & 0.492 & 0.491 & 0.476 \\
		MLX   & 0.050 & 0.018 & 0.008 & 0.008 & 0.052 & 0.021 & 0.008 & 0.007 & 0.270 & 0.229 & 0.211 & 0.169 & 0.482 & 0.472 & 0.473 & 0.455 \\
		LPMLX & 0.053 & 0.017 & 0.007 & 0.007 & 0.050 & 0.018 & 0.006 & 0.006 & 0.284 & 0.232 & 0.209 & 0.169 & 0.500 & 0.498 & 0.498 & 0.479 \\
		NP    & 0.055 & 0.020 & 0.008 & 0.008 & 0.052 & 0.021 & 0.006 & 0.007 & 0.291 & 0.243 & 0.227 & 0.184 & 0.507 & 0.507 & 0.507 & 0.490 \\
		\hline
		\multicolumn{17}{c}{Panel B: DGP (ii)}  \\
		NA    & 0.051 & 0.017 & 0.005 & 0.005 & 0.048 & 0.016 & 0.006 & 0.006 & 0.284 & 0.244 & 0.223 & 0.211 & 0.499 & 0.491 & 0.486 & 0.494 \\
		LP    & 0.047 & 0.006 & 0.000 & 0.000 & 0.048 & 0.004 & 0.000 & 0.000 & 0.171 & 0.087 & 0.041 & 0.019 & 0.315 & 0.229 & 0.150 & 0.116 \\
		ML    & 0.050 & 0.021 & 0.013 & 0.011 & 0.047 & 0.026 & 0.019 & 0.018 & 0.315 & 0.254 & 0.216 & 0.187 & 0.588 & 0.552 & 0.528 & 0.509 \\
		LPML  & 0.049 & 0.016 & 0.009 & 0.006 & 0.045 & 0.017 & 0.010 & 0.009 & 0.296 & 0.229 & 0.188 & 0.166 & 0.526 & 0.484 & 0.462 & 0.440 \\
		MLX   & 0.056 & 0.021 & 0.014 & 0.013 & 0.052 & 0.027 & 0.019 & 0.021 & 0.336 & 0.275 & 0.243 & 0.213 & 0.606 & 0.564 & 0.554 & 0.535 \\
		LPMLX & 0.055 & 0.020 & 0.013 & 0.011 & 0.054 & 0.021 & 0.017 & 0.016 & 0.341 & 0.280 & 0.240 & 0.210 & 0.598 & 0.565 & 0.550 & 0.525 \\
		NP    & 0.059 & 0.023 & 0.015 & 0.014 & 0.056 & 0.028 & 0.024 & 0.023 & 0.368 & 0.303 & 0.269 & 0.237 & 0.651 & 0.614 & 0.602 & 0.578 \\
		\hline
	\end{tabularx}%
\end{table}%

\begin{table}[htp!]
	\footnotesize
	\centering
	\caption{Pointwise Tests with Na{\"i}ve Estimator ($\tau = 0.75$, $\hat{\pi}(s) = 0.5$)}
	\label{tab:Sim_Point_0.75'}%
	\smallskip
	
	\setlength{\tabcolsep}{2pt}
	\renewcommand*{\arraystretch}{1.2}
	\begin{tabularx}{\linewidth}{@{\extracolsep{\fill}}lcccccccccccccccc}
		\hline
		& \multicolumn{8}{c}{Size} & \multicolumn{8}{c}{Power} \\ \cmidrule{2-9}\cmidrule{10-17}
		& \multicolumn{4}{c}{$N = 200$} & \multicolumn{4}{c}{$N = 400$} & \multicolumn{4}{c}{$N = 200$} & \multicolumn{4}{c}{$N = 400$} \\ \cmidrule{2-5}\cmidrule{6-9}\cmidrule{10-13}\cmidrule{14-17}
		Methods & SRS   & WEI   & BCD   & SBR   & SRS   & WEI   & BCD   & SBR   & SRS   & WEI   & BCD   & SBR   & SRS   & WEI   & BCD   & SBR \\
		\hline
		\multicolumn{17}{c}{Panel A: DGP (i)}  \\
		NA    & 0.054 & 0.033 & 0.024 & 0.023 & 0.054 & 0.030 & 0.022 & 0.019 & 0.265 & 0.241 & 0.232 & 0.218 & 0.458 & 0.462 & 0.462 & 0.441 \\
		LP    & 0.015 & 0.002 & 0.000 & 0.000 & 0.029 & 0.002 & 0.000 & 0.000 & 0.089 & 0.022 & 0.002 & 0.001 & 0.162 & 0.073 & 0.012 & 0.006 \\
		ML    & 0.028 & 0.003 & 0.001 & 0.000 & 0.044 & 0.005 & 0.001 & 0.000 & 0.127 & 0.050 & 0.015 & 0.008 & 0.228 & 0.137 & 0.067 & 0.053 \\
		LPML  & 0.028 & 0.003 & 0.000 & 0.000 & 0.045 & 0.005 & 0.000 & 0.000 & 0.126 & 0.046 & 0.013 & 0.008 & 0.232 & 0.138 & 0.063 & 0.045 \\
		MLX   & 0.028 & 0.003 & 0.001 & 0.000 & 0.045 & 0.005 & 0.000 & 0.000 & 0.127 & 0.050 & 0.016 & 0.008 & 0.228 & 0.141 & 0.066 & 0.053 \\
		LPMLX & 0.028 & 0.003 & 0.000 & 0.000 & 0.045 & 0.005 & 0.000 & 0.000 & 0.127 & 0.049 & 0.014 & 0.009 & 0.232 & 0.140 & 0.064 & 0.047 \\
		NP    & 0.030 & 0.003 & 0.000 & 0.000 & 0.044 & 0.006 & 0.000 & 0.001 & 0.134 & 0.052 & 0.017 & 0.009 & 0.234 & 0.145 & 0.069 & 0.053 \\
		\hline
		\multicolumn{17}{c}{Panel B: DGP (ii)}  \\
		NA    & 0.051 & 0.027 & 0.019 & 0.016 & 0.050 & 0.028 & 0.016 & 0.019 & 0.239 & 0.210 & 0.192 & 0.169 & 0.409 & 0.392 & 0.384 & 0.371 \\
		LP    & 0.009 & 0.001 & 0.000 & 0.000 & 0.024 & 0.002 & 0.000 & 0.000 & 0.067 & 0.018 & 0.002 & 0.000 & 0.144 & 0.056 & 0.007 & 0.002 \\
		ML    & 0.021 & 0.003 & 0.000 & 0.000 & 0.040 & 0.004 & 0.001 & 0.000 & 0.103 & 0.040 & 0.011 & 0.005 & 0.191 & 0.102 & 0.041 & 0.031 \\
		LPML  & 0.027 & 0.003 & 0.000 & 0.000 & 0.041 & 0.004 & 0.000 & 0.000 & 0.101 & 0.034 & 0.008 & 0.004 & 0.185 & 0.094 & 0.030 & 0.026 \\
		MLX   & 0.023 & 0.003 & 0.000 & 0.000 & 0.040 & 0.005 & 0.000 & 0.000 & 0.107 & 0.044 & 0.011 & 0.005 & 0.195 & 0.107 & 0.044 & 0.035 \\
		LPMLX & 0.024 & 0.003 & 0.000 & 0.000 & 0.041 & 0.004 & 0.000 & 0.000 & 0.110 & 0.043 & 0.011 & 0.004 & 0.199 & 0.109 & 0.042 & 0.032 \\
		NP    & 0.027 & 0.002 & 0.000 & 0.000 & 0.040 & 0.003 & 0.000 & 0.000 & 0.113 & 0.045 & 0.013 & 0.005 & 0.205 & 0.113 & 0.044 & 0.035 \\
		\hline
	\end{tabularx}%
\end{table}%

\begin{table}[htp!]
	\footnotesize
	\centering
	\caption{Tests for Differences with Na{\"i}ve Estimator ($\tau_1 = 0.25$, $\tau_2 = 0.75$, $\hat{\pi}(s) = 0.5$)}
	\label{tab:Sim_Diff1'}%
	\smallskip
	
	\setlength{\tabcolsep}{2pt}
	\renewcommand*{\arraystretch}{1.2}
	\begin{tabularx}{\linewidth}{@{\extracolsep{\fill}}lcccccccccccccccc}
		\hline
		& \multicolumn{8}{c}{Size} & \multicolumn{8}{c}{Power} \\ \cmidrule{2-9}\cmidrule{10-17}
		& \multicolumn{4}{c}{$N = 200$} & \multicolumn{4}{c}{$N = 400$} & \multicolumn{4}{c}{$N = 200$} & \multicolumn{4}{c}{$N = 400$} \\ \cmidrule{2-5}\cmidrule{6-9}\cmidrule{10-13}\cmidrule{14-17}
		Methods & SRS   & WEI   & BCD   & SBR   & SRS   & WEI   & BCD   & SBR   & SRS   & WEI   & BCD   & SBR   & SRS   & WEI   & BCD   & SBR \\
		\hline
		\multicolumn{17}{c}{Panel A: DGP (i)}  \\
		NA    & 0.035 & 0.033 & 0.028 & 0.029 & 0.043 & 0.035 & 0.030 & 0.033 & 0.187 & 0.184 & 0.170 & 0.159 & 0.345 & 0.343 & 0.335 & 0.311 \\
		LP    & 0.010 & 0.006 & 0.001 & 0.001 & 0.029 & 0.008 & 0.003 & 0.003 & 0.081 & 0.059 & 0.034 & 0.026 & 0.203 & 0.149 & 0.110 & 0.097 \\
		ML    & 0.027 & 0.008 & 0.003 & 0.002 & 0.042 & 0.009 & 0.002 & 0.002 & 0.105 & 0.060 & 0.030 & 0.025 & 0.190 & 0.127 & 0.080 & 0.072 \\
		LPML  & 0.027 & 0.009 & 0.002 & 0.002 & 0.043 & 0.009 & 0.002 & 0.003 & 0.105 & 0.055 & 0.025 & 0.023 & 0.195 & 0.128 & 0.079 & 0.068 \\
		MLX   & 0.027 & 0.008 & 0.004 & 0.003 & 0.042 & 0.010 & 0.002 & 0.002 & 0.101 & 0.059 & 0.029 & 0.026 & 0.188 & 0.125 & 0.079 & 0.072 \\
		LPMLX & 0.028 & 0.009 & 0.003 & 0.003 & 0.042 & 0.009 & 0.002 & 0.003 & 0.108 & 0.058 & 0.028 & 0.025 & 0.197 & 0.128 & 0.079 & 0.070 \\
		NP    & 0.028 & 0.009 & 0.002 & 0.003 & 0.044 & 0.010 & 0.002 & 0.003 & 0.110 & 0.057 & 0.027 & 0.025 & 0.198 & 0.130 & 0.078 & 0.070 \\
		\hline
		\multicolumn{17}{c}{Panel B: DGP (ii)}  \\
		NA    & 0.037 & 0.026 & 0.020 & 0.021 & 0.040 & 0.028 & 0.023 & 0.026 & 0.167 & 0.165 & 0.152 & 0.122 & 0.330 & 0.318 & 0.306 & 0.294 \\
		LP    & 0.005 & 0.003 & 0.001 & 0.001 & 0.024 & 0.006 & 0.001 & 0.000 & 0.053 & 0.038 & 0.018 & 0.009 & 0.174 & 0.106 & 0.062 & 0.050 \\
		ML    & 0.022 & 0.004 & 0.002 & 0.001 & 0.035 & 0.005 & 0.002 & 0.001 & 0.081 & 0.035 & 0.014 & 0.006 & 0.162 & 0.086 & 0.045 & 0.033 \\
		LPML  & 0.026 & 0.005 & 0.001 & 0.000 & 0.039 & 0.004 & 0.001 & 0.001 & 0.081 & 0.032 & 0.009 & 0.005 & 0.157 & 0.070 & 0.025 & 0.021 \\
		MLX   & 0.023 & 0.005 & 0.001 & 0.001 & 0.034 & 0.007 & 0.001 & 0.001 & 0.082 & 0.038 & 0.013 & 0.008 & 0.165 & 0.092 & 0.043 & 0.038 \\
		LPMLX & 0.023 & 0.005 & 0.001 & 0.000 & 0.037 & 0.005 & 0.001 & 0.001 & 0.084 & 0.037 & 0.012 & 0.007 & 0.170 & 0.089 & 0.037 & 0.031 \\
		NP    & 0.024 & 0.005 & 0.001 & 0.000 & 0.037 & 0.005 & 0.001 & 0.001 & 0.091 & 0.038 & 0.013 & 0.007 & 0.175 & 0.093 & 0.042 & 0.036 \\
		\hline
	\end{tabularx}%
\end{table}%

\begin{table}[htp!]
	\footnotesize
	\centering
	\caption{Uniform Tests with Na{\"i}ve Estimator ($\tau  \in [0.25,0.75]$, $\hat{\pi}(s) = 0.5$)}
	\label{tab:Sim_Unif1'}%
	\smallskip
	
	\setlength{\tabcolsep}{2pt}
	\renewcommand*{\arraystretch}{1.2}
	\begin{tabularx}{\linewidth}{@{\extracolsep{\fill}}lcccccccccccccccc}
		\hline
		& \multicolumn{8}{c}{Size} & \multicolumn{8}{c}{Power} \\ \cmidrule{2-9}\cmidrule{10-17}
		& \multicolumn{4}{c}{$N = 200$} & \multicolumn{4}{c}{$N = 400$} & \multicolumn{4}{c}{$N = 200$} & \multicolumn{4}{c}{$N = 400$} \\ \cmidrule{2-5}\cmidrule{6-9}\cmidrule{10-13}\cmidrule{14-17}
		Methods & SRS   & WEI   & BCD   & SBR   & SRS   & WEI   & BCD   & SBR   & SRS   & WEI   & BCD   & SBR   & SRS   & WEI   & BCD   & SBR \\
		\hline
		\multicolumn{17}{c}{Panel A: DGP (i)}  \\
		NA    & 0.044 & 0.020 & 0.012 & 0.011 & 0.048 & 0.025 & 0.014 & 0.011 & 0.187 & 0.184 & 0.170 & 0.159 & 0.566 & 0.569 & 0.567 & 0.562 \\
		LP    & 0.027 & 0.004 & 0.001 & 0.001 & 0.040 & 0.005 & 0.000 & 0.001 & 0.081 & 0.059 & 0.034 & 0.026 & 0.347 & 0.295 & 0.234 & 0.195 \\
		ML    & 0.031 & 0.011 & 0.007 & 0.006 & 0.043 & 0.016 & 0.009 & 0.010 & 0.105 & 0.060 & 0.030 & 0.025 & 0.630 & 0.588 & 0.562 & 0.541 \\
		LPML  & 0.034 & 0.010 & 0.007 & 0.007 & 0.043 & 0.016 & 0.008 & 0.009 & 0.105 & 0.055 & 0.025 & 0.023 & 0.636 & 0.608 & 0.583 & 0.562 \\
		MLX   & 0.032 & 0.012 & 0.008 & 0.007 & 0.046 & 0.018 & 0.009 & 0.011 & 0.101 & 0.059 & 0.029 & 0.026 & 0.629 & 0.591 & 0.560 & 0.548 \\
		LPMLX & 0.033 & 0.013 & 0.010 & 0.008 & 0.045 & 0.019 & 0.011 & 0.011 & 0.108 & 0.058 & 0.028 & 0.025 & 0.653 & 0.623 & 0.602 & 0.582 \\
		NP    & 0.034 & 0.013 & 0.008 & 0.009 & 0.047 & 0.021 & 0.012 & 0.012 & 0.110 & 0.057 & 0.027 & 0.025 & 0.673 & 0.646 & 0.624 & 0.604 \\
		\hline
		\multicolumn{17}{c}{Panel B: DGP (ii)}  \\
		NA    & 0.043 & 0.013 & 0.005 & 0.003 & 0.050 & 0.011 & 0.003 & 0.004 & 0.308 & 0.251 & 0.213 & 0.208 & 0.562 & 0.570 & 0.566 & 0.572 \\
		LP    & 0.029 & 0.002 & 0.000 & 0.000 & 0.035 & 0.002 & 0.000 & 0.000 & 0.171 & 0.082 & 0.041 & 0.025 & 0.358 & 0.282 & 0.207 & 0.177 \\
		ML    & 0.035 & 0.012 & 0.015 & 0.011 & 0.039 & 0.019 & 0.016 & 0.015 & 0.454 & 0.401 & 0.387 & 0.368 & 0.826 & 0.800 & 0.789 & 0.787 \\
		LPML  & 0.030 & 0.011 & 0.010 & 0.010 & 0.036 & 0.016 & 0.009 & 0.010 & 0.444 & 0.388 & 0.374 & 0.355 & 0.804 & 0.771 & 0.762 & 0.755 \\
		MLX   & 0.037 & 0.017 & 0.019 & 0.014 & 0.040 & 0.022 & 0.017 & 0.019 & 0.500 & 0.450 & 0.439 & 0.418 & 0.853 & 0.830 & 0.826 & 0.819 \\
		LPMLX & 0.038 & 0.018 & 0.019 & 0.016 & 0.040 & 0.023 & 0.018 & 0.019 & 0.534 & 0.492 & 0.482 & 0.462 & 0.889 & 0.870 & 0.864 & 0.860 \\
		NP    & 0.041 & 0.023 & 0.025 & 0.023 & 0.045 & 0.028 & 0.023 & 0.024 & 0.573 & 0.539 & 0.533 & 0.513 & 0.919 & 0.906 & 0.903 & 0.899 \\
		\hline
	\end{tabularx}%
\end{table}%

\subsection{High-dimensional covariates}
\label{sec:add_sim2}
To assess the finite sample performance of the estimation and inference methods introduced in Section \ref{sec:reg}, we consider the outcomes equation
\begin{align}
Y_{i} = \alpha(X_{i}) + \gamma Z_i + \mu(X_{i}) A_{i} + \eta_{i},
\end{align}
where $\gamma = 4$ for all cases while $\alpha(X_{i})$, $\mu(X_{i})$, and $\eta_{i}$ are separately specified as follows. 

Let $Z$ follow the standardized Beta(2, 2) distribution, $S_i = \sum_{j = 1}^4 1\{Z_i \leq g_j\}$, and $(g_1, \cdots, g_4) = (-0.5\sqrt{5}, 0, 0.5\sqrt{5}, \sqrt{5})$. Further suppose that $X_i$ contains twenty covariates $(X_{1i}, \cdots, X_{20,i})^\top$, where $X = \Phi(W)$ with $W \sim N(0_{20 \times 1}, \Omega)$ and the variance matrix $\Omega$ is the Toeplitz matrix
\begin{align*}
\Omega = \begin{pmatrix}
1 & 0.5 & 0.5^2 &\cdots & 0.5^{19} \\
0.5 & 1 & 0.5 & \cdots & 0.5^{18} \\
0.5^2 & 0.5 & 1 & \cdots & 0.5^{17} \\
\vdots & \vdots & \vdots & \ddots & \vdots \\
0.5^{19} & 0.5^{18} & 0.5^{17} & \cdots & 1
\end{pmatrix}
\end{align*}
Further define $\alpha(X_{i}) = 1$, $\mu(X_i) = 1 + \sum_{k = 1}^{20} X_{ki} \beta_k$ with $\beta_k = 4/k^2$, and  $\eta_i = 2 A_i\varepsilon_{1i} + (1-A_i)\varepsilon_{2i}$, where $(\varepsilon_{1i}, \varepsilon_{2i})$ are jointly standard normal.

We consider the post-Lasso estimator $\hat{\theta}_{a,s}^{post}(\hat{q}_a(\tau))$ as defined in \eqref{eq:post} with $H_{p_n}(X_i) = (1,X_i^\top)^\top$ and $\hat{\mathbb{S}}^+_{a,s}(q) = \{2\}$. The choice of tuning parameter and the estimation procedure are detailed in Section \ref{sec:aux_imp}. We assess the empirical size and power of the tests for $n = 200$ and $n = 400$. All simulations are replicated 10,000 times, with the bootstrap sample size being 1,000. We compute the true QTEs or QTE differences by simulations with 10,000 sample size and 1,000 replications. To compute power, we perturb the true values by 1.5. 

In Table \ref{tab:Sim_Lasso}, we report the empirical size and power of all three testing scenarios in the high-dimensional setting. In particular, we compare the methods ``NA" with our post-Lasso estimator and the oracle estimator. Evidently, all sizes for both methods approach the nominal level as the sample size increases. The post-Lasso method dominates ``NA" in all tests with superior power performance. The improvement in power of the ``Post-Lasso" estimator upon ``NA" (i.e., with no adjustments) is due to a 2.5\% reduction of the standard error of the difference of the QTE estimators on average as shown in Table \ref{tab:LassoBias}. This result is consistent with the theory given in Theorem \ref{thm:hd}. The powers of the ``Post-Lasso" and ``Oracle" estimators are similarly, which also confirms that the ``Post-Lasso" estimator achieves the minimum asymptotic variance. 

\begin{table}[!htp]
	\footnotesize
	\centering
	\caption{Empirical Size and Power for High-dimensional Covariates}
	\label{tab:Sim_Lasso}%
	\smallskip
	
	\setlength{\tabcolsep}{2pt}
	\renewcommand*{\arraystretch}{1.1}
	\begin{tabularx}{\linewidth}{@{\extracolsep{\fill}}lcccccccccccccccc}
		\hline
		& \multicolumn{8}{c}{Size} & \multicolumn{8}{c}{Power} \\ \cmidrule{2-9}\cmidrule{10-17}
		& \multicolumn{4}{c}{$N = 200$} & \multicolumn{4}{c}{$N = 400$} & \multicolumn{4}{c}{$N = 200$} & \multicolumn{4}{c}{$N = 400$} \\\cmidrule{2-5}\cmidrule{6-9}\cmidrule{10-13}\cmidrule{14-17}
		Cases & SRS   & WEI   & BCD   & SBR   & SRS   & WEI   & BCD   & SBR   & SRS   & WEI   & BCD   & SBR   & SRS   & WEI   & BCD   & SBR \\
		\hline
		\multicolumn{17}{c}{Panel A: NA}  \\
		$\tau = 0.25$ & 0.049 & 0.046 & 0.046 & 0.050 & 0.050 & 0.049 & 0.046 & 0.045 & 0.649 & 0.646 & 0.645 & 0.660 & 0.915 & 0.911 & 0.915 & 0.917 \\
		$\tau = 0.50$ & 0.047 & 0.044 & 0.045 & 0.043 & 0.042 & 0.044 & 0.044 & 0.046 & 0.732 & 0.732 & 0.726 & 0.736 & 0.955 & 0.960 & 0.960 & 0.960 \\
		$\tau = 0.75$ & 0.050 & 0.045 & 0.046 & 0.047 & 0.044 & 0.046 & 0.051 & 0.047 & 0.620 & 0.635 & 0.638 & 0.627 & 0.895 & 0.904 & 0.903 & 0.898 \\
		Diff  & 0.038 & 0.041 & 0.038 & 0.039 & 0.041 & 0.042 & 0.042 & 0.037 & 0.365 & 0.373 & 0.369 & 0.351 & 0.643 & 0.644 & 0.644 & 0.628 \\
		Uniform & 0.035 & 0.034 & 0.036 & 0.036 & 0.040 & 0.038 & 0.040 & 0.040 & 0.852 & 0.860 & 0.857 & 0.865 & 0.994 & 0.996 & 0.994 & 0.994 \\
		\multicolumn{17}{c}{Panel B: Post-Lasso}  \\
		$\tau = 0.25$ & 0.060 & 0.055 & 0.058 & 0.058 & 0.054 & 0.054 & 0.048 & 0.052 & 0.661 & 0.655 & 0.659 & 0.656 & 0.923 & 0.916 & 0.924 & 0.918 \\
		$\tau = 0.50$ & 0.056 & 0.054 & 0.056 & 0.052 & 0.048 & 0.050 & 0.051 & 0.049 & 0.739 & 0.744 & 0.728 & 0.741 & 0.960 & 0.964 & 0.963 & 0.963 \\
		$\tau = 0.75$ & 0.059 & 0.055 & 0.055 & 0.056 & 0.052 & 0.050 & 0.056 & 0.055 & 0.627 & 0.644 & 0.648 & 0.643 & 0.902 & 0.911 & 0.907 & 0.904 \\
		Diff  & 0.052 & 0.050 & 0.048 & 0.050 & 0.048 & 0.046 & 0.046 & 0.043 & 0.377 & 0.380 & 0.381 & 0.373 & 0.657 & 0.665 & 0.655 & 0.659 \\
		Uniform & 0.051 & 0.051 & 0.048 & 0.046 & 0.049 & 0.046 & 0.048 & 0.049 & 0.872 & 0.881 & 0.877 & 0.882 & 0.996 & 0.998 & 0.996 & 0.996 \\
		\multicolumn{17}{c}{Panel C: Oracle}  \\
		$\tau = 0.25$ & 0.048 & 0.043 & 0.045 & 0.048 & 0.048 & 0.047 & 0.040 & 0.046 & 0.668 & 0.663 & 0.660 & 0.660 & 0.925 & 0.921 & 0.929 & 0.922 \\
		$\tau = 0.50$ & 0.041 & 0.044 & 0.044 & 0.040 & 0.042 & 0.044 & 0.043 & 0.044 & 0.745 & 0.746 & 0.739 & 0.749 & 0.962 & 0.967 & 0.967 & 0.967 \\
		$\tau = 0.75$ & 0.049 & 0.044 & 0.046 & 0.045 & 0.046 & 0.046 & 0.050 & 0.047 & 0.640 & 0.652 & 0.648 & 0.649 & 0.907 & 0.916 & 0.914 & 0.911 \\
		Diff  & 0.052 & 0.050 & 0.048 & 0.050 & 0.041 & 0.041 & 0.042 & 0.038 & 0.387 & 0.390 & 0.392 & 0.385 & 0.661 & 0.663 & 0.656 & 0.655 \\
		Uniform & 0.041 & 0.044 & 0.044 & 0.040 & 0.040 & 0.038 & 0.040 & 0.041 & 0.873 & 0.881 & 0.883 & 0.883 & 0.997 & 0.998 & 0.997 & 0.997 \\
		\hline
	\end{tabularx}%
\end{table}%

\begin{table}[!htp]
	\footnotesize
	\centering
	\caption{Estimation Bias and Standard Errors for High-dimensional Covariates}
	\label{tab:LassoBias}%
	\smallskip
	
	\setlength{\tabcolsep}{1pt}
	\renewcommand*{\arraystretch}{1.1}
	\begin{tabularx}{\linewidth}{@{\extracolsep{\fill}}lcccccccccccccccc}
		\hline
		& \multicolumn{8}{c}{Bias} & \multicolumn{8}{c}{Standard Error} \\ \cmidrule{2-9}\cmidrule{10-17}
		& \multicolumn{4}{c}{$N = 200$} & \multicolumn{4}{c}{$N = 400$} & \multicolumn{4}{c}{$N = 200$} & \multicolumn{4}{c}{$N = 400$} \\\cmidrule{2-5}\cmidrule{6-9}\cmidrule{10-13}\cmidrule{14-17}
		Cases & SRS   & WEI   & BCD   & SBR   & SRS   & WEI   & BCD   & SBR   & SRS   & WEI   & BCD   & SBR   & SRS   & WEI   & BCD   & SBR \\
		\hline
		\multicolumn{17}{c}{Panel A: NA}  \\
		$\tau = 0.25$ & -0.009 & -0.006 & -0.001 & -0.033 & -0.005 & 0.000 & -0.005 & -0.018 & 0.652 & 0.649 & 0.647 & 0.651 & 0.456 & 0.455 & 0.454 & 0.456 \\
		$\tau = 0.50$ & 0.004 & 0.005 & 0.018 & -0.004 & 0.004 & -0.002 & -0.002 & -0.001 & 0.588 & 0.584 & 0.583 & 0.583 & 0.408 & 0.407 & 0.407 & 0.407 \\
		$\tau = 0.75$ & 0.022 & 0.006 & 0.016 & 0.028 & 0.017 & 0.001 & 0.007 & 0.016 & 0.652 & 0.651 & 0.650 & 0.648 & 0.457 & 0.456 & 0.454 & 0.456 \\
		Diff  & 0.011 & 0.022 & 0.012 & 0.067 & 0.007 & 0.008 & 0.013 & 0.032 & 0.922 & 0.917 & 0.916 & 0.917 & 0.644 & 0.642 & 0.641 & 0.643 \\
		Uniform & 0.008 & 0.000 & -0.004 & -0.002 & 0.000 & -0.001 & 0.000 & 0.008 & 0.624 & 0.621 & 0.620 & 0.621 & 0.436 & 0.435 & 0.435 & 0.435 \\
		\multicolumn{17}{c}{Panel B: Post-Lasso}  \\
		$\tau = 0.25$ & 0.004 & 0.005 & 0.005 & -0.004 & -0.002 & 0.004 & -0.003 & -0.002 & 0.639 & 0.636 & 0.633 & 0.637 & 0.446 & 0.445 & 0.445 & 0.446 \\
		$\tau = 0.50$ & 0.014 & 0.010 & 0.029 & 0.011 & 0.007 & 0.001 & 0.003 & 0.005 & 0.576 & 0.573 & 0.572 & 0.571 & 0.400 & 0.398 & 0.398 & 0.399 \\
		$\tau = 0.75$ & 0.039 & 0.021 & 0.026 & 0.026 & 0.023 & 0.008 & 0.014 & 0.015 & 0.639 & 0.638 & 0.636 & 0.635 & 0.447 & 0.446 & 0.445 & 0.446 \\
		Diff  & 0.013 & 0.025 & 0.008 & 0.039 & 0.010 & 0.007 & 0.018 & 0.012 & 0.907 & 0.901 & 0.900 & 0.902 & 0.632 & 0.630 & 0.630 & 0.631 \\
		Uniform & 0.020 & 0.011 & 0.006 & 0.010 & 0.004 & 0.005 & 0.006 & 0.014 & 0.611 & 0.609 & 0.608 & 0.608 & 0.427 & 0.426 & 0.426 & 0.426 \\
		\multicolumn{17}{c}{Panel C: Oracle}  \\
		$\tau = 0.25$ & -0.010 & -0.003 & -0.001 & -0.003 & -0.006 & -0.001 & -0.007 & -0.004 & 0.633 & 0.631 & 0.629 & 0.633 & 0.445 & 0.443 & 0.443 & 0.444 \\
		$\tau = 0.50$ & 0.001 & 0.003 & 0.022 & 0.006 & 0.002 & -0.003 & -0.002 & 0.004 & 0.572 & 0.569 & 0.568 & 0.568 & 0.397 & 0.396 & 0.396 & 0.396 \\
		$\tau = 0.75$ & 0.022 & 0.008 & 0.020 & 0.022 & 0.018 & 0.001 & 0.009 & 0.013 & 0.637 & 0.635 & 0.633 & 0.632 & 0.446 & 0.445 & 0.443 & 0.445 \\
		Diff  & 0.017 & 0.026 & 0.013 & 0.040 & 0.008 & 0.009 & 0.016 & 0.014 & 0.900 & 0.894 & 0.894 & 0.895 & 0.629 & 0.628 & 0.627 & 0.628 \\
		Uniform & 0.007 & 0.002 & -0.002 & 0.004 & -0.001 & 0.000 & 0.000 & 0.011 & 0.607 & 0.604 & 0.603 & 0.604 & 0.425 & 0.424 & 0.424 & 0.424 \\
		\hline
	\end{tabularx}%
\end{table}%

\section{Additional Notation}
\label{sec:notation}
Throughout the supplement the collection $(\xi_i^s,X_i^s,Y_i^s(1),Y_i^s(0))_{i \in [n]}$ denotes an i.i.d. sequence with marginal distribution equal to the conditional distribution of $(\xi_i,X_i,Y_i(1),Y_i(0))$ given $S_i=s$. In addition, $\{(\xi_i^s,X_i^s,Y_i^s(1),Y_i^s(0))_{i \in [n]}\}_{s \in \mathcal{S}}$ are independent across $s$ and with $\{A_i,S_i\}_{i \in [n]}$. We further denote $\mathcal{F}$ as a generic class of functions which differs in different contexts. The envelope of $\mathcal{F}$ is denoted as $F$. We say $\mathcal{F}$ is of VC-type with coefficients $(\alpha_n,v_n)$ if 
\begin{align*}
\sup_Q N\left(\mathcal{F},e_Q,\eps||F||_{Q,2}\right) \leq \left(\frac{\alpha_n}{\eps}\right)^{v_n}, \quad \forall \eps \in (0,1],
\end{align*}
where $N(\cdot)$ denote the  covering number, $e_Q(f,g) = ||f-g||_{Q,2}$, and the supremum is taken over all finitely discrete probability measures.

\section{Proof of Theorem \ref{thm:est}}
\label{sec:thm_est_pf}
\begin{table}[H]
	%\label{Table:constants}
	\begin{adjustbox}{max width=\textwidth}
		\begin{tabular}{c|l}
			\hline
			Name & Description \\ \hline
			$n_a(s)$ & For $a \in \{0,1\}, s\in \mathcal{S}$, $n_a(s)$ is the number of individuals with $A_i = a$ in stratum $s \in \mathcal{S}$ \\ 
			$n(s)$ & For $s\in \mathcal{S}$, $n(s)$ is the number of individuals in stratum $s \in \mathcal{S}$ \\
			$\hat{\pi}(s)$ & For $s \in \mathcal{S}$, $\hat{\pi}(s) = n_1(s)/n(s)$ \\
			$\hat{q}_a^{adj}(\tau)$ & For $a \in \{0,1\}$ and $\tau \in \Upsilon$, $\hat{q}_a^{adj}(\tau)$ is the regression-adjusted estimator of $q_a(\tau)$ with a generic regression adjustment \\
			$m_a(\tau,s,x)$ & For $a \in \{0,1\}$,  $s \in \mathcal{S}$, $\tau \in \Upsilon$, and $x \in \Supp(X)$, $m_a(\tau,s,x) = \tau - \mathbb{P}(Y_i(a)\leq q_a(\tau)|S_i=s,X_i=x)$ is the true specification \\
			$\overline{m}_a(\tau,s,x)$ & For $a \in \{0,1\}$,  $s \in \mathcal{S}$, $\tau \in \Upsilon$, and $x \in \Supp(X)$, $\overline{m}_a(\tau,s,x)$ is the model for $m_a(\tau,s,x)$ specified by researchers \\
			$m_a(\tau,s)$ & For $a \in \{0,1\}$,  $s \in \mathcal{S}$, and $\tau \in \Upsilon$, $m_a(\tau,s)=\mathbb{E}(m_a(\tau,S_i,X_i)|S_i=s)$ \\
			$\overline{m}_a(\tau,s)$ & For $a \in \{0,1\}$,  $s \in \mathcal{S}$, and $\tau \in \Upsilon$,
			$\overline{m}_a(\tau,s)=\mathbb{E}(\overline{m}_a(\tau,S_i,X_i)|S_i=s)$ \\
			$\eta_{i,a}(\tau,s)$ & For $a \in \{0,1\}$,  $s \in \mathcal{S}$, and $\tau \in \Upsilon$,, $\eta_{i,a}(\tau,s) = \tau - 1\{Y_i(a) \leq q_a(\tau)\} - m_a(\tau,s)$\\
			$f_a(\cdot)$ & For $a \in \{0,1\}$, $f_a(\cdot)$ denotes the density of $Y(a)$ \\
			$D_n(s)$ & For $s \in \mathcal{S}$, $D_n(s) = \sum_{i=1}^n (A_i - \pi(s))1\{S_i=s\}$ denotes the imbalance in stratum $s$  \\
			$\overline{\Delta}_a(\tau,s,x)$ & For $a \in \{0,1\}$,  $s \in \mathcal{S}$, $\tau \in \Upsilon$,
			and $x \in \Supp(X)$, $\overline{\Delta}_a(\tau,s,x)= \widehat{m}_a(\tau,s,x) - \overline{m}_a(\tau,s,x)$  \\
			\hline
		\end{tabular}
	\end{adjustbox}
	%	\caption{Table of notation}
\end{table}
We first derive the linear expansion of $\hat{q}_1^{adj}(\tau)$. By Knight's identity (\citep{K98}), we have
\begin{align*}
L_n(u,\tau) = & \sum_{i =1}^n\left[\frac{A_i}{\hat{\pi}(S_i)}\left[\rho_\tau(Y_i - q_1(\tau) - u/\sqrt{n})-\rho_\tau(Y_i - q_1(\tau))\right] + \frac{(A_i - \hat{\pi}(S_i))}{\hat{\pi}(S_i)\sqrt{n}}\widehat{m}_1(\tau,S_i,X_i)u\right] \\
\equiv & -L_{1,n}(\tau)u + L_{2,n}(u,\tau),
\end{align*}
where 
\begin{align*}
L_{1,n}(\tau) = \frac{1}{\sqrt{n}}\sum_{i =1}^n \left[\frac{A_i}{\hat{\pi}(S_i)}(\tau - 1\{Y_i \leq q_1(\tau)\})-  \frac{(A_i - \hat{\pi}(S_i))}{\hat{\pi}(S_i)}\widehat{m}_1(\tau,S_i,X_i)\right]
\end{align*}
and 
\begin{align*}
L_{2,n}(\tau) = \sum_{i =1}^n \frac{A_i}{\hat{\pi}(S_i)}\int_0^{\frac{u}{\sqrt{n}}}\left(1\{Y_i \leq q_1(\tau)+v\}-1\{Y_i \leq q_1(\tau)\}  \right)dv.
\end{align*}
By change of variables, we have 
\begin{align*}
\sqrt{n}(\hat{q}_1^{adj}(\tau)-q_1(\tau)) = \argmin_uL_n(u,\tau). 
\end{align*}

Note that $L_{2,n}(\tau)$ is exactly the same as that considered in the proof of Theorem 3.2 in \cite{ZZ20} and by their result we have
\begin{align*}
\sup_{\tau \in \Upsilon}\left|L_{2,n}(\tau) - \frac{f_1(q_1(\tau))u^2}{2}\right| = o_p(1). 
\end{align*}
Next, consider $L_{1,n}(\tau)$. Denote $m_1(\tau,s) = \mathbb{E}(m_1(\tau,S_i,X_i)|S_i=s)$, $\eta_{i,1}(\tau,s) = \tau - 1\{Y_i \leq q_1(\tau)\} - m_1(\tau,s)$, and 
\begin{align*}
L_{1,n}(\tau) = & \frac{1}{\sqrt{n}}\sum_{i =1}^n \left[\frac{A_i}{\hat{\pi}(S_i)}(\tau - 1\{Y_i \leq q_1(\tau)\})\right]-  \frac{1}{\sqrt{n}}\sum_{i =1}^n \left[\frac{(A_i - \hat{\pi}(S_i))}{\hat{\pi}(S_i)}\widehat{m}_1(\tau,S_i,X_i)\right] \\
\equiv & L_{1,1,n}(\tau) - L_{1,2,n}(\tau).
\end{align*}
First, note that $\hat{\pi}(s) - \pi(s) = \frac{D_n(s)}{n(s)}$. Therefore, 
\begin{align}
L_{1,1,n}(\tau) & = \frac{1}{\sqrt{n}}\sum_{i=1}^n \sum_{s \in \mathcal{S}}\frac{A_i}{\pi(s)}1\{S_i=s\}(\tau - 1\{Y_i(1) \leq q_1(\tau)\}) \notag \\
& - \sum_{i=1}^n \sum_{s \in \mathcal{S}}\frac{A_i1\{S_i = s\}(\hat{\pi}(s) - \pi(s))}{\sqrt{n}\hat{\pi}(s)\pi(s)}(\tau - 1\{Y_i(1) \leq q_1(\tau)\}) \notag \\
& = \frac{1}{\sqrt{n}}\sum_{i=1}^n \sum_{s \in \mathcal{S}}\frac{A_i}{\pi(s)}1\{S_i=s\}(\tau - 1\{Y_i(1) \leq q_1(\tau)\}) \notag \\
& - \sum_{i=1}^n \sum_{s \in \mathcal{S}}\frac{A_i1\{S_i = s\}D_n(s)}{n(s)\sqrt{n}\hat{\pi}(s)\pi(s)}\eta_{i,1}(\tau,s) -  \sum_{s \in \mathcal{S}} \frac{D_n(s)m_1(\tau,s)}{n(s)\sqrt{n}\hat{\pi}(s)\pi(s)} D_n(s) - \sum_{s \in \mathcal{S}} \frac{D_n(s)m_1(\tau,s)}{\sqrt{n}\hat{\pi}(s)} \notag \\ 
& = \sum_{s \in \mathcal{S}} \frac{1}{\sqrt{n}}\sum_{i =1}^n \frac{A_i 1\{S_i = s\}}{\pi(s)}\eta_{i,1}(\tau,s) + \sum_{s \in \mathcal{S}} \frac{D_n(s)}{\sqrt{n}\pi(s)}m_1(\tau,s) + \sum_{i=1}^n \frac{m_1(\tau,S_i)}{\sqrt{n}} \notag \\
& - \sum_{i=1}^n \sum_{s \in \mathcal{S}}\frac{A_i1\{S_i = s\}D_n(s)}{n(s)\sqrt{n}\hat{\pi}(s)\pi(s)}\eta_{i,1}(\tau,s) -  \sum_{s \in \mathcal{S}} \frac{D_n(s)m_1(\tau,s)}{n(s)\sqrt{n}\hat{\pi}(s)\pi(s)} D_n(s) - \sum_{s \in \mathcal{S}} \frac{D_n(s)m_1(\tau,s)}{\sqrt{n}\hat{\pi}(s)} \notag \\
& = \sum_{s \in \mathcal{S}} \frac{1}{\sqrt{n}}\sum_{i =1}^n \frac{A_i 1\{S_i = s\}}{\pi(s)}\eta_{i,1}(\tau,s)  + \sum_{i=1}^n \frac{m_1(\tau,S_i)}{\sqrt{n}}+ R_{1,1}(\tau),
\label{eq:L11}
\end{align}
where 
\begin{align*}
R_{1,1}(\tau) = & - \sum_{i=1}^n \sum_{s \in \mathcal{S}}\frac{A_i1\{S_i = s\}D_n(s)}{n(s)\sqrt{n}\hat{\pi}(s)\pi(s)}\eta_{i,1}(\tau,s) -  \sum_{s \in \mathcal{S}} \frac{D_n(s)m_1(\tau,s)}{n(s)\sqrt{n}\hat{\pi}(s)\pi(s)} D_n(s) \\
& + \sum_{s \in \mathcal{S}} \frac{D_n(s)m_1(\tau,s)}{\sqrt{n}}\left(\frac{1}{\pi(s)} - \frac{1}{\hat{\pi}(s)}\right) \\
& = - \sum_{i=1}^n \sum_{s \in \mathcal{S}}\frac{A_i1\{S_i = s\}D_n(s)}{n(s)\sqrt{n}\hat{\pi}(s)\pi(s)}\eta_{i,1}(\tau,s).
\end{align*}
In addition, note that 
\begin{align*}
\{\tau - 1\{Y_i(1) \leq q_1(\tau)\} - m_1(\tau,S_i):\tau \in \Upsilon\}
\end{align*}
is of the VC-type with fixed coefficients $(\alpha,v)$ and bounded envelope,  and $\mathbb{E}(\tau - 1\{Y_i(1) \leq q_1(\tau)\} - m_1(\tau,S_i)|S_i=s) = 0$. Therefore, Lemma \ref{lem:max_eq} implies 
\begin{align*}
\sup_{\tau \in \Upsilon,s\in \mathcal{S}}\left|\frac{1}{\sqrt{n}}\sum_{i =1}^n A_i 1\{S_i=s\}\eta_{i,1}(\tau,s)\right|  = O_p(1).
\end{align*}
By Assumption \ref{ass:assignment1} we have $\max_{s \in \mathcal{S}}|D_n(s)/n(s)| = o_p(1)$, $\max_{s \in \mathcal{S}}|\hat{\pi}(s) - \pi(s)| = o_p(1)$, and $\min_{s \in \mathcal{S}} \pi(s)>c>0$, which imply $\sup_{\tau \in \Upsilon}|R_{1,1}(\tau)| = o_p(1)$. 

Next, denote $\overline{m}_1(\tau,s) = \mathbb{E}(\overline{m}_1(\tau,s,X_i)|S_i=s)$. Then
\begin{align}
L_{1,2,n} & = \frac{1}{\sqrt{n}} \sum_{s \in \mathcal{S}} \sum_{i =1}^n\frac{A_i}{\hat{\pi}(s)} \overline{m}_1(\tau,s,X_i)1\{S_i=s\} - \frac{1}{\sqrt{n}}\sum_{i =1}^n  \overline{m}_1(\tau,S_i,X_i) \notag \\
& + \frac{1}{\sqrt{n}} \sum_{s \in \mathcal{S}} \frac{1}{\hat{\pi}(s)}\sum_{i =1}^n(A_i - \hat{\pi}(s))\left(\widehat{m}_1(\tau,s,X_i) - \overline{m}_1(\tau,s,X_i)\right)1\{S_i=s\} \notag \\
& =  \frac{1}{\sqrt{n}} \sum_{s \in \mathcal{S}} \sum_{i =1}^n\frac{A_i}{\hat{\pi}(s)} (\overline{m}_1(\tau,s,X_i) - \overline{m}_1(\tau,s))1\{S_i=s\} \notag \\
& - \frac{1}{\sqrt{n}}\sum_{i =1}^n  \left(\overline{m}_1(\tau,S_i,X_i)-\overline{m}_1(\tau,S_i)\right) \notag \\
& + \frac{1}{\sqrt{n}} \sum_{s \in \mathcal{S}} \frac{1}{\hat{\pi}(s)}\sum_{i =1}^n(A_i - \hat{\pi}(s))\left(\widehat{m}_1(\tau,s,X_i) - \overline{m}_1(\tau,s,X_i)\right)1\{S_i=s\} \notag \\
& =  \frac{1}{\sqrt{n}} \sum_{s \in \mathcal{S}} \sum_{i =1}^n\frac{A_i}{\pi(s)} (\overline{m}_1(\tau,s,X_i) - \overline{m}_1(\tau,s))1\{S_i=s\} \notag \\
& - \frac{1}{\sqrt{n}}\sum_{i =1}^n  \left(\overline{m}_1(\tau,S_i,X_i)-\overline{m}_1(\tau,S_i)\right) \notag \\
& + \frac{1}{\sqrt{n}} \sum_{s \in \mathcal{S}} \left(\frac{\pi(s) - \hat{\pi}(s)}{\hat{\pi}(s)\pi(s)}\right)\left(\sum_{i =1}^{n}A_i (\overline{m}_1(\tau,s,X_i) - \overline{m}_1(\tau,s))1\{S_i=s\}\right) \notag \\
& + \frac{1}{\sqrt{n}} \sum_{s \in \mathcal{S}} \frac{1}{\hat{\pi}(s)}\sum_{i =1}^n(A_i - \hat{\pi}(s))\left(\widehat{m}_1(\tau,s,X_i) - \overline{m}_1(\tau,s,X_i)\right)1\{S_i=s\} \notag \\
& \equiv \frac{1}{\sqrt{n}} \sum_{s \in \mathcal{S}} \sum_{i =1}^n\frac{A_i}{\pi(s)} (\overline{m}_1(\tau,s,X_i) - \overline{m}_1(\tau,s))1\{S_i=s\} \notag \\
& - \frac{1}{\sqrt{n}}\sum_{i =1}^n  \left(\overline{m}_1(\tau,S_i,X_i)-\overline{m}_1(\tau,S_i)\right)  + R_2(\tau),
\label{eq:L12}
\end{align}
where the second equality holds because 
\begin{align*}
\sum_{s \in \mathcal{S}}\sum_{i\in[n]}\frac{A_i}{\hat{ \pi}(s)}\overline{m}_1(\tau,s)1\{S_i=s\} = \sum_{s \in \mathcal{S}}n(s)\overline{m}_1(\tau,s) = \sum_{i =1}^n\overline{m}_1(\tau,S_i). 
\end{align*}

For the first term of $R_2(\tau)$, we have
\begin{align*}
&\sup_{\tau \in \Upsilon}\left|\frac{1}{\sqrt{n}} \sum_{s \in \mathcal{S}} \left(\frac{\pi(s) - \hat{\pi}(s)}{\hat{\pi}(s)\pi(s)}\right)\left(\sum_{i =1}^nA_i (\overline{m}_1(\tau,s,X_i) - \overline{m}_1(\tau,s))1\{S_i=s\}\right)\right|\\
& \leq \sum_{s \in \mathcal{S}} \left|\frac{D_n(s)}{n_1(s) \pi(s)}\right|\sup_{\tau \in \Upsilon,s\in \mathcal{S}}\left|\frac{1}{\sqrt{n}}\sum_{i =1}^nA_i1\{S_i=s\}(\overline{m}_1(\tau,s,X_i) - \overline{m}_1(\tau,s))\right|.
\end{align*}
Assumption \ref{ass:mhat} implies
\begin{align*}
\mathcal{F} = \{ \overline{m}_1(\tau,s,X_i) - \overline{m}_1(\tau,s) : \tau \in \Upsilon\}
\end{align*}
is of the VC-type with fixed coefficients $(\alpha,v)$ and an envelope $F_i$ such that $\mathbb{E}(|F_i|^d|S_i=s)<\infty$ for $d>2$. Therefore, 
\begin{align*}
\sup_{\tau \in \Upsilon,s\in \mathcal{S}}\left|\frac{1}{\sqrt{n}}\sum_{i =1}^nA_i1\{S_i=s\}(\overline{m}_1(\tau,s,X_i) - \overline{m}_1(\tau,s))\right| = O_p(n^{-1/2}). 
\end{align*}
It is also assumed that $D_n(s)/n(s) = o_p(1)$ and $n(s)/n_1(s) \convP 1/\pi(s)<\infty$. Therefore, we have
\begin{align*}
\sup_{\tau \in \Upsilon}\left|\frac{1}{\sqrt{n}} \sum_{s \in \mathcal{S}} \left(\frac{\pi(s) - \hat{\pi}(s)}{\hat{\pi}(s)\pi(s)}\right)\left(\sum_{i =1}^nA_i (\overline{m}_1(\tau,s,X_i) - \overline{m}_1(\tau,s))1\{S_i=s\}\right)\right| = o_p(1). 
\end{align*}

Recall $\overline{\Delta}_1(\tau,s,X_i) = \widehat{m}_1(\tau,s,X_i) -\overline{m}_1(\tau,s,X_i)$. Then, for the second term of $R_2(\tau)$, we have 
\begin{align*}
& \left|\frac{1}{\sqrt{n}} \sum_{s \in \mathcal{S}} \frac{1}{\hat{\pi}(s)}\sum_{i =1}^n(A_i - \hat{\pi}(s))\overline{\Delta}_1(\tau,s,X_i)1\{S_i=s\}\right| \\
= & \frac{1}{\sqrt{n}}\sum_{s \in \mathcal{S}}n_0(s)\sup_{\tau \in \Upsilon}\biggl|\frac{\sum_{i\in I_1(s)}\overline{\Delta}_1(\tau,s,X_i)}{n_1(s)} - \frac{\sum_{i \in I_0(s)}\overline{\Delta}_1(\tau,s,X_i)}{n_0(s)}\biggr| = o_p(1),
\end{align*}
where the last equality holds by Assumption \ref{ass:mhat}(i). Therefore, we have
\begin{align*}
\sup_{\tau \in \Upsilon}|R_{1,2}(\tau)| = o_p(1).
\end{align*} 

Combining  \eqref{eq:L11} and \eqref{eq:L12}, we have
\begin{align*}
L_{1,n}(\tau) & = \sum_{s \in \mathcal{S}} \frac{1}{\sqrt{n}}\sum_{i =1}^n A_i 1\{S_i = s\}\left[\frac{\eta_{i,1}(\tau,s)}{\pi(s)} + \left(1-\frac{1}{\pi(s)}\right)\left(\overline{m}_1(\tau,s,X_i)-\overline{m}_1(\tau,s)\right) \right] \\
&+ \sum_{s \in \mathcal{S}} \frac{1}{\sqrt{n}}\sum_{i =1}^n (1-A_i) 1\{S_i = s\}\left(\overline{m}_1(\tau,s,X_i)-\overline{m}_1(\tau,s)\right) \\
& + \sum_{i=1}^n \frac{m_1(\tau,S_i)}{\sqrt{n}} + R_{1,1}(\tau) - R_{1,2}(\tau). 
\end{align*}
Note by Assumption \ref{ass:mhat} that the classes of functions
\begin{align*}
\left\{\frac{\eta_{i,1}(\tau,s)}{\pi(s)} + \left(1-\frac{1}{\pi(s)}\right)\left(\overline{m}_1(\tau,s,X_i)-\overline{m}_1(\tau,s)\right):\tau \in \Upsilon \right\}
\end{align*}
and 
\begin{align*}
\left\{\overline{m}_1(\tau,s,X_i)-\overline{m}_1(\tau,s):\tau \in \Upsilon  \right\}
\end{align*}
are of the VC-type with fixed coefficients and envelopes belonging to $L_{\mathbb{P},d}$. In addition, 
\begin{align*}
\mathbb{E}\left[\frac{\eta_{i,1}(\tau,s)}{\pi(s)} + \left(1-\frac{1}{\pi(s)}\right)\left(\overline{m}_1(\tau,s,X_i)-\overline{m}_1(\tau,s)\right)\bigg|S_i=s \right] = 0
\end{align*}
and 
\begin{align*}
\mathbb{E}(\overline{m}_1(\tau,s,X_i)-\overline{m}_1(\tau,s)|S_i=s) = 0. 
\end{align*}

Therefore, Lemma \ref{lem:max_eq} implies, 
\begin{align*}
\sup_{\tau \in \Upsilon}\left|\sum_{s \in \mathcal{S}} \frac{1}{\sqrt{n}}\sum_{i =1}^n A_i 1\{S_i = s\}\left[\frac{\eta_{i,1}(\tau,s)}{\pi(s)} + \left(1-\frac{1}{\pi(s)}\right)\left(\overline{m}_1(\tau,s,X_i)-\overline{m}_1(\tau,s)\right) \right]\right| = O_p(1)
\end{align*} 
and 
\begin{align*}
\sup_{\tau \in \Upsilon}\left|\sum_{s \in \mathcal{S}} \frac{1}{\sqrt{n}}\sum_{i =1}^n (1-A_i) 1\{S_i = s\}\left(\overline{m}_1(\tau,s,X_i)-\overline{m}_1(\tau,s)\right) \right| = O_p(1). 
\end{align*}
This implies $\sup_{\tau \in \Upsilon}|L_{1,n}(\tau)| = O_p(1)$. Then by \citet[Theorem 2]{K09}, we have 
\begin{align*}
& \sqrt{n}(\hat{q}^{adj}_1(\tau)-q_1(\tau))\\
&= \frac{1}{f_1(q_1(\tau))}\biggl\{\sum_{s \in \mathcal{S}} \frac{1}{\sqrt{n}}\sum_{i =1}^n A_i 1\{S_i = s\}\left[\frac{\eta_{i,1}(\tau,s)}{\pi(s)} + \left(1-\frac{1}{\pi(s)}\right)\left(\overline{m}_1(\tau,s,X_i)-\overline{m}_1(\tau,s)\right) \right] \\
&+ \sum_{s \in \mathcal{S}} \frac{1}{\sqrt{n}}\sum_{i =1}^n (1-A_i) 1\{S_i = s\}\left(\overline{m}_1(\tau,s,X_i)-\overline{m}_1(\tau,s)\right) + \sum_{i=1}^n \frac{m_1(\tau,S_i)}{\sqrt{n}}\biggr\} + R_{q,1}(\tau),
\end{align*}
where $\sup_{\tau \in \Upsilon}|R_{q,1}(\tau)| = o_p(1)$. 
Similarly, we have
\begin{align*}
& \sqrt{n}(\hat{q}^{adj}_0(\tau)-q_0(\tau))\\
= &\frac{1}{f_0(q_0(\tau))}\biggl\{\sum_{s \in \mathcal{S}} \frac{1}{\sqrt{n}}\sum_{i =1}^n (1-A_i) 1\{S_i = s\}\left[\frac{\eta_{i,0}(\tau,s)}{1-\pi(s)} + \left(1-\frac{1}{1-\pi(s)}\right)\left(\overline{m}_0(\tau,s,X_i)-\overline{m}_0(\tau,s)\right) \right] \\
&+ \sum_{s \in \mathcal{S}} \frac{1}{\sqrt{n}}\sum_{i =1}^n A_i 1\{S_i = s\}\left(\overline{m}_0(\tau,s,X_i)-\overline{m}_0(\tau,s)\right) + \sum_{i=1}^n \frac{m_0(\tau,S_i)}{\sqrt{n}}\biggr\} + R_{q,0}(\tau),
\end{align*}
where $\eta_{i,0}(\tau,s) = \tau-1\{Y_i(0)\leq q_0(\tau)\} - m_0(\tau,s)$ and $\sup_{\tau\in \Upsilon}|R_{q,0}(\tau)| = o_p(1)$. Taking the difference of the above two displays gives 
\begin{align*}
& \sqrt{n}(\hat{q}^{adj}(\tau) - q(\tau)) \\
&=  \sum_{s \in \mathcal{S}}\frac{1}{\sqrt{n}}\sum_{i =1}^nA_i1\{S_i=s\}\left[ \frac{\eta_{i,1}(\tau,s) -(1-\pi(s))\left(\overline{m}_1(\tau,s,X_i)-\overline{m}_1(\tau,s)\right) }{\pi(s) f_1(q_1(\tau))} - \frac{\left( \overline{m}_0(\tau,s,X_i)-\overline{m}_0(\tau,s)\right)}{f_0(q_0(\tau))}  \right] \\
&- \sum_{s \in \mathcal{S}}\frac{1}{\sqrt{n}}\sum_{i =1}^n(1-A_i)1\{S_i=s\}\left[ \frac{\eta_{i,0}(\tau,s) - \pi(s) \left(\overline{m}_0(\tau,s,X_i)-\overline{m}_0(\tau,s)\right)}{(1-\pi(s))f_0(q_0(\tau))} - \frac{\left(\overline{m}_1(\tau,s,X_i)-\overline{m}_1(\tau,s)\right)}{f_1(q_1(\tau))} \right] \\
& + \frac{1}{\sqrt{n}} \sum_{i =1}^n\left( \frac{m_1(\tau,S_i)}{f_1(q_1(\tau))} -  \frac{m_0(\tau,S_i)}{f_0(q_0(\tau))}\right) + R_q(\tau)\\
&\equiv \sum_{s \in \mathcal{S}}\frac{1}{\sqrt{n}}\sum_{i =1}^nA_i1\{S_i=s\} \phi_1(\tau,s,Y_i(1),X_i) - \sum_{s \in \mathcal{S}}\frac{1}{\sqrt{n}}\sum_{i =1}^n(1-A_i)1\{S_i=s\}\phi_0(\tau,s,Y_i(0),X_i) \\
& + \frac{1}{\sqrt{n}} \sum_{i =1}^n\phi_s(\tau,S_i) + R_q(\tau),
\end{align*}
where $\sup_{\tau \in \Upsilon}|R_q(\tau)| = o_p(1)$.  Lemma \ref{lem:w_est} shows that, uniformly over $\tau \in \Upsilon$,  
\begin{align*}
\sqrt{n}(\hat{q}^{adj}(\tau) - q(\tau)) \convD \mathcal{B}(\tau),
\end{align*}
where $\mathcal{B}(\tau)$ is a Gaussian process with covariance kernel
\begin{align*}
\Sigma(\tau,\tau')& = \mathbb{E} \pi(S_i)\phi_1(\tau,S_i,Y_i(1),X_i)\phi_1(\tau',S_i,Y_i(1),X_i)  \\
& + \mathbb{E} (1-\pi(S_i))\phi_0(\tau,S_i,Y_i(0),X_i)\phi_0(\tau',S_i,Y_i(0),X_i) \\
& + \mathbb{E}\phi_s(\tau,S_i)\phi_s(\tau',S_i). 
\end{align*}

For the second result in Theorem \ref{thm:est}, we denote
\begin{align}
\delta_{a}(\tau,S_i,X_i) = m_a(\tau,S_i,X_i) - m_a(\tau,S_i) \quad \text{and} \quad \overline{\delta}_{a}(\tau,S_i,X_i) = (\overline{m}_a(\tau,S_i,X_i) - \overline{m}_a(\tau,S_i)),~a=0,1.
\label{eq:delta}
\end{align}

Then  
\begin{align*}
& \mathbb{E}\pi(S_i)\phi_1(\tau,S_i,Y_i(1),X_i)\phi_1(\tau',S_i,Y_i(1),X_i) \\
&= \mathbb{E}\frac{1}{\pi(S_i)}\left[\frac{(\tau - 1\{Y_i(1) \leq q_1(\tau)\} - m_1(\tau,S_i,X_i))}{f_1(q_1(\tau))} \right]\left[\frac{(\tau' - 1\{Y_i(1) \leq q_1(\tau')\} - m_1(\tau',S_i,X_i))}{f_1(q_1(\tau'))} \right] \\
& + \mathbb{E}\pi(S_i)\left[\frac{\delta_{1}(\tau,S_i,X_i) - \overline{\delta}_{1}(\tau,S_i,X_i) }{\pi(S_i)f_1(q_1(\tau))} + \left(\frac{\overline{\delta}_{1}(\tau,S_i,X_i) }{f_1(q_1(\tau))} - \frac{\overline{\delta}_{0}(\tau,S_i,X_i) }{f_0(q_0(\tau))} \right) \right] \\
& \times \left[\frac{\delta_{1}(\tau',S_i,X_i) - \overline{\delta}_{1}(\tau',S_i,X_i) }{\pi(S_i)f_1(q_1(\tau'))} + \left(\frac{\overline{\delta}_{1}(\tau',S_i,X_i) }{f_1(q_1(\tau'))} - \frac{\overline{\delta}_{0}(\tau',S_i,X_i) }{f_0(q_0(\tau'))} \right) \right],
\end{align*}
\begin{align*}
& \mathbb{E}(1-\pi(S_i))\phi_0(\tau,S_i,Y_i(1),X_i)\phi_0(\tau',S_i,Y_i(1),X_i) \\
&= \mathbb{E}\frac{1}{1-\pi(S_i)}\left[\frac{(\tau - 1\{Y_i(0) \leq q_0(\tau)\} - m_0(\tau,S_i,X_i))}{f_1(q_1(\tau))} \right]\left[\frac{(\tau' - 1\{Y_i(0) \leq q_0(\tau')\} - m_0(\tau',S_i,X_i))}{f_1(q_1(\tau'))} \right] \\
& + \mathbb{E}(1-\pi(S_i))\left[\frac{\delta_{0}(\tau,S_i,X_i) - \overline{\delta}_{0}(\tau,S_i,X_i) }{(1-\pi(S_i))f_0(q_0(\tau))} - \left(\frac{\overline{\delta}_{1}(\tau,S_i,X_i) }{f_1(q_1(\tau))} - \frac{\overline{\delta}_{0}(\tau,S_i,X_i) }{f_0(q_0(\tau))} \right) \right] \\
& \times \left[\frac{\delta_{0}(\tau',S_i,X_i) - \overline{\delta}_{0}(\tau',S_i,X_i) }{(1-\pi(S_i))f_0(q_0(\tau'))} - \left(\frac{\overline{\delta}_{1}(\tau',S_i,X_i) }{f_1(q_1(\tau'))} - \frac{\overline{\delta}_{0}(\tau',S_i,X_i) }{f_0(q_0(\tau'))} \right) \right],
\end{align*}
and 
\begin{align*}
\mathbb{E}\phi_s(\tau,S_i)\phi_s(\tau',S_i) &= \mathbb{E} \left( \frac{m_1(\tau,S_i)}{f_1(q_1(\tau))} -  \frac{m_0(\tau,S_i)}{f_0(q_0(\tau))}\right)\left( \frac{m_1(\tau',S_i)}{f_1(q_1(\tau'))} -  \frac{m_0(\tau',S_i)}{f_0(q_0(\tau'))}\right) \\
&= \mathbb{E} \left( \frac{m_1(\tau,S_i,X_i)}{f_1(q_1(\tau))} -  \frac{m_0(\tau,S_i,X_i)}{f_0(q_0(\tau))}\right)\left( \frac{m_1(\tau',S_i,X_i)}{f_1(q_1(\tau'))} -  \frac{m_0(\tau',S_i,X_i)}{f_0(q_0(\tau'))}\right) \\
& - \mathbb{E} \left( \frac{\delta_{1}(\tau,S_i,X_i)}{f_1(q_1(\tau))} -  \frac{\delta_{0}(\tau,S_i,X_i)}{f_0(q_0(\tau))}\right)\left( \frac{\delta_{1}(\tau',S_i,X_i)}{f_1(q_1(\tau'))} -  \frac{\delta_{0}(\tau',S_i,X_i)}{f_0(q_0(\tau'))}\right).
\end{align*}
Let 
\begin{align*}
& \Sigma^*(\tau,\tau') \\
&=  \mathbb{E}\frac{1}{\pi(S_i)}\left[\frac{(\tau - 1\{Y_i(1) \leq q_1(\tau)\} - m_1(\tau,S_i,X_i))}{f_1(q_1(\tau))} \right]\left[\frac{(\tau' - 1\{Y_i(1) \leq q_1(\tau')\} - m_1(\tau',S_i,X_i))}{f_1(q_1(\tau'))} \right] \\
& + \mathbb{E}\frac{1}{1-\pi(S_i)}\left[\frac{(\tau - 1\{Y_i(0) \leq q_0(\tau)\} - m_0(\tau,S_i,X_i))}{f_1(q_1(\tau))} \right]\left[\frac{(\tau' - 1\{Y_i(0) \leq q_0(\tau')\} - m_0(\tau',S_i,X_i))}{f_1(q_1(\tau'))} \right]\\
& +  \mathbb{E} \left( \frac{m_1(\tau,S_i,X_i)}{f_1(q_1(\tau))} -  \frac{m_0(\tau,S_i,X_i)}{f_0(q_0(\tau))}\right)\left( \frac{m_1(\tau',S_i,X_i)}{f_1(q_1(\tau'))} -  \frac{m_0(\tau',S_i,X_i)}{f_0(q_0(\tau'))}\right),
\end{align*}
which does not rely on the working models. Then,
\begin{align*}
& \Sigma(\tau,\tau') - \Sigma^*(\tau,\tau')  \\
&=  \mathbb{E}\pi(S_i)\left[\frac{\delta_{1}(\tau,S_i,X_i) - \overline{\delta}_{1}(\tau,S_i,X_i) }{\pi(S_i)f_1(q_1(\tau))} + \left(\frac{\overline{\delta}_{1}(\tau,S_i,X_i) }{f_1(q_1(\tau))} - \frac{\overline{\delta}_{0}(\tau,S_i,X_i) }{f_0(q_0(\tau))} \right) \right] \\
& \times \left[\frac{\delta_{1}(\tau',S_i,X_i) - \overline{\delta}_{1}(\tau',S_i,X_i) }{\pi(S_i)f_1(q_1(\tau'))} + \left(\frac{\overline{\delta}_{1}(\tau',S_i,X_i) }{f_1(q_1(\tau'))} - \frac{\overline{\delta}_{0}(\tau',S_i,X_i) }{f_0(q_0(\tau'))} \right) \right]\\
& + \mathbb{E}(1-\pi(S_i))\left[\frac{\delta_{0}(\tau,S_i,X_i) - \overline{\delta}_{0}(\tau,S_i,X_i) }{(1-\pi(S_i))f_0(q_0(\tau))} - \left(\frac{\overline{\delta}_{1}(\tau,S_i,X_i) }{f_1(q_1(\tau))} - \frac{\overline{\delta}_{0}(\tau,S_i,X_i) }{f_0(q_0(\tau))} \right) \right] \\
& \times \left[\frac{\delta_{0}(\tau',S_i,X_i) - \overline{\delta}_{0}(\tau',S_i,X_i) }{(1-\pi(S_i))f_0(q_0(\tau'))} - \left(\frac{\overline{\delta}_{1}(\tau',S_i,X_i) }{f_1(q_1(\tau'))} - \frac{\overline{\delta}_{0}(\tau',S_i,X_i) }{f_0(q_0(\tau'))} \right) \right] \\
& - \mathbb{E} \left( \frac{\delta_{1}(\tau,S_i,X_i)}{f_1(q_1(\tau))} -  \frac{\delta_{0}(\tau,S_i,X_i)}{f_0(q_0(\tau))}\right)\left( \frac{\delta_{1}(\tau',S_i,X_i)}{f_1(q_1(\tau'))} -  \frac{\delta_{0}(\tau',S_i,X_i)}{f_0(q_0(\tau'))}\right)\\
&= \mathbb{E}\left[\sqrt{\frac{1-\pi(S_i)}{\pi(S_i)}}\frac{\delta_{1}(\tau,S_i,X_i) - \overline{\delta}_{1}(\tau,S_i,X_i)}{f_1(q_1(\tau))} + \sqrt{\frac{\pi(S_i)}{1-\pi(S_i)}}\frac{\delta_{0}(\tau,S_i,X_i) -\overline{\delta}_{0}(\tau,S_i,X_i)}{f_0(q_0(\tau))}  \right] \\
& \times \left[\sqrt{\frac{1-\pi(S_i)}{\pi(S_i)}}\frac{\delta_{1}(\tau',S_i,X_i) - \overline{\delta}_{1}(\tau',S_i,X_i)}{f_1(q_1(\tau'))} + \sqrt{\frac{\pi(S_i)}{1-\pi(S_i)}}\frac{\delta_{0}(\tau',S_i,X_i) -\overline{\delta}_{0}(\tau',S_i,X_i)}{f_0(q_0(\tau'))}  \right] \\
&\equiv  \mathbb{E}u_i(\tau)u_i(\tau'),
\end{align*}
where $$u_i(\tau) = \sqrt{\frac{1-\pi(S_i)}{\pi(S_i)}}\frac{\left(\delta_{1}(\tau,S_i,X_i) - \overline{\delta}_{1}(\tau,S_i,X_i)\right)}{f_1(q_1(\tau))} + \sqrt{\frac{\pi(S_i)}{1-\pi(S_i)}}\frac{\left(\delta_{0}(\tau,S_i,X_i) -\overline{\delta}_{0}(\tau,S_i,X_i)\right)}{f_0(q_0(\tau))}.$$ 
Further, denote $\vec{u}_i = (a_i(\tau_1),\cdots,a_i(\tau_K))^\top$, the asymptotic variance covariance matrix of $(\hat{q}^{adj}(\tau_1),\cdots,\hat{q}^{adj}(\tau_K))$ as $[\Sigma_{kl}]_{k,l \in [K]}$, and the optimal variance covariance matrix as $[\Sigma^*_{kl}]_{k,l \in [K]}$. We have
\begin{align*}
[\Sigma_{kl}]_{k,l \in [K]} - [\Sigma^*_{kl}]_{k,l \in [K]} = [\mathbb{E}u_i(\tau_k)u_i(\tau_l)]_{k,l\in [K]} = \mathbb{E}\vec{u}_i\vec{u}_i^\top,
\end{align*}
which is positive semidefinite. In addition, $\mathbb{E}\vec{u}_i\vec{u}_i^\top = 0$ if $\overline{m}_a(\tau,s,x) = m_a(\tau,s,x)$ for $a = 0,1$, $\tau \in \{\tau_1,\cdots,\tau_K\}$, and $(s,x)$ in the joint support of $(S_i,X_i)$. This concludes the proof.

\section{Proof of Theorem \ref{thm:boot}}
\begin{table}[H]
	%\label{Table:constants}
	\begin{adjustbox}{max width=\textwidth}
		\begin{tabular}{c|l}
			\hline
			Name & Description \\ \hline
			$n^w_a(s)$ & For $a \in \{0,1\}$ and $s \in \mathcal{S}$, $n^w_a(s) = \sum_{i=1}^n \xi_i A_i 1\{S_i=s\}$, where $\{\xi_i\}_{i \in [n]}$ is the bootstrap weights \\ 
			$n^w(s)$ & For $s \in \mathcal{S}$, $n^w(s) = \sum_{i=1}^n \xi_i1\{S_i=s\}$ \\ 
			$\hat{\pi}^w(s)$ & For $s \in \mathcal{S}$, $\hat{\pi}(s) = n_1^w(s)/n^w(s)$ \\
			$\hat{q}_a^{w}(\tau)$ & For $a \in \{0,1\}$ and $\tau \in \Upsilon$, $\hat{q}_a^{w}(\tau)$ is the bootstrap estimator of $q_a(\tau)$ with a generic regression adjustment \\ 
			\hline
		\end{tabular}
	\end{adjustbox}
	%	\caption{Table of notation}
\end{table}

We focus on deriving the linear expansion of $\hat{q}^w_1(\tau).$ Let 
\begin{align*}
L_n^w(u,\tau) = & \sum_{i =1}^n\xi_i\left[\frac{A_i}{\hat{\pi}^w(S_i)}\left[\rho_\tau(Y_i - q_1(\tau) - u/\sqrt{n})-\rho_\tau(Y_i - q_1(\tau))\right] + \frac{(A_i - \hat{\pi}^w(S_i))}{\hat{\pi}^w(S_i)\sqrt{n}}\widehat{m}_1(\tau,S_i,X_i)u\right] \\
\equiv & -L_{1,n}^w(\tau)u + L_{2,n}^w(u,\tau),
\end{align*}
where 
\begin{align*}
L_{1,n}^w(\tau) = \frac{1}{\sqrt{n}}\sum_{i =1}^n \xi_i\left[\frac{A_i}{\hat{\pi}^w(S_i)}(\tau - 1\{Y_i \leq q_1(\tau)\})-  \frac{(A_i - \hat{\pi}^w(S_i))}{\hat{\pi}^w(S_i)}\widehat{m}_1(\tau,S_i,X_i)\right],
\end{align*}
and 
\begin{align*}
L_{2,n}^w(\tau) = \sum_{i =1}^n \frac{\xi_i A_i}{\hat{\pi}^w(S_i)}\int_0^{\frac{u}{\sqrt{n}}}\left(1\{Y_i \leq q_1(\tau)+v\}-1\{Y_i \leq q_1(\tau)\}  \right)dv.
\end{align*}

By change of variables, we have 
\begin{align*}
\sqrt{n}(\hat{q}_1^{w}(\tau)-q_1(\tau)) = \argmin_uL_n^w(u,\tau). 
\end{align*}

Note that $L_{2,n}^w(\tau)$ is exactly the same as that considered in the proof of Theorem 3.2 in \cite{ZZ20} and by their result we have
\begin{align*}
\sup_{\tau \in \Upsilon}\left|L_{2,n}^w(\tau) - \frac{f_1(q_1(\tau))u^2}{2}\right| = o_p(1). 
\end{align*}
Next consider $L_{1,n}^w(\tau)$. Recall $m_1(\tau,s) = \mathbb{E}(m_1(\tau,S_i,X_i)|S_i=s)$ and $\eta_{i,1}(\tau,s) = \tau - 1\{Y_i \leq q_1(\tau)\} - m_1(\tau,s)$. Denote
\begin{align*}
L_{1,n}^w(\tau) &=  \frac{1}{\sqrt{n}}\sum_{i =1}^n \xi_i\left[\frac{A_i}{\hat{\pi}^w(S_i)}(\tau - 1\{Y_i \leq q_1(\tau)\})\right]-  \frac{1}{\sqrt{n}}\sum_{i =1}^n \xi_i \left[\frac{(A_i - \hat{\pi}^w(S_i))}{\hat{\pi}^w(S_i)}\widehat{m}_1(\tau,S_i,X_i)\right] \\
&\equiv L_{1,1,n}^w(\tau) - L_{1,2,n}^w(\tau).
\end{align*}

First, note that 
\begin{align}
&L_{1,1,n}^w(\tau) = \frac{1}{\sqrt{n}}\sum_{i=1}^n \sum_{s \in \mathcal{S}}\frac{\xi_iA_i}{\pi(s)}1\{S_i=s\}(\tau - 1\{Y_i(1) \leq q_1(\tau)\}) \notag \\
& - \sum_{i=1}^n \sum_{s \in \mathcal{S}}\frac{\xi_iA_i1\{S_i = s\}(\hat{\pi}^w(s) - \pi(s))}{\sqrt{n}\hat{\pi}^w(s)\pi(s)}(\tau - 1\{Y_i(1) \leq q_1(\tau)\}) \notag \\
&= \frac{1}{\sqrt{n}}\sum_{i=1}^n \sum_{s \in \mathcal{S}}\frac{\xi_iA_i}{\pi(s)}1\{S_i=s\}(\tau - 1\{Y_i(1) \leq q_1(\tau)\}) \notag \\
& - \sum_{i=1}^n \sum_{s \in \mathcal{S}}\frac{\xi_iA_i1\{S_i = s\}D_n^w(s)}{n^w(s)\sqrt{n}\hat{\pi}^w(s)\pi(s)}\eta_{i,1}(\tau,s) -  \sum_{s \in \mathcal{S}} \frac{D_n^w(s)m_1(\tau,s)}{n^w(s)\sqrt{n}\hat{\pi}^w(s)\pi(s)} D_n^w(s) - \sum_{s \in \mathcal{S}} \frac{D_n^w(s)m_1(\tau,s)}{\sqrt{n}\hat{\pi}^w(s)} \notag \\ 
&= \sum_{s \in \mathcal{S}} \frac{1}{\sqrt{n}}\sum_{i =1}^n \frac{\xi_iA_i 1\{S_i = s\}}{\pi(s)}\eta_{i,1}(\tau,s) + \sum_{s \in \mathcal{S}} \frac{D^w_n(s)}{\sqrt{n}\pi(s)}m_1(\tau,s) + \sum_{i=1}^n \frac{\xi_im_1(\tau,S_i)}{\sqrt{n}} \notag \\
& - \sum_{i=1}^n \sum_{s \in \mathcal{S}}\frac{\xi_iA_i1\{S_i = s\}D_n^w(s)}{n^w(s)\sqrt{n}\hat{\pi}^w(s)\pi(s)}\eta_{i,1}(\tau,s) -  \sum_{s \in \mathcal{S}} \frac{D_n^w(s)m_1(\tau,s)}{n^w(s)\sqrt{n}\hat{\pi}^w(s)\pi(s)} D^w_n(s) - \sum_{s \in \mathcal{S}} \frac{D^w_n(s)m_1(\tau,s)}{\sqrt{n}\hat{\pi}^w(s)} \notag \\
&=  \sum_{s \in \mathcal{S}} \frac{1}{\sqrt{n}}\sum_{i =1}^n \frac{\xi_iA_i 1\{S_i = s\}}{\pi(s)}\eta_{i,1}(\tau,s)  + \sum_{i=1}^n \frac{\xi_im_1(\tau,S_i)}{\sqrt{n}}+ R_{1,1}^w(\tau),
\label{eq:L11w}
\end{align}
where $D_n^w(s) = \sum_{i =1}^n\xi_i(A_i - \pi(S_i))1\{S_i=s\} = (\pi^w(s) - \pi(s)) n^w(s)$, 
\begin{align*}
&R_{1,1}^w(\tau) = - \sum_{i=1}^n \sum_{s \in \mathcal{S}}\frac{\xi_iA_i1\{S_i = s\}D^w_n(s)}{n^w(s)\sqrt{n}\hat{\pi}^w(s)\pi(s)}\eta_{i,1}(\tau,s) -  \sum_{s \in \mathcal{S}} \frac{D^w_n(s)m_1(\tau,s)}{n^w(s)\sqrt{n}\hat{\pi}^w(s)\pi(s)} D^w_n(s) \\
& + \sum_{s \in \mathcal{S}} \frac{D^w_n(s)m_1(\tau,s)}{\sqrt{n}}\left(\frac{1}{\pi(s)} - \frac{1}{\hat{\pi}^w(s)}\right)
= - \sum_{i=1}^n \sum_{s \in \mathcal{S}}\frac{\xi_iA_i1\{S_i = s\}D^w_n(s)}{n^w(s)\sqrt{n}\hat{\pi}^w(s)\pi(s)}\eta_{i,1}(\tau,s).
\end{align*}
Note that 
\begin{align*}
\{\xi_i(\tau - 1\{Y_i(1) \leq q_1(\tau)\} - m_1(\tau,S_i)):\tau \in \Upsilon\}
\end{align*}
is of the VC-type with fixed coefficients $(\alpha,v)$ and the envelope $F_i = \xi_i$, and $$\mathbb{E}\left[\xi_i(\tau - 1\{Y_i(1) \leq q_1(\tau)\} - m_1(\tau,S_i))|S_i=s\right] = 0.$$ 
We can also let $\sigma_n^2 = \mathbb{E}(F_i^2|S_i=s)\leq C<\infty$ for some constant $C$. Then, Lemma \ref{lem:max_eq} implies 
\begin{align*}
\sup_{\tau \in \Upsilon,s\in \mathcal{S}}\left|\frac{1}{\sqrt{n}}\sum_{i =1}^n A_i 1\{S_i=s\}\xi_i\eta_{i,1}(\tau,s)\right|  = O_p(1).
\end{align*}
In addition, Lemma \ref{lem:Dw} implies $\max_{s \in \mathcal{S}}|D_n^w(s)/n^w(s)| = o_p(1)$, which further implies $\max_{s \in \mathcal{S}}|\hat{\pi}^w(s) - \pi(s)| = o_p(1)$. Combining these results, we have \begin{align*}
\sup_{\tau \in \Upsilon}|R^w_{1,1}(\tau)| = o_p(1). 
\end{align*}

Next, recall $\overline{m}_1(\tau,s) = \mathbb{E}(\overline{m}_1(q_1(\tau),X_i,S_i)|S_i=s)$. Then
\begin{align}
L_{1,2,n}^w &=  \frac{1}{\sqrt{n}} \sum_{s \in \mathcal{S}} \sum_{i =1}^n\frac{\xi_iA_i}{\hat{\pi}^w(s)} \overline{m}_1(\tau,s,X_i)1\{S_i=s\} - \frac{1}{\sqrt{n}}\sum_{i =1}^n \xi_i \overline{m}_1(\tau,S_i,X_i) \notag \\
& + \frac{1}{\sqrt{n}} \sum_{s \in \mathcal{S}} \frac{1}{\hat{\pi}^w(s)}\sum_{i =1}^n\xi_i(A_i - \hat{\pi}^w(s))\left(\widehat{m}_1(\tau,s,X_i) - \overline{m}_1(\tau,s,X_i)\right)1\{S_i=s\} \notag \\
&=  \frac{1}{\sqrt{n}} \sum_{s \in \mathcal{S}} \sum_{i =1}^n\frac{\xi_iA_i}{\hat{\pi}^w(s)} (\overline{m}_1(\tau,s,X_i) - \overline{m}_1(\tau,s))1\{S_i=s\} \notag \\
& - \frac{1}{\sqrt{n}}\sum_{i =1}^n  \xi_i\left(\overline{m}_1(\tau,S_i,X_i)-\overline{m}_1(\tau,S_i)\right) \notag \\
& + \frac{1}{\sqrt{n}} \sum_{s \in \mathcal{S}} \frac{1}{\hat{\pi}^w(s)}\sum_{i =1}^n\xi_i(A_i - \hat{\pi}^w(s))\left(\widehat{m}_1(\tau,s,X_i) - \overline{m}_1(\tau,s,X_i)\right)1\{S_i=s\}\notag \\
&= \frac{1}{\sqrt{n}} \sum_{s \in \mathcal{S}} \sum_{i =1}^n\frac{\xi_iA_i}{\pi(s)} (\overline{m}_1(\tau,s,X_i) - \overline{m}_1(\tau,s))1\{S_i=s\} \notag \\
& - \frac{1}{\sqrt{n}}\sum_{i =1}^n \xi_i \left(\overline{m}_1(\tau,S_i,X_i)-\overline{m}_1(\tau,S_i)\right) \notag \\
& + \frac{1}{\sqrt{n}} \sum_{s \in \mathcal{S}} \left(\frac{\pi(s) - \hat{\pi}^w(s)}{\hat{\pi}^w(s)\pi(s)}\right)\left(\sum_{i =1}^n\xi_iA_i (\overline{m}_1(\tau,s,X_i) - \overline{m}_1(\tau,s))1\{S_i=s\}\right) \notag \\
& + \frac{1}{\sqrt{n}} \sum_{s \in \mathcal{S}} \frac{1}{\hat{\pi}^w(s)}\sum_{i =1}^n\xi_i(A_i - \hat{\pi}^w(s))\left(\widehat{m}_1(\tau,s,X_i) - \overline{m}_1(\tau,s,X_i)\right)1\{S_i=s\} \notag \\
&\equiv \frac{1}{\sqrt{n}} \sum_{s \in \mathcal{S}} \sum_{i =1}^n\frac{\xi_iA_i}{\pi(s)} (\overline{m}_1(\tau,s,X_i) - \overline{m}_1(\tau,s))1\{S_i=s\} \notag \\
& - \frac{1}{\sqrt{n}}\sum_{i =1}^n  \xi_i\left(\overline{m}_1(\tau,S_i,X_i)-\overline{m}_1(\tau,S_i)\right)  + R_{1,2}^w(\tau),
\label{eq:L12w}
\end{align}
where the second equality holds because 
\begin{align*}
\sum_{s \in \mathcal{S}} \sum_{i =1}^n \xi_iA_i 1\{S_i=s\} \frac{\overline{m}_1(\tau,s)}{\hat{\pi}^w(s)} = & \sum_{s \in \mathcal{S}} n_1^w(s)  \frac{\overline{m}_1(\tau,s)}{\hat{\pi}^w(s)} = \sum_{s \in \mathcal{S}} n^w(s)  \overline{m}_1(\tau,s) \\
= & \sum_{i =1}^n  \sum_{s \in \mathcal{S}} \xi_i 1\{S_i=s\}\overline{m}_1(\tau,S_i) = \sum_{i =1}^n \xi_i \overline{m}_1(\tau,S_i). 
\end{align*}

For the first term in $R_2^w(\tau)$, we have 
\begin{align*}
&\sup_{\tau \in \Upsilon}\left|\frac{1}{\sqrt{n}} \sum_{s \in \mathcal{S}} \left(\frac{\pi(s) - \hat{\pi}^w(s)}{\hat{\pi}^w(s)\pi(s)}\right)\left(\sum_{i =1}^nA_i \xi_i(\overline{m}_1(\tau,s,X_i) - \overline{m}_1(\tau,s))1\{S_i=s\}\right)\right|\\
&\leq \sum_{s \in \mathcal{S}} \left|\frac{D_n^w(s)}{n^w(s) \hat{\pi}^w(s)\pi(s)}\right|\sup_{\tau \in \Upsilon,s\in \mathcal{S}}\left|\frac{1}{\sqrt{n}}\sum_{i =1}^n\xi_iA_i1\{S_i=s\}(\overline{m}_1(\tau,s,X_i) - \overline{m}_1(\tau,s))\right| = o_p(1),
\end{align*}
where the last equality holds due to Lemmas \ref{lem:max_eq} and \ref{lem:Dw}, and the fact that $\mathcal{F} = \{\xi(\overline{m}_1(\tau,s,X_i) - \overline{m}_1(\tau,s)): \tau \in \Upsilon\}$ is of the VC-type with fixed coefficients $(\alpha,v)$ and envelope $\xi_i F_i$ such that $\mathbb{E}((\xi_iF_i)^q|S_i=s)<\infty$ for $q>2.$

For the second term in $R_{1,2}^w(\tau)$, recall $\overline{\Delta}_1(\tau,s,X_i) = \widehat{m}_1(\tau,s,X_i) -\overline{m}_1(\tau,s,X_i)$.  Then
\begin{align*}
& \left|\frac{1}{\sqrt{n}} \sum_{s \in \mathcal{S}} \frac{1}{\hat{\pi}^w(s)}\sum_{i =1}^n\xi_i(A_i - \hat{\pi}^w(s))\overline{\Delta}_1(\tau,s,X_i)1\{S_i=s\}\right| \\
&= \frac{1}{\sqrt{n}}\sum_{s \in \mathcal{S}} n_0^w(s)\sup_{\tau \in \Upsilon}\biggl|\frac{\sum_{i\in I_1(s)}\xi_i\overline{\Delta}_1(\tau,s,X_i)}{n_1^w(s)} - \frac{\sum_{i \in I_0(s)}\xi_i\overline{\Delta}_1(\tau,s,X_i)}{n_0^w(s)}\biggr| = o_p(1),
\end{align*}
where the last equality holds by Assumption \ref{ass:mhat}. Therefore, we have
\begin{align*}
\sup_{\tau \in \Upsilon}|R_{1,2}^w(\tau)| = o_p(1).
\end{align*} 
Combining  \eqref{eq:L11w} and \eqref{eq:L12w}, we have
\begin{align*}
L_{1,n}^w(\tau) = & \sum_{s \in \mathcal{S}} \frac{1}{\sqrt{n}}\sum_{i =1}^n \xi_iA_i 1\{S_i = s\}\left[\frac{\eta_{i,1}(\tau,s)}{\pi(s)} + \left(1-\frac{1}{\pi(s)}\right)\left(\overline{m}_1(\tau,s,X_i)-\overline{m}_1(\tau,s)\right) \right] \\
&+ \sum_{s \in \mathcal{S}} \frac{1}{\sqrt{n}}\sum_{i =1}^n \xi_i(1-A_i) 1\{S_i = s\}\left(\overline{m}_1(\tau,s,X_i)-\overline{m}_1(\tau,s)\right) \\
& + \sum_{i=1}^n \xi_i\frac{m_1(\tau,S_i)}{\sqrt{n}} + R_{1,1}^w(\tau) - R_{1,2}^w(\tau),
\end{align*}
where $\sup_{\tau \in \Upsilon}(|R_{1,1}^w(\tau)| + |R_{1,2}^w(\tau)|) = o_p(1)$. In addition, Assumption \ref{ass:mhat} implies that the classes of functions
\begin{align*}
\left\{\xi_i\left[\frac{\eta_{i,1}(\tau,s)}{\pi(s)} + \left(1-\frac{1}{\pi(s)}\right)\left(\overline{m}_1(\tau,s,X_i)-\overline{m}_1(\tau,s)\right)\right]:\tau \in \Upsilon \right\}
\end{align*}
and 
\begin{align*}
\left\{\xi_i[\overline{m}_1(\tau,s,X_i)-\overline{m}_1(\tau,s)]:\tau \in \Upsilon  \right\}
\end{align*}
are of the VC-type with fixed coefficients and envelopes belonging to $L_{\mathbb{P},d}$. In addition, 
\begin{align*}
\mathbb{E}\left[\xi_i\left(\frac{\eta_{i,1}(\tau,s)}{\pi(s)} + \left(1-\frac{1}{\pi(s)}\right)\left(\overline{m}_1(\tau,s,X_i)-\overline{m}_1(\tau,s)\right)\right)\biggl|S_i=s \right] = 0,
\end{align*}
and 
\begin{align*}
\mathbb{E}\left[\xi_i(\overline{m}_1(\tau,s,X_i)-\overline{m}_1(\tau,s))|S_i=s\right] = 0. 
\end{align*}

Therefore, Lemma \ref{lem:max_eq} implies, 
\begin{align*}
\sup_{\tau \in \Upsilon}\left|\sum_{s \in \mathcal{S}} \frac{1}{\sqrt{n}}\sum_{i =1}^n A_i 1\{S_i = s\}\left[\frac{\eta_{i,1}(\tau,s)}{\pi(s)} + \left(1-\frac{1}{\pi(s)}\right)\left(\overline{m}_1(\tau,s,X_i)-\overline{m}_1(\tau,s)\right) \right]\right| = O_p(1),
\end{align*} 
and 
\begin{align*}
\sup_{\tau \in \Upsilon}\left|\sum_{s \in \mathcal{S}} \frac{1}{\sqrt{n}}\sum_{i =1}^n (1-A_i) 1\{S_i = s\}\left(\overline{m}_1(\tau,s,X_i)-\overline{m}_1(\tau,s)\right) \right| = O_p(1). 
\end{align*}
This implies $\sup_{\tau \in \Upsilon}|L^w_{1,n}(\tau)| = O_p(1)$. Then by \citet[Theorem 2]{K09} we have 
\begin{align*}
& \sqrt{n}(\hat{q}^{w}_1(\tau)-q_1(\tau))\\
&= \frac{1}{f_1(q_1(\tau))}\biggl\{\sum_{s \in \mathcal{S}} \frac{1}{\sqrt{n}}\sum_{i =1}^n \xi_iA_i 1\{S_i = s\}\left[\frac{\eta_{i,1}(\tau,s)}{\pi(s)} + \left(1-\frac{1}{\pi(s)}\right)\left(\overline{m}_1(\tau,s,X_i)-\overline{m}_1(\tau,s)\right) \right] \\
&+ \sum_{s \in \mathcal{S}} \frac{1}{\sqrt{n}}\sum_{i =1}^n \xi_i(1-A_i) 1\{S_i = s\}\left(\overline{m}_1(\tau,s,X_i)-\overline{m}_1(\tau,s)\right) + \sum_{i=1}^n \xi_i \frac{m_1(\tau,S_i)}{\sqrt{n}}\biggr\} + R_{q,1}^w(\tau),
\end{align*}
where $\sup_{\tau \in \Upsilon}|R_{q,1}^w(\tau)| = o_p(1)$. 
Similarly, we have
\begin{align*}
& \sqrt{n}(\hat{q}^{w}_0(\tau)-q_0(\tau))\\
&= \frac{1}{f_0(q_0(\tau))}\biggl\{\sum_{s \in \mathcal{S}} \frac{1}{\sqrt{n}}\sum_{i =1}^n \xi_i(1-A_i) 1\{S_i = s\}\left[\frac{\eta_{i,0}(\tau,s)}{1-\pi(s)} + \left(1-\frac{1}{1-\pi(s)}\right)\left(\overline{m}_0(\tau,s,X_i)-\overline{m}_0(\tau,s)\right) \right] \\
&+ \sum_{s \in \mathcal{S}} \frac{1}{\sqrt{n}}\sum_{i =1}^n \xi_iA_i 1\{S_i = s\}\left(\overline{m}_0(\tau,s,X_i)-\overline{m}_0(\tau,s)\right) + \sum_{i=1}^n \xi_i \frac{m_0(\tau,S_i)}{\sqrt{n}}\biggr\} + R_{q,0}(\tau),
\end{align*}
where $\sup_{\tau\in \Upsilon}|R_{q,0}^w(\tau)| = o_p(1)$. Taking the difference of the above two displays we obtain 
\begin{align*}
& \sqrt{n}(\hat{q}^{w}(\tau) - q(\tau)) \\
&=  \sum_{s \in \mathcal{S}}\frac{1}{\sqrt{n}}\sum_{i =1}^n\xi_iA_i1\{S_i=s\} \\
& \times \left[ \frac{\eta_{i,1}(\tau,s) -(1-\pi(s))\left(\overline{m}_1(\tau,s,X_i)-\overline{m}_1(\tau,s)\right) }{\pi(s) f_1(q_1(\tau))} - \frac{\left( \overline{m}_0(\tau,s,X_i)-\overline{m}_0(\tau,s)\right)}{f_0(q_0(\tau))}  \right] \\
&- \sum_{s \in \mathcal{S}}\frac{1}{\sqrt{n}}\sum_{i =1}^n\xi_i(1-A_i)1\{S_i=s\} \\
& \times \left[ \frac{\eta_{i,0}(\tau,s) - \pi(s) \left(\overline{m}_0(\tau,s,X_i)-\overline{m}_0(\tau,s)\right)}{(1-\pi(s))f_0(q_0(\tau))} - \frac{\left(\overline{m}_1(\tau,s,X_i)-\overline{m}_1(\tau,s)\right)}{f_1(q_1(\tau))} \right] \\
& + \frac{1}{\sqrt{n}} \sum_{i =1}^n\xi_i\left( \frac{m_1(\tau,S_i)}{f_1(q_1(\tau))} -  \frac{m_0(\tau,S_i)}{f_0(q_0(\tau))}\right) + R^w_q(\tau)\\
&= \sum_{s \in \mathcal{S}}\frac{1}{\sqrt{n}}\sum_{i =1}^n\xi_iA_i1\{S_i=s\} \phi_1(\tau,s,Y_i(1),X_i) - \sum_{s \in \mathcal{S}}\frac{1}{\sqrt{n}}\sum_{i =1}^n\xi_i(1-A_i)1\{S_i=s\}\phi_0(\tau,s,Y_i(0),X_i) \\
& + \frac{1}{\sqrt{n}} \sum_{i =1}^n\xi_i\phi_s(\tau,S_i) + R_q^w(\tau),
\end{align*}
where $\sup_{\tau \in \Upsilon}|R_q^w(\tau)| = o_p(1)$ and  $( \phi_1(\cdot), \phi_0(\cdot),  \phi_s(\cdot))$ are defined in Section \ref{sec:thm_est_pf}. Recalling the linear expansion of $\sqrt{n}(\hat{q}^{adj}(\tau) - q(\tau))$ established in Section \ref{sec:thm_est_pf}, we have
\begin{align*}
&\sqrt{n}(\hat{q}^{w}(\tau) - \hat{q}^{adj}(\tau)) = \sum_{s \in \mathcal{S}}\frac{1}{\sqrt{n}}\sum_{i =1}^n(\xi_i-1)A_i1\{S_i=s\} \phi_1(\tau,s,Y_i(1),X_i) \\
& - \sum_{s \in \mathcal{S}}\frac{1}{\sqrt{n}}\sum_{i =1}^n(\xi_i-1)(1-A_i)1\{S_i=s\}\phi_0(\tau,s,Y_i(0),X_i) + \frac{1}{\sqrt{n}} \sum_{i =1}^n(\xi_i-1)\phi_s(\tau,S_i) + R(\tau) \\
&= \varpi_{n,1}^w(\tau) - \varpi_{n,2}^w(\tau) + R(\tau), 
\end{align*}
where $\sup_{\tau \in \Upsilon}|R(\tau)| = o_p(1)$,
\begin{align*}
\varpi_{n,1}^w(\tau) = & \sum_{s \in \mathcal{S}}\frac{1}{\sqrt{n}}\sum_{i =1}^n(\xi_i-1)A_i1\{S_i=s\} \phi_1(\tau,s,Y_i(1),X_i)  \\
& -  \sum_{s \in \mathcal{S}}\frac{1}{\sqrt{n}}\sum_{i =1}^n(\xi_i-1)(1-A_i)1\{S_i=s\}\phi_0(\tau,s,Y_i(0),X_i),
\end{align*}
and 
\begin{align*}
\varpi_{n,2}^w(\tau) = \frac{1}{\sqrt{n}} \sum_{i =1}^n(\xi_i-1)\phi_s(\tau,S_i).
\end{align*}

Lemma \ref{lem:w_boot} shows that, uniformly over $\tau \in \Upsilon$ and conditionally on data,   
\begin{align*}
\varpi_{n,1}^w(\tau) +\varpi_{n,2}^w(\tau) \convD \mathcal{B}(\tau),
\end{align*}
where $\mathcal{B}(\tau)$ is the Gaussian process with covariance kernel
\begin{align*}
&\Sigma(\tau,\tau') = \mathbb{E} \pi(S_i)\phi_1(\tau,S_i,Y_i(1),X_i)\phi_1(\tau',S_i,Y_i(1),X_i)  \\
& + \mathbb{E} (1-\pi(S_i))\phi_0(\tau,S_i,Y_i(0),X_i)\phi_0(\tau',S_i,Y_i(0),X_i)  + \mathbb{E}\phi_s(\tau,S_i)\phi_s(\tau',S_i), 
\end{align*}
as defined in Theorem \ref{thm:est}. This concludes the proof.

\section{Proof of Theorem \ref{thm:par}}
\begin{table}[H]
	%\label{Table:constants}
	\begin{adjustbox}{max width=\textwidth}
		\begin{tabular}{c|l}
			\hline
			Name & Description \\ \hline
			$N(s)$ & For $s \in \mathcal{S}$, $N(s) = \sum_{i =1}^n1\{S_i <s\}$ \\
			$\Lambda_{\tau,s}(x,\theta_{a,s}(\tau))$ & For $a \in \{0,1\}$,  $s \in \mathcal{S}$, $\tau \in \Upsilon$, and $x \in \Supp(X)$, $\Lambda_{\tau,s}(x,\theta_{a,s}(\tau))$ is a parametric model for $\mathbb{P}(Y(a) \leq q_a(\tau)|S_i=s,X_i=x)$ \\
			&  with a pseudo true value $\theta_{a,s}(\tau)$ \\
			$\hat{\theta}_{a,s}(\tau)$  & For $a \in \{0,1\}$,  $s \in \mathcal{S}$, $\tau \in \Upsilon$, $\hat{\theta}_{a,s}(\tau)$ is a consistent estimator of $\theta_{a,s}(\tau)$
			\\ \hline
		\end{tabular}
	\end{adjustbox}
	%	\caption{Table of notation}
\end{table}

The proof is divided into two steps. In the first step, we show Assumption \ref{ass:mhatw}. Assumption \ref{ass:mhat}(i) can be shown in the same manner and is omitted. In the second step, we establish Assumptions \ref{ass:mhat}(ii) and \ref{ass:mhat}(iii).

\vspace{1.5mm}
\noindent \textbf{Step 1.} Recall $\widehat{m}_a(\tau,s,X_i) = \tau - \Lambda_{\tau,s}(x,\hat{\theta}_{a,s}(\tau))$, $\overline{m}_a(\tau,s,X_i) = \tau - \Lambda_{\tau,s}(x,\theta_{a,s}(\tau))$, 
\begin{align*}
\overline{\Delta}_a(\tau,s,X_i) = \widehat{m}_a(\tau,s,X_i) - \overline{m}_a(\tau,s,X_i) =\Lambda_{\tau,s}(X_i,\theta_{a,s}(\tau))- \Lambda_{\tau,s}(X_i,\hat{\theta}_{a,s}(\tau)), 
\end{align*}
and $\{X_i^s,\xi^s_i\}_{i \in [n]}$ is generated independently from the joint distribution of $(X_i,\xi_i)$ given $S_i=s$, and so is independent of  $\{A_i,S_i\}_{i \in [n]}$. 
Let $\Psi_{\tau,s}(\theta_1,\theta_2) = \mathbb{E}[\Lambda_{\tau,s}(X_i,\theta_1) - \Lambda_{\tau,s}(X_i,\theta_{2})|S_i=s] = \mathbb{E}[\Lambda_{\tau,s}(X_i^s,\theta_1) - \Lambda_{\tau,s}(X_i^s,\theta_{2})]$. We have
\begin{align}
& \sup_{\tau \in \Upsilon,s\in \mathcal{S}}\biggl|\frac{\sum_{i\in I_1(s)}\xi_i\overline{\Delta}_a(\tau,s,X_i)}{n_1^w(s)} - \frac{\sum_{i \in I_0(s)}\xi_i\overline{\Delta}_a(\tau,s,X_i)}{n_0^w(s)}\biggr| \notag  \\
&\leq  \left(\max_{s \in \mathcal{S}}\frac{n_1(s)}{n_1^w(s)} \right)\sup_{\tau \in \Upsilon,s\in \mathcal{S}}\left|\frac{\sum_{i\in I_1(s)}\xi_i[ \overline{\Delta}_a(\tau,s,X_i) - \Psi_{\tau,s}(\theta_{a,s}(\tau),\hat{\theta}_{a,s}(\tau))] }{n_1(s)} \right| \notag \\
& +  \left(\max_{s \in \mathcal{S}}\frac{n_0(s)}{n_0^w(s)}\right)\sup_{\tau \in \Upsilon,s\in \mathcal{S}}\left|\frac{\sum_{i\in I_0(s)}\xi_i[ \overline{\Delta}_a(\tau,s,X_i) - \Psi_{\tau,s}(\theta_{a,s}(\tau),\hat{\theta}_{a,s}(\tau))] }{n_0(s)} \right| \notag \\
&= o_p(n^{-1/2}).
\label{eq:Delta}
\end{align}
To see the last equality, we note that, for any $\eps>0$, with probability approaching one (w.p.a.1), we have
\begin{align*}
\sup_{\tau \in \Upsilon,s\in \mathcal{S}}||\hat{\theta}_{a,s}(\tau) - \theta_{a,s}(\tau)|| \leq \eps.
\end{align*}
Therefore, on the event 
\begin{align*}
\mathcal{A}_n(\eps) \equiv \left\{\sup_{\tau \in \Upsilon,s \in \mathcal{S}}||\hat{\theta}_{a,s}(\tau) - \theta_{a,s}(\tau)|| \leq \eps, \sup_{\tau \in \Upsilon}\max(||\hat{\theta}_{a,s}(\tau)||,||\theta_{a,s}(\tau)||) \leq C,\min_{s \in \mathcal{S}}n_1(s) \geq \eps n \right\}    
\end{align*}
we have
\begin{align*}
& \sup_{\tau \in \Upsilon}\left|\frac{\sum_{i\in I_1(s)}\xi_i[ \overline{\Delta}_a(\tau,s,X_i) - \Psi_{\tau,s}(\theta_{a,s}(\tau),\hat{\theta}_{a,s}(\tau))] }{n_1(s)} \right| \biggl|\{A_i,S_i\}_{i \in [n]} \\
&\stackrel{d}{=} \sup_{\tau \in \Upsilon}\left|\frac{\sum_{i = N(s)+1}^{N(s)+n_1(s)}\xi_i^s[ \overline{\Delta}_a(\tau,s,X_i^s) - \Psi_{\tau,s}(\theta_{a,s}(\tau),\hat{\theta}_{a,s}(\tau))] }{n_1(s)} \right| \biggl|\{A_i,S_i\}_{i \in [n]} \\
&\leq ||\mathbb{P}_{n_1(s)} - \mathbb{P}||_{\mathcal{F}}\biggl|\{A_i,S_i\}_{i \in [n]},
\end{align*}
where $N(s) = \sum_{i =1}^n1\{S_i <s\}$ and 
\begin{align*}
\mathcal{F} = \left\{ \xi_i^s[ \Lambda_{\tau,s}(X_i^s,\theta_1) - \Lambda_{\tau,s}(X_i^s,\theta_{2}) - \Psi_{\tau,s}(\theta_1,\theta_2)]: \tau \in \Upsilon, ||\theta_1-\theta_2||\leq \eps, \max(||\theta_1||,||\theta_2||) \leq C \right\}.
\end{align*}
By Assumption \ref{ass:par}, $\mathcal{F}$ is a VC-class with a fixed VC index and envelope $L_i$. In addition, 
\begin{align*}
\sup_{f \in \mathcal{F}}\mathbb{P}f^2 \leq \mathbb{E}L_i^2(\theta_1-\theta_2)^2 \leq C\eps^2.
\end{align*}
Therefore, for any $\delta>0$ we have
\begin{align*}
& \mathbb{P}\left( \sup_{\tau \in \Upsilon,s\in \mathcal{S}}\left|\frac{\sum_{i\in I_1(s)}\xi_i[ \overline{\Delta}_a(\tau,s,X_i) - \Psi_{\tau,s}(\theta_{a,s}(\tau),\hat{\theta}_{a,s}(\tau))] }{n_1(s)} \right|\geq \delta n^{-1/2}\right) \\
&\leq \mathbb{P}\left(\sup_{\tau \in \Upsilon,s\in \mathcal{S}}\left|\frac{\sum_{i\in I_1(s)}\xi_i[ \overline{\Delta}_a(\tau,s,X_i) - \Psi_{\tau,s}(\theta_{a,s}(\tau),\hat{\theta}_{a,s}(\tau))] }{n_1(s)} \right| \geq \delta n^{-1/2}, \mathcal{A}_n(\eps) \right) + \mathbb{P}(\mathcal{A}_n^c(\eps)) \\
&\leq  
\mathbb{E}\left[\mathbb{P}\left( \sup_{\tau \in \Upsilon,s\in \mathcal{S}}\left|\frac{\sum_{i\in I_1(s)}\xi_i[ \overline{\Delta}_a(\tau,s,X_i) - \Psi_{\tau,s}(\theta_{a,s}(\tau),\hat{\theta}_{a,s}(\tau))] }{n_1(s)} \right| \geq \delta n^{-1/2}, \mathcal{A}_n(\eps)\biggl|\{A_i,S_i\}_{i \in [n]}\right)\right] \\
& + \mathbb{P}(\mathcal{A}_n^c(\eps)) \\
&\leq  \sum_{s \in \mathcal{S}}\mathbb{E}\left[\mathbb{P}\left(||\mathbb{P}_{n_1(s)} - \mathbb{P}||_{\mathcal{F}} \geq \delta n^{-1/2} \biggl|\{A_i,S_i\}_{i \in [n]}\right)1\{n_1(s) \geq n\eps\}\right] + \mathbb{P}(\mathcal{A}_n^c(\eps)) \\
&\leq  \sum_{s \in \mathcal{S}}\mathbb{E}\left\{ \frac{n^{1/2}\mathbb{E}\left[||\mathbb{P}_{n_1(s)} - \mathbb{P}||_{\mathcal{F}}|\{A_i,S_i\}_{i \in [n]}\right]1\{n_1(s) \geq n\eps\}}{\delta}\right\} +  \mathbb{P}(\mathcal{A}_n^c(\eps)).
\end{align*}
By \citet[Corollary 5.1]{CCK14}, 
\begin{align*}
& n^{1/2}\mathbb{E}\left[||\mathbb{P}_{n_1(s)} - \mathbb{P}||_{\mathcal{F}}|\{A_i,S_i\}_{i \in [n]}\right]1\{n_1(s) \geq n\eps\} \\
&\leq  C(\sqrt{\frac{n}{n_1(s)}\eps^2} + n^{1/2}n_1^{1/d-1}(s))1\{n_1(s) \geq n\eps\} \\
&\leq C (\eps^{1/2} + n^{1/d-1/2}\eps^{1/d-1}).
\end{align*}
Therefore, 
\begin{align*}
\mathbb{E}\left\{ \frac{n^{1/2}\mathbb{E}\left[||\mathbb{P}_{n_1(s)} - \mathbb{P}||_{\mathcal{F}}|\{A_i,S_i\}_{i \in [n]}\right]1\{n_1(s) \geq n\eps\}}{\delta}\right\}  \leq C\mathbb{E}\left(\eps^{1/2} + n^{1/d-1/2}\eps^{1/d-1}\right)/\delta.
\end{align*}
By letting $n \rightarrow \infty$ followed by $\eps \rightarrow 0$, we have 
\begin{align*}
\lim_{n\rightarrow \infty} \mathbb{P}\left( \sup_{\tau \in \Upsilon,s\in \mathcal{S}}\left|\frac{\sum_{i\in I_1(s)}\xi_i[ \overline{\Delta}_a(\tau,s,X_i) - \Psi_{\tau,s}(\theta_{a,s}(\tau),\hat{\theta}_{a,s}(\tau))] }{n_1(s)} \right|\geq \delta n^{-1/2}\right) = 0,
\end{align*}
In addition, 
\begin{align*}
\max_{s \in \mathcal{S}}|n_1^w(s)/n_1(s)-1| = \max_{s\in \mathcal{S}}|(D_n^w(s) - D_n(s))/(\pi(s) n(s) + D_n(s))|\convP 1,
\end{align*}
as Lemma \ref{lem:Dw} shows that $\max_{s\in \mathcal{S}}|(D_n^w(s) - D_n(s))/n(s)| = o_p(1)$. 

Therefore, 
\begin{align*}
\sup_{\tau \in \Upsilon,s\in \mathcal{S}}\left|\frac{\sum_{i\in I_1(s)}\xi_i[ \overline{\Delta}_a(\tau,s,X_i) - \Psi_{\tau,s}(\theta_{a,s}(\tau),\hat{\theta}_{a,s}(\tau))] }{n^w_1(s)} \right| = o_p(n^{-1/2}).
\end{align*}
For the same reason, we have 
\begin{align*}
\sup_{\tau \in \Upsilon,s\in \mathcal{S}}\left|\frac{\sum_{i\in I_0(s)}\xi_i[ \overline{\Delta}_a(\tau,s,X_i) - \Psi_{\tau,s}(\theta_{a,s}(\tau),\hat{\theta}_{a,s}(\tau))] }{n^w_0(s)} \right| = o_p(n^{-1/2}),
\end{align*}
and \eqref{eq:Delta} holds. 
%
%
%Lemma \ref{lem:thetahat_lp} implies 
%\begin{align*}
%\sup_{\tau \in \Upsilon}|\theta_{a,s}^\textit{LP}(\tau)-\hat{\theta}_{a,s}^\textit{LP}(\tau)| = O_p(n^{-1/2}). 
%\end{align*}
%In addition, by the same argument in the proof of Lemma \ref{lem:max_eq}, we have 
%\begin{align*}
%\left|\frac{\sum_{i\in I_a(s)}\xi_i X_i^\top }{n_a(s)} - \mathbb{E}(X_i^\top|S_i=s)\right| \stackrel{d}{=}\left|\frac{\sum_{i\in I_a(s)}\xi^s_i X^s_i}{n_a(s)} - \mathbb{E}(\xi_i^sX_i^s)\right| = O_p(n^{-1/2}). 
%\end{align*}

\vspace{1.5mm}
\noindent \textbf{Step 2.}
By Assumption \ref{ass:par}, 
\begin{align*}
& |\overline{m}_a(\tau_2,S_i,X_i) - \overline{m}_a(\tau_1,S_i,X_i)| \\
&\leq  |\tau_2-\tau_1| + 
|\Lambda_{\tau_1,s}(X_i,\theta_{a,s}(\tau_1)) - \Lambda_{\tau_2,s}(X_i,\theta_{a,s}(\tau_2))| \\
&\leq (1+L_1)  |\tau_2-\tau_1| + L_i|\theta_{a,s}(\tau_1)-\theta_{a,s}(\tau_2)| \leq (CL_i+1)|\tau_2-\tau_1|. 
\end{align*} 
This implies Assumption \ref{ass:mhat}(iii). Furthermore, by Assumption $\ref{ass:par}$ we can let the envelope for the class of functions $\mathcal{F} = \{ \overline{m}_a(\tau_2,S_i,X_i):\tau \in \Upsilon\}$ be $F_i = \max(C,1)L_i+1$ where the constant $C$ is the one in the above display.  Then, we have
\begin{align*}
\sup_Q N(\mathcal{F},e_{Q},\eps ||F||_{Q,2}) \leq N(\Upsilon,d,\eps) \leq 1/\eps,
\end{align*}
where $d(\tau_1,\tau_2) = |\tau_1-\tau_2|$. This verifies Assumption \ref{ass:mhat}(ii). 
%\textbf{Step 3.} Given Assumptions \ref{ass:assignment1}--\ref{ass:mhatw} hold, Theorems \ref{thm:est} and \ref{thm:boot} lead to the desired result. Here we compute the covariance kernel. Following the proof of Theorem \ref{thm:est}, we have
%\begin{align*}
%& \phi_1^\textit{LP}(\tau,s,Y_i(1),X_i) = \frac{\eta_{i,1}(\tau,s) + (1-\pi(s))\tilde{X}_{i,s} \theta_{1,s}^\textit{LP}(\tau)}{\pi(s)f_1(q_1(\tau)) }+ \frac{\tilde{X}_{i,s}\theta_{0,s}^\textit{LP}(\tau)}{f_0(q_0(\tau))}, \\
%& \phi_0^\textit{LP}(\tau,s,Y_i(0),X_i) = \frac{\eta_{i,0}(\tau,s) + \pi(s)\tilde{X}_{i,s} \theta_{0,s}^\textit{LP}(\tau)}{(1-\pi(s))f_0(q_0(\tau))} + \frac{\tilde{X}_{i,s}\theta_{1,s}^\textit{LP}(\tau)}{f_1(q_1(\tau))}, \\
%& \phi_s(\tau,S_i) = \frac{m_1(\tau,S_i)}{f_1(q_1(\tau))}-\frac{m_0(\tau,S_i)}{f_0(q_0(\tau))},
%\end{align*}
%and 
%\begin{align*}
%\Sigma^\textit{LP}(\tau,\tau') = & \mathbb{E}\pi(S_i)\phi_1^\textit{LP}(\tau,s,Y_i(1),X_i)\phi_1^\textit{LP}(\tau',s,Y_i(1),X_i) \\
%& + \mathbb{E}(1-\pi(S_i))\phi_0^\textit{LP}(\tau,s,Y_i(0),X_i)\phi_0^\textit{LP}(\tau',s,Y_i(0),X_i) \\
%& + \mathbb{E}\phi_s(\tau,S_i)\phi_s(\tau',S_i). 
%\end{align*}
%This concludes the proof. 

\section{Proof of Theorem \ref{thm:theta_as^star}}
\begin{table}[H]
	%\label{Table:constants}
	\begin{adjustbox}{max width=\textwidth}
		\begin{tabular}{c|l}
			\hline
			Name & Description \\ \hline
			$W_{i,s}(\tau)$ & For $i \in [n]$, $s \in \mathcal{S}$, and $\tau \in \Upsilon$, $W_{i,s}(\tau)$ is the linear regressor in the linear adjustment so that  $\Lambda_{\tau,s}(X_i,\theta_{a,s}(\tau)) = W_{i,s}^\top(\tau) \theta_{a,s}(\tau)$ \\
			$\tilde{W}_{i,s}(\tau)$ & For $i \in [n]$, $s \in \mathcal{S}$, and $\tau \in \Upsilon$, $\tilde{W}_{i,s}(\tau) = W_{i,s}(\tau) - \mathbb{E}(W_{i,s}(\tau)|S_i=s)$ \\
			$\theta_{a,s}(\tau)$ & For $a \in \{0,1\}$, $s \in \mathcal{S}$, and $\tau \in \Upsilon$, $\theta_{a,s}(\tau)$ is the pseudo true value in the linear adjustment
			\\ \hline
		\end{tabular}
	\end{adjustbox}
	%	\caption{Table of notation}
\end{table}
Let $\Sigma^\textit{LP}(\tau_k,\tau_l)$ be the asymptotic covariance matrix of $\hat{q}^{adj}(\tau_k)$ and $\hat{q}^{adj}(\tau_l)$ with a linear adjustment and pseudo true values $(\theta_{1,s}(\tau_k),\theta_{0,s}(\tau_k))_{k \in [K]}$. Then, we have
\begin{align*}
\Sigma^\textit{LP}(\tau_k,\tau_l) = \Sigma^\textit{LP}_1(\tau_k,\tau_l) + \sum_{s \in \mathcal{S}}p(s) \mathbb{E}\left[(\tilde{W}_{i,s}^\top(\tau_k)\beta_s(\tau_k) - \overline{y}_{i,s}(\tau_k))(\tilde{W}_{i,s}^\top(\tau_l)\beta_s(\tau_l) - \overline{y}_{i,s}(\tau_l))|S_i=s\right],
\end{align*} 
where 
\begin{align*}
\Sigma^\textit{LP}_1(\tau_k,\tau_l) &=  \biggl\{\mathbb{E}\left[\frac{(\tau - 1\{Y_i(1) \leq q_1(\tau)\} - m_1(\tau,S_i,X_i))^2}{\pi(S_i)f_1^2(q_1(0))} \right] \\
& + \mathbb{E}\left[\frac{(\tau - 1\{Y_i(0) \leq q_0(\tau)\} - m_0(\tau,S_i,X_i))^2}{(1-\pi(S_i))f_0^2(q_0(0))} \right] \\
&+ \mathbb{E}\left(\frac{m_1(\tau,S_i,X_i)}{f_1(q_1(\tau))} -  \frac{m_0(\tau,S_i,X_i)}{f_0(q_0(\tau))}\right)^2 \biggr\},\\
\beta_s(\tau)& =  \sqrt{\frac{1-\pi(s)}{\pi(s)}}\frac{\theta_{1,s}(\tau)}{f_1(q_1(\tau))} +  \sqrt{\frac{\pi(s)}{1-\pi(s)}}\frac{\theta_{0,s}(\tau)}{f_0(q_0(\tau))}, \quad \text{and}\\
\overline{y}_{i,s}(\tau) &= \sqrt{\frac{1-\pi(s)}{\pi(s)}}\frac{\left[\mathbb{P}(Y_i(1) \leq q_1(\tau)|X_i,S_i=s) -\mathbb{P}(Y_i(1) \leq q_1(\tau)|S_i=s)\right] }{f_1(q_1(\tau))} \\
& +  \sqrt{\frac{\pi(s)}{1-\pi(s)}}\frac{\left[\mathbb{P}(Y_i(0) \leq q_0(\tau)|X_i,S_i=s) -\mathbb{P}(Y_i(0) \leq q_0(\tau)|S_i=s)\right]}{f_0(q_0(\tau))}.
\end{align*}

To minimize $[\Sigma^\textit{LP}(\tau_k,\tau_l) ]_{k,l\in [K]}$ (in the matrix sense) is the same as minimizing $$\left[\mathbb{E}\left[(\tilde{W}_{i,s}^\top(\tau_k)\beta_s(\tau_k) - \overline{y}_{i,s}(\tau_k))(\tilde{W}_{i,s}^\top(\tau_l)\beta_s(\tau_l) - \overline{y}_{i,s}(\tau_l))|S_i=s\right]\right]_{k,l\in [K]}$$ 
for each $s \in \mathcal{S}$, which is achieved if
\begin{align}
\beta_s(\tau_k) =  [\mathbb{E}\tilde{W}_{i,s}(\tau_k)\tilde{W}_{i,s}^\top(\tau_k)|S_i=s]^{-1}\mathbb{E}[\tilde{W}_{i,s}(\tau_k)\overline{y}_{i,s}(\tau_k)|S_i=s].
\label{eq:beta_sk}
\end{align} 
Because $\mathbb{E}[\tilde{W}_{i,s}(\tau)\mathbb{P}(Y_i(a) \leq q_a(\tau)|S_i=s)|S_i=s] = 0$ for $a=0,1$, \eqref{eq:beta_sk} implies 
\begin{align*}
& \sqrt{\frac{1-\pi(s)}{\pi(s)}}\frac{\theta_{1,s}(\tau_k)}{f_1(q_1(\tau_k))} +  \sqrt{\frac{\pi(s)}{1-\pi(s)}}\frac{\theta_{0,s}(\tau_k)}{f_0(q_0(\tau_k))} \\
&= \sqrt{\frac{1-\pi(s)}{\pi(s)}}\frac{\theta^\textit{LP}_{1,s}(\tau_k)}{f_1(q_1(\tau_k))} +  \sqrt{\frac{\pi(s)}{1-\pi(s)}}\frac{\theta^\textit{LP}_{0,s}(\tau_k)}{f_0(q_0(\tau_k))},
\end{align*}
or equivalently, 
\begin{align*}
& \frac{\theta_{1,s}(\tau_k)}{f_1(q_1(\tau_k))} +  \frac{\pi(s)}{1-\pi(s)}\frac{\theta_{0,s}(\tau_k)}{f_0(q_0(\tau_k))} = \frac{\theta^\textit{LP}_{1,s}(\tau_k)}{f_1(q_1(\tau_k))} +  \frac{\pi(s)}{1-\pi(s)}\frac{\theta^\textit{LP}_{0,s}(\tau_k)}{f_0(q_0(\tau_k))}.
\end{align*}
This concludes the proof. 

\section{Proof of Theorem \ref{prop:lp}}
\begin{table}[H]
	%\label{Table:constants}
	\begin{adjustbox}{max width=\textwidth}
		\begin{tabular}{c|l}
			\hline
			Name & Description \\ \hline
			$q_a(\tau)$ & For $a =0,1$ and $\tau \in \Upsilon$, $q_a(\tau)$ is the $\tau$th quantile of $Y(a)$ \\
			$\theta_{a,s}^\textit{LP}(\tau)$ & For $a \in \{0,1\}$, $s \in \mathcal{S}$, and $\tau \in \Upsilon$, $\theta_{a,s}^\textit{LP}(\tau) = \left[\mathbb{E}(\tilde{W}_{i,s}(\tau) \tilde{W}_{i,s}^\top(\tau)|S_i=s)\right]^{-1}\mathbb{E}\left[\tilde{W}_{i,s}(\tau) 1\{Y_i(a) \leq q_a(\tau)\}|S_i=s\right]$ is the optimal linear coefficient\\
			% $\tilde{X}_{i,s}$ & For $i \in [n]$, $s \in \mathcal{S}$, $\tilde{X}_{i,s} = X_i - \mathbb{E}(X_i|S_i=s)$ \\
			$\dot{W}_{i,a,s}(\tau)$ & For $a \in \{0,1\}$, $i \in [n]$, $s \in \mathcal{S}$, $\dot{W}_{i,a,s}(\tau) = W_{i,s}(\tau) - \frac{1}{n_a(s)}\sum_{i \in I_{a}(s)}W_{i,s}(\tau)$, $I_a(s) = \{i \in [n]: A_i =a,S_i=s\}$ \\
			$\hat{\theta}_{a,s}^\textit{LP}(\tau)$ & For $a \in \{0,1\}$, $s \in \mathcal{S}$, and $\tau \in \Upsilon$, $\hat{\theta}_{a,s}^\textit{LP}(\tau)$ is defined in \eqref{eq:thetahat_lp_est} \\
			$\hat{q}_a(\tau)$ & For $a \in \{0,1\}$ and $\tau \in \Upsilon$, $\hat{q}_a(\tau)$ is the estimator of $q_a(\tau)$ without any adjustments \\ 
			\hline
		\end{tabular}
	\end{adjustbox}
	%	\caption{Table of notation}
\end{table}
Assumption \ref{ass:par}(i) holds by Assumption \ref{ass:regX}. In addition, by Assumption \ref{ass:tau}, we have $\sup_{\tau \in \Upsilon}|\partial_\tau \theta_{a,s}^\textit{LP}(\tau)| < \infty$. This implies Assumption \ref{ass:par}(ii). Next, we aim to show 
\begin{align*}
\sup_{\tau \in \Upsilon, a = 0,1, s \in \mathcal{S}}|\hat{\theta}_{a,s}^\textit{LP}(\tau) - \theta_{a,s}^\textit{LP}(\tau)| = O_p(n^{-1/2}). 
\end{align*}
Focusing on $\hat{\theta}_{1,s}^\textit{LP}(\tau)$ we have
\begin{align}
\hat{\theta}_{1,s}^\textit{LP}(\tau) - \theta_{1,s}^\textit{LP}(\tau) & = \left[\frac{1}{n_{1}(s)}\sum_{i \in I_1(s)}\dot{W}_{i,1,s}(\tau)\dot{W}_{i,1,s}^\top(\tau) \right]^{-1} \notag \\
& \times \left[\frac{1}{n_{1}(s)}\sum_{i \in I_1(s)}\dot{W}_{i,1,s}(\tau)(1\{Y_i \leq \hat{q}_1(\tau)\} - \dot{W}_{i,1,s}^\top(\tau)\theta_{1,s}^\textit{LP}(\tau))\right].
\label{eq:thetahat_lp}
\end{align}
For the first term in \eqref{eq:thetahat_lp}, we have 
\begin{align*}
\frac{1}{n_{1}(s)}\sum_{i \in I_1(s)}\dot{W}_{i,1,s}(\tau)\dot{W}_{i,1,s}^\top(\tau)  \stackrel{d}{=} \frac{1}{n_{1}(s)}\sum_{i =N(s)+1}^{N(s)+n_1(s)}\dot{W}_{i,1,s}^s(\tau)\dot{W}_{i,1,s}^{s \top}(\tau), 
\end{align*}
where $\dot{W}_{i,1,s}^s(\tau) = W_{i,s}^s(\tau) - \frac{1}{n_{1}(s)}\sum_{i =N(s)+1}^{N(s)+n_1(s)}W_{i,s}^s(\tau)$ and $W_{i,s}^s(\tau)$ is i.i.d. across $i$ with common distribution equal to the conditional distribution of $W_{i,s}(\tau)$ given $S_i=s$ and independent of $N(s),n_1(s)$. Therefore, by Assumption \ref{ass:regX2}, we have
\begin{align*}
& \sup_{s \in \mathcal{S},\tau \in \Upsilon}\left\Vert \frac{1}{n_{1}(s)}\sum_{i =N(s)+1}^{N(s)+n_1(s)}\dot{W}_{i,1,s}^s(\tau)\dot{W}_{i,1,s}^{s \top}(\tau) -  \mathbb{E}(W_{i,s}^s(\tau) - \mathbb{E}W_{i,s}^s(\tau))(W_{i,s}^s(\tau) - \mathbb{E}W_{i,s}^s(\tau))^\top \right\Vert_F \\
& = \sup_{s \in \mathcal{S},\tau \in \Upsilon}\left\Vert \frac{1}{n_{1}(s)}\sum_{i =N(s)+1}^{N(s)+n_1(s)}\dot{W}_{i,1,s}^s(\tau)\dot{W}_{i,1,s}^{s \top}(\tau) -  \mathbb{E}(\tilde{W}_{i,s}(\tau)\tilde{W}_{i,s}^\top(\tau)|S_i=s) \right\Vert_F  = o_p(1),
\end{align*}
where $||\cdot||_F$ denotes the Frobenius norm and $\tilde{W}_{i,s}(\tau) = W_{i,s}(\tau)-\mathbb{E}(W_{i,s}(\tau)|S_i=s)$. For the second term in \eqref{eq:thetahat_lp}, we have
\begin{align*}
& \frac{1}{n_{1}(s)}\sum_{i \in I_1(s)}\dot{W}_{i,1,s}(\tau)\left(1\{Y_i \leq \hat{q}_1(\tau)\} - \dot{W}_{i,1,s}^\top(\tau)\theta_{1,s}^\textit{LP}(\tau)\right)  \\
&= \frac{1}{n_{1}(s)}\sum_{i \in I_1(s)}\tilde{W}_{i,s}(\tau)\left(1\{Y_i \leq \hat{q}_1(\tau)\} - \dot{W}_{i,1,s}^\top(\tau)\theta_{1,s}^\textit{LP}(\tau)\right)  + R_1(\tau) \\
&=  \frac{1}{n_{1}(s)}\sum_{i \in I_1(s)}\tilde{W}_{i,s}(\tau)\left(1\{Y_i \leq \hat{q}_1(\tau)\} - \tilde{W}_{i,s}^\top(\tau)\theta_{1,s}^\textit{LP}(\tau)\right)  + R_1(\tau)  + R_2(\tau) \\
&= \left[\frac{1}{n_{1}(s)}\sum_{i \in I_1(s)}\tilde{W}_{i,s}(\tau)(1\{Y_i \leq \hat{q}_1(\tau)\} -1\{Y_i \leq q_1(\tau)\}) \right] \\
& + \left[\frac{1}{n_{1}(s)}\sum_{i \in I_1(s)}\tilde{W}_{i,s}(\tau)(1\{Y_i \leq q_1(\tau)\} - \tilde{W}_{i,s}^\top(\tau)\theta_{1,s}^\textit{LP}(\tau)) \right] +  R_1(\tau)  + R_2(\tau) \\
&\equiv  I(\tau) + II(\tau) + R_1(\tau) + R_2(\tau),
\end{align*}
where 
\begin{align*}
R_1(\tau) = & -\left(\frac{1}{n_1(s)}\sum_{i \in I_1(s)}W_{i,s}(\tau) - \mathbb{E}(W_{i,s}(\tau)|S_i=s)\right)  \left(\frac{1}{n_1(s)}\sum_{i \in I_1(s)}(1\{Y_i \leq \hat{q}_1(\tau)\} - \dot{W}_{i,1,s}^\top(\tau)\theta_{1,s}^\textit{LP}(\tau)) \right),
\end{align*}
and 
\begin{align*}
R_2(\tau) = \left(\frac{1}{n_1(s)}\sum_{i \in I_1(s)}W_{i,s}(\tau) - \mathbb{E}(W_{i,s}(\tau)|S_i=s)\right) \left(\frac{1}{n_1(s)}\sum_{i \in I_1(s)}\tilde{W}_{i,s}^\top(\tau) \theta_{1,s}^\textit{LP}(\tau)\right).
\end{align*}
By Assumption \ref{ass:regX} we can show that $\sup_{\tau\in \Upsilon}|\theta_{1,s}^\textit{LP}(\tau)| \leq C<\infty$ for some constant $C>0$. Therefore, we have
\begin{align*}
\sup_{\tau \in \Upsilon}|R_1(\tau)| &= \sup_{\tau \in \Upsilon}\left|\frac{1}{n_1(s)}\sum_{i \in I_1(s)}W_{i,s}(\tau) - \mathbb{E}(W_{i,s}(\tau)|S_i=s)\right|\left|\frac{1}{n_1(s)}\sum_{i \in I_1(s)}(1\{Y_i \leq \hat{q}_1(\tau)\} - \dot{W}_{i,1,s}^\top(\tau)\theta_{1,s}^\textit{LP}(\tau)) \right| \\
&=  O_p(n^{-1/2}),
\end{align*}
and 
\begin{align*}
\sup_{\tau \in \Upsilon}|R_2(\tau)| &= \sup_{\tau \in \Upsilon}\left|\frac{1}{n_1(s)}\sum_{i \in I_1(s)}W_{i,s}(\tau) - \mathbb{E}(W_{i,s}(\tau)|S_i=s)\right| \left|\frac{1}{n_1(s)}\sum_{i \in I_1(s)}\tilde{W}_{i,s}^\top(\tau)\theta_{1,s}^\textit{LP}(\tau)\right|\\
&= O_p(n^{-1/2}), 
\end{align*}
where we use the fact that, by Assumption \ref{ass:regX2}, 
\begin{align*}
\sup_{\tau \in \upsilon}\left|\frac{1}{n_1(s)}\sum_{i \in I_1(s)}W_{i,s}(\tau) - \mathbb{E}(W_{i,s}(\tau)|S_i=s) \right| \stackrel{d}{=} \sup_{\tau \in \upsilon}\left|\frac{1}{n_1(s)}\sum_{i = N(s)+1}^{N(s)+n_1(s)}(W_{i,s}^s(\tau) - \mathbb{E}W_{i,s}^s(\tau)) \right| = O_p(n^{-1/2}). 
\end{align*}

Next, note that $\sup_{\tau \in \Upsilon}|\hat{q}(\tau) - q(\tau)| = O_p(n^{-1/2})$, which means for any $\eps>0$, there exists a constant $M>0$ such that $\sup_{\tau \in \Upsilon}|\hat{q}(\tau) - q(\tau)| \leq Mn^{-1/2}$ with probability greater than $1-\eps$. On the event set that $\sup_{\tau \in \Upsilon}|\hat{q}(\tau) - q(\tau)| \leq Mn^{-1/2}$, we have
\begin{align*}
& \sup_{\tau \in \Upsilon}|I(\tau)| \leq  \sup_{\tau \in \Upsilon}\biggl| \frac{1}{n_1(s)}\sum_{i \in I_1(s)}\tilde{W}_{i,s}(\tau)\biggl(1\{Y_i(1) \leq \hat{q}_1(\tau)\} -1\{Y_i(1) \leq q_1(\tau)\} \\
& - \mathbb{P}(Y_i(1) \leq \hat{q}_1(\tau)|X_i,S_i=s) + \mathbb{P}(Y_i(1) \leq q_1(\tau)|X_i,S_i=s)\biggr)\biggr|\\
& +  \sup_{\tau \in \Upsilon}\biggl| \frac{1}{n_1(s)}\sum_{i \in I_1(s)}\tilde{W}_{i,s}(\tau)(\mathbb{P}(Y_i(1) \leq \hat{q}_1(\tau)|X_i,S_i=s) - \mathbb{P}(Y_i(1) \leq q_1(\tau)|X_i,S_i=s))\biggr| \\
&\leq  \sup_{|q-q'|\leq M n^{-1/2}}\biggl| \frac{1}{n_1(s)}\sum_{i \in I_1(s)}\tilde{W}_{i,s}(\tau)\biggl(1\{Y_i(1) \leq q\} -1\{Y_i(1) \leq q'\} \\
& - \mathbb{P}(Y_i(1) \leq q|X_i,S_i=s) + \mathbb{P}(Y_i(1) \leq q'|X_i,S_i=s)\biggr)\biggr| + C\sup_{\tau \in \Upsilon}|\hat{q}(\tau) - q(\tau)| \\
&\leq   \sup_{|q-q'|\leq M n^{-1/2}}\biggl| \frac{1}{n_1(s)}\sum_{i \in I_1(s)}\tilde{W}_{i,s}(\tau)\biggl(1\{Y_i(1) \leq q\} -1\{Y_i(1) \leq q'\} \\
& - \mathbb{P}(Y_i(1) \leq q|X_i,S_i=s) + \mathbb{P}(Y_i(1) \leq q'|X_i,S_i=s)\biggr)\biggr|  + Cn^{-1/2} \\
&= O_p(n^{-1/2}),
\end{align*}
where the first inequality is due to the triangle inequality, the second inequality is due to the fact that $\sup_{\tau \in \Upsilon}|\hat{q}(\tau) - q(\tau)| \leq Mn^{-1/2}$, and the third inequality is due to the fact that $f_1(\cdot|X_i,S_i=s)$ is assumed to be bounded. To see the last equality in the above display, we define
\begin{align*}
\mathcal{F} = \begin{Bmatrix}
& (W_{i,s}(j) - \mathbb{E}W_{i,s}(\tau,j)|S_i=s)\biggl(1\{Y_i(1) \leq q\} -1\{Y_i(1) \leq q'\} \\
&- \mathbb{P}(Y_i(1) \leq q|X_i,S_i=s) + \mathbb{P}(Y_i(1) \leq q'|X_i,S_i=s)\biggr): \tau \in \Upsilon, |q-q'|\leq M n^{-1/2}
\end{Bmatrix} 
\end{align*}
with envelope $F_i = 2L_i + \mathbb{E}(L_i|S_i=s) \in L_{\mathbb{P},d}$ for some $d>2$, where $W_{i,s}(\tau,j)$ is the $j$th coordinate of $W_{i,s}(\tau)$. Clearly $\mathcal{F}$ is of the VC-type with fixed coefficients $(\alpha, v)$. In addition, 
\begin{align*}
\sup_{f \in \mathcal{F}}\mathbb{P}f^2 \leq Cn^{-1/2} \equiv \sigma_n^2. 
\end{align*}
Therefore, Lemma \ref{lem:max_eq} implies that $\sup_{\tau \in \Upsilon}|I(\tau)| = O_p(n^{-1/2})$. By the usual maximal inequality (e.g. \citep{VW96}, Theorem 2.14.1), we can show that 
\begin{align*}
\sup_{\tau \in \Upsilon}|II(\tau)| = O_p(n^{-1/2}). 
\end{align*}
Combining these results, we conclude that
\begin{align*}
\sup_{\tau \in \Upsilon, s\in \mathcal{S}} \left|\frac{1}{n_{1}(s)}\sum_{i \in I_1(s)}\dot{W}_{i,1,s}(\tau)(1\{Y_i \leq \hat{q}_1(\tau)\} - \dot{W}_{i,1,s}^\top(\tau)\theta_{1,s}^\textit{LP}(\tau)) \right| = O_p(n^{-1/2}),
\end{align*}
and hence 
\begin{align*}
\sup_{\tau \in \Upsilon, s\in \mathcal{S}}|\hat{\theta}_{1,s}^\textit{LP}(\tau) - \theta_{1,s}^\textit{LP}(\tau)| = O_p(n^{-1/2}). 
\end{align*}

\section{Proof of Theorem \ref{prop:lg2}}
\begin{table}[H]
	%\label{Table:constants}
	\begin{adjustbox}{max width=\textwidth}
		\begin{tabular}{c|l}
			\hline
			Name & Description \\ \hline
			$H_i$ & For $i \in [n]$, $H_i = H(X_i)$ for some function $H$  \\
			$\theta_{a,s}^\textit{ML}(\tau)$ & For $a \in \{0,1\}$ and $s \in \mathcal{S}$, $\theta_{a,s}^\textit{ML}(\tau)$ is the pseudo true value defined in \eqref{eq:ml}\\ 
			$\hat{\theta}_{a,s}^\textit{ML}(\tau)$ & For $a \in \{0,1\}$ and $s \in \mathcal{S}$, $\hat{\theta}_{a,s}^\textit{ML}(\tau)$ is the estimator of $\theta_{a,s}^\textit{ML}(\tau)$ in \eqref{eq:mlhat} \\ 
			\hline
		\end{tabular}
	\end{adjustbox}
	%	\caption{Table of notation}
\end{table}
Let $d_H$ be the dimension of $H_i$, $u \in \Re^{d_H}$, and  
\begin{align*}
Q_n(\tau,s,q,u) = \frac{1}{n_a(s)}\sum_{i \in I_a(s)}[1\{Y_i \leq q\}\log(\lambda(H_i^\top (\theta_a^\textit{ML}(\tau) + u))) + 1\{Y_i > q\}\log(1-\lambda(H_i^\top(\theta_a^\textit{ML}(\tau) + u)))],
\end{align*}
and
\begin{align*}
Q(\tau,s,q,u) = \mathbb{E}[1\{Y_i(a) \leq q\}\log(\lambda(H_i^\top(\theta_a^\textit{ML}(\tau) + u))) + 1\{Y_i(a) > q\}\log(1-\lambda(H_i^\top(\theta_a^\textit{ML}(\tau) + u)))|S_i=s].
\end{align*}
We note that 
\begin{align*}
\mathcal{F} = \begin{pmatrix}
1\{Y_i \leq q\}\log(\lambda(H_i^\top(\theta_a^\textit{ML}(\tau) + u)))) + 1\{Y_i > q\}\log(1-\lambda(H_i^\top(\theta_a^\textit{ML}(\tau) + u)))) - Q(\tau,s,q,u) \\
\tau \in \Upsilon, s\in \mathcal{S}, q\in \Re, ||u||_2 \leq \delta
\end{pmatrix}
\end{align*}
is a VC class with a fixed VC index. Then, Lemma \ref{lem:max_eq} implies 
\begin{align*}
\sup_{\tau \in \Upsilon, s \in \mathcal{S}, q \in \Re, ||u||_2 \leq \delta}|Q_n(\tau,s,q,u)-Q(\tau,s,q,u)| = o_p(1).
\end{align*}
In addition, 
\begin{align*}
\sup_{\tau \in \Upsilon, s \in \mathcal{S}, q \in \Re, ||u||_2\leq \delta}|\partial_q Q(\tau,s,q,u)| \leq C,
\end{align*}
and $\sup_{\tau \in \Upsilon}|\hat{q}_a(\tau) - q_a(\tau)| = O_p(n^{-1/2})$. Therefore, 
\begin{align}
&\Delta_n \equiv  \sup_{\tau \in \Upsilon, s \in \mathcal{S}, ||u||_2\leq \delta}|Q_n(\tau,s,\hat{q}_a(\tau),u)-Q(\tau,s,q_a(\tau),u)| \notag \\
&\leq \sup_{\tau \in \Upsilon, s \in \mathcal{S}, q \in \Re, ||u||_2\leq \delta}|Q_n(\tau,s,q,u)-Q(\tau,s,q,u)| \notag \\
& +  \sup_{\tau \in \Upsilon, s \in \mathcal{S}, ||u||_2\leq \delta}|Q(\tau,s,\hat{q}_a(\tau),u)-Q(\tau,s,q_a(\tau),u)| = o_p(1).
\label{eq:deltan2}
\end{align}

Further note that $Q_n(\tau,s,\hat{q}_a(\tau),u)$ is concave in $u$ for fixed $\tau$. Therefore, for $v \in S^{d_H-1}$ where $S^{d_H-1} = \{v\in \Re^{d_H}, ||v||_2 =1\}$, and $l > \delta$
\begin{align*}
Q_n(\tau,s,\hat{q}_a(\tau), \delta v) \geq \frac{\delta}{l}Q_n(\tau,s,\hat{q}_a(\tau),l v) + \left(1-\frac{\delta}{l}\right)Q_n(\tau,s,\hat{q}_a(\tau),0),
\end{align*}
which implies 
\begin{align*}
\frac{\delta}{l}\left(Q_n(\tau,s,\hat{q}_a(\tau),l v)-Q_n(\tau,s,\hat{q}_a(\tau),0)\right) &\leq  Q_n(\tau,s,\hat{q}_a(\tau),\delta v) - Q_n(\tau,s,\hat{q}_a(\tau),0) \notag \\
&\leq  Q(\tau,s,q_a(\tau),\delta v) - Q(\tau,s,q_a(\tau),0) + 2\Delta_n.
\end{align*}
Because $Q(\tau,s,q_a(\tau),\delta v) - Q(\tau,s,q_a(\tau),0)$ is continuous in $(\tau,v) \in \Upsilon \times S^{d_H-1}$, $\Upsilon \times S^{d_H-1}$ is compact, and $0$ is the unique maximizer of $ Q(\tau,s,q_a(\tau),u)$, we have 
\begin{align*}
\sup_{(\tau,v) \in \Upsilon \times S^{d_x-1}}Q(\tau,s,q_a(\tau),\delta v) - Q(\tau,s,q_a(\tau),0) \leq -\eta,
\end{align*}
for some $\eta>0$. In addition, if $\sup_{\tau \in \Upsilon}||\hat{\theta}_{a,s}^\textit{ML}(\tau)-\theta_{a,s}^\textit{ML}(\tau)||_2 > \delta$, then there exists $(\tau,l,v) \in \Upsilon \times (\delta,\infty) \times S^{d_x-1}$ such that 
\begin{align*}
\frac{\delta}{l}\left(Q_n(\tau,s,\hat{q}_a(\tau),l v)-Q_n(\tau,s,\hat{q}_a(\tau),0)\right) \geq 0.
\end{align*}
Therefore, 
\begin{align*}
\mathbb{P}\left(\sup_{\tau \in \Upsilon}||\hat{\theta}_{a,s}^\textit{ML}(\tau)-\theta_{a,s}^\textit{ML}(\tau)||_2 > \delta   \right) \leq \mathbb{P}(\eta \leq 2\Delta_n) \rightarrow 0,
\end{align*}
where the last step is due to \eqref{eq:deltan2}. This implies 
\begin{align*}
\sup_{\tau \in \Upsilon}||\hat{\theta}_{a,s}^\textit{ML}(\tau)-\theta_{a,s}^\textit{ML}(\tau)||_2 = o_p(1). 
\end{align*}

\section{Proof of Theorem \ref{prop:lpml}}
\begin{table}[H]
	%\label{Table:constants}
	\begin{adjustbox}{max width=\textwidth}
		\begin{tabular}{c|l}
			\hline
			Name & Description \\ \hline
			$\lambda(\cdot)$ & Logistic CDF\\
			$\theta_{a,s}^\textit{LPML}(\tau)$ & For $a \in \{0,1\}$, $s \in \mathcal{S}$, and $\tau \in \Upsilon$, $\theta_{a,s}^\textit{LPML}(\tau)$ is the pseudo true value defined in \eqref{eq:theta^lpml}\\ 
			$\hat{\theta}_{a,s}^\textit{LPML}(\tau)$ & For $a \in \{0,1\}$, $s \in \mathcal{S}$, and $\tau \in \Upsilon$, $\hat{\theta}_{a,s}^\textit{LPML}(\tau)$ is the estimator of $\theta_{a,s}^\textit{LPML}(\tau)$ in \eqref{eq:thetahat^lpml} \\
			$\omega_{i,a,s}(\tau)$ & For $a = 0,1$, $i \in [n]$, $s \in \mathcal{S}$, and $\tau \in \Upsilon$, $\omega_{i,a,s}(\tau) = \lambda(H_i^\top \theta_{a,s}^\textit{ML}(\tau))$ \\
			$W_{i,s}(\tau)$ & For $i \in [n]$, $s \in \mathcal{S}$, and $\tau \in \Upsilon$, $W_{i,s}(\tau) = (\omega_{i,1,s}(\tau),\omega_{i,0,s}(\tau))^\top$ \\
			$\omega_{i,a,a',s}(\tau)$ & For $a = 0,1$, $a' = 0,1$, $i \in [n]$, $s \in \mathcal{S}$, and $\tau \in \Upsilon$, $\omega_{i,a,a',s}(\tau) = \omega_{i,a',s}(\tau)- \frac{1}{n_a(s)}\sum_{i \in I_a(s)}\omega_{i,a',s}(\tau)$ \\
			$\dot{W}_{i,a,s}(\tau)$ & For $a =0,1$, $i \in [n]$, $s \in \mathcal{S}$, and $\tau \in \Upsilon$, $\dot{W}_{i,a,s}(\tau) = W_{i,s}(\tau) - \frac{1}{n_a(s)}\sum_{i \in I_a(s)}W_{i,s}(\tau) = (\omega_{i,a,1,s}(\tau),\omega_{i,a,0,s}(\tau))^\top$ \\
			$\hat{\omega}_{i,a,s}(\tau)$ & For $a = 0,1$, $i \in [n]$, $s \in \mathcal{S}$, and $\tau \in \Upsilon$, $\hat{\omega}_{i,a,s}(\tau) = \lambda(H_i^\top \hat{\theta}_{a,s}^\textit{ML}(\tau))$ \\
			$\hat{W}_{i,s}(\tau)$ & For $i \in [n]$, $s \in \mathcal{S}$, and $\tau \in \Upsilon$, $\hat{W}_{i,s}(\tau) = (\hat{\omega}_{i,1,s}(\tau),\hat{\omega}_{i,0,s}(\tau))^\top$ \\
			$\hat{\omega}_{i,a,a',s}(\tau)$ & For $a = 0,1$, $a' = 0,1$, $i \in [n]$, $s \in \mathcal{S}$, and $\tau \in \Upsilon$, $\hat{\omega}_{i,a,a',s}(\tau) = \hat{\omega}_{i,a',s}(\tau)- \frac{1}{n_a(s)}\sum_{i \in I_a(s)}\hat{\omega}_{i,a',s}(\tau)$ \\
			$\breve{W}_{i,a,s}(\tau)$ & For $a =0,1$, $i \in [n]$, $s \in \mathcal{S}$, and $\tau \in \Upsilon$, $\breve{W}_{i,a,s}(\tau) = \hat{W}_{i,a,s}(\tau) - \frac{1}{n_a(s)}\sum_{i \in I_a(s)}\hat{W}_{i,s}(\tau) = (\hat{\omega}_{i,a,1,s}(\tau),\hat{\omega}_{i,a,0,s}(\tau))^\top$ \\
			\hline
		\end{tabular}
	\end{adjustbox}
	%	\caption{Table of notation}
\end{table}
Recall $\widehat{m}_a(\tau,s,X_i) = \tau - \hat{W}_{i,s}^\top(\tau)\hat{\theta}_{a,s}^\textit{LPML}(\tau)$ and $\overline{m}_a(\tau,s,X_i) = \tau - W_{i,s}^\top(\tau)\theta_{a,s}^\textit{LPML}(\tau)$. Let $d_H$ be the dimension of $H_i$. Then, we have
\begin{align*}
\overline{\Delta}_a(\tau,s,X_i) & = \widehat{m}_a(\tau,s,X_i) - \overline{m}_a(\tau,s,X_i) \\
& \equiv \Lambda_{\tau,s}(X_i, \theta_{a,s}(\tau)) - \Lambda_{\tau,s}(X_i, \hat{\theta}_{a,s}(\tau)), 
\end{align*}
where the functional form $\Lambda_{\tau,s}(\cdot)$ is invariant to $(\tau,s)$, 
\begin{align*}
& \Lambda_{\tau,s}(X_i,\theta) = \Lambda(X_i,\theta) \equiv (\lambda(H(X_i)^\top \theta_1),\lambda(H(X_i)^\top \theta_2)) \theta_3, \quad \theta_1,\theta_2 \in \Re^{d_H}, \quad \theta_3 \in \Re^2,\\
&\theta_{a,s}(\tau) = ((\theta_{1,s}^\textit{ML}(\tau))^{\top},(\theta_{0,s}^\textit{ML}(\tau))^{\top},(\theta_{a,s}^\textit{LPML}(\tau))^{\top})^{\top}, \quad \text{and}\\
& \hat{\theta}_{a,s}(\tau) = ((\hat{\theta}_{1,s}^\textit{ML}(\tau))^{\top},(\hat{\theta}_{0,s}^\textit{ML}(\tau))^{\top},(\hat{\theta}_{a,s}^\textit{LPML}(\tau))^{\top})^{\top}. 
\end{align*}
Suppose 
\begin{align}
\sup_{\tau \in \Upsilon, a = 0,1, s \in \mathcal{S}}|\hat{ \theta}_{a,s}^\textit{LPML}(\tau) - \theta_{a,s}^\textit{LPML}(\tau)| = o_p(1),
\label{eq:theta_lpml}
\end{align}
and we also have
\begin{align*}
\sup_{\tau \in \Upsilon, a = 0,1, s \in \mathcal{S}}|\hat{ \theta}_{a,s}^\textit{ML}(\tau) - \theta_{a,s}^\textit{ML}(\tau)| = o_p(1)
\end{align*}
by Theorem \ref{prop:lg2}. Then Assumption \ref{ass:par}(iii) holds for $\hat{\theta}_{a,s}(\tau)$. Assumption \ref{ass:par}(i) holds automatically as $\Lambda(\cdot)$ does not depend on $\tau$, and Assumption \ref{ass:par}(ii) holds by Assumption \ref{ass:lpml}. Then, Theorem \ref{thm:par} implies Assumptions \ref{ass:mhat} and \ref{ass:mhatw} hold. In addition, Theorem \ref{thm:theta_as^star} implies $[\Sigma^\textit{LPML}(\tau_k,\tau_l)]_{k,l \in [K]}$ is the smallest among all linear adjustments with $W_{i,s}(\tau) = (\omega_{i,1,s}(\tau),\omega_{i,0,s}(\tau))^\top$ with $\omega_{i,a,s}(\tau) = \lambda(H_i^\top \theta_{a,s}^\textit{ML}(\tau))$ as the regressors. 

Therefore, the only thing left is to establish \eqref{eq:theta_lpml}. First, denote  
\begin{align*}
& \dot{\omega}_{i,a,a',s}(\tau) = \omega_{i,a',s}(\tau) - \frac{1}{n_a(s)}\sum_{i \in I_a(s)} \omega_{i,a',s}(\tau), \quad \dot{W}_{i,a,s}(\tau) = (\dot{\omega}_{i,a,1,s}(\tau),\dot{\omega}_{i,a,0,s}(\tau))^\top, \quad \text{and} \\
& \breve{\theta}_{a,s}^\textit{LPML}(\tau) = \left[\frac{1}{n_a(s)}\sum_{i \in I_a(s)} \dot{W}_{i,a,s}(\tau)\dot{W}_{i,a,s}(\tau)^\top \right]^{-1}\left[\frac{1}{n_a(s)}\sum_{i \in I_a(s)} \dot{W}_{i,a,s}(\tau)1\{Y_i \leq \hat{q}_a(\tau)\} \right].
\end{align*}
We note that Assumption \ref{ass:regX2} holds with $W_{i,s}(\tau) = (\lambda(H_i^\top \theta_{1,s}^\textit{ML}(\tau)),\lambda(H_i^\top \theta_{0,s}^\textit{ML}(\tau)))^\top$ by Assumption \ref{ass:lpml}(ii). Then, following the same argument as in the proof of Theorem \ref{prop:lp}, we can show that 
$$\sup_{\tau \in \Upsilon, a = 0,1, s \in \mathcal{S}}||\breve{\theta}_{a,s}^\textit{LPML}(\tau) - \theta_{a,s}^\textit{LPML}(\tau)||_2 = o_p(1).$$ Therefore, it suffices to show 
\begin{align*}
\sup_{\tau \in \Upsilon, a = 0,1, s \in \mathcal{S}}||\breve{\theta}_{a,s}^\textit{LPML}(\tau) - \hat{\theta}_{a,s}^\textit{LPML}(\tau)||_2 = o_p(1).
\end{align*}
Denote $\hat{W}_{i,s}(\tau) = (\hat{\omega}_{i,1,s}(\tau),\hat{\omega}_{i,0,s}(\tau))^\top$ and $\hat{\omega}_{i,a,s}(\tau) = \lambda(H_i^\top \hat{\theta}_{a,s}^\textit{ML}(\tau))$. We have 
\begin{align*}
& \sup_{\tau \in \Upsilon, a = 0,1, s \in \mathcal{S}}\left[\frac{1}{n_a(s)}\sum_{i \in I_a(s)} (\omega_{i,1,s}(\tau)-\hat{\omega}_{i,1,s}(\tau))^2 \right]^{1/2} \\
&\leq \max_{a=0,1, s \in \mathcal{S}}\lambda_{\max}^{1/2}\left(\frac{1}{n_a(s)}\sum_{i \in I_a(s)}H_iH_i^\top\right) \sup_{\tau \in \Upsilon, a = 0,1, s \in \mathcal{S}}||\hat{ \theta}_{1,s}^\textit{ML}(\tau) - \theta_{1,s}^\textit{ML}(\tau)||_2 = o_p(1). 
\end{align*}
In addition, denote $\breve{\omega}_{i,a,a',s}(\tau) = \hat{\omega}_{i,a',s}(\tau)-\frac{1}{n_a(s)}\sum_{i \in I_a(s)}\hat{\omega}_{i,a',s}(\tau)$ and $\breve{W}_{i,a,s}(\tau) = (\breve{\omega}_{i,a,1,s}(\tau),\breve{\omega}_{i,a,0,s}(\tau))^\top$. We first consider the case $a'=1$. We have 
\begin{align*}
& \sup_{\tau \in \Upsilon, a = 0,1, s \in \mathcal{S}}\left|\left[\frac{1}{n_a(s)}\sum_{i \in I_a(s)} \dot{\omega}_{i,a,1,s}^2(\tau) \right]^{1/2} - \left[\frac{1}{n_a(s)}\sum_{i \in I_a(s)} \breve{\omega}_{i,a,1,s}^2(\tau) \right]^{1/2}\right| \\
& \leq \sup_{\tau \in \Upsilon, a = 0,1, s \in \mathcal{S}}\left[\frac{1}{n_a(s)}\sum_{i \in I_a(s)} (\dot{\omega}_{i,a,1,s}(\tau)-\breve{\omega}_{i,a,1,s}^2(\tau))^2 \right]^{1/2}  \\
& \leq  \sup_{\tau \in \Upsilon, a = 0,1, s \in \mathcal{S}}\left[\frac{1}{n_a(s)}\sum_{i \in I_a(s)} (\omega_{i,1,s}(\tau)-\hat{\omega}_{i,1,s}(\tau))^2 \right]^{1/2} + \sup_{\tau \in \Upsilon, a = 0,1, s \in \mathcal{S}}\left| \frac{1}{n_a(s)}\sum_{i \in I_a(s)} (\omega_{i,1,s}(\tau)-\hat{\omega}_{i,1,s}(\tau))\right| \\
& \leq 2\sup_{\tau \in \Upsilon, a = 0,1, s \in \mathcal{S}}\left[\frac{1}{n_a(s)}\sum_{i \in I_a(s)} (\omega_{i,1,s}(\tau)-\hat{\omega}_{i,1,s}(\tau))^2 \right]^{1/2} = o_p(1). 
\end{align*}
In addition, Assumption \ref{ass:lpml} implies $\inf_{\tau \in \Upsilon, a = 0,1, s \in \mathcal{S}}\left[\frac{1}{n_a(s)}\sum_{i \in I_a(s)} \dot{\omega}_{i,a,1,s}^2(\tau) \right]^{1/2} \leq C < \infty$. Therefore, we have
\begin{align*}
\sup_{\tau \in \Upsilon, a = 0,1, s \in \mathcal{S}}\left|\left[\frac{1}{n_a(s)}\sum_{i \in I_a(s)} \dot{\omega}_{i,a,1,s}^2(\tau) \right] - \left[\frac{1}{n_a(s)}\sum_{i \in I_a(s)} \breve{\omega}_{i,a,1,s}^2(\tau) \right]\right| = o_p(1). 
\end{align*}
Similarly, we have 
\begin{align*}
\sup_{\tau \in \Upsilon, a = 0,1, s \in \mathcal{S}}\left|\left[\frac{1}{n_a(s)}\sum_{i \in I_a(s)} \dot{\omega}_{i,a,0,s}^2(\tau) \right] - \left[\frac{1}{n_a(s)}\sum_{i \in I_a(s)} \breve{\omega}_{i,a,0,s}^2(\tau) \right]\right| = o_p(1). 
\end{align*}
Last, 
\begin{align*}
& \sup_{\tau \in \Upsilon, a = 0,1, s \in \mathcal{S}}\left|\left[\frac{1}{n_a(s)}\sum_{i \in I_a(s)} \dot{\omega}_{i,a,0,s}(\tau)\dot{\omega}_{i,a,1,s}(\tau) \right] - \left[\frac{1}{n_a(s)}\sum_{i \in I_a(s)} \breve{\omega}_{i,a,0,s}(\tau)\breve{\omega}_{i,a,1,s}(\tau) \right]\right| \\
& \leq \sup_{\tau \in \Upsilon, a = 0,1, s \in \mathcal{S}}\left[\left(\frac{1}{n_a(s)}\sum_{i \in I_a(s)} \dot{\omega}_{i,a,0,s}^2(\tau) \right)^{1/2} + \left(\frac{1}{n_a(s)}\sum_{i \in I_a(s)} \breve{\omega}_{i,a,1,s}^2(\tau) \right)^{1/2}  \right] \\
& \times \biggl\{\left[\frac{1}{n_a(s)}\sum_{i \in I_a(s)} (\dot{\omega}_{i,a,1,s}(\tau)  - \breve{\omega}_{i,a,1,s}(\tau))^2 \right]^{1/2} +\left[\frac{1}{n_a(s)}\sum_{i \in I_a(s)} (\dot{\omega}_{i,a,0,s}(\tau)  - \breve{\omega}_{i,a,0,s}(\tau))^2 \right]^{1/2}\biggr\}= o_p(1). 
\end{align*}
This implies 
\begin{align*}
& \sup_{\tau \in \Upsilon, a = 0,1, s \in \mathcal{S}}\left|\left[\frac{1}{n_a(s)}\sum_{i \in I_a(s)} \dot{W}_{i,a,s}(\tau)\dot{W}_{i,a,s}(\tau)^\top \right] - \left[\frac{1}{n_a(s)}\sum_{i \in I_a(s)} \breve{W}_{i,a,s}(\tau)\breve{W}_{i,a,s}(\tau)^\top \right]\right| \\
& = \sup_{\tau \in \Upsilon, a = 0,1, s \in \mathcal{S}}\biggl|\left[\frac{1}{n_a(s)}\sum_{i \in I_a(s)} \begin{pmatrix}
\dot{\omega}_{i,a,1,s}^2(\tau) & \dot{\omega}_{i,a,1,s}(\tau)\dot{\omega}_{i,a,0,s}(\tau) \\
\dot{\omega}_{i,a,0,s}(\tau)\dot{\omega}_{i,a,1,s}(\tau) & \dot{\omega}_{i,a,0,s}^2(\tau) 
\end{pmatrix} \right] \\
& - \left[\frac{1}{n_a(s)}\sum_{i \in I_a(s)} \begin{pmatrix}
\breve{\omega}_{i,a,1,s}^2(\tau) & \breve{\omega}_{i,a,1,s}(\tau)\breve{\omega}_{i,a,0,s}(\tau) \\
\breve{\omega}_{i,a,0,s}(\tau)\breve{\omega}_{i,a,1,s}(\tau) & \breve{\omega}_{i,a,0,s}^2(\tau) 
\end{pmatrix} \right]\biggr| = o_p(1),
\end{align*}
and thus, 
\begin{align*}
\sup_{\tau \in \Upsilon, a = 0,1, s \in \mathcal{S}}\left|\left[\frac{1}{n_a(s)}\sum_{i \in I_a(s)} \dot{W}_{i,a,s}(\tau)\dot{W}_{i,a,s}(\tau)^\top \right]^{-1} - \left[\frac{1}{n_a(s)}\sum_{i \in I_a(s)} \breve{W}_{i,a,s}(\tau)\breve{W}_{i,a,s}(\tau)^\top \right]^{-1}\right| = o_p(1).
\end{align*}

In addition, 
\begin{align*}
&\sup_{\tau \in \Upsilon, a = 0,1, s \in \mathcal{S}}\left\Vert \frac{1}{n_a(s)}\sum_{i \in I_a(s)} (\dot{W}_{i,a,s}-\breve{W}_{i,a,s}(\tau))1\{Y_i \leq \hat{q}_a(\tau)\} \right\Vert_2 \\
& \leq  \sup_{\tau \in \Upsilon, a = 0,1, s \in \mathcal{S}}\frac{1}{n_a(s)}\sum_{i \in I_a(s)} \left\Vert \dot{W}_{i,a,s}(\tau)-\breve{W}_{i,a,s}(\tau)\right\Vert_2 \\
& \leq  2\sup_{\tau \in \Upsilon, a = 0,1, s \in \mathcal{S}}\frac{1}{n_a(s)}\sum_{i \in I_a(s)} \left\Vert W_{i,s}(\tau)-\hat{W}_{i,s}(\tau)\right\Vert_2 = o_p(1). 
\end{align*}
Therefore, we have
\begin{align*}
&\sup_{\tau \in \Upsilon, a = 0,1, s \in \mathcal{S}}||\breve{\theta}_{a,s}^\textit{LPML}(\tau) - \hat{\theta}_{a,s}^\textit{LPML}(\tau)||_2 \\
& = \sup_{\tau \in \Upsilon, a = 0,1, s \in \mathcal{S}}\biggl\Vert \left[\frac{1}{n_a(s)}\sum_{i \in I_a(s)} \dot{W}_{i,a,s}(\tau)\dot{W}_{i,a,s}(\tau)^\top \right]^{-1} \left[\frac{1}{n_a(s)}\sum_{i \in I_a(s)} \dot{W}_{i,a,s}(\tau)1\{Y_i \leq \hat{q}_a(\tau)\}  \right] \\
& - \left[\frac{1}{n_a(s)}\sum_{i \in I_a(s)} \breve{W}_{i,a,s}(\tau)\breve{W}_{i,a,s}(\tau)^\top \right]^{-1} \left[\frac{1}{n_a(s)}\sum_{i \in I_a(s)} \breve{W}_{i,a,s}(\tau)1\{Y_i \leq \hat{q}_a(\tau)\}  \right]\biggr\Vert_2 \\
& = o_p(1),
\end{align*}
which concludes the proof. 

\section{Proof of Theorem \ref{thm:np}}
\begin{table}[H]
	%\label{Table:constants}
	\begin{adjustbox}{max width=\textwidth}
		\begin{tabular}{c|l}
			\hline
			Name & Description \\ \hline
			$\theta_{a,s}^\textit{NP}(\tau)$ & For $a \in \{0,1\}$, $s \in \mathcal{S}$, and $\tau \in \Upsilon$, $\theta_{a,s}^\textit{NP}(\tau)$ is the pseudo true value defined in Assumption \ref{ass:np}(ii)\\ 
			$\hat{\theta}_{a,s}^\textit{NP}(\tau)$ & For $a \in \{0,1\}$, $s \in \mathcal{S}$, and $\tau \in \Upsilon$, $\hat{\theta}_{a,s}^\textit{NP}(\tau)$ is the estimator of $\theta_{a,s}^\textit{NP}(\tau)$ in \eqref{eq:mhat_np2} \\
			\hline
		\end{tabular}
	\end{adjustbox}
	%	\caption{Table of notation}
\end{table}
The proof strategy follows \cite{BCFH13} and details are given here for completeness. We divide the proof into three steps. In the first step, we show 
$$\sup_{\tau \in \Upsilon}||\hat{\theta}_{a,s}^\textit{NP}(\tau) - \theta_{a,s}^\textit{NP}(\tau)||_2 = O_p(\sqrt{h_n \log(n)/n}).$$ 
In the second step, we establish Assumption \ref{ass:mhatw}. By a similar argument, we can establish Assumption \ref{ass:mhat}(i). In the third step, we establish Assumptions \ref{ass:mhat}(ii) and \ref{ass:mhat}(iii). 

\vspace{1.5mm}
\noindent \textbf{Step 1.} 
Let $\hat{U}_\tau = \hat{\theta}^\textit{NP}_{a,s}(\tau) - \theta^\textit{NP}_{a,s}(\tau)$,  
\begin{align*}
Q_n(\tau,s,q,\theta) &= \frac{-1}{n_a(s)}\sum_{i \in I_a(s)}[1\{Y_i \leq q\}\log(\lambda(H_{h_n}^\top(X_i)\theta_a)) + 1\{Y_i > q\}\log(1-\lambda(H_{h_n}^\top(X_i)\theta_a))] \\
&= \frac{1}{n_a(s)}\sum_{i \in I_a(s)}[\log\left(1+\exp(H_{h_n}^\top(X_i)\theta_a)\right) - 1\{Y_i \leq q\} H_{h_n}^\top(X_i)\theta_a],
\end{align*}
and for an arbitrary $U_\tau \in \Re^{h_n}$, 
\begin{align*}
\ell_i(t) = \log(1+\exp(H_{h_n}^\top(X_i)(\theta^\textit{NP}_{a,s}(\tau) + tU_\tau) )).
\end{align*}
Then, we have
\begin{align*}
\hat{U}_\tau  = \argmin_{U_\tau} Q_n(\tau,s,\hat{q}_a(\tau),\theta^\textit{NP}_{a,s}(\tau) + U_\tau) - Q_n(\tau,s,\hat{q}_a(\tau),\theta^\textit{NP}_{a,s}(\tau)),
\end{align*}
\begin{align*}
\partial_t Q_n(\tau,s,\hat{q}_a(\tau),\theta^\textit{NP}_{a,s}(\tau) + tU_\tau)|_{t=0} = \frac{-1}{n_a(s)}\sum_{i \in I_a(s)}\left(1\{Y_i \leq \hat{q}_a(\tau)\} - \lambda(H_{h_n}^\top(X_i)\theta^\textit{NP}_{a,s}(\tau))\right)H_{h_n}^\top(X_i)U_\tau,
\end{align*}
and
\begin{align*}
& Q_n(\tau,s,\hat{q}_a(\tau),\theta^\textit{NP}_{a,s}(\tau) + U_\tau) - Q_n(\tau,s,\hat{q}_a(\tau),\theta^\textit{NP}_{a,s}(\tau)) - \partial_t Q_n^\top (\tau,s,\hat{q}_a(\tau),\theta^\textit{NP}_{a,s}(\tau) + tU_\tau)|_{t=0}  \\
= & \frac{1}{n_a(s)}\sum_{i \in I_a(s)}\left[\ell_i(1) - \ell_i(0) - \ell_i'(0)\right].
\end{align*}
In addition
\begin{align*}
|\ell_i^{'''}(t)| \leq |\ell_i^{''}(t)| |H_{h_n}^\top(X_i) U_\tau|.
\end{align*}
Therefore,  there exists a constant $\underline{c}>0$ such that 
\begin{align*}
& \ell_i(1) - \ell_i(0) -  \ell'_i(0) \\
&\geq  \frac{\ell_i^{''}(0)}{(H_{h_n}^\top(X_i) U_\tau)^2} \left[\exp(-|H_{h_n}^\top(X_i) U_\tau|) + |H_{h_n}^\top(X_i) U_\tau| -1\right] \\
&= \lambda(H_{h_n}^\top(X_i)\theta^\textit{NP}_{a,s}(\tau))(1-\Lambda(H_{h_n}^\top(X_i)\theta^\textit{NP}_{a,s}(\tau)))\left[\exp(-|H_{h_n}^\top(X_i) U_\tau|) + |H_{h_n}^\top(X_i) U_\tau| -1\right] \\
&\geq \underline{c}\left[\exp(-|H_{h_n}^\top(X_i) U_\tau|) + |H_{h_n}^\top(X_i) U_\tau| -1\right] \\
&\geq  \underline{c}\left(\frac{(H_{h_n}^\top(X_i) U_\tau)^2}{2} - \frac{|H_{h_n}^\top(X_i) U_\tau|^3}{6}\right),
\end{align*}
where the first inequality is due to \citet[Lemma 1]{B10} and the third inequality holds because 
\begin{align*}
e^{-x} + x - 1 \geq \frac{x^2}{2} - \frac{x^3}{6}, \; x>0.
\end{align*}
To see the second inequality, note that $e^{-x}+ x - 1 \geq 0$ for $x\geq 0$ and  by Assumption \ref{ass:np}, 
\begin{align*}
& \inf_{a = 0,1, s \in \mathcal{S},\tau \in \Upsilon, x \in \Supp(X)}\lambda(H_{h_n}^\top(x) \theta^\textit{NP}_{a,s}(\tau)) \\
&= \inf_{a = 0,1, s \in \mathcal{S},\tau \in \Upsilon, x \in \Supp(X)}(\mathbb{P}(Y_i(a)\leq q_a(\tau)|S_i=s,X_i=x) - R_a(\tau,s,x)) \geq c/2,
\end{align*} 
and 
\begin{align*}
& \sup_{a = 0,1, s \in \mathcal{S},\tau \in \Upsilon, x \in \Supp(X)}\lambda(H_{h_n}^\top(x) \theta^\textit{NP}_{a,s}(\tau))\\ 
= & \sup_{a = 0,1, s \in \mathcal{S},\tau \in \Upsilon, x \in \Supp(X)}(\mathbb{P}(Y_i(a)\leq q_a(\tau)|S_i=s,X_i=x) + R_a(\tau,s,x)) \leq 1-c/2.
\end{align*} 
This implies  
\begin{align*}
\inf_{\tau \in \Upsilon}\lambda(H_{h_n}^\top(X_i)\theta^\textit{NP}_{a,s}(\tau))(1-\Lambda(H_{h_n}^\top(X_i)\theta^\textit{NP}_{a,s}(\tau))) \geq \underline{c}>0,
\end{align*}
and thus, 
\begin{align*}
G_n(U_\tau) &\equiv Q_n(\tau,s,\hat{q}_a(\tau),\theta^\textit{NP}_{a,s}(\tau) + U_\tau) - Q_n(\tau,s,\hat{q}_a(\tau),\theta^\textit{NP}_{a,s}(\tau)) - \partial_t Q_n^\top (\tau,s,\hat{q}_a(\tau),\theta^\textit{NP}_{a,s}(\tau) + tU_\tau)|_{t=0}\\
&\geq  \frac{ \underline{c}}{n_a(s)}\sum_{i\in I_a(s)}\left(\frac{(H_{h_n}^\top(X_i) U_\tau)^2}{2} - \frac{|H_{h_n}^\top(X_i) U_\tau|^3}{6}\right).
\end{align*}
Let 
\begin{align}
\overline{\ell} = \inf_{U \in \Re^{h_n} }\frac{\left[\frac{1}{n_a(s)}\sum_{i \in I_a(s)}(H_{h_n}^\top(X_i) U)^2\right]^{3/2} }{\frac{1}{n_a(s)}\sum_{i \in I_a(s)}|H_{h_n}^\top(X_i) U|^3}.
\label{eq:overlineell} 
\end{align}
If $\left[\frac{1}{n_a(s)}\sum_{i \in I_a(s)}(H_{h_n}^\top(X_i)\hat{U}_\tau)^2\right]^{1/2} \leq \overline{\ell}$, then 
\begin{align*}
& \frac{1}{n_a(s)}\sum_{i \in I_a(s)}(H_{h_n}^\top(X_i)\hat{U}_\tau)^2 \\
&= \left[\frac{1}{n_a(s)}\sum_{i \in I_a(s)}(H_{h_n}^\top(X_i)\hat{U}_\tau)^2\right]^{-1/2} \frac{\left[\frac{1}{n_a(s)}\sum_{i \in I_a(s)}(H_{h_n}^\top(X_i)\hat{U}_\tau)^2\right]^{3/2} }{\frac{1}{n_a(s)}\sum_{i \in I_a(s)}(H_{h_n}^\top(X_i)\hat{U}_\tau)^3} \frac{1}{n_a(s)}\sum_{i \in I_a(s)}(H_{h_n}^\top(X_i)\hat{U}_\tau)^3 \\
&\geq \frac{1}{n_a(s)}\sum_{i \in I_a(s)}(H_{h_n}^\top(X_i)\hat{U}_\tau)^3,
\end{align*}
and thus 
\begin{align*}
G_n(\hat{U}_\tau) \geq \frac{ \underline{c}}{n_a(s)}\sum_{i\in I_a(s)}\left(\frac{(H_{h_n}^\top(X_i) U_\tau)^2}{2} - \frac{|H_{h_n}^\top(X_i) U_\tau|^3}{6}\right) \geq \frac{ \underline{c}}{n_a(s)}\sum_{i\in I_a(s)}\frac{(H_{h_n}^\top(X_i)\hat{U}_\tau)^2}{3}.
\end{align*}
On the other hand, if $\left[\frac{1}{n_a(s)}\sum_{i \in I_a(s)}(H_{h_n}^\top(X_i)\hat{U}_\tau)^2\right]^{1/2} > \overline{\ell}$, we can denote $\overline{U}_\tau = \frac{\overline{\ell} \hat{U}_\tau}{\left[\frac{1}{n_a(s)}\sum_{i \in I_a(s)}(H_{h_n}^\top(X_i)\hat{U}_\tau)^2\right]^{1/2}}$ such that 
\begin{align*}
\left[\frac{1}{n_a(s)}\sum_{i \in I_a(s)}(H_{h_n}^\top(X_i)\overline{U}_\tau)^2\right]^{1/2} \leq \overline{\ell}.
\end{align*}
Further, because $G_n(U_\tau)$ is convex in $U_\tau$ we have
\begin{align*}
G_n(\hat{U}_\tau) &= G_n\left(\frac{\left[\frac{1}{n_a(s)}\sum_{i \in I_a(s)}(H_{h_n}^\top(X_i)\hat{U}_\tau)^2\right]^{1/2} }{\overline{\ell}} \overline{U}_\tau\right)  \\
&\geq  \frac{\left[\frac{1}{n_a(s)}\sum_{i \in I_a(s)}(H_{h_n}^\top(X_i)\hat{U}_\tau)^2\right]^{1/2} }{\overline{\ell}}G_n(\overline{U}_\tau) \\
&\geq \frac{\left[\frac{1}{n_a(s)}\sum_{i \in I_a(s)}(H_{h_n}^\top(X_i)\hat{U}_\tau)^2\right]^{1/2} }{\overline{\ell}} \frac{ \underline{c}}{n_a(s)}\sum_{i\in I_a(s)}\frac{(H_{h_n}^\top(X_i)\overline{U}_\tau)^2}{3} \\
&= \frac{\underline{c} \overline{\ell}}{3}\left[\frac{1}{n_a(s)}\sum_{i \in I_a(s)}(H_{h_n}^\top(X_i)\hat{U}_\tau)^2\right]^{1/2}.
\end{align*}
Therefore, for some constant $\overline{c}$ that only depends on $\underline{c}$ and $\kappa_1$, we have
\begin{align}
G_n(\hat{U}_\tau) &\geq \min\left( \frac{ \underline{c}}{n_a(s)}\sum_{i\in I_a(s)}\frac{(H_{h_n}^\top(X_i)\hat{U}_\tau)^2}{3},  \frac{\underline{c} \overline{\ell}}{3}\left[\frac{1}{n_a(s)}\sum_{i \in I_a(s)}(H_{h_n}^\top(X_i)\hat{U}_\tau)^2\right]^{1/2}\right) \notag \\
&\geq  \frac{\overline{c}}{3}\min(||\hat{U}_\tau||_2^2, \overline{\ell}||\hat{U}_\tau||_2). 
\label{eq:lower}
\end{align}
In addition, by construction, 
\begin{align}
G_n(\hat{U}_\tau) &\leq  |\partial_t Q_n^\top (\tau,s,\hat{q}_a(\tau),\theta^\textit{NP}_{a,s}(\tau) + tU_\tau)|_{t=0}| \notag \\
&= \left|\frac{1}{n_a(s)}\sum_{i \in I_{a}(s)}(1\{Y_i \leq \hat{q}_a(\tau)\} - \lambda(H_{h_n}^\top(X_i) \theta_{a,s}^\textit{NP}(\tau)))H_{h_n}^\top(X_i)\hat{U}_\tau\right| \notag \\
&\leq \left\Vert \frac{1}{n_a(s)}\sum_{i \in I_{a}(s)}(1\{Y_i \leq \hat{q}_a(\tau)\} - m_a(\tau,s,X_i))H_{h_n}(X_i)\right\Vert_\infty \left\Vert \hat{U}_\tau\right\Vert_1 \notag \\
& + \left[\frac{1}{n_a(s)}\sum_{i \in I_{a}(s)}R_{a}^2(\tau,s,X_i)\right]^{1/2}\left[\frac{1}{n_a(s)}\sum_{i \in I_{a}(s)}(  H_{h_n}^\top(X_i)\hat{U}_\tau)\right]^{1/2} \notag \\
&\leq \left\Vert \frac{ h_n^{1/2}}{n_a(s)}\sum_{i \in I_{a}(s)}(1\{Y_i \leq \hat{q}_a(\tau)\} - m_a(\tau,s,X_i))H_{h_n}(X_i)\right\Vert_\infty \left\Vert \hat{U}_\tau\right\Vert_2 \notag \\
& + \left[\frac{\kappa_2}{n_a(s)}\sum_{i \in I_{a}(s)}R_{a}^2(\tau,s,X_i)\right]^{1/2}||\hat{U}_\tau||_2.
\label{eq:upper}
\end{align}
Combining \eqref{eq:lower} and \eqref{eq:upper}, we have 
\begin{align*}
\frac{\overline{c}}{3}\min(||\hat{U}_\tau||_2, \overline{\ell}) \leq \left\Vert \frac{ h_n^{1/2}}{n_a(s)}\sum_{i \in I_{a}(s)}(1\{Y_i \leq \hat{q}_a(\tau)\} - m_a(\tau,s,X_i))H_{h_n}(X_i)\right\Vert_\infty + \left[\frac{\kappa_2}{n_a(s)}\sum_{i \in I_{a}(s)}R_{a}^2(\tau,s,X_i)\right]^{1/2}.
\end{align*}
Taking $\sup_{\tau \in \Upsilon}$ on both sides, we have
\begin{align*}
& \frac{\overline{c}}{3}\min(\sup_{\tau \in \Upsilon}||\hat{U}_\tau||_2, \overline{\ell}) \\
& \leq  \sup_{\tau \in \Upsilon}\left\Vert \frac{ h_n^{1/2}}{n_a(s)}\sum_{i \in I_{a}(s)}(1\{Y_i \leq \hat{q}_a(\tau)\} - m_a(\tau,s,X_i))H_{h_n}(X_i)\right\Vert_\infty + \sup_{\tau \in \Upsilon}\left[\frac{\kappa_2}{n_a(s)}\sum_{i \in I_{a}(s)}R_{a}^2(\tau,s,X_i)\right]^{1/2} \\
& = O_p(\sqrt{\frac{h_n \log(n)}{n}}),
\end{align*}
where the last line holds due to Assumption \ref{ass:np} and Lemma \ref{lem:np1}. Finally, Lemma \ref{lem:np2} shows that $\overline{\ell}/\sqrt{\frac{h_n \log(n)}{n}} \rightarrow \infty$, which implies 
\begin{align*}
\sup_{\tau \in \Upsilon}||\hat{U}_\tau||_2 = O_p\left(\sqrt{\frac{h_n \log(n)}{n}}\right).
\end{align*}

\vspace{1.5mm}
\noindent \textbf{Step 2.}
Recall 
\begin{align*}
\overline{\Delta}_a(\tau,s,X_i) &=  \widehat{m}_a(\tau,s,X_i) - \overline{m}_a(\tau,s,X_i) \\
&=  \mathbb{P}(Y_i(a)\leq q_a(\tau)|X_i,S_i=s)- \lambda(H_{h_n}^\top(X_i)\hat{\theta}^\textit{NP}_{a,s}(\tau)) \\
&= \lambda(H_{h_n}^\top(X_i)\theta^\textit{NP}_{a,s}(\tau))- \lambda(H_{h_n}^\top(X_i)\hat{\theta}^\textit{NP}_{a,s}(\tau)) + R_a(\tau,s,X_i), 
\end{align*}
and $\{X_i^s,\xi^s_i\}_{i \in [n]}$ is generated independently from the joint distribution of $(X_i,\xi_i)$ given $S_i=s$, and so is independent of  $\{A_i,S_i\}_{i \in [n]}$. 
Let 
$$H(\theta_1,\theta_2) = \mathbb{E}[\lambda(H_{h_n}^\top(X_i)\theta_1) - \lambda(H_{h_n}^\top(X_i)\theta_{2})|S_i=s] = \mathbb{E}[\lambda(H_{h_n}^\top (X_i^s)\theta_1) - \lambda(H_{h_n}^\top (X_i^s)\theta_{2})].$$ 
We have
\begin{align}
& \sup_{\tau \in \Upsilon,s\in \mathcal{S}}\biggl|\frac{\sum_{i\in I_1(s)}\xi_i\overline{\Delta}_1(\tau,s,X_i)}{n_1^w(s)} - \frac{\sum_{i \in I_0(s)}\xi_i\overline{\Delta}_1(\tau,s,X_i)}{n_0^w(s)}\biggr| \notag  \\
&\leq  \sup_{\tau \in \Upsilon,s\in \mathcal{S}}\left|\frac{\sum_{i\in I_1(s)}\xi_i[ \overline{\Delta}_1(\tau,s,X_i) - H(\theta_{1,s}^\textit{NP}(\tau),\hat{\theta}^\textit{NP}_{1,s}(\tau))- \mathbb{E}(R_1(\tau,s,X_i)|S_i=s)] }{n_1^w(s)} \right| \notag \\
& +  \sup_{\tau \in \Upsilon,s\in \mathcal{S}}\left|\frac{\sum_{i\in I_0(s)}\xi_i[ \overline{\Delta}_1(\tau,s,X_i) - H(\theta^\textit{NP}_{1,s}(\tau),\hat{\theta}^\textit{NP}_{1,s}(\tau)) - \mathbb{E}(R_1(\tau,s,X_i)|S_i=s)] }{n_0^w(s)} \right| 
\label{eq:Delta_np}
\end{align}
We aim to bound the first term on the RHS of \eqref{eq:Delta_np}. Note for any $\eps>0$, there exists a constant $M>0$ such that 
\begin{align*}
\mathbb{P}\left(\sup_{\tau \in \Upsilon}||\hat{\theta}_{a,s}^\textit{NP}(\tau) - \theta_{a,s}^\textit{NP}(\tau)||_2 \leq M \sqrt{h_n \log(n)/n}\right) \geq 1-\eps. 
\end{align*}
On the set $\mathcal{A}(\eps) = \{\sup_{\tau \in \Upsilon}||\hat{\theta}_{a,s}^\textit{NP}(\tau) - \theta_{a,s}^\textit{NP}(\tau)||_2 \leq M \sqrt{h_n \log(n)/n}\}$, we have 
\begin{align*}
& \sup_{\tau \in \Upsilon,s\in \mathcal{S}}\left|\frac{\sum_{i\in I_1(s)}\xi_i[ \overline{\Delta}_1(\tau,s,X_i) - H(\theta_{1,s}^\textit{NP}(\tau),\hat{\theta}^\textit{NP}_{1,s}(\tau))-\mathbb{E}(R_1(\tau,s,X_i)|S_i=s)] }{n_1^w(s)} \right| \\
&\leq  \sup_{\tau \in \Upsilon,s\in \mathcal{S}}\left|\frac{\sum_{i\in I_1(s)}\xi_i[ \lambda(H_{h_n}^\top(X_i)\theta_{1,s}^\textit{NP}(\tau)) - \lambda(H_{h_n}^\top(X_i)\hat{\theta}_{1,s}^\textit{NP}(\tau)) - H(\theta_{1,s}^\textit{NP}(\tau),\hat{\theta}^\textit{NP}_{1,s}(\tau))] }{n_1(s)} \right| \\
& +  \sup_{\tau \in \Upsilon,s\in \mathcal{S}}\left|\frac{\sum_{i\in I_1(s)}\xi_i\left[R_1(\tau,s,X_i)-\mathbb{E}(R_1(\tau,s,X_i)|S_i=s)\right] }{n_1^w(s)} \right|\\
&\leq  \frac{n_1(s)}{n_1^w(s)}\biggl[\sup_{s\in \mathcal{S}, \theta_1,\theta_2 \in \Re^{h_n}, ||\theta_1 - \theta_2||_2 \leq M\sqrt{h_n \log(n)/n}}\left|\frac{\sum_{i\in I_1(s)}\xi_i[ \lambda(H_{h_n}^\top(X_i)\theta_1) - \lambda(H_{h_n}^\top(X_i)\theta_2) - H(\theta_1,\theta_2)] }{n_1(s)} \right| \\
& +  \sup_{\tau \in \Upsilon,s\in \mathcal{S}}\left|\frac{\sum_{i\in I_1(s)}\xi_i\left[R_1(\tau,s,X_i)-\mathbb{E}(R_1(\tau,s,X_i)|S_i=s)\right] }{n_1(s)} \right|\biggr]\\
&\equiv \frac{n_1(s)}{n_1^w(s)}(D_1 + D_2).
\end{align*}
For $D_1$, we have
\begin{align*}
& D_1|\{A_i,S_i\}_{i \in [n]} \\
&\stackrel{d}{=}  \sup\left|\frac{\sum_{i = N(s)}^{N(s)+n_1(s)}\xi_i^s[ \lambda(H_{h_n}^\top(X^s_i)\theta_1) - \lambda(H_{h_n}^\top(X^s_i)\theta_2) - H(\theta_1,\theta_2)] }{n_1(s)} \right| |\{A_i,S_i\}_{i \in [n]}  \\
&\stackrel{d}{=}  ||\mathbb{P}_{n_1(s)} - \mathbb{P}||_{\mathcal{F}} |\{A_i,S_i\}_{i \in [n]},
\end{align*}
where the supremum in the first equality is taken over $\{s\in \mathcal{S}, \theta_1,\theta_2 \in \Re^{h_n},||\theta_1 - \theta_2||_2 \leq M\sqrt{h_n \log(n)/n}\}$ and 
$$\mathcal{F} = \begin{Bmatrix}
&\xi_i^s[ \lambda(H_{h_n}^\top(X^s_i)\theta_1) - \lambda(H_{h_n}^\top(X^s_i)\theta_2) - H(\theta_1,\theta_2)]: \\
& s\in \mathcal{S}, \theta_1,\theta_2 \in \Re^{h_n},||\theta_1 - \theta_2||_2 \leq M\sqrt{h_n \log(n)/n}
\end{Bmatrix}$$
with the envelope $F = 2\xi_i^s$. We further note that $||\max_{i \in [n]}2\xi_i^s||_{\mathbb{P},2} \leq C\log(n)$, 
\begin{align*}
\sup_{f \in \mathcal{F}}\mathbb{E}f^2 \leq \sup \mathbb{E}(H_{h_n}^\top(X^s_i)(\theta_1-\theta_2))^2 \leq \kappa_2 M^2h_n \log(n)/n, 
\end{align*}
and 
\begin{align*}
\sup_Q N(\mathcal{F}, e_Q, \eps||F||_{Q,2}) \leq \left(\frac{a}{\eps}\right)^{ch_n},
\end{align*}
where $a,c$ are two fixed constants. Therefore, by \citet[Corollary 5.1]{CCK14}, 
\begin{align*}
\mathbb{E} \left[||\mathbb{P}_{n_1(s)} - \mathbb{P}||_{\mathcal{F}} |\{A_i,S_i\}_{i \in [n]}\right] = O_p\left( h_n  \log(n)/n + \frac{h_n \log^2(n)}{n}\right) = o_p(n^{-1/2}), 
\end{align*}
which implies $D_1 = o_p(n^{-1/2})$. 

Similarly, we have
\begin{align*}
D_2|\{A_i,S_i\}_{i \in [n]} \stackrel{d}{=} & \sup_{\tau \in \Upsilon,s\in \mathcal{S}}\left|\frac{\sum_{i = N(s)}^{N(s)+n_1(s)}\xi^s_i\left[R_1(\tau,s,X^s_i)-\mathbb{E}(R_1(\tau,s,X^s_i))\right] }{n_1(s)} \right| |\{A_i,S_i\}_{i \in [n]} \\
= & ||\mathbb{P}_{n_1(s)} - \mathbb{P}||_{\mathcal{F}} |\{A_i,S_i\}_{i \in [n]} |\{A_i,S_i\}_{i \in [n]},
\end{align*}
where $\mathcal{F} = \{\xi_i^s[ \tau - m_1(\tau,s,X_i^s) - \lambda(H_{h_n}^\top(X^s_i)\theta_{1,s}^\textit{NP}(\tau))]: \tau \in \Upsilon  \}$ with an envelope $F=\xi_i^s$. In addition, we note $\mathcal{F}$ is nested in 
\begin{align*}
\widetilde{\mathcal{F}} = \{\xi_i^s[ \tau - m_1(\tau,s,X_i^s) - \lambda(H_{h_n}^\top(X^s_i)\theta_1)]: \tau \in \Upsilon, \theta_1 \in \Re^{h_n}  \},
\end{align*}
so that 
\begin{align*}
\sup_Q N(\mathcal{F}, e_Q, \eps||F||_{Q,2}) \leq \sup_Q N(\widetilde{\mathcal{F}}, e_Q, \eps||F||_{Q,2}) \leq \left(\frac{a}{\eps}\right)^{ch_n}.
\end{align*}
Last, 
\begin{align*}
\sup_{f \in \mathcal{F}}\mathbb{E}f^2 = \sup_{\tau \in \Upsilon, a = 0,1, s \in \mathcal{S}}\mathbb{E}R_1^2(\tau,s,X_i^s) = O(h_n  \log(n)/n). 
\end{align*}
by \citet[Corollary 5.1]{CCK14}, 
\begin{align*}
\mathbb{E} \left[||\mathbb{P}_{n_1(s)} - \mathbb{P}||_{\mathcal{F}} |\{A_i,S_i\}_{i \in [n]}\right] = O_p\left( h_n  \log(n)/n + \frac{h_n \log^2(n)}{n}\right) = o_p(n^{-1/2}), 
\end{align*}
which implies $D_2 = o_p(n^{-1/2})$. This leads to \eqref{eq:Delta_np}.  

\textbf{Step 3.}
Note $|m_a(\tau_1,s,X_i)| \leq 1$ and 
\begin{align*}
& |m_a(\tau_1,s,X_i) - m_a(\tau_2,s,X_i)| \\
\leq & |\tau_1 - \tau_2| + |\mathbb{P}(Y_i(a) \leq q_a(\tau_1)|X_i,S_i=s) - \mathbb{P}(Y_i(a) \leq q_a(\tau_2)|X_i,S_i=s)| \\
\leq & \left(1+ \frac{\sup_y f_a(y|X_i,S_i=s)}{\inf_{\tau \in \Upsilon}f_a(q_a(\tau))}\right)|\tau_1 -\tau_2|\\ \leq & C|\tau_1-\tau_2|.  
\end{align*}
This implies Assumptions \ref{ass:mhat}(ii) and \ref{ass:mhat}(iii).

\section{Proof of Theorem \ref{thm:hd}}
\label{sec:pf_hd}
\begin{table}[H]
	%\label{Table:constants}
	\begin{adjustbox}{max width=\textwidth}
		\begin{tabular}{c|l}
			\hline
			Name & Description \\ \hline
			$H_{p_n}(X_i)$ & High-dimensional regressor constructed based on $X_i$ with dimension $p_n$\\
			$\theta_{a,s}^\textit{HD}(q)$ & For $a \in \{0,1\}$, $s \in \mathcal{S}$, and $\tau \in \Upsilon$, $\theta_{a,s}^\textit{HD}(q)$ is the pseudo true value defined in Assumption \ref{ass:hd}(i)\\ 
			$\hat{\theta}_{a,s}^\textit{HD}(q)$ & For $a \in \{0,1\}$, $s \in \mathcal{S}$, and $\tau \in \Upsilon$, $\hat{\theta}_{a,s}^\textit{HD}(q)$ is the estimator of $\theta_{a,s}^\textit{HD}(q)$ in \eqref{eq:thetahat^hd} \\
			$\varrho_{n,a}(s)$ & Lasso penalty defined after \eqref{eq:thetahat^hd} \\
			$\hat{\Omega}$ & Lasso penalty loading matrix defined after \eqref{eq:thetahat^hd} \\
			$\mathcal{M}_a(q,s,x)$ & For $a \in \{0,1\}$, $q \in Re$, $s \in \mathcal{S}$, and $x \in \Supp(X)$, $\mathcal{M}_a(q,s,x) = \mathbb{P}(Y_i(a) \leq q |S_i=s,X_i=x)$ \\
			\hline
		\end{tabular}
	\end{adjustbox}
	%	\caption{Table of notation}
\end{table}
We focus on the case with $a=1$. Note
\begin{align*}
\{X_i,Y_i(1)\}_{i \in I_1(s)}|\{A_i,S_i\}_{i \in [n]} \stackrel{d}{=} \{X_i^s,Y_i^s(1)\}_{ i = N(s)+1}^{N(s)+n_1(s)}|\{A_i,S_i\}_{i \in [n]}.
\end{align*}
where $\{X_i^s,Y_i(1)^s\}_{ i\in [n]}$ is an i.i.d. sequence that is independent of $\{A_i,S_i\}_{i \in [n]}$. Therefore, 
\begin{align*}
\hat{\theta}_{1,s}^\textit{HD}(q)|\{A_i,S_i\}_{i \in [n]} &\stackrel{d}{=}  \argmin_{\theta_a} \frac{-1}{n_{1}(s)}\sum_{i =N(s)+1}^{N(s)+n_1(s)} \biggl[1\{Y_i^s(1) \leq q\}\log(\lambda(H_{p_n}(X_i)^\top \theta_a)) \notag \\
& \hspace{-3mm}  + 1\{Y_i^s(1) > q\}\log(1-\lambda(H_{p_n}^\top(X^s_i)\theta_a))\biggr] + \frac{\varrho_{n,1}(s)}{n_1(s)}||\hat{ \Omega} \theta_a||_1\biggl|\{A_i,S_i\}_{i \in [n]},
\end{align*} 
and Assumption \ref{ass:hd}(vi) implies 
\begin{align*}
0< \kappa_1 \leq & \inf_{a = 0,1, s\in \mathcal{S}, |v|_0 \leq h_n \ell_n} \frac{v^T \left(\frac{1}{n_1(s)}\sum_{i = N(s)+1}^{N(s)+n_1(s)}H_{p_n}(X^s_i)H_{p_n}^\top(X^s_i) \right)v }{||v||_2^2} \\
\leq & \sup_{a = 0,1, s\in \mathcal{S}, |v|_0 \leq h_n \ell_n} \frac{v^T \left(\frac{1}{n_1(s)}\sum_{i = N(s)+1}^{N(s)+n_1(s)}H_{p_n}(X^s_i)H_{p_n}^\top(X^s_i) \right)v }{||v||_2^2} \leq \kappa_2 < \infty,
\end{align*}
and 
\begin{align*}
0< \kappa_1 \leq & \inf_{a = 0,1, s \in \mathcal{S}, |v|_0 \leq h_n \ell_n} \frac{v^T \mathbb{E}(H_{p_n}(X^s_i)H_{p_n}^\top(X^s_i))v }{||v||_2^2} \\
\leq & \sup_{a = 0,1, s\in \mathcal{S}, |v|_0 \leq h_n \ell_n} \frac{v^T \mathbb{E}(H_{p_n}(X^s_i)H_{p_n}^\top(X^s_i))v }{||v||_2^2} \leq \kappa_2 < \infty.
\end{align*}	
In addition, we have $n_1(s)/n \stackrel{a.s.}{\rightarrow} \pi(s)p(s)>0$. Therefore, based on the results established by \cite{BCFH13}, we have, conditionally on $\{A_i,S_i\}_{i \in [n]}$, and thus, unconditionally,  
\begin{align*}
\sup_{a=0,1,q \in \mathcal{Q}^\eps_a,s\in \mathcal{S}}||\hat{ \theta}_{a,s}^\textit{HD}(q) - \theta_{a,s}^\textit{HD}(q)||_2 = O_p\left(\sqrt{\frac{h_n \log(p_n)}{n}}\right),
\end{align*}
\begin{align*}
\sup_{a=0,1,q \in \mathcal{Q}^\eps_a,s\in \mathcal{S}}||\hat{ \theta}_{a,s}^{post}(q) - \theta_{a,s}^\textit{HD}(q)||_2 = O_p\left(\sqrt{\frac{h_n \log(p_n)}{n}}\right),
\end{align*}
\begin{align*}
\sup_{a=0,1,q \in \mathcal{Q}^\eps_a,s\in \mathcal{S}}||\hat{ \theta}_{a,s}^\textit{HD}(q)|| = O_p(h_n),
\end{align*}
and 
\begin{align*}
\sup_{a=0,1,q \in \mathcal{Q}^\eps_a,s\in \mathcal{S}}||\hat{ \theta}_{a,s}^{post}(q)|| = O_p(h_n). 
\end{align*}

In the following, we prove the results when $\hat{ \theta}_{a,s}^\textit{HD}(q)$ is used. The results corresponding to $\hat{ \theta}_{a,s}^{post}(q)$ can be proved in the same manner and are therefore omitted. Recall 
\begin{align*}
&\overline{\Delta}_1(\tau,s,X_i) =  \widehat{m}_1(\tau,s,X_i) - \overline{m}_1(\tau,s,X_i) \\
&=  \mathbb{P}(Y_i(1)\leq q_1(\tau)|X_i,S_i=s)- \lambda(H_{p_n}(X_i)^\top \hat{\theta}^\textit{HD}_{1,s}(\hat{q}_1(\tau))) \\
&=  \left[\mathcal{M}_1(q_1(\tau),s,X_i) - \mathcal{M}_1(\hat{q}_1(\tau),s,X_i) + r_a(\hat{q}_1(\tau),s,X_i)\right] \\
& + \left[\lambda(H_{p_n}(X_i)^\top \theta^\textit{HD}_{1,s}(\hat{q}_1(\tau))) -  \lambda(H_{p_n}(X_i)^\top \hat{\theta}^\textit{HD}_{1,s}(\hat{q}_1(\tau)))\right] \\
& \equiv \mathcal{R}_{a,s}(q_1(\tau),\hat{q}_1(\tau),X_i) + \lambda(H_{p_n}(X_i)^\top \theta^\textit{HD}_{1,s}(\hat{q}_1(\tau)))- \lambda(H_{p_n}(X_i)^\top \hat{\theta}^\textit{HD}_{1,s}(\hat{q}_1(\tau))), 
\end{align*} 
where 
\begin{align*}
\mathcal{R}_{a,s}(q,q',X_i) = \mathcal{M}_1(q,s,X_i) - \mathcal{M}_1(q',s,X_i) + r_a(q',s,X_i).
\end{align*}
Let
$$H_\lambda(\theta_1,\theta_2,s) = \mathbb{E}[\lambda(H_{p_n}(X_i)^\top \theta_1) - \lambda(H_{p_n}(X_i)^\top \theta_{2})|S_i=s],$$ 
and 
\begin{align*}
H_R(q,q',s) = \mathbb{E}(\mathcal{R}_{a,s}(q,q',X_i)|S_i=s).
\end{align*}
Then, we have
\begin{align}
& \sup_{\tau \in \Upsilon,s\in \mathcal{S}}\biggl|\frac{\sum_{i\in I_1(s)}\xi_i\overline{\Delta}_1(\tau,s,X_i)}{n_1^w(s)} - \frac{\sum_{i \in I_0(s)}\xi_i\overline{\Delta}_1(\tau,s,X_i)}{n_0^w(s)}\biggr| \notag  \\
&\leq \sup_{\tau \in \Upsilon,s\in \mathcal{S}}\left|\frac{\sum_{i\in I_1(s)}\xi_i[ \overline{\Delta}_1(\tau,s,X_i) - H_\lambda(\hat{\theta}^\textit{HD}_{1,s}(\hat{q}_1(\tau)),\theta^\textit{HD}_{1,s}(\hat{q}_1(\tau)),s)- H_R(q_1(\tau),\hat{ q}_1(\tau),s)]}{n_1^w(s)} \right| \notag \\
& +  \sup_{\tau \in \Upsilon,s\in \mathcal{S}}\left|\frac{\sum_{i\in I_0(s)}\xi_i[ \overline{\Delta}_1(\tau,s,X_i) - H_\lambda(\hat{\theta}^\textit{HD}_{1,s}(\hat{q}_1(\tau)),\theta^\textit{HD}_{1,s}(\hat{q}_1(\tau)),s)- H_R(q_1(\tau),\hat{ q}_1(\tau),s)] }{n_0^w(s)} \right| 
\label{eq:Delta_hd}
\end{align}
We aim to bound the first term on the RHS of \eqref{eq:Delta_hd}. Note for any $\eps>0$, there exists a constant $M>0$ such that 
\begin{align*}
\mathbb{P}\begin{pmatrix}
\sup_{q \in \mathcal{Q}_1^\eps}||\hat{\theta}_{1,s}^\textit{HD}(q) - \theta_{1,s}^\textit{HD}(q)||_2 \leq M \sqrt{\frac{h_n \log(p_n)}{n}},~\sup_{q \in \mathcal{Q}_1^\eps}||\hat{\theta}_{1,s}^\textit{HD}(q)||_0 \leq Mh_n,\\
\sup_{\tau \in \Upsilon}|\hat{q}_1(\tau) - q_1(\tau)| \leq M n^{-1/2}
\end{pmatrix} \geq 1-\eps. 
\end{align*}
On the set 
$$\mathcal{A}(\eps) = \begin{Bmatrix}
\sup_{q \in \mathcal{Q}_1^\eps}||\hat{\theta}_{1,s}^\textit{HD}(q) - \theta_{1,s}^\textit{HD}(q)||_2 \leq M \sqrt{\frac{h_n \log(p_n)}{n}},~\sup_{q \in \mathcal{Q}_1^\eps}||\hat{\theta}_{1,s}^\textit{HD}(q)||_0 \leq Mh_n,\\
\sup_{\tau \in \Upsilon}|\hat{q}_1(\tau) - q_1(\tau)| \leq M n^{-1/2}
\end{Bmatrix},$$ we have 
\begin{align*}
& \sup_{\tau \in \Upsilon,s\in \mathcal{S}}\left|\frac{\sum_{i\in I_1(s)}\xi_i[ \overline{\Delta}_1(\tau,s,X_i) - H_\lambda(\hat{\theta}^\textit{HD}_{1,s}(\hat{q}_1(\tau)),\theta^\textit{HD}_{1,s}(\hat{q}_1(\tau)),s)- H_R(q_1(\tau),\hat{ q}_1(\tau),s)]}{n_1^w(s)} \right| \\
&\leq  \sup_{\tau \in \Upsilon,s\in \mathcal{S}}\left|\frac{\sum_{i\in I_1(s)}\xi_i[ \lambda(H_{h_n}^\top(X_i)\theta_{1,s}^\textit{NP}(\tau)) - \lambda(H_{h_n}^\top(X_i)\hat{\theta}_{1,s}^\textit{NP}(\tau)) - H_\lambda(\theta_{1,s}^\textit{NP}(\tau),\hat{\theta}^\textit{NP}_{1,s}(\tau),s)] }{n_1(s)} \right| \\
& +  \sup_{\tau \in \Upsilon,s\in \mathcal{S}}\left|\frac{\sum_{i\in I_1(s)}\xi_i\left[\mathcal{R}_{1,s}(q_1(\tau),\hat{q}_1(\tau),X_i)-\mathbb{E}(\mathcal{R}_{1,s}(q_1(\tau),\hat{q}_1(\tau),X_i)|S_i=s)\right] }{n_1^w(s)} \right|\\
&\leq \frac{n_1(s)}{n_1^w(s)}\biggl[\sup\left|\frac{\sum_{i\in I_1(s)}\xi_i[ \lambda(H_{h_n}^\top(X_i)\theta_1) - \lambda(H_{h_n}^\top(X_i)\theta_2) - H_\lambda(\theta_1,\theta_2,s)] }{n_1(s)} \right| \\
& +  \sup_{q,q' \in \mathcal{Q}_1^\eps,|q-q'|\leq Mn^{-1/2},s\in \mathcal{S}}\left|\frac{\sum_{i\in I_1(s)}\xi_i\left[\mathcal{R}_{1,s}(q,q',X_i)-\mathbb{E}(\mathcal{R}_{1,s}(q,q',X_i)|S_i=s)\right] }{n_1(s)} \right|\biggr]\\
&\equiv \frac{n_1(s)}{n_1^w(s)}(D_1 + D_2),
\end{align*}
where the first supremum in the second inequality is taken over $\{s\in \mathcal{S}, \theta_1,\theta_2 \in \Re^{h_n}, ||\theta_1 - \theta_2||_2 \leq M\sqrt{h_n \log(n)/n},||\theta_1||_0 + ||\theta_2||_0 \leq Mh_n\}$. Denote
$$\mathcal{F} = \begin{Bmatrix}
\xi_i^s[ \lambda(H_{h_n}^\top(X^s_i)\theta_1) - \lambda(H_{h_n}^\top(X^s_i)\theta_2) - H_\lambda(\theta_1,\theta_2,s)]: \\
s\in \mathcal{S}, \theta_1,\theta_2 \in \Re^{h_n}, ||\theta_1 - \theta_2||_2 \leq M\sqrt{h_n \log(n)/n},||\theta_1||_0 + ||\theta_2||_0 \leq Mh_n
\end{Bmatrix}$$
with the envelope $F = 2\xi_i^s$. We further note that $||\max_{i \in [n]}2\xi_i^s||_{\mathbb{P},2} \leq C\log(n)$, 
\begin{align*}
\sup_{f \in \mathcal{F}}\mathbb{E}f^2 \leq \sup \mathbb{E}(H_{h_n}^\top(X^s_i)(\theta_1-\theta_2))^2 \leq \kappa_2 M^2h_n \log(p_n)/n, 
\end{align*}
and 
\begin{align*}
\sup_Q N(\mathcal{F}, e_Q, \eps||F||_{Q,2}) \leq \left(\frac{c_1 p_n}{\eps}\right)^{c_2h_n},
\end{align*}
where $c_1,c_2$ are two fixed constants. Therefore, Lemma \ref{lem:max_eq} implies  
\begin{align*}
D_1 = O_p\left( \frac{h_n  \log(p_n)}{n} + \frac{h_n \log(n) \log(p_n)}{n}\right) = o_p(n^{-1/2}).
\end{align*}

Similarly, denote 
\begin{align*}
\mathcal{F} = \begin{Bmatrix}
\xi^s_i\left[\mathcal{M}_1(q,s,X_i) - \mathcal{M}_1(q',s,X_i) + r_a(q',s,X_i)\right]:q,q' \in \mathcal{Q}_1^\eps, |q-q'| \leq Mn^{-1/2},s\in \mathcal{S}
\end{Bmatrix},
\end{align*}
with an envelope $F=\xi_i^s$. In addition, note that $\mathcal{F}$ is nested in 
\begin{align*}
\widetilde{\mathcal{F}} = \{\xi_i^s[ \mathcal{M}_1(q,s,X_i^s) - \lambda(H_{h_n}^\top(X^s_i)\theta_1)]: q \in \mathcal{Q}_1^\eps, \theta_1 \in \Re^{p_n}, ||\theta_1||_0 \leq h_n \},
\end{align*}
with the same envelope. Hence, 
\begin{align*}
\sup_Q N(\mathcal{F}, e_Q, \eps||F||_{Q,2}) \leq \sup_Q N(\widetilde{\mathcal{F}}, e_Q, \eps||F||_{Q,2}) \leq \left(\frac{a p_n}{\eps}\right)^{ch_n}.
\end{align*}
Last, 
\begin{align*}
\sup_{f \in \mathcal{F}}\mathbb{E}f^2 \leq C\sup_{q,q' \in \mathcal{Q}_1^\eps, |q-q'| \leq Mn^{-1/2},s\in \mathcal{S}} (|q-q'|^2 + \mathbb{E}r_a^2(q',s,X_i^s))= O(h_n  \log(p_n)/n). 
\end{align*}
Therefore, Lemma \ref{lem:max_eq} implies  
\begin{align*}
D_2= O_p\left( \frac{h_n  \log(p_n)}{n} + \frac{h_n \log(n)\log(p_n)}{n}\right) = o_p(n^{-1/2}). 
\end{align*}
This leads to \eqref{eq:Delta_hd}. We can establish Assumption \ref{ass:mhat}(i) in the same manner. Assumptions \ref{ass:mhat}(ii) and \ref{ass:mhat}(iii) can be established by the same argument used in Step 3 of the proof of Theorem \ref{thm:np}. This concludes the proof of Theorem \ref{thm:hd}.

\section{Technical Lemmas}
\label{sec:lem}
The first lemma was established in \cite{ZZ20}. 
\begin{lem}
	\label{lem:S}
	Let $S_k$ be the $k$-th partial sum of Banach space valued independent identically distributed random variables, then 
	\begin{align*}
	\mathbb{P}(\max_{1 \leq k \leq n}||S_k|| \geq \eps) \leq 3\max_{1 \leq k \leq n}\mathbb{P}(||S_k||\geq \eps/3) \leq 9\mathbb{P}(||S_n||\geq \eps/30).
	\end{align*}
\end{lem}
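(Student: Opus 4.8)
The plan is to prove the two displayed inequalities separately; chained together they are precisely the Montgomery--Smith maximal inequality for i.i.d.\ summands, so I would organize the argument as two self-contained steps, the first elementary and the second carrying the real content. Throughout write $t=\eps$ and $p=\max_{1\le k\le n}\mathbb{P}(\|S_k\|\ge t/3)$.

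For the first inequality I would run a first-passage argument. Introduce $T=\min\{k\le n:\|S_k\|\ge t\}$, so that $\{\max_{k\le n}\|S_k\|\ge t\}=\{T\le n\}$, and exploit that on $\{T=k\}$ the increment $S_n-S_k$ is independent of the event with $S_n-S_k\stackrel{d}{=}S_{n-k}$. Splitting $\{T\le n\}$ according to whether $\|S_n-S_T\|\ge t/3$ or not, the first part contributes at most $\mathbb{P}(\|S_{n-k}\|\ge t/3)\le p$ after conditioning on $T=k$, hence at most $p\,\mathbb{P}(T\le n)$; on the complement $\|S_n\|\ge\|S_T\|-\|S_n-S_T\|>2t/3\ge t/3$, so that part lies in $\{\|S_n\|\ge t/3\}$ and contributes at most $p$. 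This gives $\mathbb{P}(T\le n)\le p\,\mathbb{P}(T\le n)+p$, i.e.\ $\mathbb{P}(T\le n)\le p/(1-p)$, which is $\le 3p$ when $p\le 2/3$ and trivially dominated by $3p>1$ when $p>2/3$. This yields $\mathbb{P}(\max_{k\le n}\|S_k\|\ge\eps)\le 3\max_{k\le n}\mathbb{P}(\|S_k\|\ge\eps/3)$, and it is worth noting that the factor $3$ and the threshold $\eps/3$ drop out of this split rather than being imposed.

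For the second inequality it suffices to show, for each $k\le n$ and $u=\eps/3$, the tail comparison $\mathbb{P}(\|S_k\|\ge u)\le 3\,\mathbb{P}(\|S_n\|\ge u/10)$. The plan is to write $n=qk+r$, group the summands into $q$ consecutive blocks of length $k$ (plus a remainder), so the block sums $Y_1,\dots,Y_q$ are i.i.d.\ with $Y_j\stackrel{d}{=}S_k$ and $S_n=\sum_{j\le q}Y_j+(\text{remainder})$. The task then reduces to a lower bound $\mathbb{P}(\|\sum_{j\le q}Y_j\|\ge u/c)\ge\tfrac13\mathbb{P}(\|Y_1\|\ge u)$, which I would obtain by symmetrizing the $Y_j$, applying L\'evy's inequality to the symmetrized block walk to control $\max_j\|Y_j\|$ (hence a single $\|Y_j\|$) by the partial-sum tail, and desymmetrizing with a median-centering step.

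The hard part will be this second step: the blocking combined with L\'evy's inequality and (de)symmetrization, with honest bookkeeping of the constants so that they collapse to exactly $3$ and $10$, plus the careful handling of the remainder block and of the near-boundary regime where $k$ is close to $n$ (so $q=1$). Since this tail comparison is exactly the Montgomery--Smith inequality invoked in \cite{ZZ20}, in practice I would either cite it directly or reproduce the symmetrization argument sketched above, while the first step stands as a short self-contained derivation.
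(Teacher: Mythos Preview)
Your proposal is correct and aligns with the paper's own proof, which consists entirely of two citations: \citet[Lemma~E.1]{ZZ20} for the first inequality and \citet[Theorem~1]{M93} for the second. Your first-passage argument is the standard derivation of Etemadi's inequality (which is what the ZZ20 citation records), and you correctly identify the second inequality as Montgomery--Smith's tail comparison and propose to cite it---exactly as the paper does.
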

\begin{proof}
	The first inequality is due to \citet[Lemma E.1]{ZZ20} and the second inequality is due to \citet[Theorem 1]{M93}.
\end{proof}

The next lemma is due to \cite{CCK14} with our modification of their maximal inequality to the case with covariate-adaptive randomization. 
\begin{lem}
	Suppose Assumption \ref{ass:assignment1} holds. Let $w_i = 1$ or $\xi_i$ defined in Assumption \ref{ass:weight}. Denote $\mathcal{F}$ as a class of functions of the form $f(x,y_1,y_0)$ where, $f(x,y_1,y_0)$ is a measurable function and $\mathbb{E}(f(X_i,Y_i(1),Y_i(0))|S_i=s) = 0$. Further suppose $\max_{s \in \mathcal{S}}\mathbb{E}(|F_i|^q|S_i=s)<\infty$ for some $q \geq 2$, where  
	$$F_i = \sup_{f \in \mathcal{F}}|w_if(X_i,Y_i(1),Y_i(0))|,$$ 
	$\mathcal{F}$ is of the VC-type with coefficients $(\alpha_n,v_n)>0$, and $\sup_{f \in \mathcal{F}}\mathbb{E}(f^2|S=s) \leq \sigma_n^2$. Then, 
	\begin{align*}
	& \sup_{f \in \mathcal{F},s\in \mathcal{S}}\frac{1}{\sqrt{n}}\left|\sum_{i =1}^nA_i1\{S_i=s\}w_if(X_i,Y_i(1),Y_i(0))\right| \\
	&=  O_p\left( \sqrt{v_n \sigma_n^2 \log\left(\frac{\alpha_n||F||_{\mathbb{P},2}}{\sigma}\right)} + \frac{v_n ||\max_{i \in [n]} F_i||_{\mathbb{P},2} \log\left(\frac{\alpha_n||F||_{\mathbb{P},2}}{\sigma}\right)}{\sqrt{n}}\right),
	\end{align*}
	and 
	\begin{align*}
	& \sup_{f \in \mathcal{F},s\in \mathcal{S}}\frac{1}{\sqrt{n}}\left|\sum_{i =1}^n(1-A_i)1\{S_i=s\}w_if(X_i,Y_i(1),Y_i(0))\right| \\
	&= O_p\left( \sqrt{v_n \sigma_n^2 \log\left(\frac{\alpha_n||F||_{\mathbb{P},2}}{\sigma}\right)} + \frac{v_n ||\max_{i \in [n]} F_i||_{\mathbb{P},2} \log\left(\frac{\alpha_n||F||_{\mathbb{P},2}}{\sigma}\right)}{\sqrt{n}}\right).
	\end{align*}
	\label{lem:max_eq}
\end{lem}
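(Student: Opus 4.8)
\textbf{Proof plan for Lemma \ref{lem:max_eq}.}

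The plan is to reduce the covariate-adaptive setting to the i.i.d.\ setting, where the maximal inequality of \cite{CCK14} applies directly, by conditioning on the treatment assignments and stratum indicators $\{A_i,S_i\}_{i\in[n]}$. The key structural fact, recorded in Section \ref{sec:notation}, is that conditionally on $\{A_i,S_i\}_{i\in[n]}$, the collection $\{(X_i,Y_i(1),Y_i(0),w_i): i\in I_1(s)\}$ has the same distribution as the first $n_1(s)$ terms of the i.i.d.\ sequence $(X_i^s,Y_i^s(1),Y_i^s(0),\xi_i^s)_{i}$ with marginal law equal to the conditional law given $S_i=s$, and these blocks are independent across $s\in\mathcal S$. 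Hence
\begin{align*}
\frac{1}{\sqrt n}\sum_{i=1}^n A_i 1\{S_i=s\} w_i f(X_i,Y_i(1),Y_i(0))\ \Big|\ \{A_i,S_i\}_{i\in[n]}\ \stackrel{d}{=}\ \sqrt{\frac{n_1(s)}{n}}\,\big(\mathbb{P}_{n_1(s)} - \mathbb{P}\big)f,
\end{align*}
where $(\mathbb{P}_{n_1(s)}-\mathbb{P})f$ is the empirical process built from $n_1(s)$ i.i.d.\ draws from the law of $f(X_i^s,Y_i^s(1),Y_i^s(0))$, using here that $\mathbb{E}(f\mid S_i=s)=0$ so $\mathbb{P}f=0$ and no recentering term appears.

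First I would invoke \citet[Corollary 5.1]{CCK14} (as already used in the proofs of Theorems \ref{thm:par} and \ref{thm:np}) conditionally on $\{A_i,S_i\}_{i\in[n]}$: since $\mathcal F$ is of VC-type with coefficients $(\alpha_n,v_n)$, has envelope $F$ with $\max_{s\in\mathcal S}\mathbb{E}(|F_i|^q\mid S_i=s)<\infty$, and $\sup_{f\in\mathcal F}\mathbb{E}(f^2\mid S_i=s)\le\sigma_n^2$, we get a bound on $\mathbb{E}\big[\sqrt{n_1(s)}\,\Vert\mathbb{P}_{n_1(s)}-\mathbb{P}\Vert_{\mathcal F}\mid\{A_i,S_i\}\big]$ of the stated order, uniformly over the (finitely many) strata $s$, on the event where $n_1(s)$ is of order $n$. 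Then I would integrate out the conditioning: by Assumption \ref{ass:assignment1}(iii)--(iv), $n_1(s)/n\convP \pi(s)p(s)$ with $\pi(s)p(s)$ bounded away from zero and one, so the event $\{n_1(s)\ge cn\text{ for all }s\}$ has probability approaching one, and on it $\sqrt{n_1(s)/n}\le 1$; combining with Markov's inequality (exactly the device used in Step 1 of the proof of Theorem \ref{thm:par}) converts the conditional expectation bound into the unconditional stochastic order claimed. The bound for the $(1-A_i)$ sum is obtained identically by replacing $I_1(s)$ with $I_0(s)$ and $n_1(s)$ with $n_0(s)$, using $n_0(s)/n\convP(1-\pi(s))p(s)$.

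The only mild subtlety — not a genuine obstacle — is handling the weights $w_i$. When $w_i=1$ this is immediate; when $w_i=\xi_i$, Assumption \ref{ass:weight} gives i.i.d.\ nonnegative sub-exponential multipliers independent of everything else, so the product class $\{\xi_i f: f\in\mathcal F\}$ is still of VC-type with the same $(\alpha_n,v_n)$ up to constants, has envelope $\xi_i F_i$ with $\max_s\mathbb{E}(|\xi_iF_i|^q\mid S_i=s)<\infty$ (product of independent terms with finite $q$-th moments, $q=\min(2,\text{the }q\text{ in the hypothesis})$, after noting $\xi_i$ has all moments), and $\sup_f\mathbb{E}(\xi_i^2 f^2\mid S_i=s)\le\sigma_n^2$; moreover $\Vert\max_{i\in[n]}\xi_iF_i\Vert_{\mathbb P,2}$ absorbs at most a $\log n$ factor, already accounted for in the stated rate. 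With these observations the $\xi$-weighted case is covered by the same argument. I expect the cleanest writeup to state the $w_i=1$ case in full and then remark that the $w_i=\xi_i$ case follows by replacing $F_i$ with $\xi_iF_i$ throughout and using the sub-exponential tail of $\xi_i$ together with Lemma \ref{lem:S} if a maximal bound on $\max_i\xi_i$ is needed.
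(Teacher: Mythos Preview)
Your approach is correct but differs from the paper's in one structural respect. The paper does \emph{not} condition on $\{A_i,S_i\}_{i\in[n]}$. Instead, after writing the treated-group sum as the difference $\Gamma_n^s(N(s)+n_1(s),f)-\Gamma_n^s(N(s)+1,f)$ of partial sums of the i.i.d.\ sequence $(w_i^s f(X_i^s,\ldots))_{i\in[n]}$, it bounds this by $2\max_{1\le k\le n}\sup_{f}|\Gamma_n^s(k,f)|$ and invokes Lemma~\ref{lem:S} (the Montgomery-Smith partial-sum maximal inequality) to reduce to the full $n$-sample process, on which \citet[Corollary 5.1]{CCK14} is then applied via Markov. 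Your route---condition, apply CCK14 to the $n_1(s)$-sample process, then uncondition on the high-probability event $\{n_1(s)\ge cn\ \forall s\}$---is the same device the paper uses elsewhere (e.g.\ Step~1 of Theorem~\ref{thm:par}) and avoids Lemma~\ref{lem:S} entirely; the only extra bookkeeping is checking that the CCK14 rate at sample size $n_1(s)\ge cn$ is dominated by the rate at $n$, which is immediate from monotonicity of $m\mapsto \|\max_{i\le m}F_i\|_{\mathbb{P},2}/\sqrt{m}$. One cosmetic slip: your displayed identity should read $\sqrt{n_1(s)/n}\cdot\sqrt{n_1(s)}(\mathbb{P}_{n_1(s)}-\mathbb{P})f$, not $\sqrt{n_1(s)/n}(\mathbb{P}_{n_1(s)}-\mathbb{P})f$, but your subsequent text bounds the correct object $\sqrt{n_1(s)}\|\mathbb{P}_{n_1(s)}-\mathbb{P}\|_{\mathcal F}$, so this is only a normalization typo.
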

\begin{proof}
	We focus on establishing the first statement. The proof of the second statement is similar and is omitted. Following \cite{BCS17}, we define the sequence of i.i.d. random variables $\{(w_i^s,X_i^s,Y_i^s(1),Y_i^s(0)): 1\leq i \leq n\}$  with marginal distributions equal to the distribution of $(w_i,X_i,Y_i(1),Y_i(0))|S_i = s$. The distribution of $\sum_{i =1}^nA_i1\{S_i=s\}w_if(X_i,Y_i(1),Y_i(0))$ is the same as the counterpart with units ordered by strata and then ordered by $A_i = 1$ first and $A_i = 0$ second within each stratum, i.e.,
	\begin{align*}
	\sum_{i =1}^nA_i1\{S_i=s\}w_if(X_i,Y_i(1),Y_i(0)) &\stackrel{d}{=} \sum_{i =N(s)+1}^{N(s)+n_1(s)}w^s_if(X^s_i,Y^s_i(1),Y^s_i(0)) \\
	&\equiv  \Gamma_n^s(N(s)+n_1(s),f) - \Gamma_n^s(N(s)+1,f),
	\end{align*}
	where $N(s) = \sum_{i =1}^n1\{S_i <s\}$ and 
	\begin{align*}
	\Gamma_n^s(k,f) = \sum_{i \in [k]}w^s_if(X^s_i,Y^s_i(1),Y^s_i(0)). 
	\end{align*}
	Let $\mu_n = \sqrt{v_n \sigma_n^2 \log\left(\frac{\alpha_n||F||_{\mathbb{P},2}}{\sigma}\right)} + \frac{v_n ||\max_{i \in [n]} F_i||_{\mathbb{P},2} \log\left(\frac{\alpha_n||F||_{\mathbb{P},2}}{\sigma}\right)}{\sqrt{n}}.$ Then, for some constant $C>0$, we have
	\begin{align*}
	& \mathbb{P}\left(  \sup_{f \in \mathcal{F},s\in \mathcal{S}}\frac{1}{\sqrt{n}}\left|\sum_{i =1}^nA_i1\{S_i=s\}w_if(X_i,Y_i(1),Y_i(0))\right| \geq t\mu_n \right) \\
	&= \mathbb{P}\left(   \sup_{f \in \mathcal{F},s\in \mathcal{S}}\frac{1}{\sqrt{n}}\left|\Gamma_n^s(N(s)+n_1(s),f) - \Gamma_n^s(N(s)+1,f)\right| \geq t \mu_n\right) \\
	&\leq \sum_{s \in \mathcal{S}}\mathbb{P}\left(   \max_{1\leq k \leq n}\sup_{f \in \mathcal{F}}\left|\frac{1}{\sqrt{n}}\Gamma_n^s(k,f)\right| \geq t\mu_n/2 \right) \\
	&\leq \sum_{s \in \mathcal{S}}9\;\mathbb{P}\left( \sup_{f \in \mathcal{F}}\left|\frac{1}{\sqrt{n}}\Gamma_n^s(n,f)\right| \geq t\mu_n/60\right) \\
	&\leq  \sum_{s \in \mathcal{S}}\frac{540 \;\mathbb{E}\left( \sup_{f \in \mathcal{F}}\left|\frac{1}{\sqrt{n}}\Gamma_n^s(n,f)\right|\right)}{t\mu_n} \\
	&= \sum_{s \in \mathcal{S}}\frac{540 \mathbb{E}\left( \sqrt{n} ||\mathbb{P}^s_n - \mathbb{P}^s||_{\mathcal{F}} \right)}{t\mu_n}\\
	&\leq C/t,
	\end{align*}
	where $\mathbb{P}_n^s$ and $\mathbb{P}^s$ are the empirical process and expectation w.r.t. i.i.d. data $\{w_i^s,X_i^s,Y_i^s(1),Y_i^s(0)\}_{i \in [n]}$, respectively, the second inequality is due to Lemma \ref{lem:S}, the last equality is due to the fact that 
	\begin{align*}
	\mathbb{E}w_i^sf(X_i^s,Y_i^s(1),Y_i^s(0)) = \mathbb{E}\left(w_if(X_i,Y_i(1),Y_i(0))|S_i=s\right) = 0, 
	\end{align*}
	and the last inequality is due to the fact that, by \citet[Corollary 5.1]{CCK14},
	\begin{align*}
	\mathbb{E}\left( \sqrt{n} ||\mathbb{P}^s_n - \mathbb{P}^s||_{\mathcal{F}}\right)  \leq C \mu_n.
	\end{align*}
	
	Then, for any $\eps>0$, we can choose $t \geq C/\eps$ so that 
	\begin{align*}
	\mathbb{P}\left(  \sup_{f \in \mathcal{F},s\in \mathcal{S}}\frac{1}{\sqrt{n}}\left|\sum_{i =1}^nA_i1\{S_i=s\}w_if(X_i,Y_i(1),Y_i(0))\right| \geq t \mu_n\right)  \leq \eps, 
	\end{align*}
	which implies the desired result. 
\end{proof}

The next lemma is similar to \citet[Lemma E.2]{ZZ20} but with additional covariates and regression adjustments. It is retained in the Supplement to make the paper self-contained. 
\begin{lem}
	Suppose Assumptions in Theorem \ref{thm:est} hold. Denote
	\begin{align*}
	\varpi_{n,1}(\tau) = & \sum_{s \in \mathcal{S}}\frac{1}{\sqrt{n}}\sum_{i =1}^nA_i1\{S_i=s\} \phi_1(\tau,s,Y_i(1),X_i)  \\
	- & \sum_{s \in \mathcal{S}}\frac{1}{\sqrt{n}}\sum_{i =1}^n(1-A_i)1\{S_i=s\}\phi_0(\tau,s,Y_i(0),X_i),
	\end{align*}
	and 
	\begin{align*}
	\varpi_{n,2}(\tau) = \frac{1}{\sqrt{n}} \sum_{i =1}^n\phi_s(\tau,S_i).
	\end{align*}
	Then, uniformly over $\tau \in \Upsilon$,  
	\begin{align*}
	(\varpi_{n,1}(\tau),\varpi_{n,2}(\tau)) \convD (\mathcal{B}_1(\tau),\mathcal{B}_2(\tau)),
	\end{align*}
	where $(\mathcal{B}_1(\tau),\mathcal{B}_2(\tau))$ are two independent Gaussian processes with covariance kernels $\Sigma_1(\tau,\tau')$ and $\Sigma_2(\tau,\tau')$, respectively, such that  
	\begin{align*}
	\Sigma_1(\tau,\tau') &=  \mathbb{E} \pi(S_i)\phi_1(\tau,S_i,Y_i(1),X_i)\phi_1(\tau',S_i,Y_i(1),X_i)  \\
	& + \mathbb{E} (1-\pi(S_i))\phi_0(\tau,S_i,Y_i(0),X_i)\phi_0(\tau',S_i,Y_i(0),X_i)
	\end{align*}
	and 
	\begin{align*}
	\Sigma_2(\tau,\tau') = \mathbb{E}\phi_s(\tau,S_i)\phi_s(\tau',S_i). 
	\end{align*}
	\label{lem:w_est}
\end{lem}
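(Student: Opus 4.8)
\textbf{Proof proposal for Lemma \ref{lem:w_est}.}

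The plan is to establish the joint weak convergence $(\varpi_{n,1}(\tau),\varpi_{n,2}(\tau)) \convD (\mathcal{B}_1(\tau),\mathcal{B}_2(\tau))$ in $\ell^\infty(\Upsilon)\times\ell^\infty(\Upsilon)$ by verifying (i) finite-dimensional convergence and (ii) asymptotic tightness of each marginal process, then invoking the continuous-mapping/portmanteau machinery for empirical processes (e.g. \citealp{VW96}). A key structural observation is that $\varpi_{n,2}(\tau)=\frac1{\sqrt n}\sum_i \phi_s(\tau,S_i)$ is a standard i.i.d.\ sum indexed by $\tau$, so its limit is governed by ordinary empirical-process theory once we note that $\{\phi_s(\tau,\cdot):\tau\in\Upsilon\}$ is a bounded Donsker class; boundedness and the Lipschitz-in-$\tau$ property follow from Assumption \ref{ass:tau} (densities bounded below) together with the definition $\phi_s(\tau,S_i) = m_1(\tau,S_i)/f_1(q_1(\tau)) - m_0(\tau,S_i)/f_0(q_0(\tau))$ and the Lipschitz continuity of $m_a(\tau,s)$ in $\tau$ inherited from Assumption \ref{ass:mhat}. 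For $\varpi_{n,1}(\tau)$, the treatment assignments $A_i$ are cross-sectionally dependent, so the argument must condition on $\{S_i\}_{i\in[n]}$ and exploit Assumption \ref{ass:assignment1}(ii) (conditional independence of $\{A_i\}$ from potential outcomes and covariates given strata) and (iv) ($D_n(s)/n(s)=o_p(1)$).

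The main steps I would carry out are as follows. First, for fixed $\tau$, decompose $\varpi_{n,1}(\tau)$ within each stratum. Conditional on $\{S_i\}_{i\in[n]}$, within stratum $s$ the set $I_1(s)$ is a uniformly random subset of size $n_1(s)$, so $\sum_{i}A_i 1\{S_i=s\}\phi_1(\tau,s,Y_i(1),X_i)$ is (conditionally) a sum over a simple random sample without replacement from a finite population. I would center each term by its stratum mean $\mathbb{E}(\phi_1(\tau,s,Y_i(1),X_i)|S_i=s)=0$ (which holds because $\phi_1$ is built from $\eta_{i,1}(\tau,s)$, $\overline m_1 - \overline m_1(\tau,s)$, etc., all conditionally mean-zero given $S_i=s$) and apply a finite-population / martingale CLT (as in \citealp{BCS17}), using $D_n(s)/n(s)=o_p(1)$ to handle the discrepancy between $n_1(s)$ and $\pi(s)n(s)$. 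The cross terms between $A_i$ and $1-A_i$ pieces within a stratum contribute a negative covariance that, after accounting for the $o_p(1)$ imbalance, vanishes, yielding the ``independent-looking'' limiting variance $\mathbb{E}\pi(S_i)\phi_1\phi_1 + \mathbb{E}(1-\pi(S_i))\phi_0\phi_0$; this is precisely the mechanism behind Remark 1 after Theorem \ref{thm:est}. Second, I would check that $\varpi_{n,1}$ and $\varpi_{n,2}$ are asymptotically independent: conditional on $\{S_i\}$, $\varpi_{n,2}(\tau)$ is a deterministic function of $\{S_i,Y_i,X_i\}$ while the randomness in $\varpi_{n,1}$ comes from $\{A_i\}$, and averaging over the conditional law of $\{A_i\}$ kills the cross-covariance up to $o_p(1)$ (again using $D_n(s)/n(s)=o_p(1)$); alternatively one combines this with the i.i.d.\ CLT for the $\{S_i,Y_i,X_i\}$-driven part. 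Third, for the multivariate extension to $(\tau_1,\dots,\tau_K)$ I would use the Cramér–Wold device, noting all the above steps go through verbatim for linear combinations. Fourth, asymptotic tightness: for $\varpi_{n,2}$ this is the Donsker property of a bounded Lipschitz class; for $\varpi_{n,1}$ I would use the maximal inequality in Lemma \ref{lem:max_eq} applied to the relevant VC-type classes of functions (with envelopes in $L_{\mathbb P,d}$, $d>2$, guaranteed by Assumption \ref{ass:mhat}(ii)), which controls $\sup_{\tau}|\varpi_{n,1}(\tau)|$ and, via a bracketing/chaining argument over $\Upsilon$ together with the $L_2$-continuity in $\tau$ from Assumption \ref{ass:mhat}(iii), yields stochastic equicontinuity.

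The hard part will be making the finite-population CLT argument for $\varpi_{n,1}(\tau)$ rigorous \emph{uniformly} in $\tau\in\Upsilon$ while simultaneously handling the within-stratum dependence induced by the randomization scheme — in particular, showing that the contribution of the imbalance terms $D_n(s)/n(s)$ to both the bias and the covariance is $o_p(1)$ uniformly, and that the conditional (given $\{S_i\}$) covariance structure converges to the stated kernel regardless of which CAR scheme (SRS, WEI, BCD, SBR) is in force. I expect this to follow the template of \citet[Lemma E.2]{ZZ20} adapted to include the covariates $X_i$ and the working-model terms $\overline m_a$, with the VC-type and moment conditions of Assumption \ref{ass:mhat} supplying the uniformity; the novelty over \citealp{ZZ20} is purely bookkeeping — carrying the extra $\overline m_a(\tau,s,X_i) - \overline m_a(\tau,s)$ summands through the same steps — so no genuinely new probabilistic input is needed beyond Lemma \ref{lem:max_eq} and the standard finite-population CLT.
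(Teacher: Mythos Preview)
Your overall architecture (finite-dimensional convergence, tightness, independence of the two pieces) and your treatment of $\varpi_{n,2}$ as a standard i.i.d.\ Donsker sum match the paper. The description of the mechanism for $\varpi_{n,1}$, however, is not quite right and would not go through as written.

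You assert that, conditional on $\{S_i\}$, the treated set $I_1(s)$ is a uniformly random subset of size $n_1(s)$, and then appeal to a finite-population/sampling-without-replacement CLT with a ``negative covariance'' between treated and control sums that vanishes. Under general CARs this is false: for WEI or BCD the assignment is sequential and adaptive, so $I_1(s)$ need not be a uniform random subset even conditional on $n_1(s)$; Assumption~\ref{ass:assignment1}(iv) only controls $D_n(s)/n(s)$, not the joint law of $\{A_i\}$. The paper (following \citealp{BCS17} and \citealp[Lemma~E.2]{ZZ20}) instead exploits Assumption~\ref{ass:assignment1}(i)--(ii) directly via a coupling: for each $s$ introduce an i.i.d.\ sequence $(X_i^s,Y_i^s(1),Y_i^s(0))_{i\in[n]}$ with the law of $(X,Y(1),Y(0))\mid S=s$, independent of $\{A_i,S_i\}$. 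Because the potential outcomes are conditionally i.i.d.\ and independent of $\{A_i\}$ given $\{S_i\}$, one has, conditionally on $\{A_i,S_i\}$,
\[
\varpi_{n,1}(\tau)\;\stackrel{d}{=}\;\widetilde\varpi_{n,1}(\tau)
\equiv\sum_{s}\frac1{\sqrt n}\Bigl(\!\sum_{i=N(s)+1}^{N(s)+n_1(s)}\!\phi_1(\tau,s,Y_i^s(1),X_i^s)-\!\sum_{i=N(s)+n_1(s)+1}^{N(s)+n(s)}\!\phi_0(\tau,s,Y_i^s(0),X_i^s)\Bigr),
\]
a partial sum of genuinely i.i.d.\ terms over \emph{random} index ranges. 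The treated and control pieces within a stratum are then sums over \emph{disjoint} blocks of i.i.d.\ variables, hence independent---there is no negative covariance to cancel. Replacing the random boundaries $N(s),n_1(s),n(s)$ by their deterministic limits $\lfloor nF(s)\rfloor$, $\lfloor n(F(s)+\pi(s)p(s))\rfloor$, $\lfloor n(F(s)+p(s))\rfloor$ yields a process $\varpi_{n,1}^\star(\tau)$ that is exactly independent of $\{A_i,S_i\}$, hence of $\varpi_{n,2}$; independence of $(\mathcal B_1,\mathcal B_2)$ follows by construction rather than by a covariance calculation.

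Showing $\sup_{\tau}|\widetilde\varpi_{n,1}-\varpi_{n,1}^\star|=o_p(1)$ requires stochastic equicontinuity of the \emph{two-parameter} partial-sum process $\Gamma_{n,a}(s,t,\tau)=n^{-1/2}\sum_{i=1}^{\lfloor nt\rfloor}\phi_a(\tau,s,Y_i^s(a),X_i^s)$ in both $t\in(0,1)$ and $\tau\in\Upsilon$, handled in the paper via Lemma~\ref{lem:S} for the $t$-direction and \citet[Corollary~5.1]{CCK14} for the $\tau$-direction. Your tightness sketch chains only over $\tau$ and does not address the random-index issue. If you follow the template of \citet[Lemma~E.2]{ZZ20} literally---which is what you say you intend---you will arrive at exactly this argument; the point is only that the verbal description you gave (uniform random subset, finite-population CLT) is not what that template actually does.
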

\begin{proof}
	We follow the general argument in the proof of \citet[Lemma B.2]{BCS17}. We divide the proof into two steps. In the first step, we show that 
	\begin{align*}
	(\varpi_{n,1}(\tau), \varpi_{n,2}(\tau)) \stackrel{d}{=}( \varpi^\star_{n,1}(\tau), \varpi_{n,2}(\tau)) + o_p(1), 
	\end{align*}
	where the $o_p(1)$ term holds uniformly over $\tau \in \Upsilon$, $ \varpi^\star_{n,1}(\tau) \indep \varpi_{n,2}(\tau)$, and, uniformly over $\tau \in \Upsilon$, 
	\begin{align*}
	\varpi^\star_{n,1}(\tau) \convD \mathcal{B}_1(\tau).
	\end{align*}
	In the second step, we show that 
	\begin{align*}
	\varpi_{n,2}(\tau) \convD \mathcal{B}_2(\tau)
	\end{align*}
	uniformly over $\tau \in \Upsilon$.  
	
	\vspace{1.5mm}
	\noindent \textbf{Step 1.} 
	Recall that we define $\{(X_i^s,Y_i^s(1),Y_i^s(0)): 1\leq i \leq n\}$ as a sequence of i.i.d. random variables with marginal distributions equal to the distribution of $(X_i,Y_i(1),Y_i(0))|S_i = s$ and $N(s) = \sum_{i =1}^n1\{S_i <s\}$. The distribution of $\varpi_{n,1}(\tau)$ is the same as the counterpart with units ordered by strata and then ordered by $A_i = 1$ first and $A_i = 0$ second within each stratum, i.e., 
	\begin{align*}
	\varpi_{n,1}(\tau)|\{ (A_i,S_i)_{i \in [n]} \}  & \stackrel{d}{=} \widetilde{\varpi}_{n,1}(\tau)|\{ (A_i,S_i)_{i \in [n]} \} 
	\end{align*}
	where
	\begin{align*}
	\widetilde{\varpi}_{n,1}(\tau) &\equiv   \sum_{s \in \mathcal{S}}\frac{1}{\sqrt{n}}\sum_{i = N(s)+1}^{N(s)+n_1(s)} \phi_1(\tau,s,Y^s_i(1),X^s_i) \\
	& -   \sum_{s \in \mathcal{S}}\frac{1}{\sqrt{n}}\sum_{i = N(s)+n_1(s)+1}^{N(s)+n(s)} \phi_0(\tau,s,Y^s_i(0),X^s_i),
	\end{align*}
	with
	\begin{align*}
	\phi_1(\tau,s,Y^s_i(1),X^s_i) = & \frac{\tau-1\{Y_i^s(1) \leq q_1(\tau)\} - m_1(\tau,s) -(1-\pi(s))\left(\overline{m}_1(\tau,s,X^s_i)-\overline{m}_1(\tau,s)\right)}{\pi(s) f_1(q_1(\tau))} \\
	- & \frac{\left( \overline{m}_0(\tau,s,X^s_i)-\overline{m}_0(\tau,s)\right)}{f_0(q_0(\tau))},  
	\end{align*}
	and 
	\begin{align*}
	\phi_0(\tau,s,Y^s_i(0),X^s_i) = & \frac{\tau-1\{Y_i^s(0) \leq q_0(\tau)\} - m_0(\tau,s) - \pi(s) \left(\overline{m}_0(\tau,s,X^s_i)-\overline{m}_0(\tau,s)\right)}{(1-\pi(s))f_0(q_0(\tau))} \\
	- & \frac{\left(\overline{m}_1(\tau,s,X^s_i)-\overline{m}_1(\tau,s)\right)}{f_1(q_1(\tau))}.
	\end{align*}
	As $\varpi_{n,2}(\tau)$ is only a function of $\{S_i\}_{i \in [n]}$, we have
	\begin{align*}
	(\varpi_{n,1}(\tau),\varpi_{n,2}(\tau)) \stackrel{d}{=} (\widetilde{\varpi}_{n,1}(\tau),\varpi_{n,2}(\tau)). 
	\end{align*}
	Let $F(s) = \mathbb{P}(S_i<s)$, $p(s) = \mathbb{P}(S_i=s)$, and 
	\begin{align*}
	\varpi^\star_{n,1}(\tau) \equiv &  \sum_{s \in \mathcal{S}}\frac{1}{\sqrt{n}}\sum_{i = \lfloor nF(s)\rfloor+1}^{\lfloor n(F(s)+\pi(s)p(s))\rfloor} \phi_1(\tau,s,Y^s_i(1),X^s_i) \\
	- & \sum_{s \in \mathcal{S}}\frac{1}{\sqrt{n}}\sum_{i = \lfloor n(F(s)+\pi(s)p(s))\rfloor+1}^{\lfloor n(F(s)+p(s))\rfloor} \phi_0(\tau,s,Y^s_i(0),X^s_i).
	\end{align*}
	Note $\varpi^\star_{n,1}(\tau)$ is a function of $(Y_i^s(1),Y_i^s(0),X_i^s)_{ i \in [n],s\in \mathcal{S}}$, which is independent of $\{A_i,S_i\}_{i \in [n]}$ by construction. Therefore, 
	\begin{align*}
	\varpi^\star_{n,1}(\tau) \indep \varpi_{n,2}(\tau).
	\end{align*} 
	Note that 
	\begin{align*}
	\frac{N(s)}{n} \convP F(s),\quad \frac{n_1(s)}{n} \convP \pi(s)p(s), \quad \text{and} \quad \frac{n(s)}{n} \convP p(s). 
	\end{align*}
	Denote $\Gamma_{n,a}(s,t,\tau) = \sum_{i =1}^{\lfloor nt \rfloor}\frac{1}{n}\phi_a(\tau,s,Y_i^s(a),X_i^s)$ for $a=0,1$. In order to show that $\sup_{\tau \in \Upsilon}|\widetilde{\varpi}_{n,1}(\tau) - \varpi^\star_{n,1}(\tau)| = o_p(1)$ and $\varpi_{n,1}^\star(\tau) \convD \mathcal{B}_1(\tau)$, it suffices to show that (1) for $a=0,1$ and $s \in \mathcal{S}$, the stochastic process
	\begin{align*}
	\{\Gamma_{n,a}(s,t,\tau): t \in (0,1),\tau \in \Upsilon\}
	\end{align*}
	is stochastically equicontinuous and (2) $\varpi_{n,1}^\star(\tau) \convD \mathcal{B}_1(\tau)$ converges to $B_1(\tau)$ in finite dimension. 
	
	\vspace{1mm}
	\noindent \textbf{Claim (1).} We want to bound
	\begin{align*}
	\sup|\Gamma_{n,a}(s,t_2,\tau_2) - \Gamma_{n,a}(s,t_1,\tau_1)|,
	\end{align*}
	where the supremum is taken over $0 < t_1<t_2<t_1+\eps < 1$ and $\tau_1<\tau_2<\tau_1+\eps$ such that $\tau_1,\tau_1+\eps \in \Upsilon.$ Note that,
	\begin{align}
	\label{eq:GW}
	& \sup|\Gamma_{n,a}(s,t_2,\tau_2) - \Gamma_{n,a}(s,t_1,\tau_1)| \notag \\
	\leq & \sup_{0 < t_1<t_2<t_1+\eps < 1,\tau \in \Upsilon}|\Gamma_{n,a}(s,t_2,\tau) - \Gamma_{n,a}(s,t_1,\tau)| +  \sup_{t \in (0,1),\tau_1,\tau_2 \in \Upsilon, \tau_1 < \tau_2 < \tau_1+\eps}|\Gamma_{n,a}(s,t,\tau_2) - \Gamma_{n,a}(s,t,\tau_1)|. 
	\end{align}
	Then,  for an arbitrary $\delta>0$, by taking $\eps = \delta^4$,  we have
	\begin{align*}
	& \mathbb{P}\left(\sup_{0 < t_1<t_2<t_1+\eps < 1,\tau \in \Upsilon}|\Gamma_{n,a}(s,t_2,\tau) - \Gamma_{n,a}(s,t_1,\tau)| \geq \delta\right) \\
	&= \mathbb{P}\left(\sup_{0 < t_1<t_2<t_1+\eps < 1,\tau \in \Upsilon} |\sum_{i=\lfloor nt_1 \rfloor+1}^{\lfloor nt_2 \rfloor}\phi_a(\tau,s,Y_i^s(a),X_i^s)| \geq \sqrt{n}\delta\right) \\
	&=  \mathbb{P}\left(\sup_{0 < t \leq \eps,\tau \in \Upsilon} |\sum_{i=1}^{\lfloor nt \rfloor}\phi_a(\tau,s,Y_i^s(a),X_i^s)| \geq \sqrt{n}\delta\right) \\
	&\leq \mathbb{P}\left(\max_{1 \leq k \leq \lfloor n\eps \rfloor}\sup_{\tau \in \Upsilon} |S_k(\tau)| \geq \sqrt{n}\delta\right) \\
	&\leq \frac{270\;\mathbb{E}\sup_{\tau \in \Upsilon} |\sum_{i=1}^{\lfloor n\eps \rfloor}\phi_a(\tau,s,Y_i^s(a),X_i^s)|}{\sqrt{n}\delta} \\
	&\lesssim \frac{\sqrt{n\eps}}{\sqrt{n} \delta} \lesssim \delta,
	\end{align*}
	where in the first inequality, $S_k(\tau) = \sum_{i=1}^k\phi_a(\tau,s,Y_i^s(a),X_i^s)$ and the second inequality holds due to Lemma \ref{lem:S}. To see the third inequality, denote 
	\begin{align*}
	\mathcal{F} = \{\phi_a(\tau,s,Y_i^s(a),X_i^s): \tau \in \Upsilon \}
	\end{align*}
	with an envelope function $F_i$ such that by Assumption \ref{ass:mhat}, $||F_i||_{\mathbb{P},q}<\infty$. In addition, by Assumption \ref{ass:mhat} again and the fact that 
	\begin{align*}
	\{\tau - 1\{Y_i^s(a) \leq q_a(\tau)\}  - m_a(\tau,s): \tau \in \Upsilon\} 
	\end{align*}
	is of the VC-type with fixed coefficients $(\alpha,v)$, so is $\mathcal{F}$. Then, we have
	\begin{align*}
	J(1,\mathcal{F}) < \infty,
	\end{align*}
	where 
	\begin{align*}
	J(\delta,\mathcal{F}) = \sup_Q \int_0^\delta \sqrt{1 + \log N(\mathcal{F},L_2(Q),\eps||F||_{Q,2})}d\eps,
	\end{align*} 
	$N(\mathcal{F},L_2(Q),\eps||F||_{Q,2})$ is the covering number, and the supremum is taken over all discrete probability measures $Q$. Therefore, by \citet[Theorem 2.14.1]{VW96}
	\begin{align*}
	\frac{270\mathbb{E}\sup_{\tau \in \Upsilon} |\sum_{i=1}^{\lfloor n\eps \rfloor}\phi_a(\tau,s,Y_i^s(a),X_i^s)|}{\sqrt{n}\delta} \lesssim \frac{\sqrt{\lfloor n\eps \rfloor}\left[ \mathbb{E}\sqrt{\lfloor n\eps \rfloor}||\mathbb{P}_{\lfloor n\eps \rfloor} - \mathbb{P}||_{\mathcal{F}}\right]}{\sqrt{n}\delta} \lesssim \frac{\sqrt{\lfloor n\eps \rfloor}J(1,\mathcal{F})}{\sqrt{n}\delta}.
	\end{align*}
	For the second term on the RHS of \eqref{eq:GW}, by taking $\eps = \delta^4$, we have 
	\begin{align*}
	& \mathbb{P}\left(\sup_{t \in (0,1),\tau_1,\tau_2 \in \Upsilon, \tau_1 < \tau_2 < \tau_1+\eps}|\Gamma_{n,a}(s,t,\tau_2) - \Gamma_{n,a}(s,t,\tau_1)| \geq \delta\right) \\
	&=  \mathbb{P}\left(\max_{1 \leq k \leq n}\sup_{\tau_1,\tau_2 \in \Upsilon, \tau_1 < \tau_2 < \tau_1+\eps}|S_k(\tau_1,\tau_2)| \geq \sqrt{n}\delta\right) \\
	& \leq \frac{270 \;\mathbb{E}\sup_{\tau_1,\tau_2 \in \Upsilon, \tau_1 < \tau_2 < \tau_1+\eps} |\sum_{i=1}^{n}(\phi_a(\tau_2,s,Y_i^s(a),X_i^s) - \phi_a(\tau_1,s,Y_i^s(a),X_i^s)|}{\sqrt{n}\delta} \\
	& \lesssim\delta \sqrt{\log(\frac{C}{\delta^2})},
	\end{align*}
	where in the first equality, $S_k(\tau_1,\tau_2) = \sum_{i=1}^k (\phi_a(\tau_2,s,Y_i^s(a),X_i^s)- \phi_a(\tau_1,s,Y_i^s(a),X_i^s))$ and the first inequality is due to Lemma \ref{lem:S}. To see the last inequality, denote 
	\begin{align*}
	\mathcal{F} = \{\phi_a(\tau_2,s,Y_i^s(a),X_i^s)- \phi_a(\tau_1,s,Y_i^s(a),X_i^s): \tau_1,\tau_2 \in \Upsilon, \tau_1 < \tau_2 < \tau_1+\eps\}
	\end{align*}
	with a constant envelope function $F_i$ such that $||F_i||_{\mathbb{P},q} <\infty$. In addition, due to Assumptions \ref{ass:tau}.2 and \ref{ass:mhat}.3, one can show that 
	\begin{align*}
	\sup_{f \in \mathcal{F}}\mathbb{E}f^2 \leq c\eps \equiv \sigma^2 
	\end{align*}
	for some constant $c>0$. Last, due to Assumption \ref{ass:mhat}.2, $\mathcal{F}$ is of the VC-type with fixed coefficients $(\alpha,v)$. Therefore, by \citet[Corollary 5.1]{CCK14}, 
	\begin{align*}
	& \frac{270\mathbb{E}\sup_{\tau_1,\tau_2 \in \Upsilon, \tau_1 < \tau_2 < \tau_1+\eps} |\sum_{i=1}^{n}(\phi_a(\tau_2,s,Y_i^s(a),X_i^s)- \phi_a(\tau_1,s,Y_i^s(a),X_i^s))|}{\sqrt{n}\delta} \\
	\lesssim & \frac{\sqrt{n}\mathbb{E}||\mathbb{P}_n - \mathbb{P}||_\mathcal{F}}{\delta} \lesssim \sqrt{\frac{\sigma^2 \log(\frac{C}{\sigma})}{ \delta^2}} + \frac{C\log(\frac{C}{\sigma})}{\sqrt{n}\delta} \lesssim \delta \sqrt{\log(\frac{C}{\delta^2})},
	\end{align*}
	where the last inequality holds by letting $n$ be sufficiently large. Note that $\delta \sqrt{\log(\frac{C}{\delta^2})} \rightarrow 0$ as $\delta \rightarrow 0$. This concludes the proof of Claim (1). 
	
	\vspace{1.5mm}
	\noindent\textbf{Claim (2).}  For a single $\tau$, by the triangular array CLT, 
	\begin{align*}
	\varpi_{n,1}^\star(\tau) \convD N(0,\Sigma_1(\tau,\tau)),
	\end{align*}
	where 
	\begin{align*}
	\Sigma_1(\tau,\tau) &=  \lim_{n\rightarrow \infty} \sum_{s \in \mathcal{S}} \frac{(\lfloor n(F(s)+\pi(s)p(s))\rfloor -\lfloor nF(s)\rfloor)}{n}\mathbb{E}\phi_1^2(\tau,s,Y_i^s(1),X_i^s) \\
	& + \lim_{n\rightarrow \infty} \sum_{s \in \mathcal{S}} \frac{(\lfloor n(F(s)+p(s))\rfloor -\lfloor n(F(s)+p(s)\pi(s))\rfloor)}{n}\mathbb{E}\phi_0^2(\tau,s,Y_i^s(0),X_i^s) \\
	&=  \sum_{s \in \mathcal{S}} p(s) \mathbb{E}(\pi(s)\phi_1^2(\tau,S_i,Y_i(1),X_i) +(1-\pi(s))\phi_0^2(\tau,S_i,Y_i(0),X_i) |S_i=s) \\
	&=  \mathbb{E}\pi(S_i)\phi_1^2(\tau,S_i,Y_i(1),X_i) + \mathbb{E}(1-\pi(S_i))\phi_0^2(\tau,S_i,Y_i(0),X_i).
	\end{align*}
	Finite dimensional convergence is proved by the Cram\'{e}r-Wold device. In particular, we can show that the covariance kernel is 
	\begin{align*}
	\Sigma_1(\tau,\tau') = & \mathbb{E}\pi(S_i)\phi_1(\tau,S_i,Y_i(1),X_i)\phi_1(\tau',S_i,Y_i(1),X_i) \\
	& +\mathbb{E}(1-\pi(S_i))\phi_0(\tau,S_i,Y_i(0),X_i)\phi_0(\tau',S_i,Y_i(0),X_i).
	\end{align*}
	This concludes the proof of Claim (2), and thereby leads to the desired results in Step 1.   
	
	\vspace{1.5mm}
	\noindent\textbf{Step 2.} As $m_a(\tau,S_i) = \tau - \mathbb{P}(Y_i(a) \leq q_a(\tau)|S_i)$ is Lipschitz continuous in $\tau$ with a bounded Lipschitz constant, $\{m_a(\tau,S_i):\tau \in \Upsilon \}$ is of the VC-type with fixed coefficients $(\alpha,v)$ and a constant envelope function. Therefore, $\{\frac{m_1(\tau,S_i)}{f_1(q_1(\tau))} - \frac{m_0(\tau,S_i)}{f_0(q_0(\tau))}:\tau \in \Upsilon \}$ is a Donsker class and we have 
	\begin{align*}
	\varpi_{n,2}(\tau) \convD \mathcal{B}_2(\tau),
	\end{align*}
	where $ \mathcal{B}_2(\tau)$ is a Gaussian process with covariance kernel 
	\begin{align*}
	\Sigma_2(\tau,\tau') = \mathbb{E}\left(\frac{m_1(\tau,S_i)}{f_1(q_1(\tau))} - \frac{m_0(\tau,S_i)}{f_0(q_0(\tau))}\right)\left(\frac{m_1(\tau',S_i)}{f_1(q_1(\tau'))} - \frac{m_0(\tau',S_i)}{f_0(q_0(\tau'))}\right) \equiv \mathbb{E}\phi_s(\tau,S_i)\phi_s(\tau',S_i). 
	\end{align*}
	This concludes the proof.
\end{proof} 

\begin{lem}
	Suppose the Assumptions in Theorem \ref{thm:boot} hold and recall $D_n^w(s) = \sum_{i =1}^n\xi_i(A_i - \pi(S_i))1\{S_i=s\}$. Then, $\max_{s\in \mathcal{S}}|(D_n^w(s) - D_n(s))/n(s)| = o_p(1)$ and $\max_{s \in \mathcal{S}}|D_n^w(s)/n^w(s)| = o_p(1)$. 
	\label{lem:Dw}
\end{lem}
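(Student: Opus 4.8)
Since $\mathcal{S}$ is finite, both maxima are over finitely many strata, so it suffices to prove each statement for a fixed $s\in\mathcal{S}$. Because $\pi(S_i)=\pi(s)$ on $\{S_i=s\}$, the key identity is
\begin{align*}
D_n^w(s)-D_n(s)=\sum_{i=1}^n(\xi_i-1)(A_i-\pi(s))1\{S_i=s\}.
\end{align*}
The plan is to condition on $\{A_i,S_i\}_{i\in[n]}$: by Assumption \ref{ass:weight} and the independence of $\{\xi_i\}_{i\in[n]}$ from $\{A_i,S_i\}_{i\in[n]}$ (cf.\ Section \ref{sec:notation}), the summands above are, conditionally, independent and mean zero, with conditional variance $\sum_{i=1}^n(A_i-\pi(s))^2 1\{S_i=s\}\le n(s)$ since $(A_i-\pi(s))^2\le1$. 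Hence the conditional (and therefore unconditional) second moment of $(D_n^w(s)-D_n(s))/n(s)$ is bounded by $1/n(s)$ on $\{n(s)\ge1\}$, while on $\{n(s)=0\}$ the ratio is $0$ by convention.

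By Assumption \ref{ass:assignment1}(i),(iii) and the strong law of large numbers, $n(s)/n\to p(s)>0$ almost surely, so $n(s)^{-1}1\{n(s)\ge1\}\to0$ a.s.\ and is bounded by one; dominated convergence then gives $\mathbb{E}\big[(D_n^w(s)-D_n(s))^2/n(s)^2\big]\to0$, and Chebyshev's inequality yields $(D_n^w(s)-D_n(s))/n(s)=\op(1)$. Taking the maximum over $\mathcal{S}$ proves the first claim. For the second claim, combine this with Assumption \ref{ass:assignment1}(iv), which gives $D_n(s)/n(s)=\op(1)$, to conclude $D_n^w(s)/n(s)=\op(1)$. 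It then remains to show $n(s)/n^w(s)=\Op(1)$: conditionally on $\{A_i,S_i\}_{i\in[n]}$, $n^w(s)/n(s)$ is an average of $n(s)$ i.i.d.\ weights of mean one with conditional variance $1/n(s)$, so by the same dominated-convergence argument $\mathbb{E}\big[(n^w(s)/n(s)-1)^2\big]\to0$, hence $n^w(s)/n(s)\convP1$. Consequently $\mathbb{P}(n^w(s)\ge n(s)/2)\to1$, and on that event $|D_n^w(s)/n^w(s)|\le 2|D_n^w(s)/n(s)|=\op(1)$; since the complementary event has probability tending to zero, $D_n^w(s)/n^w(s)=\op(1)$, and maximizing over $\mathcal{S}$ finishes the proof.

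The argument is essentially routine. The only points requiring a little care are the passage from conditional to unconditional statements — handled throughout by dominated convergence, using that the relevant conditional second moments are $O(1/n(s))$ and that $n(s)\to\infty$ a.s.\ — and the trivial bookkeeping on the negligible event where $n^w(s)$ is small (or zero), on which $D_n^w(s)/n^w(s)$ may be ill-defined. No genuinely hard step is anticipated.
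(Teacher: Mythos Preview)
Your proof is correct and follows essentially the same route as the paper: both use the identity $D_n^w(s)-D_n(s)=\sum_i(\xi_i-1)(A_i-\pi(s))1\{S_i=s\}$, condition on $\{A_i,S_i\}_{i\in[n]}$, and exploit a second-moment bound together with $n(s)\to\infty$. The only difference is cosmetic: the paper computes the exact conditional variance limit $\pi(s)(1-\pi(s))$ and invokes the Lindeberg CLT to obtain the slightly stronger statement $(D_n^w(s)-D_n(s))/n(s)=O_p(n(s)^{-1/2})$, whereas you go directly to $o_p(1)$ via Chebyshev and dominated convergence; your argument is in fact the more economical one for what the lemma actually claims.
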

\begin{proof}
	We note that $n^w(s)/n(s) \convP 1$ and $D_n(s)/n(s) \convP 0$. Therefore, we only need to show 
	\begin{align*}
	\frac{D_n^w(s) - D_n(s)}{n(s)} = \sum_{i=1}^n\frac{(\xi_i-1)(A_i-\pi(s))1\{S_i=s\}}{n(s)} \convP 0. 
	\end{align*}
	As $n(s) \rightarrow \infty$ a.s., given data, 
	
	\begin{align*}
	\frac{1}{n(s)}\sum_{i=1}^n(A_i - \pi(s))^21\{S_i = s\} = & \frac{1}{n}\sum_{i=1}^n\left(A_i - \pi(s) - 2\pi(s)(A_i - \pi(s)) + \pi(s) - \pi^2(s)\right)1\{S_i=s\} \\
	= & \frac{D_n(s) - 2\pi(s) D_n(s)}{n(s)} + \pi(s)(1-\pi(s)) \convP \pi(s) (1-\pi(s)).  
	\end{align*}
	Then, by the Lindeberg CLT, conditionally on data, 
	\begin{align*}
	\frac{1}{\sqrt{n(s)}} \sum_{i =1}^n (\xi_i - 1)(A_i - \pi(s))1\{S_i = s\} \convD N(0,\pi(s)(1-\pi(s))) = O_p(1),  
	\end{align*}
	and thus 
	\begin{align*}
	\frac{D_n^w(s) - D_n(s)}{n(s)} = O_p(n^{-1/2}(s)) = o_p(1). 
	\end{align*}
\end{proof}

\begin{lem} \label{lem:w_boot}
	Suppose the Assumptions in Theorem \ref{thm:boot} hold. Then, uniformly over $\tau \in \Upsilon$,   
	\begin{align*}
	\varpi_{n,1}^w(\tau) + \varpi_{n,2}^w(\tau) \underset{\xi}{\overset{\mathbb{P}}{\convD}} \mathcal{B}(\tau),
	\end{align*}
	where $\mathcal{B}(\tau)$ is a Gaussian process with the covariance kernel
	\begin{align*}
	&\Sigma(\tau,\tau') =  \mathbb{E} \pi(S_i)\phi_1(\tau,S_i,Y_i(1),X_i)\phi_1(\tau',S_i,Y_i(1),X_i)  \\
	& + \mathbb{E} (1-\pi(S_i))\phi_0(\tau,S_i,Y_i(0),X_i)\phi_0(\tau',S_i,Y_i(0),X_i) 
	+ \mathbb{E}\phi_s(\tau,S_i)\phi_s(\tau',S_i). 
	\end{align*}
\end{lem}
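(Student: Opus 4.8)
The plan is to follow the structure of the proof of Lemma \ref{lem:w_est}, but to work conditionally on the data and to invoke a multiplier functional central limit theorem. Write $\mathcal{D}_n = \{A_i,S_i,Y_i(1),Y_i(0),X_i\}_{i\in[n]}$ and
\begin{align*}
\psi_{n,i}(\tau) &= \sum_{s\in\mathcal{S}}A_i1\{S_i=s\}\phi_1(\tau,s,Y_i(1),X_i) \\
&\quad - \sum_{s\in\mathcal{S}}(1-A_i)1\{S_i=s\}\phi_0(\tau,s,Y_i(0),X_i) + \phi_s(\tau,S_i),
\end{align*}
so that $\varpi_{n,1}^w(\tau)+\varpi_{n,2}^w(\tau) = n^{-1/2}\sum_{i=1}^n(\xi_i-1)\psi_{n,i}(\tau)$, a sum of summands that, conditionally on $\mathcal{D}_n$, are independent and have mean zero. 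By Assumptions \ref{ass:tau} and \ref{ass:mhat}, the classes $\{\phi_1(\tau,\cdot)\}$, $\{\phi_0(\tau,\cdot)\}$, $\{\phi_s(\tau,\cdot)\}$ are of the VC-type with fixed coefficients, have envelopes in $L_{\mathbb{P},q}$ for some $q>2$, and have $L_2$-moduli bounded by $C|\tau_2-\tau_1|$; hence the same is true of $\{\psi_{n,i}(\tau):\tau\in\Upsilon\}$. As in the base lemmas, I would re-express the within-stratum sums through the i.i.d.\ arrays $(X_i^s,Y_i^s(1),Y_i^s(0))_i$ of Section \ref{sec:notation}, which are independent of $\{A_i,S_i\}_{i\in[n]}$, so that the laws of large numbers and maximal inequalities below transfer directly.

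I would then establish the two ingredients of a conditional functional CLT in $\ell^\infty(\Upsilon)$ (cf.\ \citet[Ch.\ 2.9]{VW96}): convergence of the conditional finite-dimensional distributions, and conditional asymptotic equicontinuity. For the first, fix $(\tau_1,\dots,\tau_K)$ and apply a conditional Lindeberg CLT to $n^{-1/2}\sum_i(\xi_i-1)(\psi_{n,i}(\tau_1),\dots,\psi_{n,i}(\tau_K))$. The conditional Lindeberg condition follows from the sub-exponential tail of $\xi_i$ (Assumption \ref{ass:weight}) together with $\max_{i\in[n]}\sup_{\tau\in\Upsilon}|\psi_{n,i}(\tau)| = o_p(n^{1/q})$ and $\max_{i\in[n]}|\xi_i-1| = O_p(\log n)$, whence $\max_i|\xi_i-1|\sup_\tau|\psi_{n,i}(\tau)|/\sqrt n = o_p(1)$. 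The conditional covariance $n^{-1}\sum_i\psi_{n,i}(\tau_k)\psi_{n,i}(\tau_l)$ must be shown to converge in probability to $\Sigma(\tau_k,\tau_l)$: using $n_a(s)/n\convP\pi(s)^a(1-\pi(s))^{1-a}p(s)$, the identity $A_i(1-A_i)=0$ removes the $\phi_1\phi_0$ cross terms, and the $\phi_1\phi_s$ and $\phi_0\phi_s$ cross terms vanish because $\mathbb{E}(\phi_a(\tau,S_i,Y_i(a),X_i)\mid S_i)=0$ (each of $\eta_{i,a}$, $\overline m_1-\overline m_1(\tau,s)$, and $\overline m_0-\overline m_0(\tau,s)$ has conditional mean zero given $S_i$); the surviving diagonal pieces reproduce $\Sigma_1(\tau_k,\tau_l)+\Sigma_2(\tau_k,\tau_l)=\Sigma(\tau_k,\tau_l)$.

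For conditional equicontinuity I would apply a maximal inequality to the conditional multiplier process — the weighted version already exploited in Lemma \ref{lem:max_eq} with $w_i=\xi_i$ — to the shrinking class $\{\psi_{n,i}(\tau_2)-\psi_{n,i}(\tau_1):|\tau_2-\tau_1|<\delta\}$, which has $L_2$-size $O(\sqrt\delta)$. This gives $\mathbb{E}\bigl[\sup_{|\tau_2-\tau_1|<\delta}|n^{-1/2}\sum_i(\xi_i-1)(\psi_{n,i}(\tau_2)-\psi_{n,i}(\tau_1))|\mid\mathcal{D}_n\bigr]\lesssim \sqrt{\delta\log(1/\delta)}+o(1)$, which tends to $0$ as $\delta\to0$ after $n\to\infty$. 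Combining the two ingredients with the $\text{BL}_1$ characterization of conditional weak convergence (Theorem \ref{thm:boot}'s footnote) yields $\varpi_{n,1}^w+\varpi_{n,2}^w\underset{\xi}{\overset{\mathbb{P}}{\convD}}\mathcal{B}$, the Gaussian process of Theorem \ref{thm:est}.

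The main obstacle is the interaction between the conditional (given data) mode of convergence and the CAR-induced dependence among the $A_i$: the limiting covariance must be deterministic, so one cannot invoke an off-the-shelf i.i.d.\ multiplier CLT but must carry out uniform-in-$\tau$ laws of large numbers for $n^{-1}\sum_i\psi_{n,i}(\tau)\psi_{n,i}(\tau')$ that rely only on the within-stratum randomization and the $o_p(1)$-imbalance of Assumption \ref{ass:assignment1}, and must upgrade the equicontinuity modulus from $O_p(1)$ to $o_p(1)$ so that convergence holds in the "in probability" sense required by the $\text{BL}_1$ metric. The remaining steps — the VC bookkeeping, the envelope moment bounds, and the Lindeberg verification — are routine given the hypotheses already assembled for Theorem \ref{thm:est}.
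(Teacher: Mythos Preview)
Your proposal is correct and follows essentially the same two-step strategy as the paper: conditional finite-dimensional convergence via a Lindeberg CLT (your $\psi_{n,i}$ is the paper's $\mathcal{J}_i$, and the covariance computation with the $\phi_1\phi_0$ and $\phi_a\phi_s$ cross-term cancellations matches exactly), plus conditional asymptotic equicontinuity. For the latter the paper first establishes \emph{unconditional} equicontinuity of the re-ordered process $\widetilde{\varpi}^{w\star}_{n,1}$ (built from the i.i.d.\ arrays you invoke) and then observes that $\mathbb{E}\mathbb{P}_\xi(\cdot)=\mathbb{P}(\cdot)\to 0$ forces $\mathbb{P}_\xi(\cdot)\convP 0$ --- which is exactly the ``upgrade from $O_p$ to in-probability'' you flag as the main obstacle, so your plan and the paper's execution coincide.
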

\begin{proof}
	We divide the proof into two steps. In the first step, we show the conditional stochastic equicontinuity of $\varpi_{n,1}^w(\tau)$ and $\varpi_{n,2}^w(\tau)$. In the second step, we show the finite-dimensional convergence of 	$\varpi_{n,1}^w(\tau) + \varpi_{n,2}^w(\tau)$ conditional on data. 
	
	\vspace{1.5mm}
	\noindent\textbf{Step 1.} Following the same idea in the proof of Lemma \ref{lem:w_est}, we define $\{(\xi_i^s,X_i^s,Y_i^s(1),Y_i^s(0)): 1\leq i \leq n\}$ as a sequence of i.i.d. random variables with marginal distributions equal to the distribution of $(\xi_i,X_i,Y_i(1),Y_i(0))|S_i = s$ and $N(s) = \sum_{i =1}^n1\{S_i <s\}$. The distribution of $\varpi_{n,1}(\tau)$ is the same as the counterpart with units ordered by strata and then ordered by $A_i = 1$ first and $A_i = 0$ second within each stratum, i.e., 
	\begin{align*}
	\varpi_{n,1}^w(\tau)|\{ (A_i,S_i)_{i \in [n]} \} &\stackrel{d}{=}  \widetilde{\varpi}^w_{n,1}(\tau)|\{ (A_i,S_i)_{i \in [n]} \}, 
	\end{align*}
	and thus, 
	\begin{align}
	\varpi_{n,1}^w(\tau) &\stackrel{d}{=}  \widetilde{\varpi}^w_{n,1}(\tau),
	\label{eq:Wn1w1}
	\end{align}
	where
	\begin{align*}
	\widetilde{\varpi}_{n,1}^w(\tau) &\equiv   \sum_{s \in \mathcal{S}}\frac{1}{\sqrt{n}}\sum_{i = N(s)+1}^{N(s)+n_1(s)} (\xi_i^s-1)\phi_1(\tau,s,Y^s_i(1),X^s_i) \\
	& -   \sum_{s \in \mathcal{S}}\frac{1}{\sqrt{n}}\sum_{i = N(s)+n_1(s)+1}^{N(s)+n(s)} (\xi_i^s-1)\phi_0(\tau,s,Y^s_i(0),X^s_i).
	\end{align*}
	In addition, let 
	\begin{align*}
	\widetilde{\varpi}^{w\star}_{n,1}(\tau) &\equiv   \sum_{s \in \mathcal{S}}\frac{1}{\sqrt{n}}\sum_{i = \lfloor nF(s)\rfloor+1}^{\lfloor n(F(s)+\pi(s)p(s))\rfloor} (\xi_i^s-1) \phi_1(\tau,s,Y^s_i(1),X^s_i) \\
	& -   \sum_{s \in \mathcal{S}}\frac{1}{\sqrt{n}}\sum_{i = \lfloor n(F(s)+\pi(s)p(s))\rfloor+1}^{\lfloor n(F(s)+p(s))\rfloor} (\xi_i^s-1)\phi_0(\tau,s,Y^s_i(0),X^s_i).
	\end{align*}
	Following exactly the same argument as in the proof of Lemma \ref{lem:w_est}, we have 
	\begin{align}
	\sup_{\tau \in \Upsilon}|\widetilde{\varpi}_{n,1}^w(\tau) -\widetilde{\varpi}^{w\star}_{n,1}(\tau)| = o_p(1). 
	\label{eq:W1wstar}
	\end{align}
	and $\widetilde{\varpi}^{w\star}_{n,1}(\tau)$ is \textit{unconditionally} stochastically equicontinuous, i.e., for any $\eps>0$, as $n \rightarrow \infty$ followed by $\delta \rightarrow 0$, we have 
	\begin{align*}
	& \mathbb{E}\mathbb{P}_{\xi}\left(\sup_{\tau_1,\tau_2 \in \Upsilon, \tau_1 < \tau_2 < \tau_1+\delta}|\widetilde{\varpi}^{w\star}_{n,1}(\tau_1)-\widetilde{\varpi}^{w\star}_{n,1}(\tau_2)| \geq \eps  \right) \\
	& = 
	\mathbb{P}\left(\sup_{\tau_1,\tau_2 \in \Upsilon, \tau_1 < \tau_2 < \tau_1+\delta}|\widetilde{\varpi}^{w\star}_{n,1}(\tau_1)-\widetilde{\varpi}^{w\star}_{n,1}(\tau_2)| \geq \eps  \right) \rightarrow 0,
	\end{align*}
	where $\mathbb{P}_\xi$ means the probability operator is with respect to the bootstrap weights $\{\xi_i\}_{i \in [n]}$ and is conditional on data. This implies the \textit{unconditional} stochastic equicontinuity of $\varpi^{w}_{n,1}(\tau)$ due to \eqref{eq:Wn1w1} and \eqref{eq:W1wstar}, which further implies the \textit{conditional} stochastic equicontinuity of $\varpi^{w}_{n,1}(\tau)$, i.e., for any $\eps>0$, as $n \rightarrow \infty$ followed by $\delta \rightarrow 0$, 
	\begin{align*}
	\mathbb{P}_{\xi}\left(\sup_{\tau_1,\tau_2 \in \Upsilon, \tau_1 < \tau_2 < \tau_1+\delta}|\widetilde{\varpi}^{w\star}_{n,1}(\tau_1)-\widetilde{\varpi}^{w\star}_{n,1}(\tau_2)| \geq \eps  \right) \convP 0.
	\end{align*}
	By a similar but simpler argument, the \textit{conditional} stochastic equicontinuity of $\varpi_{n,2}^w(\tau)$ holds as well. This concludes the first step. 
	
	\vspace{1.5mm}
	\noindent\textbf{Step 2.} We first show the asymptotic normality of $\varpi_{n,1}^{w}(\tau)+\varpi_{n,2}^w(\tau)$ conditionally on data for a fixed $\tau$. Note 
	\begin{align*}
	& \varpi_{n,1}^{w}(\tau)+\varpi_{n,2}^w(\tau) \\
	= & \frac{1}{\sqrt{n}}\sum_{i =1}^n (\xi_i - 1) \left[ A_i 1\{S_i=s\}\phi_1(\tau,S_i,Y_i(1),X_i) - (1-A_i)1\{S_i=s\}\phi_1(\tau,S_i,Y_i(1),X_i) + \phi_s(\tau,S_i)\right] \\
	\equiv & \frac{1}{\sqrt{n}}\sum_{i =1}^n (\xi_i - 1) \mathcal{J}_i(\tau,s).
	\end{align*}
	Conditionally on data, $\{(\xi_i - 1) \mathcal{J}_i(\tau)\}_{i \in [n]}$ is a sequence of i.n.i.d. random variables. In order to apply the Lindeberg-Feller central limit theorem, we only need to show that (1)
	\begin{align*}
	\frac{1}{n}\sum_{i =1}^n \mathcal{J}^2_{n,i}(\tau) \convP \Sigma(\tau,\tau),
	\end{align*}
	where $\Sigma(\tau,\tau)$ is defined in Theorem \ref{thm:est}, and (2) the Lindeberg condition holds, i.e., 
	\begin{align*}
	\frac{1}{n}\sum_{i =1}^n \mathcal{J}_{n,i}^2(\tau) \mathbb{E} (\xi_i-1)^21\{|(\xi_i-1) \mathcal{J}_{n,i}(\tau)| \geq \sqrt{n}\eps  \} \convP 0. 
	\end{align*}
	For part (1), we have 
	\begin{align*}
	\frac{1}{n}\sum_{i =1}^n \mathcal{J}^2_{n,i}(\tau) = \sigma_1^2 + 2\sigma_{12} + \sigma_2^2,
	\end{align*}
	where 
	\begin{align*}
	\sigma_1^2 =\frac{1}{n}\sum_{i =1}^n\left[ A_i 1\{S_i=s\}\phi_1(\tau,S_i,Y_i(1),X_i) - (1-A_i)1\{S_i=s\}\phi_0(\tau,S_i,Y_i(1),X_i) \right]^2,
	\end{align*}
	\begin{align*}
	\sigma_{12} =\frac{1}{n}\sum_{i =1}^n  \left[ A_i 1\{S_i=s\}\phi_1(\tau,S_i,Y_i(1),X_i) - (1-A_i)1\{S_i=s\}\phi_1(\tau,S_i,Y_i(1),X_i) \right] \phi_s(\tau,S_i),
	\end{align*}
	and
	\begin{align*}
	\sigma_2^2 = \frac{1}{n}\sum_{i =1}^n\phi_s^2(\tau,S_i). 
	\end{align*}
	Note 
	\begin{align*}
	\sigma_1^2 &=  \frac{1}{n}\sum_{i =1}^nA_i 1\{S_i=s\}\phi^2_1(\tau,S_i,Y_i(1),X_i) + \frac{1}{n}\sum_{i =1}^n (1-A_i)1\{S_i=s\}\phi_0^2(\tau,S_i,Y_i(1),X_i) \\
	&\stackrel{d}{=} \frac{1}{n}\sum_{s \in \mathcal{S}}\sum_{i = N(s)+1}^{N(s)+n_1(s)}\phi^2_1(\tau,s,Y_i^s(1),X^s_i) + \frac{1}{n}\sum_{i =N(s)+n_1(s)+1} ^{N(s)+n(s)}\phi_0^2(\tau,s,Y^s_i(1),X^s_i) \\
	&\convP \sum_{s \in \mathcal{S}} [\pi(s) \mathbb{E}\phi^2_1(\tau,s,Y_i^s(1),X^s_i) + (1-\pi(s)) \mathbb{E}\phi^2_0(\tau,s,Y_i^s(1),X^s_i)] \\
	&=  \mathbb{E}\left[\pi(S_i)\phi^2_1(\tau,S_i,Y_i(1),X_i) + (1-\pi(s))\phi^2_0(\tau,S_i,Y_i(1),X_i)\right],
	\end{align*}
	where the convergence holds since $N(s)/n \rightarrow F(s)$, $n_1(s)/n \convP \pi(s)p(s)$, $n(s)/n \convP p(s)$, and uniform convergence of the partial sum process. Similarly, 
	\begin{align*}
	\sigma_{12} &\stackrel{d}{=}\frac{1}{n}\sum_{s \in \mathcal{S}}\sum_{i = N(s)+1}^{N(s)+n_1(s)}\phi_1(\tau,s,Y_i^s(1),X^s_i)\phi_s(\tau,s) + \frac{1}{n}\sum_{i =N(s)+n_1(s)+1} ^{N(s)+n(s)}\phi_0(\tau,s,Y^s_i(1),X^s_i)\phi_s(\tau,s) \\
	&\convP  \sum_{s \in \mathcal{S}} [\pi(s) \mathbb{E}\phi_1(\tau,s,Y_i^s(1),X^s_i) + (1-\pi(s)) \mathbb{E}\phi_0(\tau,s,Y_i^s(1),X^s_i)]\phi_s(\tau,s) 
	= 0,
	\end{align*}
	where we use the fact that 
	\begin{align*}
	\mathbb{E}\phi_1(\tau,s,Y_i^s(1),X^s_i) =  \mathbb{E}\phi_0(\tau,s,Y_i^s(1),X^s_i) = 0. 
	\end{align*}
	By the standard weak law of large numbers, we have
	\begin{align*}
	\sigma_2^2 \convP \mathbb{E}\phi_s^2(\tau,S_i).
	\end{align*}
	Therefore, 
	\begin{align*}
	\frac{1}{n}\sum_{i =1}^n \mathcal{J}^2_{n,i}(\tau) \convP \mathbb{E}\left[\pi(S_i)\phi^2_1(\tau,S_i,Y_i(1),X_i) + (1-\pi(s))\phi^2_0(\tau,S_i,Y_i(1),X_i)\right] + \mathbb{E}\phi_s^2(\tau,S_i) = \Sigma(\tau,\tau).
	\end{align*}
	
	To verify the Lindeberg condition, we note that  
	\begin{align*}
	& \frac{1}{n}\sum_{i =1}^n \mathcal{J}_{n,i}^2(\tau) \mathbb{E} (\xi_i-1)^21\{|(\xi_i-1) \mathcal{J}_{n,i}(\tau)| \geq \sqrt{n}\eps  \} \\
	&\leq \frac{1}{n(\sqrt{n}\eps)^{q-2}}\sum_{i =1}^n \mathcal{J}_{n,i}^q(\tau) \mathbb{E} (\xi_i-1)^q  \\
	&\leq  \frac{c}{n(\sqrt{n}\eps)^{q-2}}\sum_{i =1}^n [\phi_1^q(\tau,S_i,Y_i(1),X_i)+\phi_0^q(\tau,S_i,Y_i(1),X_i) + \phi_s^q(\tau,S_i)] = o_p(1),
	\end{align*}
	where the last equality is due to Assumption \ref{ass:mhat}(ii) and the fact that $\eta_{i,a}(\tau,s)$ is bounded. 
	
	Finite dimensional convergence of $\varpi_{n,1}^{w}(\tau)+\varpi_{n,2}^w(\tau)$ across $\tau$ can be established in the same manner using the Cram\'{e}r-Wold device and the details are omitted. By the same calculation as that given above the covariance kernel is shown to be  
	\begin{align*}
	& \lim_{n\rightarrow \infty}\frac{1}{n}\sum_{i =1}^n\mathcal{J}_{n,i}(\tau_1)\mathcal{J}_{n,i}(\tau_2) \\
	&=  \mathbb{E}\left[\pi(S_i)\phi_1(\tau_1,S_i,Y_i(1),X_i)\phi_1(\tau_2,S_i,Y_i(1),X_i)\right] \\
	& + \mathbb{E}\left[(1-\pi(s))\phi_0(\tau_1,S_i,Y_i(1),X_i)\phi(\tau_2,S_i,Y_i(1),X_i)\right] \\
	& + \mathbb{E}\phi_s(\tau_1,S_i)\phi_s(\tau_2,S_i) = \Sigma(\tau_1,\tau_2), 
	\end{align*} 
	which concludes the proof. 
\end{proof}

\begin{lem}
	Suppose the Assumptions in Theorem \ref{thm:np} hold. Then, 
	\begin{align*}
	\sup_{\tau \in \Upsilon, a=0,1, s \in \mathcal{S}}\left\Vert \frac{1}{n_a(s)}\sum_{i \in I_{a}(s)}(1\{Y_i \leq \hat{q}_a(\tau)\} - m_a(\tau,s,X_i))H_{h_n}(X_i)\right\Vert_\infty = O_p\left( \sqrt{\frac{\log(n)}{n}}\right).
	\end{align*}
	\label{lem:np1}
\end{lem}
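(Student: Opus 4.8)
The plan is to split off the estimation error in $\hat q_a(\tau)$ and reduce the centered remainder to the maximal inequality of Lemma~\ref{lem:max_eq}. In this lemma $m_a(\tau,s,X_i)$ denotes the conditional c.d.f. $\mathbb{P}(Y_i(a)\le q_a(\tau)\mid S_i=s,X_i)$, so that the summand is mean zero given $X_i$. Since $Y_i=Y_i(a)$ for $i\in I_a(s)$, write
\begin{align*}
1\{Y_i\le\hat q_a(\tau)\}-\mathbb{P}(Y_i(a)\le q_a(\tau)\mid s,X_i)
&=\bigl[1\{Y_i(a)\le\hat q_a(\tau)\}-\mathbb{P}(Y_i(a)\le\hat q_a(\tau)\mid s,X_i)\bigr]\\
&\quad+\bigl[\mathbb{P}(Y_i(a)\le\hat q_a(\tau)\mid s,X_i)-\mathbb{P}(Y_i(a)\le q_a(\tau)\mid s,X_i)\bigr].
\end{align*}
The ``drift'' bracket is a deterministic function of $X_i$ bounded by $\bigl(\sup_{y,x,s}f_a(y\mid x,s)\bigr)|\hat q_a(\tau)-q_a(\tau)|=O_p(n^{-1/2})$ uniformly over $\tau$, by Assumptions~\ref{ass:tau}(iii) and~\ref{ass:qhat}; multiplying by $H_{h_n,h}(X_i)$, averaging, and bounding $n_a(s)^{-1}\sum_{i\in I_a(s)}|H_{h_n,h}(X_i)|\le\bigl(n_a(s)^{-1}\sum_{i\in I_a(s)}H_{h_n,h}^2(X_i)\bigr)^{1/2}\le\kappa_2^{1/2}$ uniformly in $h$ w.p.a.\,1 (Assumption~\ref{ass:np}(i)), this contribution to the $\|\cdot\|_\infty$ norm is $O_p(n^{-1/2})=o_p(\sqrt{\log(n)/n})$.

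For the ``centered'' bracket, remove the data-dependent threshold by passing to a supremum over all $q\in\mathbb{R}$, so it suffices to bound $\sup_{q,a,s,h}\bigl|n_a(s)^{-1}\sum_{i\in I_a(s)}f_{a,q,h}(X_i,Y_i(a))\bigr|$ with $f_{a,q,h}(x,y)=\bigl(1\{y\le q\}-\mathbb{P}(Y(a)\le q\mid s,x)\bigr)H_{h_n,h}(x)$ and $\mathbb{E}(f_{a,q,h}\mid S_i=s)=0$. I would apply Lemma~\ref{lem:max_eq} (with $w_i\equiv1$) to the class $\mathcal{F}=\{f_{a,q,h}:q\in\mathbb{R},h\in[h_n],a\in\{0,1\}\}$, using the in-stratum i.i.d. representation (conditionally on $\{A_i,S_i\}_{i\in[n]}$, $\{(X_i,Y_i(a)):i\in I_a(s)\}$ equals in law an i.i.d. sample independent of $\{A_i,S_i\}$, as in the proof of Lemma~\ref{lem:max_eq}), together with $n_a(s)/n\to\pi(s)p(s)>0$. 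Here $\{1\{\cdot\le q\}\}_q$ is VC and the monotone-in-$q$ family $\{\mathbb{P}(Y(a)\le q\mid s,\cdot)\}_q$ has uniform entropy logarithmic in $1/\varepsilon$; products with the fixed $H_{h_n,h}$ and the union over the $h_n$ coordinates only multiply the covering number by $h_n$, so $\mathcal{F}$ is VC-type with $v_n=O(1)$ and $\alpha_n$ polynomial in $h_n$; the envelope is the constant $\zeta(h_n)$ since $|f_{a,q,h}|\le\|H_{h_n}(X_i)\|_2\le\zeta(h_n)$; and $\sup_f\mathbb{E}(f^2\mid S_i=s)\le\tfrac{1}{4}\mathbb{E}(H_{h_n,h}^2(X_i)\mid S_i=s)\le C$ by Assumption~\ref{ass:np}(iv).

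Substituting $v_n=O(1)$, $\sigma_n^2=O(1)$, $\|F\|_{\mathbb{P},2}=\|\max_iF_i\|_{\mathbb{P},2}=\zeta(h_n)$ and $\log(\alpha_n\|F\|_{\mathbb{P},2}/\sigma_n)=O(\log n)$ into Lemma~\ref{lem:max_eq} yields $\sup_{\mathcal{F},s}n^{-1/2}\bigl|\sum_iA_i1\{S_i=s\}f\bigr|=O_p\bigl(\sqrt{\log n}+\zeta(h_n)\log(n)/\sqrt n\bigr)$; since Assumption~\ref{ass:np}(iv) gives $\zeta^2(h_n)h_n\log n=o(n)$, hence $\zeta(h_n)=o(\sqrt{n/\log n})$, the second term is $o_p(\sqrt{\log n})$ and the bound is $O_p(\sqrt{\log n})$. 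Dividing by $n_a(s)$ instead of $\sqrt n$ and using $n/n_a(s)=O_p(1)$ (Assumption~\ref{ass:assignment1}) turns this into $O_p(\sqrt{\log(n)/n})$, as claimed. The main obstacle is the passage through $\mathcal{F}$: one has to check that unioning over the $h_n\to\infty$ basis coordinates inflates the VC/covering constants only by a factor with logarithm $O(\log n)$, and that the resulting envelope term in Lemma~\ref{lem:max_eq} is dominated by $\sqrt{\log n}$ under the rate conditions of Assumption~\ref{ass:np}(iv).
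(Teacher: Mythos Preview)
Your proposal is correct and follows essentially the same decomposition and empirical-process argument as the paper's proof: split off the drift $\mathbb{P}(Y_i(a)\le\hat q_a(\tau)\mid s,X_i)-\mathbb{P}(Y_i(a)\le q_a(\tau)\mid s,X_i)$, bound the centered remainder by passing to a supremum over all $q\in\mathbb{R}$, and control the resulting VC-type class (union over the $h_n$ basis coordinates, envelope $\zeta(h_n)$, second moment $O(1)$) via Lemma~\ref{lem:max_eq} together with Assumption~\ref{ass:np}(iv). The only noteworthy difference is your treatment of the drift term: you bound $n_a(s)^{-1}\sum_{i\in I_a(s)}|H_{h_n,h}(X_i)|$ directly by Cauchy--Schwarz and the eigenvalue condition in Assumption~\ref{ass:np}(i), whereas the paper instead recenters this empirical average and applies a second maximal-inequality step to the centered part, using the moment bound $\mathbb{E}(H_{h_n,h}^2(X_i)\mid S_i=s)\le C$ from Assumption~\ref{ass:np}(iv); your route is shorter and equally valid.
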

\begin{proof}
	We focus on $a=1$. We have 
	\begin{align}
	& \sup_{\tau \in \Upsilon}\left\Vert \frac{1}{n_1(s)}\sum_{i \in I_{1}(s)}(1\{Y_i \leq \hat{q}_1(\tau)\} - m_1(\tau,s,X_i))H_{h_n}(X_i)\right\Vert_\infty \notag \\
	&\leq  \sup_{\tau \in \Upsilon}\left\Vert \frac{1}{n_1(s)}\sum_{i \in I_{1}(s)}(1\{Y_i \le \hat{q}_1(\tau)\} - \mathbb{P}(Y_i(1) \leq \hat{q}_1(\tau)|X_i,S_i=s))H_{h_n}(X_i)\right\Vert_\infty \notag \\
	& + \sup_{\tau \in \Upsilon}\left\Vert \frac{1}{n_1(s)}\sum_{i \in I_{1}(s)}( \mathbb{P}(Y_i(1) \leq \hat{q}_1(\tau)|X_i,S_i=s) - m_1(\tau,s,X_i))H_{h_n}(X_i)\right\Vert_\infty  \notag \\
	&\leq  \sup_{q \in \Re}\left\Vert \frac{1}{n_1(s)}\sum_{i \in I_{1}(s)}(1\{Y_i \leq q\} - \mathbb{P}(Y_i(1) \leq q|X_i,S_i=s))H_{h_n}(X_i)\right\Vert_\infty \notag \\
	& + \sup_{\tau \in \Upsilon}\left\Vert \frac{1}{n_1(s)}\sum_{i \in I_{1}(s)}( \mathbb{P}(Y_i(1) \leq \hat{q}_1(\tau)|X_i,S_i=s) - m_1(\tau,s,X_i))H_{h_n}(X_i)\right\Vert_\infty.
	\label{eq:np1}
	\end{align}
	Define 
	\begin{align*}
	\mathcal{F}_h = \{1\{Y_i^s(1) \leq q\}H_{h_n,h}(X_i): q \in \Re\}, \quad \mathcal{F} = \cup_{h \in [h_n]}\mathcal{F}_{h_n},
	\end{align*}
	and let $H_{h_n,h}(X_i)$ be the $h$-th coordinate of $H_{h_n}(X_i)$. For each $h \in [h_n]$, $\mathcal{F}_{h}$ is of the VC-type with fixed coefficients $(\alpha,v)$ and a common envelope $F_i = ||H_{h_n}(X_i)||_2 \leq \zeta(h_n)$, i.e., 
	\begin{align*}
	\sup_Q N(\mathcal{F}_h,e_Q,\eps||F||_{Q,2}) \leq \left(\frac{\alpha}{\eps}\right)^v, \quad \forall \eps \in (0,1],
	\end{align*}
	where the supremum is taken over all finitely discrete probability measures. This implies 
	\begin{align*}
	\sup_Q N(\mathcal{F},e_Q,\eps||F||_{Q,2}) \leq \sum_{h \in [h_n]}\sup_Q N(\mathcal{F}_h,e_Q,\eps||F||_{Q,2}) \leq \left (\frac{\alpha h_n}{\eps}\right)^v \, \quad \forall \eps \in (0,1],
	\end{align*}
	i.e., $\mathcal{F}$ is also of the VC-type with coefficients $(\alpha h_n,v)$. In addition, 
	\begin{align*}
	\sup_{f \in \mathcal{F}}\mathbb{E}f^2 \leq \max_{h \in [h_n]}\mathbb{E}H_{h_n,h}^2(X_i) \leq C<\infty.
	\end{align*}
	Then, Lemma \ref{lem:max_eq} implies 
	\begin{align*}
	& \sup_{q \in \Re}\left\Vert \frac{1}{n_1(s)}\sum_{i \in I_{1}(s)}(1\{Y_i \leq q\} - \mathbb{P}(Y_i(1) \leq q|X_i,S_i=s))H_{h_n}(X_i)\right\Vert_\infty  \\
	&=  O_p\left( \sqrt{ \frac{\log(h_n\zeta(h_n))}{n}} + \frac{\zeta(h_n) \log (\zeta(h_n))}{n}\right) =O_p\left( \sqrt{\frac{\log(n)}{n}}\right).
	\end{align*}
	
	For the second term of \eqref{eq:np1}, because $\sup_{q \in \Re, x \in \Supp(X), s\in \mathcal{S}}f_1(q|x,s) < \infty$, we have 
	\begin{align*}
	& \sup_{\tau \in \Upsilon}\left\Vert \frac{1}{n_1(s)}\sum_{i \in I_{1}(s)}( \mathbb{P}(Y_i(1) \leq \hat{q}_1(\tau)|X_i,S_i=s) - m_1(\tau,s,X_i))H_{h_n}(X_i)\right\Vert_\infty \\
	&\leq  \sup_{\tau \in \Upsilon}|\hat{q}_1(\tau) - q_1(\tau)| \left\Vert\frac{1}{n_1(s)}\sum_{i \in I_1(s)}|H_{h_n}(X_i)| \right\Vert_\infty \\
	&\leq \sup_{\tau \in \Upsilon}|\hat{q}_1(\tau) - q_1(\tau)|  \left\Vert\frac{1}{n_1(s)}\sum_{i \in I_s(1)}\left[|H_{h_n}(X^s_i)| - \mathbb{E}(|H_{h_n}(X_i)||S_i=s)\right] \right\Vert_\infty \\
	& +  \sup_{\tau \in \Upsilon}|\hat{q}_1(\tau) - q_1(\tau)|  \left\Vert\mathbb{E}(|H_{h_n}(X_i)||S_i=s) \right\Vert_\infty\\
	&=  \sup_{\tau \in \Upsilon}|\hat{q}_1(\tau) - q_1(\tau)|  \left\Vert\frac{1}{n_1(s)}\sum_{i \in I_s(1)}\left[|H_{h_n}(X^s_i)| - \mathbb{E}(|H_{h_n}(X_i)||S_i=s)\right] \right\Vert_\infty  + O_p(n^{-1/2}) \\
	&= O_p(n^{-1/2}),
	\end{align*}
	where the second to last inequality holds because of Assumption \ref{ass:qhat} and $\left\Vert\mathbb{E}(|H_{h_n}(X_i)||S_i=s) \right\Vert_\infty \leq C<\infty$, and the last inequality holds because by a similar argument to the one used in bounding the first term on the RHS of \eqref{eq:np1}, we can show that 
	\begin{align*}
	\left\Vert\frac{1}{n_1(s)}\sum_{i \in I_s(1)}\left[|H_{h_n}(X^s_i)| - \mathbb{E}(|H_{h_n}(X_i)||S_i=s)\right] \right\Vert_\infty =O_p\left( \sqrt{\frac{\log(n)}{n}}\right). 
	\end{align*}
	This concludes the proof. 
\end{proof}

\begin{lem}
	Suppose the Assumptions in Theorem \ref{thm:np} hold and recall $\overline{\ell}$ defined in \eqref{eq:overlineell}. We have $\overline{\ell}/(\sqrt{ h_n \log(n)}/n) \rightarrow \infty$, w.p.a.1. 
	\label{lem:np2} 
\end{lem}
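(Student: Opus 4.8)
The plan is to establish, on the probability-approaching-one event on which the sample Gram matrix $\frac{1}{n_a(s)}\sum_{i \in I_a(s)}H_{h_n}(X_i)H_{h_n}^\top(X_i)$ is well conditioned (Assumption \ref{ass:np}(i)), the deterministic lower bound $\overline{\ell} \geq \sqrt{\kappa_1}/\zeta(h_n)$, and then to invoke the rate condition $\zeta^2(h_n)h_n\log(n) = o(n)$ from Assumption \ref{ass:np}(iv). Since the ratio in \eqref{eq:overlineell} is invariant under $U \mapsto cU$ with $c \neq 0$, the infimum may be restricted to $\{U : \|U\|_2 = 1\}$; on the event just mentioned the denominator $\frac{1}{n_a(s)}\sum_{i \in I_a(s)}|H_{h_n}^\top(X_i)U|^3$ is strictly positive for each such $U$, so the formally undefined value at $U = 0$ plays no role.

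The key elementary step is to control the cubic average by the quadratic one. Using $|H_{h_n}^\top(X_i)U| \leq \|H_{h_n}(X_i)\|_2\|U\|_2 \leq \zeta(h_n)\|U\|_2$ (Cauchy--Schwarz together with $\sup_{x \in \Supp(X)}\|H_{h_n}(x)\|_2 \leq \zeta(h_n)$),
\begin{align*}
\frac{1}{n_a(s)}\sum_{i \in I_a(s)}|H_{h_n}^\top(X_i)U|^3 \leq \Big(\max_{i \in I_a(s)}|H_{h_n}^\top(X_i)U|\Big)\frac{1}{n_a(s)}\sum_{i \in I_a(s)}(H_{h_n}^\top(X_i)U)^2 \leq \zeta(h_n)\|U\|_2\,\frac{1}{n_a(s)}\sum_{i \in I_a(s)}(H_{h_n}^\top(X_i)U)^2.
\end{align*}
Substituting this into \eqref{eq:overlineell} and then bounding $\frac{1}{n_a(s)}\sum_{i \in I_a(s)}(H_{h_n}^\top(X_i)U)^2 = U^\top\big(\frac{1}{n_a(s)}\sum_{i \in I_a(s)}H_{h_n}(X_i)H_{h_n}^\top(X_i)\big)U \geq \kappa_1\|U\|_2^2$ (Assumption \ref{ass:np}(i)) yields, with probability approaching one,
\begin{align*}
\overline{\ell} \geq \inf_{U \neq 0}\frac{\big[\frac{1}{n_a(s)}\sum_{i \in I_a(s)}(H_{h_n}^\top(X_i)U)^2\big]^{1/2}}{\zeta(h_n)\|U\|_2} \geq \frac{\sqrt{\kappa_1}}{\zeta(h_n)}.
\end{align*}

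Finally I would divide through by $\sqrt{h_n\log(n)/n}$ (the quantity appearing on the right-hand side of the maximal inequality bound invoked in the proof of Theorem \ref{thm:np}), obtaining
\begin{align*}
\frac{\overline{\ell}}{\sqrt{h_n\log(n)/n}} \geq \sqrt{\frac{\kappa_1\,n}{\zeta^2(h_n)\,h_n\log(n)}} \longrightarrow \infty,
\end{align*}
since Assumption \ref{ass:np}(iv) gives $\zeta^2(h_n)h_n\log(n) = o(n)$. I do not expect any genuine obstacle: the argument is deterministic once the well-conditioning event is in force, and the only points demanding minor care are the scale normalization of $U$ (to dispose of the $U = 0$ case in the infimum) and the fact that the lower bound for $\overline{\ell}$ holds only with probability approaching one, as it relies on the lower eigenvalue bound for the sample second-moment matrix.
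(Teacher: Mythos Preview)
Your proposal is correct and follows essentially the same route as the paper: bound the cubic sum by $\zeta(h_n)\|U\|_2$ times the quadratic sum, reduce the ratio to $[\text{quadratic}]^{1/2}/(\zeta(h_n)\|U\|_2)$, apply the eigenvalue lower bound $\kappa_1$ from Assumption~\ref{ass:np}(i), and conclude via $\zeta^2(h_n)h_n\log(n)=o(n)$. Your handling of the scale invariance and the $U=0$ case is a bit more explicit than the paper's, and you correctly read the intended divisor as $\sqrt{h_n\log(n)/n}$ (consistent with how the lemma is invoked in the proof of Theorem~\ref{thm:np} and with the final displayed bound in the paper's own proof).
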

\begin{proof}
	Note that w.p.a.1, 
	\begin{align*}
	\overline{\ell} &=  \inf_{U \in \Re^{h_n} }\frac{\left[\frac{1}{n_a(s)}\sum_{i \in I_a(s)}(H_{h_n}^\top(X_i) U)^2\right]^{3/2} }{\frac{1}{n_a(s)}\sum_{i \in I_a(s)}|H_{h_n}^\top(X_i) U|^3} \\
	&\geq \inf_{U \in \Re^{h_n} }\frac{\left[\frac{1}{n_a(s)}\sum_{i \in I_a(s)}(H_{h_n}^\top(X_i) U)^2\right]^{1/2} }{\sup_{x \in \mathcal{X}}||H_{h_n}(x)||_2 ||U||_2} \geq \frac{\kappa_1^{1/2}}{\zeta(h_n)},
	\end{align*}
	where the last inequality is due to Assumption \ref{ass:np}. Therefore, 
	\begin{align*}
	\overline{\ell}/ (\sqrt{ h_n \log(n)}/n) \geq \sqrt{\frac{\kappa_1 n}{ \zeta^2(h_n) h_n \log(n)}} \rightarrow \infty~w.p.a.1. 
	\end{align*}
\end{proof}

\bibliographystyle{chicago}
\bibliography{BCAR}
\end{document}